\definecolor{blueviolet}{rgb}{0.2, 0.2, 0.6}
\definecolor{webgreen}{rgb}{0,.5,0}
\definecolor{webbrown}{rgb}{.6,0,0}
\DeclareFixedFont{\ttb}{T1}{txtt}{bx}{n}{9} 
\DeclareFixedFont{\ttm}{T1}{txtt}{m}{n}{9}  
\definecolor{deepblue}{rgb}{0,0,0.5}
\definecolor{deepred}{rgb}{0.6,0,0}
\definecolor{deepgreen}{rgb}{0,0.5,0}
\newcommand\pythonstyle{\lstset{
language=Python,
basicstyle=\ttm,
morekeywords={self},              
keywordstyle=\ttb\color{deepblue},
emph={MyClass,__init__},          
emphstyle=\ttb\color{deepred},    
stringstyle=\color{deepgreen},
frame=tb,                         
showstringspaces=false
}}
\newcommand\pythoninline[1]{{\pythonstyle\lstinline!#1!}}
\definecolor{orange}{RGB}{255,127,0}
\def\bra#1{\ensuremath{\mathinner{\langle{#1}|}}}
\def\ket#1{\ensuremath{\mathinner{|{#1}\rangle}}}
\newcommand{\ketbra}[2]{\lvert #1 \rangle \! \langle #2 \rvert}
\newcommand{\norm}[1]{\left\lVert#1\right\rVert}
\newcommand{\cA}{{\mathcal{A}}}
\newcommand{\cB}{{\mathcal{B}}}
\newcommand{\cT}{{\mathcal{T}}}
\newcommand{\cX}{{\mathcal{X}}}
\newcommand{\cD}{{\mathcal{D}}}
\newcommand{\cE}{{{\mathcal{E}}}}
\newcommand{\cN}{{{\mathcal{N}}}}
\newcommand{\cS}{{\mathcal{S}}}
\newcommand{\cU}{{{\mathcal{U}}}}
\newcommand{\euler}{\mathrm{e}}
\newcommand{\rmi}{\mathrm{i}}
\newcommand{\wt}{\widetilde}
\DeclareMathOperator{\Tr}{tr}
\DeclareMathOperator*{\E}{{\mathbb{E}}}
\newtheorem{theorem}{Theorem}
\newtheorem{corollary}{Corollary}
\newtheorem{definition}{Definition}
\newtheorem*{definition21}{Definition 6.1 of~\cite{chen2021exponential}}
\newtheorem{lemma}{Lemma}
\newtheorem{task}{Task}
\newcommand{\Id}{I}
\algrenewcommand\alglinenumber[1]{\sf\scriptsize\color{blue}{#1}}
\algrenewcommand\algorithmicrequire{\textbf{Input:}}
\algrenewcommand\algorithmicensure{\textbf{Output:}}
\begin{document}

\title{Quantum advantage in learning from experiments}

\author{Hsin-Yuan Huang}
    \email{hsinyuan@caltech.edu}
	\affiliation{Institute for Quantum Information and Matter, Caltech, Pasadena, CA, USA}
	\affiliation{Department of Computing and Mathematical Sciences, Caltech, Pasadena, CA, USA}
\author{Michael Broughton}
	\affiliation{Google Quantum AI, 340 Main Street, Venice, CA 90291, USA}
\author{Jordan Cotler}
	\affiliation{Harvard Society of Fellows, Cambridge, MA 02138 USA}
	\affiliation{Black Hole Initiative, Cambridge, MA 02138 USA}
\author{Sitan Chen}
    \affiliation{Department of Electrical Engineering and Computer Science, University of California, Berkeley, Berkeley, CA, USA}
    \affiliation{Simons Institute for the Theory of Computing, Berkeley, CA, USA}
\author{Jerry Li}
	\affiliation{Microsoft Research AI, Redmond, WA 98052, USA}
\author{Masoud Mohseni}
    \affiliation{Google Quantum AI, 340 Main Street, Venice, CA 90291, USA}    
\author{Hartmut Neven}
    \affiliation{Google Quantum AI, 340 Main Street, Venice, CA 90291, USA}    
\author{Ryan Babbush}
    \affiliation{Google Quantum AI, 340 Main Street, Venice, CA 90291, USA}
\author{Richard Kueng}
	\affiliation{Institute for Integrated Circuits, Johannes Kepler University Linz, Austria}
\author{John Preskill}
	\affiliation{Institute for Quantum Information and Matter, Caltech, Pasadena, CA, USA}
	\affiliation{Department of Computing and Mathematical Sciences, Caltech, Pasadena, CA, USA}
	\affiliation{AWS Center for Quantum Computing, Pasadena, CA 91125, USA}
\author{Jarrod R.~McClean}
    \email{jmcclean@google.com}
	\affiliation{Google Quantum AI, 340 Main Street, Venice, CA 90291, USA}
\date{\today}

\begin{abstract}
Quantum technology has the potential to revolutionize how we acquire and process experimental data to learn about the physical world.
An experimental setup that transduces data from a physical system to a stable quantum memory, and processes that data using a quantum computer, could have significant advantages over conventional experiments in which the physical system is measured and the outcomes are processed using a classical computer.
We prove that, in various tasks, quantum machines can learn from exponentially fewer experiments than those required in conventional experiments.
The exponential advantage holds in predicting properties of physical systems, performing quantum principal component analysis on noisy states, and learning approximate models of physical dynamics.
In some tasks, the quantum processing needed to achieve the exponential advantage can be modest; for example, one can simultaneously learn about many noncommuting observables by processing only two copies of the system.
Conducting experiments with up to 40 superconducting qubits and 1300 quantum gates, we demonstrate that a substantial quantum advantage can be realized using today’s relatively noisy quantum processors.
Our results highlight how quantum technology can enable powerful new strategies to learn about nature. 
\end{abstract}

\maketitle

\section{Introduction}

Humans learn about nature by doing experiments, but up until now our ability to acquire knowledge has been hampered by viewing the quantum world through a classical lens. The rapid advance of quantum technology portends an opportunity to observe the world in a fundamentally different and more powerful way.
Instead of measuring physical systems and then processing the classical measurement outcomes to infer properties of the physical systems, quantum sensors \cite{degen2017quantum} will eventually be able to transduce~\cite{lauk2020perspectives} quantum information in physical systems directly to a quantum memory \cite{lvovsky2009optical,dennis2002topological}, where it can be processed by a quantum computer.
Fig.~\ref{fig:Cartoon}(a) illustrates the distinction between \emph{conventional} and \emph{quantum-enhanced} experiments. For example, in a quantum-enhanced experiment, multiple photons might be captured and stored coherently at each node of a quantum network, and then processed coherently to extract an informative signal ~\cite{PhysRevLett.109.070503,Bland-Hawthorn:21,giovannetti2011advances}.

Recently, mathematical analyses done by some of the authors show that there exist properties of an $n$-qubit system that a quantum machine can learn efficiently, while the required number of conventional experiments to achieve the same task is \emph{exponential} in $n$~\cite{huang2021information, aharonov2021quantum}. 
This exponential advantage contrasts sharply with the quadratic advantage achieved in many previously proposed strategies for improving sensing using quantum technology \cite{degen2017quantum}.
In this article, we propose and analyze three classes of learning tasks with exponential quantum advantage,
and report on proof-of-principle experiments using up to $40$ qubits on a Google Sycamore processor~\cite{arute2019quantum}.
These experiments confirm that a substantial quantum advantage can be realized even when the quantum memory and processor are both noisy.

\begin{figure*}[t]
\centering
\includegraphics[width=0.99\textwidth]{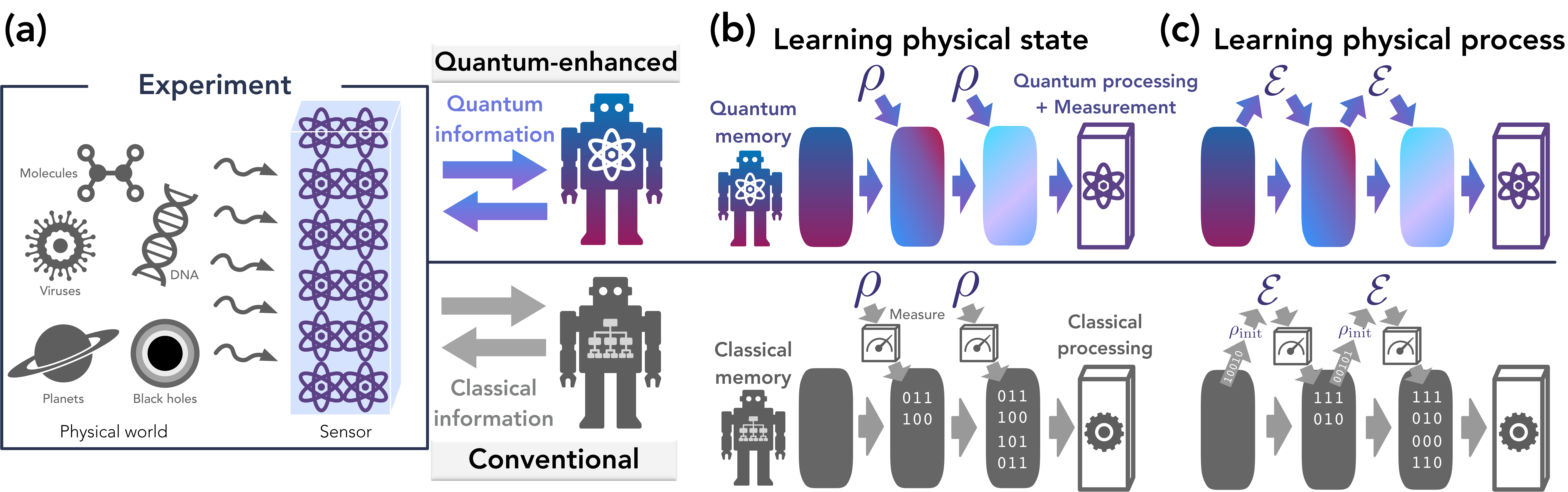}
    \caption{
    \emph{(a) Quantum-enhanced experiments versus conventional experiments.} Quantum-enhanced/conventional experiments interface with a quantum/classical machine running a quantum/classical learning algorithm that can store and process quantum/classical information.
    \emph{(b) Learning physical state $\rho$.}
    Each experiment produces a physical state $\rho$.
    In the conventional setting, we measure each $\rho$ to obtain classical data (the measurement could depend on prior measurement outcomes) and store the data in a classical memory.
    In the quantum-enhanced setting, $\rho$ can coherently alter the quantum information stored in the memory of the quantum machine (illustrated by the change in color).
    With large enough quantum memory, the quantum machine can simply store each copy of $\rho$.
    After multiple rounds of experiments, quantum processing followed by a measurement is performed on the quantum memory.
    \emph{(c) Learning physical process $\cE$. } Each experiment is an evolution under $\cE$.
    In the conventional setting, the classical machine specifies the input state to $\cE$ using a classical bitstring, and obtains classical measurement data \cite{mohseni_qpt_2008}.
    In the quantum-enhanced setting, the evolution $\cE$ coherently alters the memory of the quantum machine: the input state to $\cE$ is entangled with the quantum memory in the quantum machine and the output state is retrieved coherently by the quantum machine.
    \label{fig:Cartoon}
    }
\end{figure*}

To be more concrete, suppose that each experiment generates an $n$-qubit state $\rho$, and our goal is to learn some property of $\rho$; see Fig.~\ref{fig:Cartoon}.
We depict \textit{conventional} and \textit{quantum-enhanced} experiments for this scenario in Fig.~\ref{fig:Cartoon}(b).
In conventional experiments, each copy of $\rho$ is measured separately, the measurement data is stored in a classical memory, and a classical computer outputs a prediction for the property after processing the classical data.
In quantum-enhanced experiments, each copy of $\rho$ is stored in a quantum memory, and then the quantum machine outputs the prediction after processing the quantum data in the quantum memory.
We prove that for some tasks, the number of experiments needed to learn a desired property is exponential in $n$ using the conventional strategy, but only polynomial in $n$ using the quantum-enhanced strategy.
For suitably defined tasks, we can achieve exponential quantum advantage using a protocol as simple as storing two copies of $\rho$ in quantum memory and performing an entangling measurement.
We also show that quantum-enhanced experiments have a similar exponential advantage in a related scenario shown in Fig.~\ref{fig:Cartoon}(c), in which the goal is to learn about a quantum process $\cE$ rather than a quantum state $\rho$.

Building on observations in \cite{huang2021information, chen2021exponential} we prove that for a task that entails acquiring information about a large number of non-commuting observables, quantum-enhanced experiments can have an exponential advantage even when the measured quantum state is unentangled. By performing experiments with up to 40 superconducting qubits, we show that this quantum advantage persists even when using currently available quantum processors.
We also demonstrate quantum advantage in learning the symmetry class of a physical evolution operator, inspired by recent theoretical advances \cite{aharonov2021quantum, chen2021exponential}.
Finally, in a theoretical contribution, we rigorously prove that quantum-enhanced experiments have an exponential advantage in learning about the principal component of a noisy state, as previously indicated in \cite{lloyd2014quantum}.

In our proof-of-principle experiments, we directly execute the state preparation or process to be learned within the quantum processor. In an actual application, the quantum data analyzed by the learning algorithm might be produced by an analog quantum simulator or a gate-based quantum computer.
We also envision future applications in which quantum sensors equipped with quantum processors interact coherently with the physical world.
The robustness of quantum advantage with respect to noise, validated by our experiments using a noisy superconducting device, boosts our confidence that the quantum-enhanced strategies described here can be exploited someday to achieve a substantial advantage in realistic applications. 

\section{Provable quantum advantage}
Here we present three classes of learning tasks and the associated quantum-enhanced experiments, each yielding a provable exponential advantage over conventional experiments.
Each result is encapsulated by a theorem which we state informally. Precise statements and proofs are presented in the Supplemental Material.
Our experimental demonstrations are discussed in Sec.~\ref{sec:experimental}.
The proofs proceed by representing a classical algorithm with a decision tree depicted at the center of the gray robot in Fig.~\ref{fig:Cartoon}. The tree representation encodes how the classical memory changes as we obtain more experimental data. 
We then analyze how the transitions on the tree differ for distinct measured physical systems to provide rigorous information-theoretic lower bounds.
A general mathematical framework building on~\cite{chen2021exponential} is given in Appendix~\ref{sec:separation} .

The first task concerns learning about a physical system described by an $n$-qubit state $\rho$.
We suppose that each experiment generates one copy of $\rho$.
In the conventional setting, we measure each copy of $\rho$ to obtain classical data.
The procedure can be adaptive, that is, each measurement can depend on the data obtained in earlier measurements.
In the quantum-enhanced setting, a quantum computer can store each copy of $\rho$ in a quantum memory, and act jointly on multiple copies of $\rho$.
In both scenarios, we require all quantum data to be measured at the end of the learning phase of the procedure, so that only classical data survives.
After the learning is completed, the learner is asked to provide an accurate prediction for the expectation value of one observable drawn from a set $\{O_1, O_2, \dots \}$ where the number of observables in the set is exponentially large in $n$.
The observables in the set can be highly incompatible, that is, each observable may fail to commute with many others in the set.

In prior work by some of the authors~\cite{huang2021information,chen2021exponential}, we required the learner to predict exponentially many observables, which is not possible in practice if the system size is large.
In order to demonstrate the advantage in an actual device, we prove that predicting just the absolute value of one observable requires exponentially many copies in the \emph{conventional} scenario. In contrast, predicting the entire set of observables can be achieved with a polynomial number of copies in the \emph{quantum-enhanced} scenario.
We thereby establish the following constant versus exponential separation.
The proof is given in Appendix~\ref{sec:shadow}.

\begin{theorem}[Predicting observables]\label{thm:observables}
There exists a distribution over $n$-qubit states 
and a set of observables such that in the conventional scenario, at least order $2^n$ experiments are needed to predict the absolute value of one observable selected from the set, while a constant number of experiments suffice in the quantum-enhanced scenario.
\end{theorem}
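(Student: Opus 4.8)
The plan is to exhibit a single explicit hard instance and to reduce the asserted prediction task to a hypothesis-testing problem, so that an information-theoretic argument on the conventional learner's decision tree yields the lower bound, while a two-copy Bell measurement yields the upper bound. I would take the observable set to be all nonidentity Pauli operators $\{P\}_{P \ne I}$ on $n$ qubits, an exponentially large set of size $4^n - 1$, and I would use the hidden-Pauli distribution: flip a fair bit $b$; if $b = 0$ output the maximally mixed state $\rho = I/2^n$, and if $b = 1$ draw a uniformly random nonidentity Pauli $P$ and output $\rho_P = (I + P)/2^n$, the normalized projector onto the $+1$ eigenspace of $P$. Because $\Tr(P' P) = 2^n \delta_{P'P}$ and $\Tr(P') = 0$, one has $|\Tr(P \rho_P)| = 1$ while $|\Tr(P' \rho_P)| = 0$ for every other Pauli and $|\Tr(P' \rho)| = 0$ in the mixed case. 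The reduction is then: at prediction time reveal the observable $Q$, equal to the hidden $P$ when $b = 1$ and a fresh uniform Pauli when $b = 0$; any learner predicting $|\Tr(Q\rho)|$ to additive accuracy below $1/2$ distinguishes $b = 0$ from $b = 1$. Since the learning (measurement) phase precedes the revelation of $Q$, the measurement transcript is independent of $Q$, and a short calculation shows that the learner's optimal distinguishing advantage equals $\mathbb{E}_Q\,\mathrm{TV}\big(p_{\rho_Q}, p_{\mathrm{mix}}\big)$, the $Q$-averaged total-variation distance between the transcript laws.

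The core of the conventional lower bound is to show this averaged distance stays bounded away from $1$ unless $N = \Omega(2^n)$. Following the decision-tree framework of~\cite{chen2021exponential}, I would model an adaptive single-copy strategy by its tree: along a root-to-leaf path the $t$-th copy is measured by a POVM $\{M^{(t)}_s\}$ depending on earlier outcomes, and the likelihood ratio factorizes as $p_{\rho_Q}(T)/p_{\mathrm{mix}}(T) = \prod_t \big(1 + \Tr(M^{(t)}_{s_t} Q)/\Tr(M^{(t)}_{s_t})\big)$. The single indispensable estimate is the second moment of the per-step signal: for any unit vector $\ket{\psi}$, Pauli completeness gives $\sum_{\text{all }P}\langle\psi|P|\psi\rangle^2 = 2^n$, hence $\mathbb{E}_{Q \ne I}\langle\psi|Q|\psi\rangle^2 = (2^n - 1)/(4^n - 1) \le 2^{-n}$. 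Refining every POVM element into rank-one pieces (which only increases divergence, by data processing) and using $\sum_s \Tr(M_s) = 2^n$, this bounds the $Q$-averaged per-copy Kullback--Leibler (equivalently $\chi^2$) increment by $2^{-n}$. Accumulating over the $N$ rounds and invoking Pinsker together with Jensen would give $\mathbb{E}_Q\,\mathrm{TV} \le \sqrt{\tfrac12 N\,2^{-n}}$, which is below $1/4$ unless $N = \Omega(2^n)$; the reduction above then forces $\Omega(2^n)$ conventional experiments.

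For the quantum-enhanced upper bound I would store two copies at a time and perform a Bell-basis measurement on each of the $n$ corresponding qubit pairs. Every single-qubit factor $\sigma \otimes \sigma$ with $\sigma \in \{X, Y, Z\}$ is diagonal in the Bell basis with eigenvalues $\pm 1$ (and $I \otimes I$ has eigenvalue $+1$), so a single Bell outcome yields, for the selected Pauli $Q = \bigotimes_j Q_j$, a $\pm 1$ random variable $\hat o$ with $\mathbb{E}[\hat o] = \Tr\big((Q \otimes Q)(\rho \otimes \rho)\big) = \Tr(Q\rho)^2$. This estimator is unbiased with variance at most $1$, so a constant number of two-copy measurements estimates $\Tr(Q\rho)^2 = |\Tr(Q\rho)|^2$ to constant additive error and, after taking a square root, estimates $|\Tr(Q\rho)|$ well enough to separate the values $0$ and $1$; hence a constant number of quantum-enhanced experiments suffices.

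The step I expect to be the main obstacle is the accumulation of the per-round bound across the adaptive tree while averaging over the hidden Pauli $Q$: the natural chain-rule decomposition weights each round's conditional divergence by the history law $p_{\rho_Q}$, which itself depends on $Q$ and so does not cleanly factor through the $\mathbb{E}_Q$ second-moment estimate. Controlling this coupling---for instance via a martingale or likelihood-ratio potential that tensorizes under the $Q$-average, as in the framework of~\cite{chen2021exponential}---is the technical heart of the lower bound; everything else reduces to the elementary Pauli identities above. I would also confirm that no sign randomization of $P$ is needed, since the first-moment term vanishes automatically from $\Tr(Q) = 0$.
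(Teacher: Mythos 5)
Your quantum-enhanced upper bound (two-copy Bell measurements, estimator $\E[\hat o]=\Tr(Q\rho)^2$, square root, constant repetitions) is correct and is exactly the paper's argument, and your reduction to a partially-revealed distinguishing task also matches the paper's framework. The genuine gap is in the conventional lower bound, and it has two intertwined parts. First, the obstacle you flag at the end---that the chain-rule accumulation weights each round by the history law $p_{\rho_Q}$, which is correlated with $Q$---is not a deferrable technicality; it \emph{is} the proof. The paper does not resolve it with a martingale or tensorizing potential. It sidesteps the chain rule entirely by writing the total variation distance with leaf weights given by the $P$-independent null distribution,
\begin{equation*}
\mathrm{TV}\Bigl(p_{I/2^n},\, \E_{s} \,p_{(I+0.9sP)/2^n}\Bigr) \;=\; \sum_{\ell} p_{I/2^n}(\ell)\,\max\Bigl(0,\; 1 - \E_{s}\prod_{t=1}^{T}\bigl(1 + 0.9\,s\,\bra{\phi^{u_{t-1}}_{s_t}} P \ket{\phi^{u_{t-1}}_{s_t}}\bigr)\Bigr),
\end{equation*}
and then lower-bounding the likelihood ratio \emph{pathwise} by Jensen over the random sign $s$, which symmetrizes it into $\prod_t \sqrt{1-0.81\bra{\phi^{u_{t-1}}_{s_t}}P\ket{\phi^{u_{t-1}}_{s_t}}^2}$, followed by a second Jensen over $P$. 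Second---and this is what makes your gap unfixable as stated---your hard instance uses $\rho_P=(I+P)/2^n$ with coefficient exactly $1$ and no sign randomization. That state is proportional to a projector, so POVM outcomes in the $-1$ eigenspace of $P$ have probability exactly zero: the likelihood ratio $\prod_t(1+x_t)$ can vanish, $\E_P\log(1+x_t)=-\infty$ already for computational-basis outcomes, and the direction of KL whose chain rule conditions on the $Q$-independent null, $\mathrm{KL}(p_{I/2^n}\|p_{\rho_Q})$, is infinite. Thus both your Pinsker scheme and the paper's Jensen scheme break on your instance. This is precisely why Definition~\ref{def:hardrhoO} takes $(I+0.9\,sP)/2^n$ with $s$ uniform in $\{\pm1\}$, and why its footnote warns that the coefficient-$1$ case presents a technical difficulty whose resolution is open. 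Your closing claim that ``no sign randomization of $P$ is needed'' because $\Tr(Q)=0$ is accordingly unjustified: the first-moment cancellation you invoke only helps when $-\log(1+y)\le -y+\mathcal{O}(y^2)$ on the whole range of the per-outcome signal $y$, which requires the spike amplitude to be bounded strictly below $1$; at amplitude $1$ the logarithm diverges at $y=-1$.

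Two further points for calibration. (i) If you keep your proof strategy but adopt the paper's instance (any amplitude $\alpha<1$), your KL route can actually be repaired: use $\mathrm{KL}(\text{null}\,\|\,\text{alternative})$, whose chain rule conditions histories on the null, so $\E_Q$ commutes with the history average and the per-round bound $\mathcal{O}(\alpha^2/2^n)$ accumulates to give $T=\Omega(2^n)$---so the fix is to change the instance, not to find fancier machinery. (ii) Your coefficient-$1$ instance is not hopeless, but the only known treatment is the good/bad-Pauli machinery of~\cite{chen2021exponential}, which the paper invokes in Appendix~\ref{sec:hard-bdd-mem} for exactly this instance (Task~\ref{task:bounded}); that route yields $\Omega(2^{(n-k)/3})$, i.e.\ $\Omega(2^{n/3})$ for purely classical memory---still exponential, but weaker than the $\Omega(2^n)$ you claim and obtained by an argument entirely different from your sketch.
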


The exponential quantum advantage can occur even if the state $\rho$ is unentangled. For example, in our experiments we consider $\rho\propto (I + \alpha P)$ where $P$ is an $n$-qubit Pauli operator and $\alpha \in (-1, 1)$. This state
can be realized as a probabilistic ensemble of product states, each of which is an eigenstate of $P$ with eigenvalue $\alpha$. Even if the state is known to be of this form, but $P, \alpha$ are unknown, the exponential separation between conventional and quantum-enhanced experiments persists.
Moreover, the quantum advantage can be achieved by performing simple entangling measurements on pairs of copies of $\rho$.
That the quantum advantage applies even when correlations among the $n$ qubits are classical leads us to believe that the quantum-enhanced strategy will be beneficial in a broad class of sensing applications.
In Appendix~\ref{sec:hard-bdd-mem}, we extend this theorem, showing that a sufficiently large quantum memory is needed to achieve this task in the quantum-enhanced scenario.

Our second machine learning task with a quantum advantage is quantum principal component analysis (PCA) \cite{lloyd2014quantum}.
In this task, each experiment produces one copy of $\rho$, and our goal is to predict properties of the (first) principal component of $\rho$, namely the eigenstate $\ket{\psi}$ of $\rho$ with the largest eigenvalue.
For example, we may want to predict the expectation values of a few observables in the state $\ket{\psi}$.
This task may become a valuable ingredient in future quantum-sensing applications. If an imperfect quantum sensor transduces a detected quantum state into quantum memory, the state is likely to be corrupted by noise. But it is reasonable to expect that properties of the principal component are relatively robust with respect to noise~\cite{koczor2021dominant}, and therefore highly informative about the uncorrupted state.
To perform quantum PCA, a learning algorithm was introduced in \cite{lloyd2014quantum} based on phase estimation which requires fault-tolerant quantum computers.
One can also obtain information about the principal component of $\rho$ using more near-term algorithms, such as virtual cooling \cite{cotler2019quantum} and virtual distillation \cite{huggins2020virtual}.

While the quantum PCA algorithm in \cite{lloyd2014quantum} is exponentially faster than known algorithms based on conventional experiments,
this advantage was not proven against \emph{all} possible algorithms in the conventional scenario.
Here, we rigorously establish the exponential quantum advantage for performing quantum PCA.
The exponential quantum advantage also holds in the near-term proposals \cite{cotler2019quantum, huggins2020virtual}.
The proofs are in Appendix~\ref{sec:qpca}.

\begin{theorem}[Performing quantum PCA]\label{thm:PCA}
In the conventional scenario, at least order $2^{n/2}$ experiments are needed to learn a fixed property of the principal component of an unknown $n$-qubit quantum state, while a constant number of experiments suffice in the quantum-enhanced scenario.
\end{theorem}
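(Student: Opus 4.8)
The plan is to prove the two halves separately: a constant-copy quantum-enhanced protocol for the upper bound, and an $\Omega(2^{n/2})$ information-theoretic lower bound against every adaptive conventional learner. For the quantum-enhanced direction I would exploit the fact that, whenever $\rho$ has a constant spectral gap and constant purity, the principal component is reachable by a few-copy entangling measurement. Concretely, virtual distillation on two copies yields the estimator $\tr(O\rho^2)/\tr(\rho^2)$, realizable by a $\SWAP$-type measurement on $\rho\otimes\rho$. Since $\tr(O\rho^2)/\tr(\rho^2)=\sum_i\lambda_i^2\bra{\psi_i}O\ket{\psi_i}\big/\sum_i\lambda_i^2$, a gap $\lambda_1^2\ge(1+\Omega(1))\sum_{i\ge2}\lambda_i^2$ forces this ratio to equal the target property $\bra{\psi}O\ket{\psi}$ up to an exponentially small bias, and because $\tr(\rho^2)=\Omega(1)$ both the numerator and denominator estimators have $O(1)$ variance. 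Thus a constant number of two-copy experiments suffices; if the gap were only polynomial I would instead iterate to $\rho^k$ for constant $k$ or invoke the phase-estimation quantum PCA of~\cite{lloyd2014quantum}.

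For the conventional lower bound I would choose a prior over $n$-qubit states that plants the property in the principal component while hiding it from any single-copy measurement. The crucial design goal is to arrange the two hypotheses so that their single-copy marginals coincide, $\bar\rho_0=\bar\rho_1$, so that no adaptive sequence of single-copy measurements receives any first-order signal; the entire distinguishing information then lives in the second moment $\E[\rho^{\otimes 2}]$, which a collective measurement on stored copies can read but a single-copy probe cannot. I would reduce \emph{predicting the property} to distinguishing these two ensembles and, following the decision-tree framework of Appendix~\ref{sec:separation} and~\cite{chen2021exponential}, represent an arbitrary adaptive conventional learner as a tree whose leaves carry the accumulated classical record.

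I would then bound the total variation between the two induced leaf distributions by a likelihood-ratio / $\chi^2$ argument. Because the first moments are matched, each experiment contributes only a second-order signal of size $O(2^{-n})$, reflecting the per-pair overlap of random eigenvectors, and summing the contributions of all $\binom{T}{2}$ pairs of experiments---while controlling the correlations created by the single hidden parameter shared across copies of the same $\rho$---bounds the distinguishing advantage by $O(T^2\,2^{-n})$. This stays below a constant unless $T=\Omega(2^{n/2})$, which is the asserted bound; the square-root rate is exactly what the second-moment method delivers here.

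The hard part will be the tension at the heart of the construction. The quantum protocol needs a genuine spectral gap and $\Omega(1)$ purity for the property to be well defined and accessible with $O(1)$ copies, yet the \emph{same} state must be nearly invisible to single-copy probes, which is impossible if the property is simply $\bra{\psi}O\ket{\psi}$ for a fixed observable whose expectation a single copy can directly estimate. Reconciling these---engineering the non-principal part of the spectrum (for instance via a randomized or adversarial eigenbasis) so that the first moment is matched and the \emph{entire} single-copy signal is pushed to second order, without collapsing the gap that the quantum estimator relies on---is the delicate step. Verifying that the resulting adaptive, correlated second-moment bound genuinely yields the rate $2^{n/2}$, rather than degrading under the shared-parameter correlations, is the remaining technical core.
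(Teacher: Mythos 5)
Your overall strategy coincides with the paper's proof of this theorem (Appendix~\ref{sec:qpca}): the upper bound via few-copy entangling measurements (virtual distillation estimating $\Tr(Z_1\rho^2)/\Tr(\rho^2)$, falling back on the phase-estimation PCA of~\cite{lloyd2014quantum}), and the lower bound by reducing property prediction to distinguishing two ensembles whose single-copy averages coincide, representing the adaptive conventional learner as a tree, and showing the distinguishing advantage grows only like $O(T^2 2^{-n})$. This is exactly the shape of the argument behind Theorems~\ref{thm:qPCA} and~\ref{thm:qPCA-near}.

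However, there is a genuine gap: the ensemble construction that your whole plan hinges on --- a state with a constant spectral gap and a readable principal-component property, yet with vanishing first-order single-copy signal --- is left unresolved; you flag it as ``the delicate step'' and ``the remaining technical core.'' That construction is precisely the key creative idea of the paper's lower bound, and it is simpler than the tension you describe suggests. The paper takes, for a Haar-random $(n-1)$-qubit state $\ket{\psi}$,
\[
\rho_A(\ket{\psi}) = \tfrac{1}{2}\ketbra{0}{0}\otimes\ketbra{\psi}{\psi} + \tfrac{1}{2}\ketbra{1}{1}\otimes\tfrac{I}{2^{n-1}},
\qquad
\rho_B(\ket{\psi}) = \tfrac{1}{2}\ketbra{1}{1}\otimes\ketbra{\psi}{\psi} + \tfrac{1}{2}\ketbra{0}{0}\otimes\tfrac{I}{2^{n-1}}.
\]
The principal component is $\ket{0}\otimes\ket{\psi}$ (resp.\ $\ket{1}\otimes\ket{\psi}$) with eigenvalue $1/2$ against $2^{-n}$ for all other eigenvalues, so the gap and purity your quantum protocol needs are constants; yet $\Tr(Z_1\,\rho_{A/B})=0$ identically and $\E_{\ket{\psi}}\rho_{A/B}(\ket{\psi}) = I/2^n$ for both hypotheses, while $\bra{\phi}Z_1\ket{\phi} = \pm 1$ on the principal component. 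The point you thought might be ``impossible'' is resolved because a single copy only gives direct access to $\Tr(Z_1\rho)$, not to the principal component's expectation value: the maximally mixed complementary block exactly cancels the principal component's contribution in the single-copy average. The adaptive, shared-parameter correlations you defer are then handled in the paper not by an explicit $\chi^2$ computation over pairs of experiments but by Jensen's inequality combined with a Haar moment lemma, $\E_{\ket{\psi}}\prod_{k=1}^{K}\left|\braket{\psi|\phi_k}\right|^2 \geq \frac{1}{(2^m+K-1)\cdots(2^m+1)2^m}$, which lower-bounds the likelihood ratio along every root-to-leaf path and yields $\mathrm{TV} \leq 1 - \left(1+\frac{T-1}{2^{n-1}}\right)^{-(T-1)} = O(T^2/2^n)$; this gives $T = \Omega(2^{n/2})$, the same rate your second-moment heuristic predicts. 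Without supplying such a construction, your proposal does not yet constitute a proof.
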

\noindent It is worth commenting on recent results in Refs.~\cite{tang2021quantum, chia2020sampling}~showing that quantum PCA can be achieved by polynomial-time classical algorithms, which may seem to contradict Theorem \ref{thm:PCA}.
Those works assume the ability to access any entry of the exponentially large matrix $\rho$ to exponentially high precision in \emph{polynomial time}.
Achieving such a high precision requires measuring exponentially many copies of $\rho$, which takes an exponential number of experiments and \emph{exponential time}.
Hence, the assumptions of \cite{tang2021quantum, chia2020sampling} do not hold here.  See~\cite{cotler2021revisiting} which provides a detailed exposition of these matters.

\begin{figure*}[t]
\centering
\includegraphics[width=1.0\textwidth]{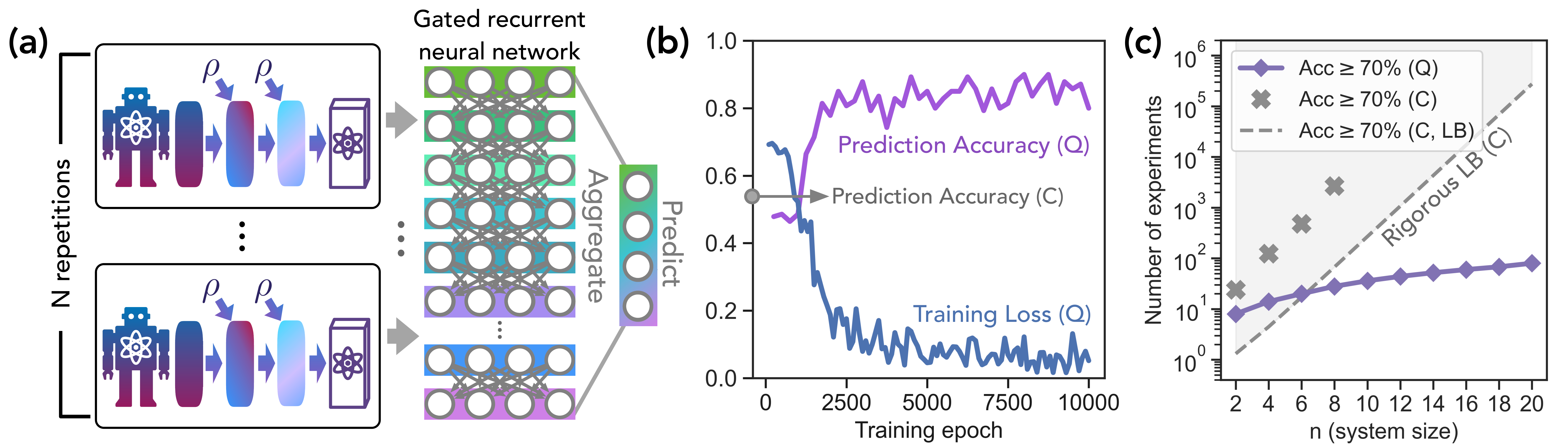}
    \caption{
    Quantum advantage in learning physical states.
    \emph{(a) Supervised machine learning (ML) model based on quantum-enhanced experiments.} $N$ repetitions of quantum-enhanced experiments are performed and the data is fed into a gated recurrent neural network (GRU) \cite{chung2014empirical, tang2015document}. The neurons in the GRU are aggregated to predict an output.
    \emph{(b) Training process of the supervised ML model.}
    We train the supervised ML model to determine which of two $n$-qubit Pauli operators has a larger magnitude for the expectation value in an unknown state $\rho$ using noiseless simulation for small system sizes $(n < 8)$.
    We consider the cross entropy \cite{murphy2012machine} as the training loss.
    Then we use the supervised ML model to make predictions using data from noisy quantum-enhanced experiments running on the Sycamore processor \cite{arute2019quantum} for larger system sizes $(8 \leq n \leq 20)$.
    We consider the probability to predict correctly as the prediction accuracy.
    Random guessing yields a prediction accuracy of $0.5$.
    \emph{(c) Quantum advantage in the number of experiments needed to achieve $\geq 70\%$ accuracy.}
    Here, (Q) corresponds to results running the supervised ML model based on quantum-enhanced experiments and (C) corresponds to results running the best known conventional strategy.
    The dotted line is a lower bound for any conventional strategy  (C, LB) as proven in Appendix~\ref{sec:compare-abs-value}.
    Even running on a noisy quantum processor, quantum-enhanced experiments are seen to vastly outperform the best theoretically achievable conventional results (C, LB).
    \label{fig:AdvantageExpt1}
    }
\end{figure*}

Another core task in quantum mechanics is understanding physical processes rather than states.
Here, each experiment implements a  physical process $\cE$, and we can interface with $\cE$ through a quantum/classical machine in the quantum-enhanced/conventional setting; see Fig.~\ref{fig:Cartoon}(c).
We show that a quantum machine can learn an approximate model of any polynomial-time quantum process $\cE$ from only a polynomial number of experiments. Given a distribution on input states, the approximate model can predict the output state from $\cE$ accurately on average.
In contrast, we would need an exponential number of experiments to achieve the same task in the conventional setting. The proof for general quantum processes is given in Appendix~\ref{sec:evolution}.

\begin{theorem}[Learning quantum processes]\label{thm:processes}
Suppose we are given a polynomial-time physical process $\cE$ acting on $n$ qubits and a probability distribution over $n$-qubit input states. In the conventional scenario, at least order $2^n$ experiments are needed to learn an approximate model of $\cE$ that predicts output states accurately on average, while a polynomial number of experiments suffice in the quantum-enhanced scenario.
\end{theorem}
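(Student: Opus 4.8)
The plan is to establish the two directions separately, treating the quantum-enhanced upper bound as the substantive part and obtaining the conventional lower bound by reduction to the state-learning task of Theorem~\ref{thm:observables}. Throughout, the central object is the Choi--Jamio{\l}kowski state $J(\cE) = (\cE \otimes \cI)(\ket{\Phi}\!\bra{\Phi})$ on $2n$ qubits, where $\ket{\Phi}$ is a maximally entangled state. Since $\cE(\psi) = 2^n \Tr_2[(\Id \otimes \psi^T)\, J(\cE)]$ and hence $\Tr(O\,\cE(\psi)) = 2^n \Tr[(O \otimes \psi^T)\, J(\cE)]$, any predictor for overlaps with $J(\cE)$ yields a predictor for the outputs of $\cE$, so learning the process reduces to learning a single fixed $2n$-qubit state.

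For the \textbf{quantum-enhanced upper bound}, I would use the quantum memory to prepare copies of $J(\cE)$ coherently: initialize a maximally entangled pair across an input register and a memory register, apply one invocation of $\cE$ to the input register, and retain the memory register together with the output. Each experiment thus yields one copy of $J(\cE)$. On $\mathrm{poly}(n)$ such copies I would run a shadow-tomography procedure to produce a classical description from which the model predicts $\Tr(O\,\cE(\psi))$ for inputs $\psi$ drawn from the given distribution and observables $O$ in a bounded family. Because accuracy is required only \emph{on average} over $\psi$, the prediction error should be expressible through a small number of bounded functionals of $J(\cE)$ rather than through full tomography, which is what must keep the sample complexity polynomial.

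For the \textbf{conventional lower bound}, I would reduce to Theorem~\ref{thm:observables}. Take the hard ensemble of states $\rho_P \propto \Id + \alpha P$ indexed by a random $n$-qubit Pauli $P$, and let $\cE_P$ be the polynomial-time process that ignores its input and outputs $\rho_P$, so that the input distribution is irrelevant. Any conventional process-learning strategy that predicts the output of $\cE_P$ accurately on average must in particular recover enough information to pin down $\Tr(P\rho_P) = \alpha$, i.e.\ it solves the state-prediction task of Theorem~\ref{thm:observables}, and each conventional query to $\cE_P$ is exactly one measurement of $\rho_P$. The decision-tree framework of Appendix~\ref{sec:separation} (building on~\cite{chen2021exponential}) then forces $\Omega(2^n)$ experiments. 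A genuinely input-dependent hard instance, such as $\rho \mapsto U_P \rho U_P^\dagger$ for a hidden Pauli-type unitary, can be substituted if one prefers a non-degenerate process, at the cost of a slightly more involved tree analysis.

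The hard part will be the upper bound, specifically controlling the factor of $2^n$ that the Choi isomorphism inserts between overlaps with $J(\cE)$ and physical output expectation values: a naive application of shadow tomography to the $2n$-qubit state $J(\cE)$ incurs exponential sample complexity because the effective observables $2^n(O \otimes \psi^T)$ have exponentially large operator norm. The crux is therefore to show that the \emph{average-case} requirement over the input distribution, together with the polynomial-time structure of $\cE$, collapses the prediction error onto a polynomial number of low-norm features of $J(\cE)$ that $\mathrm{poly}(n)$ coherent measurements can estimate; making this reduction rigorous, rather than the lower bound, is where the main technical effort lies.
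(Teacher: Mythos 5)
Your conventional lower bound is essentially the paper's own argument (Corollary~\ref{cor:learn-proc-hard}): the paper likewise takes the degenerate process that discards its input and prepares the hard state of Definition~\ref{def:hardrhoO}, checks that it is a polynomial-time process ($2n$ ancillas, $\mathcal{O}(n)$ gates), and invokes the state-learning hardness of Theorem~\ref{thm:obs-adv}. The one step worth writing explicitly is how ``accurate on average'' implies the state task: by Jensen's inequality,
\begin{equation*}
\norm{\E_{\sigma\sim\cD}\,\tilde{\cE}(\sigma) - \rho}_1 \;\le\; \E_{\sigma\sim\cD}\norm{\tilde{\cE}(\sigma) - \cE(\sigma)}_1 \;\le\; 0.25,
\end{equation*}
so the learned model predicts $\Tr(O\rho)$ for \emph{every} bounded observable $O$, which Theorem~\ref{thm:obs-adv} forbids with fewer than $\Omega(2^n)$ conventional experiments. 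This direction of your proposal is sound.

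The upper bound, however, has a genuine gap --- one you flag yourself but do not close, and it is the heart of the theorem. Estimating $\Tr(O\,\cE(\psi)) = 2^n\Tr[(O\otimes\psi^T)\,J(\cE)]$ through the Choi state requires additive accuracy $\epsilon\,2^{-n}$ on the overlap, so any shadow-tomography-style estimator applied to copies of $J(\cE)$ needs exponentially many copies; worse, the theorem demands accuracy in trace norm, i.e.\ simultaneously against all $O$ with $\norm{O}_\infty\le 1$, which is not a fixed ``bounded family'' that shadow tomography can serve with polynomially many copies of a generic $2n$-qubit state. Your proposed rescue --- that the average-case requirement plus the polynomial-time structure ``collapses the error onto polynomially many low-norm features'' --- is precisely the argument that is missing, and the paper's actual proof (Theorem~\ref{thm:poly-upp-learn-proc}, Appendix~\ref{sec:poly-proc}) uses the polynomial-time structure in a completely different way. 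There, (i) the family of processes built from $p=\mathrm{poly}(n)$ two-qubit gates and $m=\mathrm{poly}(n)$ ancillas admits a diamond-norm $\epsilon'$-covering net of size $\exp(\mathrm{poly}(n))$ (Eq.~\eqref{eq:Neps-sizeupp}); (ii) the learner samples $N_{\mathrm{in}}$ inputs $\rho_i\sim\cD$, stores $N_{\mathrm{out}}$ copies of each output $\cE(\rho_i)$ in quantum memory, and runs quantum hypothesis selection \cite{buadescu2020improved} over the net, whose copy complexity is polylogarithmic in the net size and hence $\mathrm{poly}(n)$; Hoeffding over the sampled inputs together with a fidelity tensorization bound (Lemma~\ref{lem:tensordiff}) converts ``closest net element on the sampled inputs'' into the average-case trace-norm guarantee. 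Note that the quantum memory is used not to form Choi states but to hold the whole collection of output states for a single collective measurement. Without some ingredient of this kind --- an argument exploiting that $\cE$ lies in an $\exp(\mathrm{poly}(n))$-sized hypothesis class, which your sketch gestures at but never uses --- the upper bound does not go through.
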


\section{Demonstrations of quantum advantage}
\label{sec:experimental}
The exponential quantum advantage captured by Theorems \ref{thm:observables}, \ref{thm:PCA}, and \ref{thm:processes} applies no matter how much classical processing power is leveraged in the conventional experiments. The conventional strategy fails because there is just no way to access enough classical data to perform the specified tasks, if the number of experiments is subexponential in $n$. But these exponential separations apply in an idealized setting where quantum states are stored and processed perfectly. Will access to quantum memory unlock a substantial quantum advantage under more realistic conditions?

For two different tasks, we have investigated the robustness of the quantum advantage by conducting experiments using a superconducting quantum processor. The first task we studied pertains to Theorem~\ref{thm:observables}.
The task is to approximately estimate the magnitude for the expectation value of Pauli observables.
The unknown state is an unentangled $n$-qubit state $\rho= 2^{-n}\left(I+ \alpha P\right)$, where $\alpha = \pm 0.95$, $P$ is a Pauli operator, and both $\alpha, P$ are unknown.
After all measurements are completed and learning is terminated, two distinct Pauli operators $Q_1$ and $Q_2$ are announced, one of which is $P$ and the other of which is not equal to $P$.
We ask the machine to determine which of $|\textrm{tr}\left(Q_1\rho\right)|$ and $|\textrm{tr}\left(Q_2\rho\right)|$ is larger.

\begin{figure*}[t]
\centering
\includegraphics[width=0.96\textwidth]{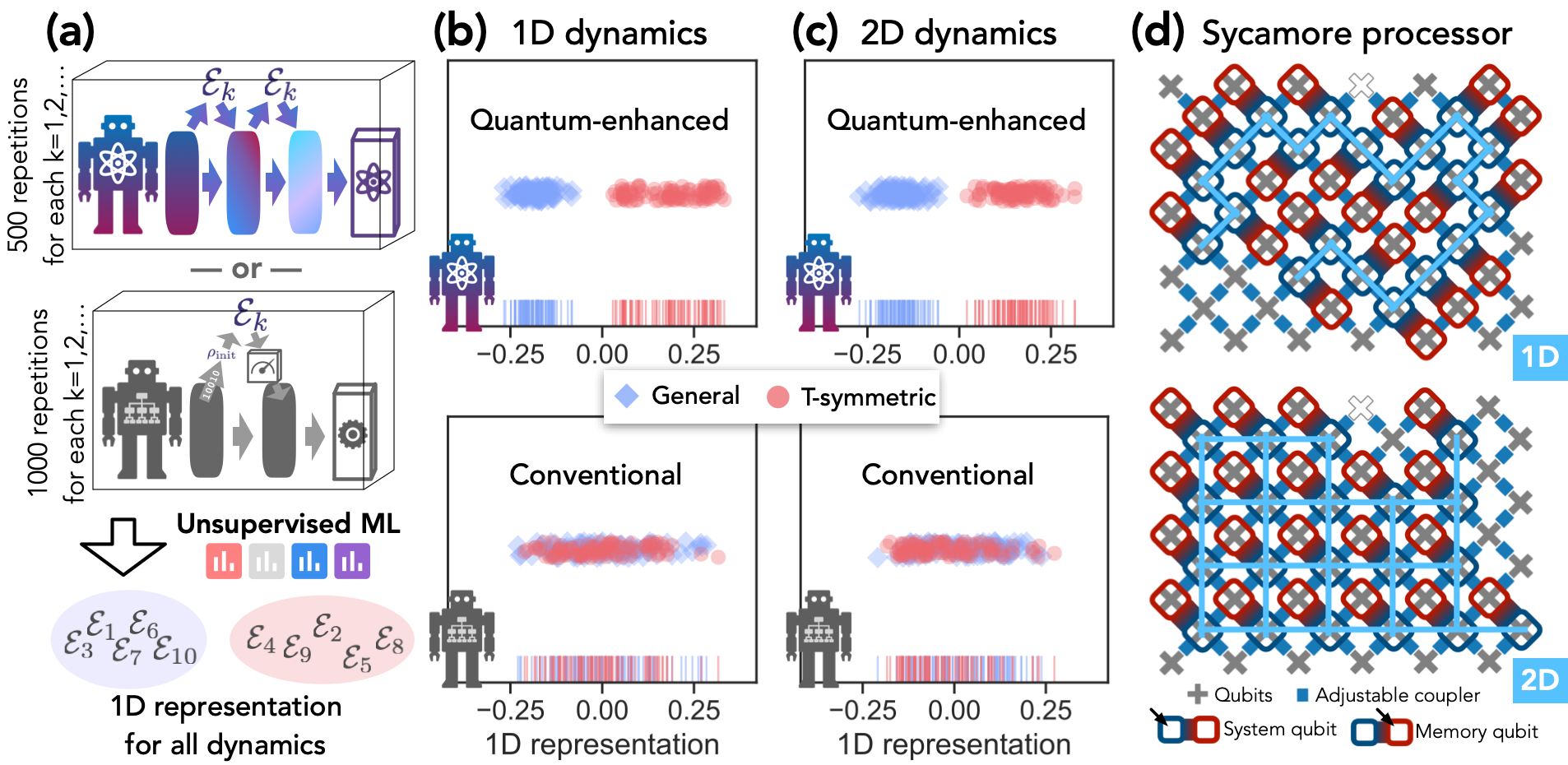}
    \caption{
    Quantum advantage in learning physical dynamics.
    \emph{(a) Unsupervised machine learning (ML) model.}
    We perform $500$ repetitions of quantum-enhanced experiments (each accessing $\cE_k$ twice) for every physical process $\cE_k$, and feed the data into an unsupervised ML model (Gaussian kernel PCA \cite{scholkopf1998nonlinear}) to learn a one-dimensional representation for describing distinct physical dynamics $\cE_{1}, \cE_{2}, \ldots$.
    Similarly, we also consider applying unsupervised ML to data obtained from $1000$ repetitions of the best known conventional experiments (each accessing $\cE_k$ once) for every physical process $\cE_k$.
    \emph{(b) Representation learned by unsupervised ML for 1D dynamics.}
    Each point corresponds to a distinct physical process~$\cE_k$.
    The vertical line at the bottom shows the exact 1D representation of each $\cE_k$.
    Half of the processes satisfy time-reversal symmetry (blue diamonds) while the other half of them do not (red circles).
    When fed with data from quantum-enhanced experiments, the ML model accurately discovers the underlying symmetry pattern.
    In contrast, the ML model fails to do so when fed with data from conventional experiments.
    \emph{(c) Representation learned by unsupervised ML for 2D dynamics.}
    \emph{(d) The geometry implemented on the Sycamore processor \cite{arute2019quantum}.}
    We consider two different classes of connectivity geometry for implementing 1D (top) and 2D (bottom) dynamics.
    \label{fig:AdvantageExpt2}
    }
\end{figure*}

In the conventional scenario, where copies of $\rho$ are measured one by one, the best known strategy is to use randomized Clifford measurements requiring an exponential number of copies to achieve the task with reasonable success probability~\cite{huang2020predicting, huang2021information}.
In the quantum-enhanced scenario, copies of $\rho$ are deposited in quantum memory two at a time, and a Bell measurement across the two copies is performed to extract a snapshot of the state.
We then feed the snapshots into a supervised machine learning (ML) model based on a gated recurrent neural network \cite{chung2014empirical, tang2015document, goodfellow2016deep} to make a prediction, as depicted in Fig.~\ref{fig:AdvantageExpt1}(a).
We train the neural network using noiseless simulation data for small system sizes $(n < 8)$.
Then we use the neural network to make predictions when we are provided with experimental data for large system sizes $(8 \leq n \leq 20)$. We report the prediction accuracy, which is equal to the probability for correctly answering whether $|\textrm{tr}\left(Q_1\rho\right)|$ or $|\textrm{tr}\left(Q_2\rho\right)|$ is larger.
Fig.~\ref{fig:AdvantageExpt1}(b) shows the performance of the ML model as we train the neural network.
Fig.~\ref{fig:AdvantageExpt1}(c) depicts, as a function of the system size $n$, the number of experiments needed in each scenario to achieve $70\%$ prediction accuracy.
We show the experimental results when using conventional and quantum-enhanced experiments, along with a theoretical lower bound on the number of experiments needed in the conventional scenario as proven in Appendix~\ref{sec:compare-abs-value}.
Despite the noisy storage and processing in the experimental device, we observe a significant quantum advantage.

The second task we studied, which pertains to Theorem~\ref{thm:processes}, is inspired by the recent observation that quantum-enhanced experiments can efficiently identify the symmetry class of a quantum evolution operator, while conventional experiments cannot \cite{aharonov2021quantum, chen2021exponential}.
An unknown $n$-qubit quantum evolution operator is presented, drawn either from the class of all unitary transformations, or from the class of time-reversal-symmetric unitary transformations (i.e., real orthogonal transformations).  
We consider whether an unsupervised ML can learn to recognize the symmetry class of the unknown evolution operator
based on data obtained from either quantum-enhanced experiments or conventional experiments.
An illustration is shown in Fig.~\ref{fig:AdvantageExpt2}(a).

In the conventional scenario, we repeatedly apply the unknown evolution operator to the initial state $|0\rangle^{\otimes n}$, and then measure each qubit of the output state in the $Y$-basis.
Under $T$-symmetric evolution the output state has purely real amplitudes; hence the 
expectation value of any purely imaginary observable, such as the Pauli-$Y$ operator, is always zero. In contrast, the expectation value of $Y$ after general unitary evolution is generically nonzero, but may be exponentially small and hence hard to distinguish from zero. 
In the quantum-enhanced scenario, we make use of $n$ additional memory qubits.
We prepare an initial state in which the $n$ system qubits are entangled with the $n$ memory qubits, evolve the system qubits under the unknown evolution operator, swap the system and memory qubits, evolve the system qubits again, and finally perform $n$ Bell measurements, each acting on one system qubit and one memory qubit. 

Each evolution operator is a one-dimensional or two-dimensional $n$-qubit quantum circuit as shown in Fig.~\ref{fig:AdvantageExpt2}(d).
After sampling many different evolution operators from both symmetry classes (and obtaining data from each sampled evolution multiple times), we use an unsupervised ML model (kernel PCA \cite{scholkopf1998nonlinear}) to find a one-dimensional representation of the evolution operators.
The representations learned by the unsupervised ML model are shown in Fig.~\ref{fig:AdvantageExpt2}(b, c).
Using the quantum-enhanced data, the ML model discovers a clean separation between the two symmetry classes, while there is no discernable separation into classes when using data from conventional experiments. The signal from the quantum-enhanced experiments is strong enough that the two classes are easily recognized without access to any labeled training data.

\section{Outlook}

We have investigated how quantum technology can enhance our ability to discover new phenomena occurring in nature. For a variety of tasks, we proved that quantum-enhanced strategies using quantum memory and quantum processing can predict properties of physical systems using exponentially fewer experiments than conventional strategies. This exponential advantage is achievable even if the amount of classical processing used in the conventional strategies is unlimited, and even when the physical system exhibits only classical correlations.
While many previous studies of quantum advantage have focused on computational tasks with known inputs, our work focuses instead on learning tasks, where the goal is to learn about an a priori unknown physical system.
This work provides a new approach to understanding and achieving quantum advantage in quantum machine learning \cite{biamonte2017quantum,broughton2021tensorflow} and quantum sensing \cite{degen2017quantum}.

Our experiments with up to 40 qubits in a superconducting quantum processor show that a substantial quantum advantage is already evident when using today’s noisy intermediate-scale quantum platforms \cite{preskill2018quantum}. These experiments demonstrate that supervised and unsupervised machine learning models \cite{goodfellow2016deep, mohri2018foundations} employing data obtained from quantum-enhanced experiments can predict properties and discover underlying structure in physical systems that are beyond the scope of conventional experiments. 

We envision that future quantum sensing systems will be able to transduce detected quantum data to a quantum memory and then process the stored data using a quantum computer. Lacking for now suitably advanced sensors and transducers, we have conducted proof-of-concept experiments in which quantum data is directly planted in our quantum processor. Nevertheless, the robust quantum advantage we have validated highlights the potential for advancing quantum platforms to unlock facets of nature that would otherwise remain concealed. 

\vspace{-1em}
\subsection*{Acknowledgments:}
\vspace{-1em}

The quantum hardware used for this experiment was developed by the Google Quantum AI hardware team, under the direction of Anthony Megrant, Julian Kelly and Yu Chen. Methods for device calibrations were developed by the physics team led by Vadim Smelyanskiy. Data was collected via cloud access through Google's Quantum Computing Service.
HH is supported by a Google PhD Fellowship.  JC is supported by a Junior Fellowship from the Harvard Society of Fellows, the Black Hole Initiative, as well as in part by the Department of Energy under grant {DE}-{SC0007870}. SC is supported by the National Science Foundation under Award 2103300 and was visiting the Simons Institute for the Theory of Computing while part of this work was completed. JP acknowledges funding from  the U.S. Department of Energy Office of Science, Office of Advanced Scientific Computing Research, (DE-NA0003525, DE-SC0020290), and the National Science Foundation (PHY-1733907). The Institute for Quantum Information and Matter is an NSF Physics Frontiers Center.

\bibliographystyle{apsrev4-1_with_title}
\bibliography{references}

\appendix
\onecolumngrid


\vspace{2.0em}
\begin{center}
\textbf{\Large Supplementary information}
\end{center}

\renewcommand{\appendixname}{APPENDIX}
\renewcommand{\thesubsection}{\MakeUppercase{\alph{section}}.\arabic{subsection}}
\renewcommand{\thesubsubsection}{\MakeUppercase{\alph{section}}.\arabic{subsection}.\alph{subsubsection}}
\makeatletter
\renewcommand{\p@subsection}{}
\renewcommand{\p@subsubsection}{}
\makeatother

\renewcommand{\figurename}{Supplementary Figure}
\setcounter{figure}{0}
\setcounter{secnumdepth}{3}

\bigskip

\noindent \textbf{\ref{sec:exper-detail}.~~\hyperref[sec:exper-detail]{Experimental details}} \dotfill\textbf{\pageref{sec:exper-detail}}
\medskip

\noindent \qquad \begin{minipage}{\dimexpr\textwidth-1.3cm}
 \hyperref[sec:gen-descr]{General description} $\bullet$ 
 \hyperref[sec:learn-state-detail]{Experiments on learning physical states} $\bullet$ 
 \hyperref[sec:learn-dyn]{Experiments on learning physical dynamics} $\bullet$
 \hyperref[sec:additional-exp]{Additional experimental results} $\bullet$
 \hyperref[sec:PerformanceData]{Performance and characterization data}
\end{minipage}
\medskip

\noindent \textbf{\ref{sec:review}.~~\hyperref[sec:review]{A brief review on quantum information theory}} \dotfill\textbf{\pageref{sec:review}}
\medskip

\noindent \qquad \begin{minipage}{\dimexpr\textwidth-1.3cm}
 \hyperref[sec:Qprocess]{Definition and properties of quantum processes} $\bullet$ 
 \hyperref[sec:POVMdef]{Definition and properties of POVMs}
\end{minipage}
\medskip

\noindent \textbf{\ref{sec:separation}.~~\hyperref[sec:separation]{Mathematical framework for proving exponential advantage}} \dotfill\textbf{\pageref{sec:separation}}
\medskip

\noindent \qquad \begin{minipage}{\dimexpr\textwidth-1.3cm}
 \hyperref[sec:tree-def]{Tree representation} $\bullet$ 
 \hyperref[sec:many-v-one]{Many-versus-one distinguishing tasks} $\bullet$ 
 \hyperref[sec:many-vs-many]{Many-versus-many distinguishing task} $\bullet$ 
 \hyperref[sec:part-many-v-one]{Partially-revealed many-versus-one distinguishing task} $\bullet$ 
 \hyperref[sec:noisedegradation]{Presence of noise}  $\bullet$ 
 \hyperref[sec:related-work]{Related works}
\end{minipage}
\medskip

\noindent \textbf{\ref{sec:shadow}.~~\hyperref[sec:shadow]{Predicting highly-incompatible observables}} \dotfill\textbf{\pageref{sec:shadow}}
\smallskip

\noindent \qquad \begin{minipage}{\dimexpr\textwidth-1.3cm}
 \hyperref[sec:expo-adv-abs-one]{Exponential advantage in predicting absolute value of a single observable} $\bullet$ 
 \hyperref[sec:uppbd-obs]{A constant upper bound for quantum-enhanced experiments} $\bullet$  \hyperref[sec:lowbd-obs]{An exponential lower bound for conventional experiments}
 \hyperref[sec:compare-abs-value]{An exponential lower bound for comparing absolute values} $\bullet$ 
\end{minipage}
\medskip

\noindent \textbf{\ref{sec:qpca}.~~\hyperref[sec:qpca]{Performing quantum principal component analysis}} \dotfill\textbf{\pageref{sec:qpca}}
\medskip

\noindent \qquad \begin{minipage}{\dimexpr\textwidth-1.3cm}
 \hyperref[sec:conv-exp-LB-pca]{An exponential lower bound for conventional experiments} $\bullet$ 
 \hyperref[sec:pseudo]{An exponential lower bound using pseudorandomness}
\end{minipage}
\medskip

\noindent \textbf{\ref{sec:evolution}.~~\hyperref[sec:evolution]{Learning a polynomial-time quantum process}} \dotfill\textbf{\pageref{sec:evolution}}
\medskip

\noindent \qquad \begin{minipage}{\dimexpr\textwidth-1.3cm}
 \hyperref[sec:learn-process-setting]{Problem setting} $\bullet$ 
 \hyperref[sec:rigor-learn-process]{Rigorous statements} $\bullet$
 \hyperref[sec:poly-proc]{Proof of polynomial upper bound in Theorem~\ref{thm:poly-upp-learn-proc}}
\end{minipage}
\medskip

\noindent \textbf{\ref{sec:hard-bdd-mem}.~~\hyperref[sec:hard-bdd-mem]{Predicting highly-incompatible observables with bounded quantum memory}} \dotfill\textbf{\pageref{sec:hard-bdd-mem}}
\medskip

\noindent \qquad \begin{minipage}{\dimexpr\textwidth-1.3cm}
 \hyperref[sec:background1]{Background and statement of results} $\bullet$ 
 \hyperref[sec:revew-bdd-qmem]{Review of learning tree framework for bounded quantum memories} $\bullet$
 \hyperref[sec:hardness-smallqmem]{Hardness result for small quantum memories}
\end{minipage}
\medskip

\section{Experimental details}
\label{sec:exper-detail}

In this section, we present the details for the physical experiments ran on the superconducting processor as well as the supervised/unsupervised machine learning models used to analyze the data.

\subsection{General description}
\label{sec:gen-descr}

The experiments were performed on a Google Sycamore processor containing up to 53 superconducting transmon qubits.
The largest error source in the Sycamore processor \cite{arute2019quantum} is qubit readout error, which ranges from $3\%$ to $7\%$.
The second largest error source is the two-qubit gate with an error around $0.5\%$ to $1.5\%$.
Single-qubit gates have the smallest error around $0.05\%$ to $0.5\%$.
The Sycamore chip was introduced in Ref.~\cite{arute2019quantum}, where additional details concerning the hardware implementation and performance can be found.
The Sycamore chip was controlled remotely using an internal cloud interface programmed using Cirq \cite{cirq_developers_2021_5182845} and TensorFlow Quantum \cite{broughton2021tensorflow}.
The layout of the chip including connectivity is depicted in Supp.~Fig.~\ref{fig:Geometry}(a).

For all of our experiments on learning about states and dynamics, the total number of qubits was varied from 4 to 40 qubits (where in each case half of the qubits are used to simulate a physical system).
As the system size was varied, the subset of qubits used for the experiments was varied in order to maximize experimental performance and minimize any overhead related to the 2D connectivity.
That is, the largest contiguous patches with low gate and measurement error rates were selected, and swap operations were used to meet connectivity requirements when necessary.
The experimental requirements for learning states, 1D dynamics, and 2D dynamics differ considerably in their experimental complexity.

In all experiments, the implementations of the unknown states or dynamics are performed in the quantum processor, where the learning algorithms do not know about them.
While this is only an emulation of the process of data collection from a physical system in an actual sensing experiment, it allows us to examine the proposed pipeline for quantum data processing in a situation where data collection is imperfect.

\subsection{Experiments on learning physical states}
\label{sec:learn-state-detail}

We separate this subsection into the concrete procedure for generating the unknown states, the conventional experiments we run, the quantum-enhanced experiments we run, and the supervised neural network model for making prediction based on data from quantum-enhanced experiments.

\subsubsection{Procedure for generating the class of unknown states}
\label{sec:state-I+P}

The state preparation we consider is relatively simple in that the unknown state $\rho$ is unentangled, but has strong non-local classical correlations.
In the experimental demonstration we consider
states of the form $\rho= 2^{-n}\left(I+ \alpha P\right)$, where $\alpha \in \{-0.95, 0.95\}$, $P=\bigotimes_{i=1}^n P_i$ is an $n$-qubit Pauli operator, and both $\alpha$ and  $P$ are unknown.
The state $\rho = 2^{-n}\left(I+ \alpha P\right)$ is prepared by a randomized constant-depth circuit described in the following.
To generate one copy of $\rho$, we introduce a parameter $\eta = \text{sign}(\alpha)$. Then for each qubit $i = 1, \ldots, n$, we do the following.
\begin{enumerate}
    \item If $P_i = I$, then we set qubit $i$ to be $\ketbra{0}{0}$ with probability $1/2$, and be $\ketbra{1}{1}$ with probability $1/2$.
    \item If $P_i \neq I$ and there exists $j > i$ such that $P_j \neq I$, then we set qubit $i$ to be one of the two eigenstates of $P_i$ with equal probability.
    We multiply $\eta$ by the eigenvalue ($+1$ or $-1$) of the selected eigenstate of $P_i$.
    \item If $P_i \neq I$ and there does not exist $j > i$ such that $P_j \neq I$, then we use the following procedure.
    \begin{enumerate}
        \item With probability $0.05$, we set qubit $i$ to be either $\ketbra{0}{0}$ or $\ketbra{1}{1}$ with equal probability.
        \item With probability $0.95$, we set qubit $i$ to be the positive eigenstate of $P_i$ if $\eta = +1$, and set qubit $i$ to be the negative eigenstate of $P_i$ if $\eta = -1$.
    \end{enumerate}
\end{enumerate}
By construction, the density operator prepared by this procedure is realized as an ensemble of pure states, where each pure state is a tensor product of Pauli operator eigenstates. Therefore, there is no quantum entanglement across different qubits. Furthermore, step 3 is designed to assure that $|\Tr\left(P\rho\right)| = 0.95$.

\begin{figure*}[t]
\centering
\includegraphics[width=0.98\textwidth]{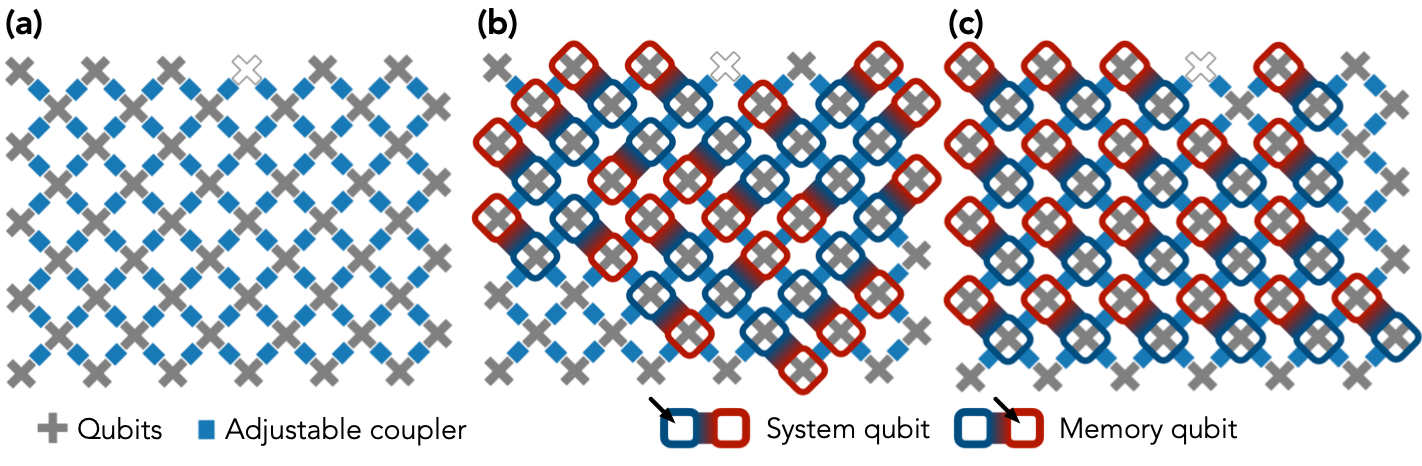}
    \caption{
    \emph{(a) Layout of a Google Sycamore processor.} There is a total of $53$ superconducting transmon qubits (the qubit corresponding to an empty cross is out of order). The blue rectangles show the adjustable couplers that can apply the entangling two-qubit gate $\mathrm{SYC}$ to neighboring qubits.
    \emph{(b) Layout used for learning states and for learning 1D dynamics.} We partition the $40$ qubits into $20$ system qubits and $20$ memory qubits.  Either an unknown state of the system qubits is prepared, or an unknown process is applied to the system qubits. 
    \emph{(c) Layout used for learning 2D dynamics.} 
    \label{fig:Geometry}
    }
\end{figure*}

\subsubsection{Conventional experiments}

In the conventional setting, the optimal strategy (up to logarithmic factors) for estimating expectation values of  high-weight $n$-qubit Pauli observables uses
classical shadow tomography based on randomized Clifford measurements \cite{huang2020predicting}.
Using this strategy, in each experiment we randomly sample a unitary transformation from the Clifford group, apply the sampled transformation to the unknown state $\rho$, and then measure in the computational basis. Although such randomized Clifford measurements can be executed using quantum circuits of polynomial size, the required circuits are too large to be performed accurately with today's noisy quantum devices except for quite modest values of $n$. Furthermore, the classical post-processing of the measurement results has complexity exponential in $n$.

In our conventional experiments, because randomized Clifford measurements are infeasible we instead use classical shadow tomography based on randomized Pauli measurements \cite{huang2020predicting}.
Using this strategy, in each experiment, we randomly sample from depth-$1$ Clifford circuits, apply the sampled circuit to $\rho$, and then measure in the computational basis. That is, for each of the $n$ qubits, we decide uniformly at random to measure one of the three Pauli observables $X$, $Y$, or $Z$. Many such measurements are performed, each time on a new copy of $\rho$, and the classical data collected is post-processed to predict expectation values of observables in the state $\rho$.

Classical shadow tomography based on randomized Pauli measurements is a powerful technique that enables classical ML models to predict quantum many-body ground states and quantum phases of matter with rigorous guarantees \cite{huang2021provably}.
However,
for the task of estimating the expectation value of an $n$-qubit Pauli observable that is announced after all measurements are completed, both randomized Clifford and randomized Pauli measurements require a number of experiments that scale exponentially in $n$, as we have proven in Appendix~\ref{sec:compare-abs-value}. Likewise, exponentially many experiments are needed to perform Task \ref{task:comp-abs} defined in Appendix~\ref{sec:compare-abs-value} with high success probability. 
In Fig.~\ref{fig:AdvantageExpt1}(b), we report the average prediction accuracy (the probability of performing Task \ref{task:comp-abs} successfully) over system sizes $n = 8, 10, 12, 14, 16, 18, 20$.
For each $n$-qubit state $\rho$, we conduct $1000$ experiments to obtain the measurement data.
The prediction accuracy is indicated by the gray point shown on the  vertical axis, which is only slightly better than random guessing ($0.5$).
For system sizes $n \geq 10$, the prediction accuracy is very close to $0.5$.
Classical shadow tomography based on randomized Pauli measurements is a statistical estimation procedure that has no training phase.
Therefore, the prediction accuracy for conventional experiments is a single point Fig.~\ref{fig:AdvantageExpt1}(b).
The training epoch on the horizontal axis in Fig.~\ref{fig:AdvantageExpt1}(b) is only for quantum-enhanced strategy.
In Fig.~\ref{fig:AdvantageExpt1}(c), we report the number of experiments required to achieve a prediction accuracy of at least $70\%$ for different system sizes.
We consider a maximum of $5000$ experiments.
For system size $n \geq 10$, we are unable to achieve at least $70\%$ prediction accuracy with $5000$ experiments. Hence we only show system size $n = 2, 4, 6, 8$ for conventional experiments in Fig.~\ref{fig:AdvantageExpt1}(c).

\subsubsection{Quantum-enhanced experiments}

Quantum-enhanced experiments are executed by performing an entangling Bell measurement across two copies of $\rho$. 
We prepare the state $\rho$ on the system qubits (marked blue in Supp.~Fig.~\ref{fig:Geometry}), swap the state to the memory qubits (marked red), prepare another state $\rho$ on the system qubits, then perform an entangling Bell measurement across the two copies of $\rho$.
Note that every preparation of $\rho$ generates a random product state according to a classical probability distribution described in Appendix~\ref{sec:state-I+P}.

For each system size $n = 2, \ldots, 20$, we choose $n$ qubits from among the $20$ pairs of qubits shown in Supp.~Fig.~\ref{fig:Geometry}(b); these pairs are selected to minimize errors in the state preparations, gates, and measurements. Because the state $\rho$ is not entangled, no entangling gates are used during the state preparation; therefore there is no advantage in choosing the pairs of qubits to be in proximity to one another.  
For each system size $n$, we use the same qubits for the conventional experiments as for the quantum-enhanced experiments, except that in conventional experiments we prepare the unknown state $\rho$ only on the system qubits, and then perform a randomized Pauli measurement of the system immediatly after the state preparation.

While a Bell measurement on a pair of qubits can be performed via a simple circuit containing one Hadamard gate, one CNOT gate, and two $Z$-basis measurements, we instead compile these operations into operations better suited for the Sycamore processor.
In particular, the native two-qubit entangling gate is the Sycamore Gate, which has a unitary matrix representation given by
\begin{align}
    \text{SYC} = \text{ iSWAP}^\dagger \text{ CPHASE}(-\pi/6) = \left( 
    \begin{array}{cccc}
    1 & 0 & 0 & 0 \\
    0 & 0 & -i & 0 \\
    0 & -i & 0 & 0 \\
    0 & 0 & 0 & e^{-i \pi/6}
    \end{array} \right).
\end{align}
Using this gate, the Bell state measurements may be performed by the gate sequence needed to unprepare a Bell state, which when compiled to Sycamore's native gates is expressed as a product of SYC gates and phased XZ gates ($\text{PhXZ}_i$).  The phased XZ gate $\text{PhXZ}_i$ on the $i$-th qubit is a native gate on the Sycamore device that can be expressed as
\begin{align}
    \text{PhXZ}(a, x, z)_i = Z^z_i Z^a_i X^x_i Z^{-a}_i.
\end{align}
where $X_i$ and $Z_i$ are the standard Pauli operators acting on qubit $i$, and the exponents $a$, $x$, $z$ are real numbers.
The particular angles ($a, x, z$) for each of the gates used in our experiments were compiled numerically via a variational optimization.
As the inverse Sycamore is not a native gate of the architecture, the compilations to hardware for inverse gates have to compensate for this difference, which we do numerically.
The full decomposition of all the gates and circuits we reference are provided as Cirq circuits in additional supplemental material.

Each quantum-enhanced experiment generates a classical bitstring of size $2n$.
We collect the bitstrings from all experiments and feed them into a neural network model.
In addition to the experimental data, the neural network model also takes in two Pauli strings $Q_1, Q_2$ and predicts which one of $|\Tr(Q_1 \rho)|$ and $|\Tr(Q_2 \rho)|$ is larger.
In Fig.~\ref{fig:AdvantageExpt1}(b), we repeat each quantum-enhanced experiments $500$ times.
This provides a fair comparison with conventional experiments because two copies of the unknown state $\rho$ are used in each quantum-enhanced experiment; therefore a total of $1000$ copies are consumed in both our conventional and quantum-enhanced experiments.
In Fig.~\ref{fig:AdvantageExpt1}(c), we repeat quantum-enhanced experiments for a maximum of $500$ times over different system sizes from $n = 2, 4, 6, 8, 10, 12, 14, 16, 18, 20$.

\subsubsection{A supervised neural network model using data from quantum-enhanced experiments}
\label{sec:NN-data-Q}

We train a supervised neural network model using noiseless simulation data from small system sizes. Then we use the trained neural network model on the noisy experimental data obtained from performing quantum-enhanced experiments.
The neural network model has three layers. Each of the outer layers runs the preceding inner layer multiple times.
In the following, we describe each layer of the neural network model.
\begin{enumerate}
    \item The \emph{inner layer} is a recurrent neural network based on gated recurrent unit (GRU) \cite{chung2014empirical, tang2015document, goodfellow2016deep}.
    The recurrent neural network takes in a size-$2n$ bitstring, corresponding to the measurement outcome from a single quantum-enhanced experiment, and an $n$-qubit Pauli operator $Q$, which can be represented as a size-$2n$ bitstring.
    The recurrent neural network outputs a two-dimensional real vector.
    Other popular choices of recurrent units, such as long short-term memory (LSTM) \cite{sepp1997lstm} or transformer \cite{vaswani2017attention}, could be used instead of GRU.
    \item The \emph{intermediate layer} is an aggregation layer. This layer runs the inner layer for all the bitstrings obtained from each of the quantum-enhanced experiments.
    For example, if we run the quantum-enhanced experiments for $100$ times, we would obtain $100$ size-$2n$ bitstring and we would run the inner layer for $100$ times over each bitstring.
    The intermediate layer outputs the average of the two-dimensional output vectors from the multiple runs of the inner layers.
    \item The \emph{outer layer} is inspired by a Siamese neural network (twin neural network) \cite{koch2015siamese}. This layer runs the intermediate layer twice, one for each of the two Pauli operators $Q_1$ and $Q_2$.
    Each intermediate layer generates a real-valued vector $x$ of dimension two, which we map to a single real value by considering $x_1 - x_2$.
    The outer layer compares the real value from the two intermediate layers and outputs $Q_1$ or $Q_2$ based on which one of them has a higher real value.
\end{enumerate}
The specific details of the above neural network structure is given in the accompanying code repository\footnote{\url{https://github.com/quantumlib/ReCirq/tree/master/recirq/qml_lfe}}.

In the inner layer, we create a recurrent neural network with an encoding layer that maps an integer between $0$ and $15$ to a vector of dimension $30$, a GRU with $30$ neurons, and a decoding layer that maps $30$ neurons to $2$ neurons.
Only the inner layer contains trainable parameters. The intermediate layer and the outer layer are both fixed operations based on outputs from the inner layer, which will not be updated.

Next, we discuss the process for training the neural network model.
We use noiseless simulation data (for small system sizes $n < 8$) to train the recurrent neural network.
During training, we pick a state $\rho = 2^{-n}\left(I+ \alpha P\right)$ where $\alpha \in \{-0.95, 0.95\}$ and $P$ is an $n$-qubit Pauli operator, and pick an $n$-qubit Pauli operator $Q$ that is equal to $P$ with probability $1/2$ and is not equal to $P$ with probability $1/2$.
We encode the training data into two tensors, \pythoninline{inp} and \pythoninline{target}.
The encoding is defined by the following.
\begin{itemize}
    \item The tensor \pythoninline{inp} is of size $b \times n$, where $b$ is the number of quantum-enhanced experiments we performed, $n$ is the number of qubits, and each entry of \pythoninline{inp} is an integer from $0$ to $15$.
    The $(t, i)$-th entry of \pythoninline{inp} encodes the component of $Q$ on qubit $i$ (a choice of $4$ for $I, X, Y, Z$) and the Bell measurement outcome on qubit $i$ from the $t$-th quantum-enhanced experiment (also a choice of $4$).
    Each entry takes a total of $16$ possible values.
    \item The tensor \pythoninline{target} is of size $1$. The entry in \pythoninline{target} is equal to $1$ if $P = Q$, and is equal to $0$ if $P \neq Q$.
\end{itemize}
We update the neural network model once using \pythoninline{inp} of size $b \times n$ and \pythoninline{target} of size $1$.
We are using the cross entropy loss and employ the Adam optimizer \cite{kingma2014adam}, which is a gradient-based optimization algorithm that adaptively estimates lower-order moments.
We generate multiple different states $\rho$ and $Q$ corresponding to different \pythoninline{inp} and \pythoninline{target} to train the neural network model.

During the training process, we are not using the \emph{outer layer}.
Also, we simultaneously run the $b$ repetitions of the inner layer for each outcome from a single quantum-enhanced experiment by leveraging parallel computing.
Then, we average over the $b$ repetitions of the inner layer.
Also, the output of the neural network model is a two-dimensional real vector, denoted as $v = (v_0, v_1)$.
When \pythoninline{target} is $a \in \{0, 1\}$, the loss function is given by
\begin{equation}
    -\log\left( \frac{\euler^{v_a}}{\euler^{v_0} + \euler^{v_1}} \right).
\end{equation}
The two real values $v_0$, $v_1$ are combined to produce a probability distribution
\begin{equation} \label{eq:prob-NN}
    \frac{\euler^{v_0}}{\euler^{v_0} + \euler^{v_1}} = 1 - \frac{1}{\euler^{v_0 - v_1} + 1}, \quad \frac{\euler^{v_1}}{\euler^{v_0} + \euler^{v_1}} = \frac{1}{\euler^{v_0 - v_1} + 1},
\end{equation}
indicating which of $a=0$ and $a=1$ is more likely.
If $v_0 - v_1$ is large, then $a=0$ corresponding to $P \neq Q$ is more likely. On the other hand, if $v_0 - v_1$ is small, then $a=1$ corresponding to $P = Q$ is more likely.
We compute the gradient through back-propagation and update the model using the Adam optimizer \cite{kingma2014adam}.

Finally, we discuss the prediction process in the neural network model.
Due to the significant amount of measurement errors, we employ a form of measurement error mitigation.
We first characterize the measurement errors for every qubit assuming the zero state preparations and $X$-gates are perfect.
For each qubit $i$, we obtain a $2\times 2$ matrix specifying the probability to measure $0$ or $1$ if the qubit is in $\ketbra{0}{0}$ or $\ketbra{1}{1}$.
We store that as a list of $2 \times 2$ matrices called \pythoninline{calib_2x2}.
We then expand the data, referred to as \pythoninline{data} in the pseudo-code, obtained from the quantum-enhanced experiments, which is a two-dimensional array of size $b \times (2n)$.
Basically, we expand each measurement to $20$ measurements with a real-valued coefficient associated to each of the expanded measurements.
Therefore, \pythoninline{data_expanded} is a two-dimensional array of size $(20 b) \times (2n)$ and \pythoninline{coefficients} is a one-dimensional array of size $20 b$.
\begin{python}
def noise_inversion(data, calib_2x2, inverse_cnt=20):
    Set data_expanded as an empty array
    Set list_of_coefficients as an empty array
    
    for t from 0 to b-1:
        for r from 0 to inverse_cnt-1:
            Set single_data as an empty array
            Set coefficient as 1.0
            
            for i from 0 to 2n-1:
                Set p as calib_2x2[i][1, 1] if data[t][i] = 0
                Set p as calib_2x2[i][0, 0] if data[t][i] = 1
                
                With probability 1-p do:
                    single_data.append(1-data[t][i])
                    coefficient *= -1
                Else do:
                    single_data.append(data[t][i])

            Append single_data to data_expanded
            Append coefficient to list_of_coefficients
            
    return data_expanded, list_of_coefficients
\end{python}

After obtaining \pythoninline{data_expanded}, we construct two tensors \pythoninline{inp1} and \pythoninline{inp2} corresponding to the same experimental data \pythoninline{data_expanded}, but different Pauli operators $Q_1, Q_2$.
Both \pythoninline{inp1} and \pythoninline{inp2} are tensors of size $(20 b) \times n$, where $b$ is the number of quantum-enhanced experiments we performed, $n$ is the number of qubits, and each entry of \pythoninline{inp1} and \pythoninline{inp2} is an integer from $0$ to $15$ similar to the training process.
Then the neural network make a prediction using the two input tensors \pythoninline{inp1} and \pythoninline{inp2}.
In the \emph{outer layer}, the neural network model runs the \emph{intermediate layer} (as well as the multiple repetitions of \emph{inner layer}) for each of the two input tensors to obtain two 2D vectors denoted as $u = (u_0, u_1), v = (v_0, v_1)$.
From the discussion given around Eq.~\eqref{eq:prob-NN}.
If $u_0 - u_1$ is small, then it is more likely that $P = Q_1$.
If $v_0 - v_1$ is small, then it is more likely that $P = Q_2$.
The neural network hence compare $u_0 - u_1$ and $v_0 - v_1$ to predict whether $P = Q_1$ or $P = Q_2$.

\subsection{Experiments on learning physical dynamics}
\label{sec:learn-dyn}

For the task of learning about physical dynamics in 1D and 2D, we considered unitary transformations implemented by 1D and 2D random quantum circuits. We generated many random circuits, half of which are time-reversal symmetric (i.e., real orthogonal), and half of which are general unitary circuits without any symmetry. 
For each of these circuits, we performed both conventional and quantum-enhanced experiments to generate classical measurement data.
This data was fed to an unsupervised machine learning model to learn a low-dimensional classical representation of the physical dynamics. We wished to see whether the unsupervised ML model could recognize the difference between time-reversal symmetric dynamics and general dynamics.
The results summarized in Fig.~\ref{fig:AdvantageExpt2} were obtained in experiments analyzing $180$ different circuits in each of the the two classes, using methods described below.
The largest quantum circuits we ran on the Sycamore processor are presented in Table~\ref{tab:scrambling circuits}.

\begin{table}[h]
    \centering
    \begin{tabular}{c|c|c|c}
         & \,\,\textbf{Number of qubits}\,\, & \, \textbf{Number of gates} \, & \, \textbf{Circuit depth} \, \\
        \hline\hline
        \textbf{1D dynamics}\,\, & \textbf{40} & \textbf{842} & \textbf{40} \\
        \textbf{2D dynamics}\,\, & \textbf{40} & \textbf{1388} & \textbf{54}
    \end{tabular}
    \caption{Circuit information for the experiments on learning physical dynamics. \label{tab:scrambling circuits}}
\end{table}

In \cite{aharonov2021quantum}, a restricted subclass of conventional strategies was shown to require an exponential number of experiments to distinguish between general unitary dynamics and time-reversal-symmetric dynamics.
In \cite{chen2021exponential}, some of the authors of the present work have shown that an exponential number of experiments are required for this task even when arbitrary conventional strategies are allowed. 
Furthermore, it is plausible that under appropriate cryptographic assumptions, the superpolynomial difficulty of characterizing quantum dynamics in conventional experiments would persist even for psuedo-random dynamical processes that can be efficiently generated on a quantum computer.  However, at present, explicit constructions of cryptographically-secure pseudo-random unitaries are not known.  In light of this, in our experiments we resort to studying random quantum circuits similar to those used for demonstrating quantum computational supremacy \cite{arute2019quantum}.

In this subsection, we provide further details regarding how our samples of 1D dynamics and 2D dynamics are generated, how our experiments are conducted, and how our unsupervised machine learning model works. As we will see, the unsupervised ML successfully learns to classify quantum circuits into symmetry classes when provided with data from quantum-enhanced experiments, but not when provided with data from conventional experiments. 

\subsubsection{1D dynamics}

We use the layout provided in Supp.~Fig.~\ref{fig:Geometry}(b).
A 1D circuit is implemented along the 1D line connecting all the qubits circled in blue.
For system size $n < 20$, we consider a contiguous region on the 1D line with the smallest gate/measurement error.
For general unitary dynamics, we use a 1D version of the quantum supremacy circuit \cite{arute2019quantum}.
The quantum supremacy circuit in 1D interleaves between a layer of random single-qubit gates and a layer of two-qubit entangling gates, namely $\mathrm{SYC}$ gates applied to neighboring qubits.
We alternate the partitioning of the two-qubit entangling gates, e.g., $(1, 2), (3, 4), (5, 6) \longleftrightarrow (2, 3), (4, 5)$ for $n = 6$.

For $T$-symmetric (time-reversal-symmetric) dynamics, the single-qubit gates are real orthogonal $2\times 2$ matrices of the form $\euler^{-it Y}$, where $Y$ is the Pauli-$Y$ matrix and $t$ is a randomly chosen real number.
In addition, we replace the two-qubit entangling gate $\mathrm{SYC}$ by a $T$-symmetric two-qubit entangling gate
\begin{equation}
    V = (U_3 \otimes U_4) \mathrm{SYC} (U_1 \otimes U_2),
\end{equation}
where $U_1, U_2, U_3, U_4$ are appropriately chosen single-qubit gates.
In order to find a suitable choice of $U_1, \ldots, U_4$ such that the two-qubit gate $V$ is time-reversal symmetric, we employ a numerical optimization.
We parameterize each of $U_i$ as $\exp(\rmi (a_i X + b_i Y + c_i Z))$, where $a_i, b_i, c_i \in \mathbb{R}$ are initialized randomly.
There are a total of $12$ variables.
Next, we define the loss function to be equal to the Frobenius norm of the imaginary part of the matrix $V$ (after fixing the top left entry of $V$ to be real).
We then perform gradient descent to minimize the loss function and terminate once we have found that the loss function is below $10^{-9}$.
We then replace $\mathrm{SYC}$ by $V$.

In the conventional experiment, we begin with $\ketbra{0^n}{0^n}$ on the system qubits, evolve under the 1D dynamics, and measure in the $Y$-basis.
We also considered using randomized Pauli measurement at the end, but the performance for measuring in the $Y$-basis is slightly better.
The rationale is that the output state under $T$-symmetric evolution has purely real amplitudes; hence the expectation value of any purely imaginary observable, such as the Pauli-$Y$ operator, is always zero.
In contrast, the expectation value of $Y$ after a general unitary evolution is non-zero.
$T$-symmetric unitaries are nevertheless hard to distinguish from general unitaries in the conventional setting because the expectation value of $Y$ is exponentially small for general unitaries; theefore an exponentially large number of experiments are needed to discern its nonzero value. 
Indeed, the result in \cite{aharonov2021quantum, chen2021exponential} shows that conventional strategies require an exponential number of experiments for distinguishing $T$-symmetric unitaries from general unitaries. 

In the quantum-enhanced experiment, we prepare a Bell state $\frac{1}{\sqrt{2}}(\ket{00} + \ket{11})$ for every pair of system and memory qubits.
Then we evolve the system qubits under the unknown dynamics.
After the evolution, we swap the system and the memory qubits.
Then we evolve the system qubits under the unknown dynamics again.
Finally, we measure every pair of system and memory qubits in the Bell basis.
Each quantum-enhanced experiment generates a $2n$-bit string.
We perform gradient descent to find our implementation of the Bell state preparation, swap operation, and Bell measurement using the native gates in the Sycamore processor.

\begin{figure*}[t]
\centering
\includegraphics[width=0.98\textwidth]{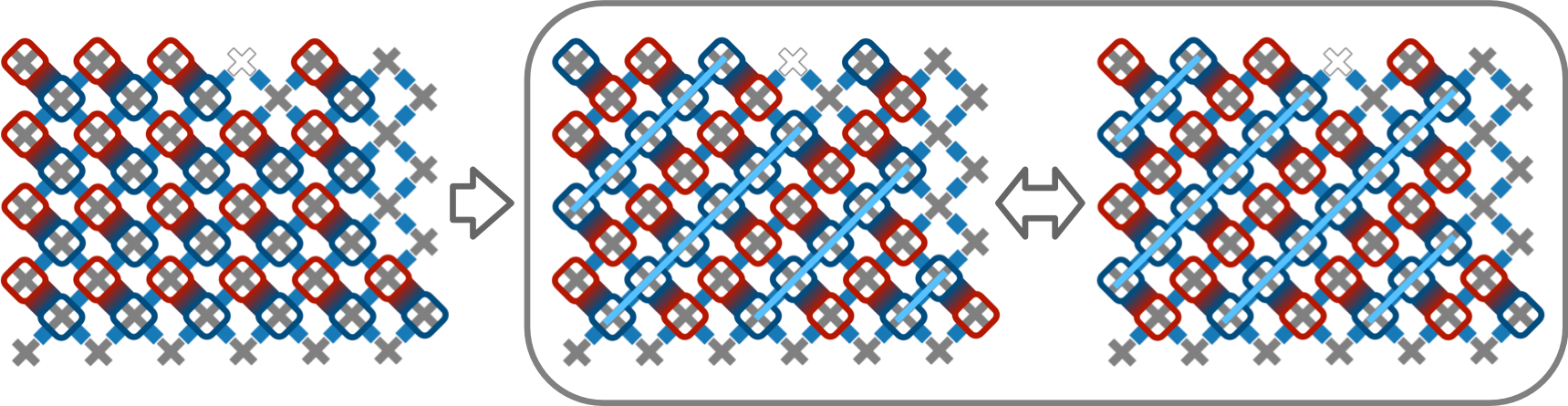}
    \caption{
    To implement the 2D dynamics, we first move from the layout on the left to the layout in the middle (by swapping some pairs of system and memory qubits).
    Then, we iterate between the layout in the middle and the layout in the right (by swapping all pairs of system and memory qubits).
    \label{fig:2Dscrambling}
    }
\end{figure*}

\subsubsection{2D dynamics}

For our 2D circuits we use the layout provided in Supp.~Fig.~\ref{fig:Geometry}(c) which is also shown as the leftmost layout in Supp.~Fig.~\ref{fig:2Dscrambling}.
In the leftmost layout, none of the system qubits (circled blue) are connected to one another.
In order to implement 2D dynamics, we first swap some pairs of the system and memory qubits to obtain the layout shown in the middle.
In the middle layout, we can see that many of the system qubits are connected (the light blue line).
We implement a depth-$4$ 1D random quantum circuit for each light blue line.
Each depth-$4$ circuit corresponds to 1 layer of single-qubit gates, 1 layer of two-qubit gates, 1 layer of single-qubit gates, and 1 layer of two-qubit gates.
The partitioning of the two layers of two-qubit gates are different.
Then, we swap all pairs of qubits to obtain the layout shown on the right.
The right-most layout connects a different set of system qubits (the light blue line).
We again implement a depth-$4$ 1D random quantum circuit for each light blue line.
After that, we move back to the middle layout and repeat for multiple rounds. After sufficiently many repetitions, the $n$ qubits become globally entangled.

\subsubsection{An unsupervised machine learning model}

For each circuit, we create a feature vector by obtaining statistics for each bit in the measurement outcome bitstring.
In conventional experiments, each experiment produces an $n$-bit measurement outcome.
In quantum-enhanced experiments, each experiment produces a $2n$-bit measurement outcome.
In the following, we consider $\ell = n$ or $2n$ depending on either we are running conventional or quantum-enhanced experiments.
For an $\ell$-bit measurement outcome, we obtain a feature vector of size $2\ell$ including the first and second moment of each bit.
After constructing a feature vector for each circuit, we map the feature vector to an infinite-dimensional reproducing kernel Hilbert space (corresponding to a Gaussian kernel) that includes all the polynomial expansions of the feature vector.
Then, we find a low-dimensional subspace in the infinite-dimensional Hilbert space using principal component analysis (PCA) \cite{scholkopf1998nonlinear}.
The entire procedure can be performed efficiently using kernel PCA \cite{scholkopf1998nonlinear}.
Kernel PCA is implemented using scikit-learn \cite{sklearn_api}.

In Fig.~\ref{fig:AdvantageExpt2} of the main text, we show a one-dimensional subspace found by the unsupervised ML model, for both 1D and 2D random quantum circuits.
We can use this one-dimensional representation to classify the circuits into two classes (by splitting the one-dimensional representation in the middle).
Then, we can evaluate the accuracy of the unsupervised ML model by checking the percentage of circuits that are correctly classified as general circuits or as $T$-symmetric circuits.
A two-dimensional subspace found by the unsupervised ML model, and an assessment of classification accuracy, are discussed in Appendix~\ref{sec:additional-exp}

\subsection{Additional experimental results}
\label{sec:additional-exp}

\begin{figure*}[t]
\centering
\includegraphics[width=0.75\textwidth]{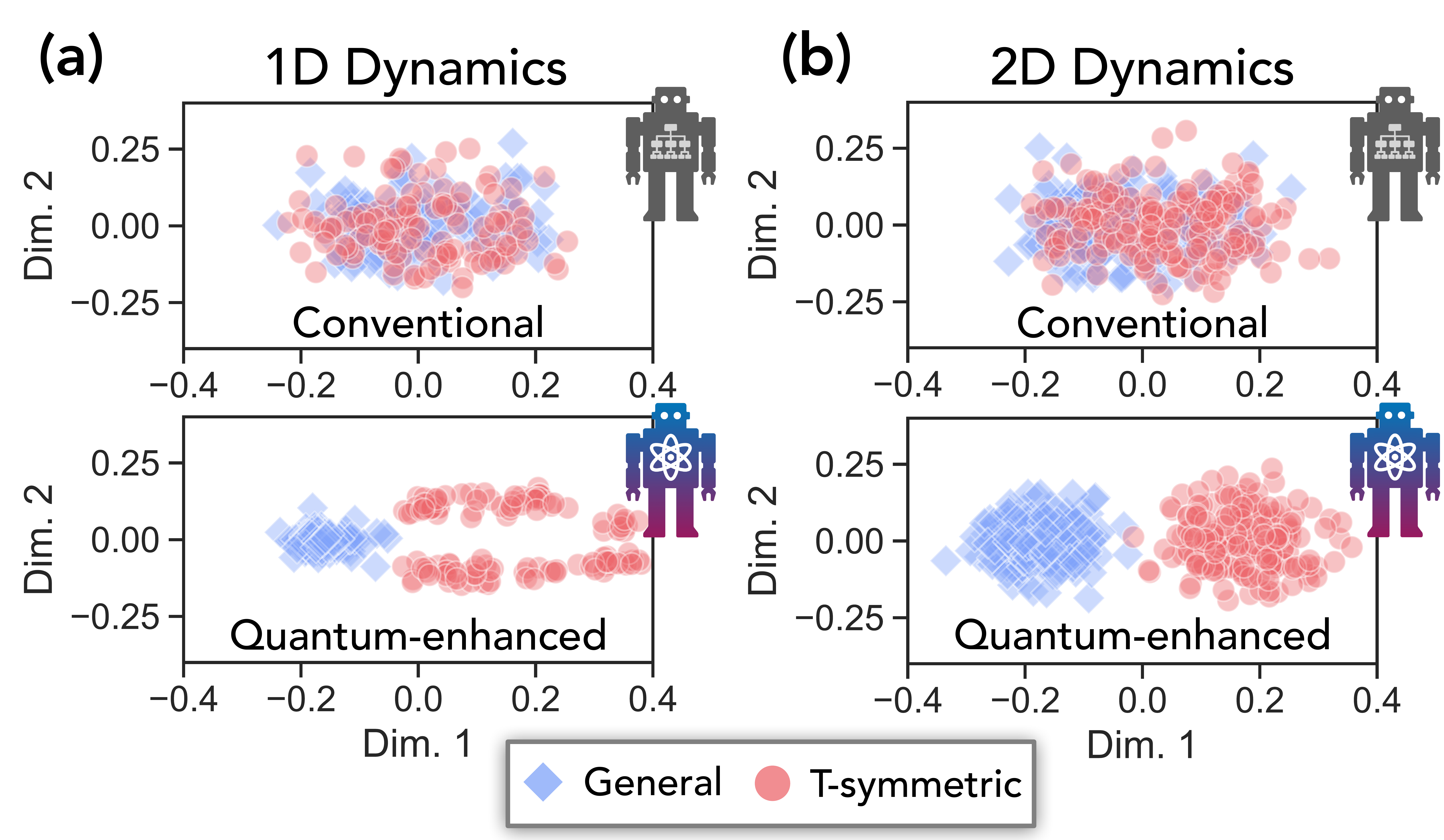}
    \caption{
    \emph{Two-dimensional representation learned by unsupervised ML for (a) 1D dynamics and (b) 2D dynamics.}
    Each point in the two-dimensional plane corresponds to a distinct physical process.
    Half of the processes have time-reversal symmetry (blue diamonds) while the other half do not (red circles).
    When fed with data from quantum-enhanced experiments, the ML model accurately discovers the underlying symmetry pattern.
    In contrast, the ML model fails to do so when fed with data from conventional experiments.
    \label{fig:2DPCA}
    }
\end{figure*}

In Supp.~Fig.~\ref{fig:2DPCA}, we provide the two-dimensional representations learned by unsupervised ML for the various random quantum circuits investigated in our conventional and quantum-enhanced experiments. 
(One-dimensional representations are presented in Fig.~\ref{fig:AdvantageExpt2} in the main text.)
We see that in the second dimension found by unsupervised ML using quantum-enhanced experiments for 1D dynamics, $T$-symmetric dynamics are clustered into two groups.
Further inspection shows that the unsupervised ML model has learned substructure corresponding to the parity of the depth of the evolution (recall that the depth is always an integer).
In principle, the unsupervised ML model should be able to learn a wide variety of structures in the dynamics.
Notably, we see that it places the distinction between general unitary dynamics and $T$-symmetric dynamics as the major axis (the first dimension), and places less prominent structure as the second major axis (the second dimension).

In Supp.~Fig.~\ref{fig:acc-PCA}, we provide the accuracy of the unsupervised ML model for distinguishing between general unitary dynamics and $T$-symmetric dynamics.
We see a substantial advantage for using the quantum-enhanced strategy in both the physical experiments and the noiseless simulation.
We perform brute-force noiseless simulation for conventional experiments because the system size is at most $20$.
The noiseless simulation for quantum-enhanced experiments uses the fact that $(U \otimes U) \frac{1}{\sqrt{2^n}}\sum_{i=0}^{2^n-1} \ket{i i} = (U U^T \otimes I) \frac{1}{\sqrt{2^n}}\sum_{i=0}^{2^n-1} \ket{i i}$, hence we can effectively reduce the simulation to a system size at most $20$.

\begin{figure*}[t]
\centering
\includegraphics[width=0.98\textwidth]{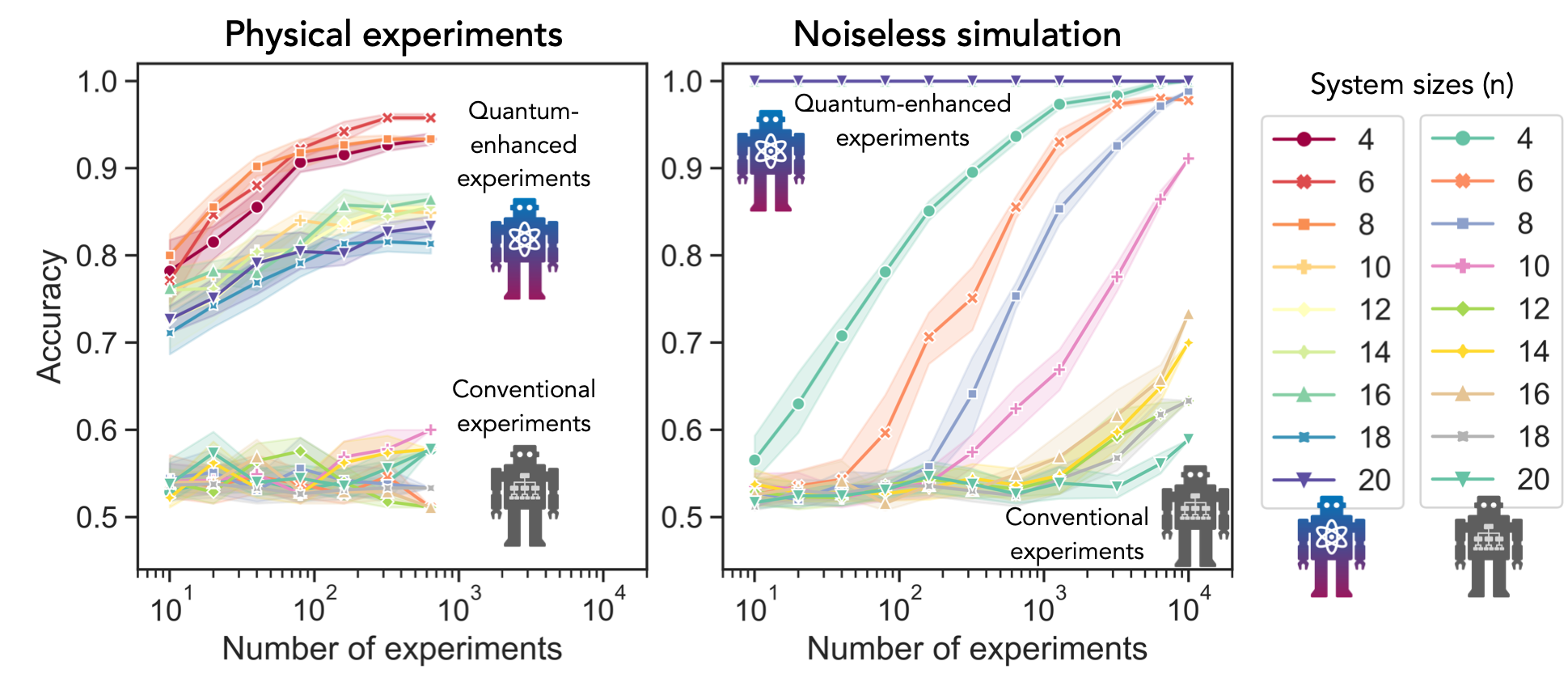}
    \caption{
    \emph{Accuracy of the unsupervised ML model for classifying general unitary and $T$-symmetric dynamics.}
    For each system size, we generate $100$ different circuits for each of the two classes (general and $T$-symmetric).
    The one-dimensional representation found by the unsupervised ML model is used to classify the $200$ circuits into two classes.
    We consider both physical experiments and noiseless simulations. Accuracy is plotted as a function of the number of experiments in both the conventional and quantum-enhanced settings.
    \label{fig:acc-PCA}
    }
\end{figure*}

\subsection{Performance and characterization data}
\label{sec:PerformanceData}

The performance of the device was characterized before each run.
The measurement data is collected explicitly and used for measurement error mitigation in the prediction process of the supervised neural network model (see discussion in the last part of Appendix~\ref{sec:NN-data-Q}).
The task of learning quantum states is largely limited by the qubit measurement fidelities.
A representative sample of the data from the device is reported in Fig.~\ref{fig:ReadoutError}, where one can see that typical readout errors (conflated with errors in preparing zero and one states) range from 3\% to 7\%.
For transmons, the $|0\rangle$ preparation has a small error; hence Supp.~Fig.~\ref{fig:ReadoutError}(a) is dominated by the readout error.
Furthermore, the single-qubit gate error (shown in Supp.~Fig.~\ref{fig:GateError}) is much smaller than the error shown in Fig.~\ref{fig:ReadoutError}(b), hence the error shown in Supp.~Fig.~\ref{fig:ReadoutError}(b) is mostly due to readout errors rather than gate errors.
During the actual run of the experiments, we avoid using qubits with the worst readout errors by checking the measurement errors before the experiment and selecting the layout accordingly.

The task of learning quantum dynamics involves circuits of higher complexity and hence  is limited by both measurement errors and errors in two-qubit gates. 
For these experiments, we report in Supp.~Fig.~\ref{fig:GateError} the quality of the single-qubit and two-qubit gates across the device. This data was obtained via parallel cross-entropy benchmarking and single-qubit randomized benchmarking.
The typical single-qubit gate error is around $0.001$ to $0.005$, while the typical two-qubit gate error is around $0.01$ to $0.05$.

\begin{figure*}[htp]
\centering
\includegraphics[width=0.94\textwidth]{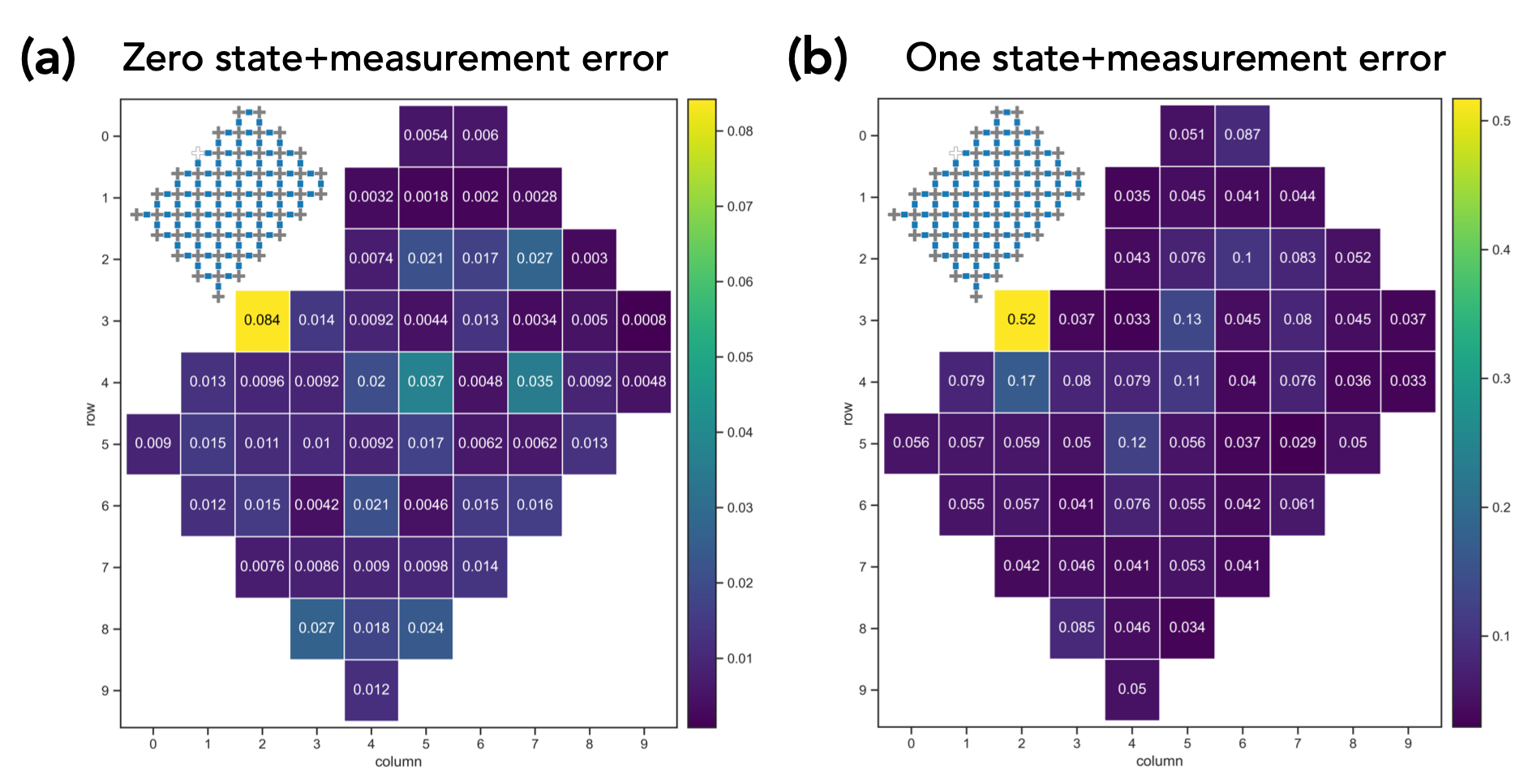}
    \caption{\emph{Sycamore state preparation and measurement error data.}\\
    \emph{(a) $|0\rangle$ state preparation and measurement error.} We prepare a noisy zero state $\ket{0}$ and measure in the computational basis using the noisy single-qubit readout. We show the probability of measuring $|1\rangle$ in the qubit readout.\\
    \emph{(b) $|1\rangle$ state preparation and measurement error.} We prepare a noisy one state $\ket{1}$ and measure in the computational basis using the noisy single-qubit readout.  We show the probability of measuring $|0\rangle$ in the qubit readout.\\
    While these values change over time, we present here a representative sample of the error. One can see that in accordance with physical expectations based on T$_1$ errors, the readout in the physical $1$ state is substantially higher than the $0$ state.
    \label{fig:ReadoutError}
    }
\includegraphics[width=0.94\textwidth]{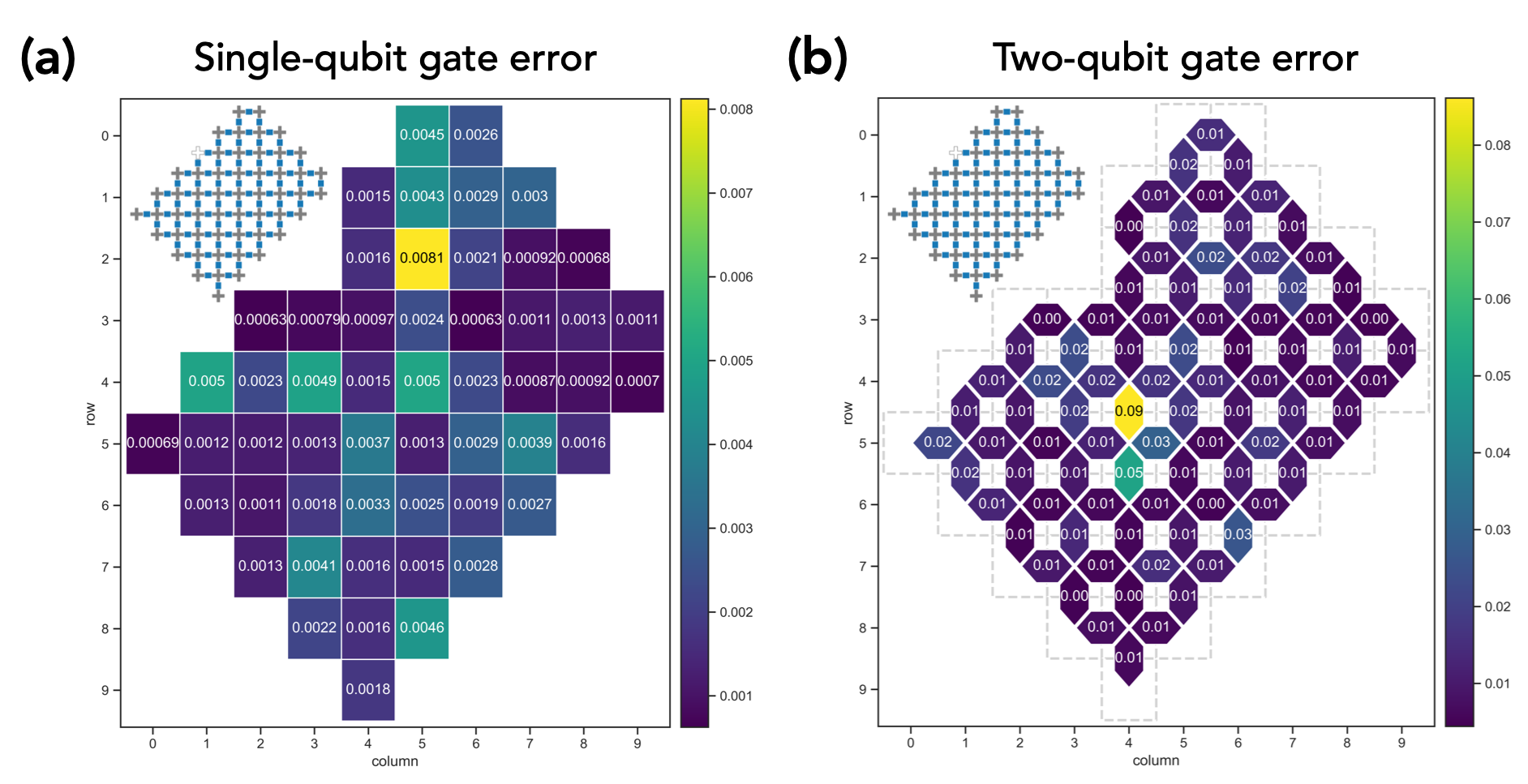}
    \caption{\emph{Sycamore single- and two-qubit gate error data.}\\
    \emph{(a) Single-qubit gate error.} The figure shows the error of single-qubit gates across the chip using parallel single-qubit randomized benchmarking. \\
    \emph{(b) Two-qubit gate error.} The figure shows the error across the chip of two-qubit gates being executed in parallel, as to account for errors that occur during simultaneous operation of qubits. We can see that the distribution of errors varies across the chip couplers, showing the extent to which performance is non-uniform.
    \label{fig:GateError}
    }
\end{figure*}

\section{A brief review on quantum information theory}
\label{sec:review}

In this section we review some relevant definitions and basic results in quantum information theory which are leveraged throughout our problem statements and proofs.  Specifically, we will discuss quantum processes, which are a general mathematical formalism for describing physical processes, and positive operator-valued measures (POVMs), which encompass all possible physical measurements.
Readers familiar with these concepts can skip this section.

\subsection{Definition and properties of quantum processes}
\label{sec:Qprocess}

For concreteness, let us consider a Hilbert space $\mathcal{H}_S \simeq \mathbb{C}^d$.  Here the subscript $S$ stands of `system', since the Hilbert space describes the space of states of some particular system we wish to study.  Given a density matrix $\rho$ on this Hilbert space, we might ask: how can it evolve in time?  The Schr\"{o}dinger equations tells us that a state can evolve via unitary time evolution, and as such a density matrix can evolve by $\rho \mapsto U \rho U^\dagger$.  However, there is a more general type of time evolution allowed by quantum mechanics.  Suppose that we append our Hilbert space by another $\mathcal{H}_E \simeq \mathbb{C}^{d'}$ which describes an external environment.  The joint Hilbert space is then the tensor product $\mathcal{H}_S \otimes \mathcal{H}_E$.  We can imagine having an initial state $\rho_S \otimes \rho_E$ which factorizes between the system and environment, and then evolving the state by a unitary on the joint Hilbert space which couples the system and environment: $\rho_S \otimes \rho_E \mapsto U_{SE} (\rho_S \otimes \rho_E) U_{SE}^\dagger$.  If we only have access to $\mathcal{H}_S$, then our knowledge of $U_{SE} (\rho_S \otimes \rho_E) U_{SE}^\dagger$ is described by performing a partial trace over the environment, namely $\text{tr}_E\!\left(U_{SE} (\rho_S \otimes \rho_E) U_{SE}^\dagger\right)$.  As such, if we are only aware of the initial density matrix $\rho_S$ on $\mathcal{H}_S$, then only having access to $\mathcal{H}_S$ the time evolution would appear to be
\begin{equation}
\label{E:SEchannel1}
\rho_S \longmapsto \text{tr}_E\!\left(U_{SE} (\rho_S \otimes \rho_E) U_{SE}^\dagger\right)\,.
\end{equation}
Note that, viewed as time evolution on $\rho_S$ alone, the above map is not unitary.  This is because information in $\rho_S$ can leak into the environment, and similarly information from the environment can influence the state on our system $S$ of interest.  This mapping is an example of a quantum process, which can be more compactly notated as $\rho_S \mapsto \mathcal{C}[\rho_S]$.  Here, $\mathcal{C}$ is a map from density matrices on $\mathcal{H}_S$ to (other) density matrices on $\mathcal{H}_S$.
We visualize this dynamical process in Supp.~Fig.~\ref{fig:qprocess}.

Our quantum process $\mathcal{C}$ has two properties that are worth highlighting:
\begin{enumerate}
    \item $\mathcal{C}$\textit{ is trace-preserving}.  This means that $\text{tr}(\mathcal{C}[\rho_S]) = \text{tr}(\rho_S)$.  The equality follows from the definition of $\mathcal{C}[\rho]$ via the right-hand side of~\eqref{E:SEchannel1}, since
    \begin{align}
    \text{tr}\!\left(\mathcal{C}[\rho_S]\right) &= \text{tr}_S\!\left(\text{tr}_E\!\left(U_{SE} (\rho_S \otimes \rho_E) U_{SE}^\dagger\right)\right) = \text{tr}\!\left(U_{SE} (\rho_S \otimes \rho_E) U_{SE}^\dagger\right)\\
    &= \text{tr}(\rho_S \otimes \rho_E) = \text{tr}(\rho_S) \,\text{tr}(\rho_E) = \text{tr}(\rho_S)\,,
    \end{align}
    where we have used the cyclicity of the trace to cancel $U_{SE}$ with $U_{SE}^\dagger$, and have also leveraged $\text{tr}(\rho_E) = 1$.
    \item $\mathcal{C}$\textit{ is completely positive}.  Suppose we append our system Hilbert space $\mathcal{H}_S$ by ancillas $\mathcal{H}_A$ to arrive at the joint Hilbert space $\mathcal{H}_A \otimes \mathcal{H}_S$. Then complete positivity means that for any density matrix $\rho_{AS}$ on this joint system (and for any choice of ancilla Hilbert space), $(\text{Id}_A \otimes \mathcal{C})[\rho_{AS}]$ is positive-semidefinite; here $\text{Id}_A$ acts as the identity on the ancillas.  To see why this property holds, we can write out $(\text{Id}_A \otimes \mathcal{C})[\rho_{AS}]$ more explicitly:
    \begin{equation*}
    (\text{Id}_A \otimes \mathcal{C})[\rho_{AS}] = \text{tr}_E\!\left((I_A \otimes U_{SE}) (\rho_{AS} \otimes \rho_E) (I_A \otimes U_{SE}^\dagger)\right)\,.
    \end{equation*}
    Since the right-hand side is merely performing a unitary transformation on the density matrix $\rho_{AS} \otimes \rho_E$ and then tracing out a subsystem (i.e.~the environment subsystem), positive semi-definiteness is preserved.
\end{enumerate}
We have thus shown that our $\mathcal{C}$ is a completely positive, trace-preserving (CPTP) linear map from density matrices on $\mathcal{H}_S$ to density matrices on $\mathcal{H}_S$.  Henceforth, when we refer to a map as being CPTP, we will implicitly suppose that the map is linear.  Moreover, we will interchangeably call a CPTP map a quantum process.

What is not immediately obvious is the following fact:
\begin{theorem}[Stinespring dilation]
Any CPTP map $\mathcal{C}$ taking density matrices on $\mathcal{H}_S \simeq \mathbb{C}^d$ to density matrices on $\mathcal{H}_S \simeq \mathbb{C}^d$ can be written in the form
\begin{equation*}
\mathcal{C}[\rho_S] = \text{\rm tr}_E\!\left(U_{SE} (\rho_S \otimes \rho_E) U_{SE}^\dagger\right)
\end{equation*}
for any $\rho_S$, where $U_{SE}$, $\rho_E$, and the dimension of the environment $d'$ are all fixed.
\end{theorem}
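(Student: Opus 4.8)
The plan is to establish the representation in two stages: first reduce $\mathcal{C}$ to a Kraus (operator-sum) form, and then dilate that form to a unitary on an enlarged Hilbert space. The workhorse for the first stage is the Choi--Jamio\l{}kowski isomorphism. I would introduce a copy $\mathcal{H}_{S'} \simeq \mathbb{C}^d$ of the system together with the unnormalized maximally entangled vector $\ket{\Omega} = \sum_{i=1}^d \ket{i}_{S'}\ket{i}_S$, and form the Choi matrix $J = (\text{Id}_{S'}\otimes\mathcal{C})[\ketbra{\Omega}{\Omega}]$. The crucial observation is that complete positivity of $\mathcal{C}$—which holds by hypothesis, since $\mathcal{C}$ is CPTP—is exactly equivalent to $J \succeq 0$: one direction is immediate because $\ketbra{\Omega}{\Omega} \succeq 0$, and the converse follows because acting with a single fixed $d$-dimensional ancilla on the maximally entangled state already detects positivity on all inputs.

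Given $J \succeq 0$, I would diagonalize it as $J = \sum_{k=1}^{r} \ketbra{v_k}{v_k}$ with $r = \operatorname{rank}(J) \le d^2$, where each $\ket{v_k} \in \mathcal{H}_{S'}\otimes\mathcal{H}_S$ absorbs the square root of its eigenvalue. Un-vectorizing each $\ket{v_k}$ into a $d\times d$ matrix $K_k$ (i.e. $\ket{v_k} = (I \otimes K_k)\ket{\Omega}$) yields operators for which $\mathcal{C}[\rho] = \sum_{k=1}^r K_k \rho K_k^\dagger$ for every $\rho$. Trace preservation of $\mathcal{C}$, applied to this representation, then forces the completeness relation $\sum_{k=1}^r K_k^\dagger K_k = I$.

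For the second stage I would set $d' = r$, fix an orthonormal basis $\{\ket{k}\}_{k=1}^{d'}$ of $\mathcal{H}_E \simeq \mathbb{C}^{d'}$, and choose the fixed environment state $\rho_E = \ketbra{0}{0}$. Define $V:\mathcal{H}_S \to \mathcal{H}_S\otimes\mathcal{H}_E$ by $V\ket{\psi} = \sum_{k=1}^{d'} (K_k\ket{\psi})\otimes\ket{k}$; the completeness relation says precisely that $V^\dagger V = I$, so $V$ is an isometry. Reading $V$ as acting on the subspace $\mathcal{H}_S\otimes\ket{0}$ of $\mathcal{H}_S\otimes\mathcal{H}_E$, it isometrically embeds a $d$-dimensional space into a $dd'$-dimensional one, and any such isometry extends to a unitary $U_{SE}$ on all of $\mathcal{H}_S\otimes\mathcal{H}_E$ by mapping the orthogonal complement of the domain isometrically onto the orthogonal complement of the range (both complements share dimension $dd'-d$). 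The verification is then direct: for a pure input $\rho_S = \ketbra{\psi}{\psi}$ one computes $U_{SE}(\rho_S\otimes\rho_E)U_{SE}^\dagger = V\ketbra{\psi}{\psi}V^\dagger = \sum_{k,k'} K_k\ketbra{\psi}{\psi}K_{k'}^\dagger\otimes\ketbra{k}{k'}$, and tracing out $E$ leaves $\sum_k K_k\ketbra{\psi}{\psi}K_k^\dagger = \mathcal{C}[\rho_S]$; linearity extends this to arbitrary $\rho_S$.

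The main obstacle is the first stage—the equivalence between complete positivity and positivity of the Choi matrix, together with the clean passage from the eigenvectors of $J$ to Kraus operators. Everything afterward (completing the isometry to a unitary and the partial-trace verification) is essentially bookkeeping, whereas identifying the correct positive object $J$ and justifying why testing positivity on a single maximally entangled state suffices is where the real content of the theorem resides.
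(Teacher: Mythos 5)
Your proof is correct, but there is nothing in the paper to compare it against: the paper states this theorem in its review appendix on quantum information theory as a standard fact (prefaced by ``What is not immediately obvious is the following fact'') and gives no proof at all, only an interpretive remark afterward. Your argument is the standard textbook route, and it fills that gap soundly. Stage one is fine: complete positivity applied to the positive matrix $\ketbra{\Omega}{\Omega}$ (with a $d$-dimensional ancilla) gives $J=(\mathrm{Id}_{S'}\otimes\mathcal{C})[\ketbra{\Omega}{\Omega}]\succeq 0$, the spectral decomposition $J=\sum_{k}\ketbra{v_k}{v_k}$ un-vectorizes via $\ket{v_k}=(I\otimes K_k)\ket{\Omega}$ into Kraus operators satisfying $\mathcal{C}[\rho]=\sum_k K_k\rho K_k^\dagger$ (this follows from the recovery formula $\mathcal{C}[\rho]=\mathrm{tr}_{S'}\!\left[(\rho^{T}\otimes I)\,J\right]$), and trace preservation forces $\sum_k K_k^\dagger K_k=I$. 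Stage two is also fine: $V\ket{\psi}=\sum_k K_k\ket{\psi}\otimes\ket{k}$ is an isometry precisely because of the completeness relation, and your dimension count for extending it to a unitary $U_{SE}$ (both orthogonal complements have dimension $dd'-d$) is right, as is the final partial-trace verification and the extension from pure to general $\rho_S$ by linearity. Two cosmetic points: you index the environment basis as $\{\ket{k}\}_{k=1}^{d'}$ yet set $\rho_E=\ketbra{0}{0}$, so relabel one or the other for consistency; and the converse of the Choi equivalence ($J\succeq 0$ implies complete positivity), which you sketch, is never actually used---only the forward direction enters your proof, so you could drop that remark entirely.
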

\noindent This theorem means that any CPTP map on a density matrix can be realized as a unitary operation on a larger system, i.e.~coupling the density matrix to an appropriate environment and evolving the joint state and ultimately tracing out the environment.  In this sense, a quantum process is the most general form of evolution of a density matrix.  Note that a special case of a quantum process is simply a unitary channel, i.e.~$\mathcal{C}[\rho] = U \rho U^\dagger$.  A way of summarizing the above Theorem is that a quantum process that is not a unitary channel can be thought of as implementing open system dynamics.

\begin{figure}[t]
    \centering
    \includegraphics[width=0.7\textwidth]{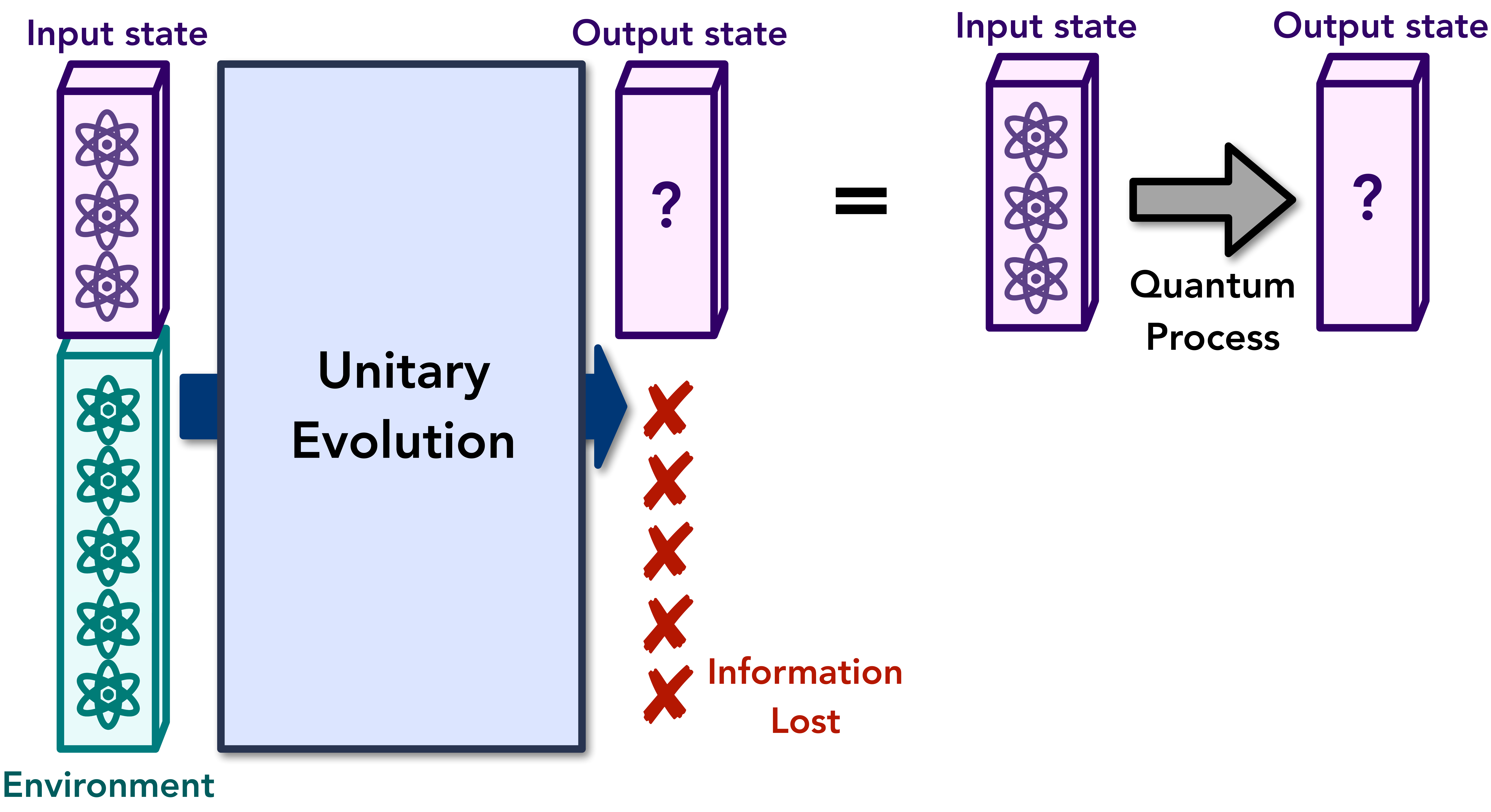}
    \caption{\emph{Illustration of quantum process: a formalism for describing physical processes.} Quantum process is also known as quantum operation or quantum dynamical map, and is often referred to as quantum channel in quantum communication theory. }
    \label{fig:qprocess}
\end{figure}

\subsection{Definition and properties of POVMs}
\label{sec:POVMdef}

The most conventional way to measure a quantum state $|\psi\rangle$ is by decohering it with respect to a complete orthonormal basis.  More specifically, suppose that $|\psi\rangle \in \mathbb{C}^d$ and we choose some complete orthonormal basis $\{|i\rangle\}_{i=0}^{d-1}$ of $\mathbb{C}^d$.  Then upon measuring $|\psi\rangle$ with respect to this basis, we will measure $|\psi\rangle$ to be in the state $|i\rangle$ with probability $\text{Prob}(i) = |\langle i |\psi\rangle|^2$.  Analogously for a density matrix $\rho$ on the same Hilbert space, if we measure it with respect to the same orthonormal basis we will measure the state to be $|i\rangle \langle i|$ with probability $\text{Prob}(i) = \text{tr}(|i\rangle \langle i| \, \rho)$.

There is a nice way of conceptualizing measurements which will admit useful generalizations.  First, let us develop some notation.  We define $\Pi_i = |i\rangle \langle i|$ which is the projector onto state $|i\rangle$, and will speak of the collection of projectors $\{\Pi_i\}_{i=0}^{d-1}$.  It is readily seen that $\sum_{i=0}^{d-1} \Pi_i = I$ since this is just a resolution of the identity.  Observe that each $\Pi_i$ is Hermitian and positive semi-definite.  Now suppose we append to our Hilbert space another copy $\mathbb{C}^d$. Then we can define a unitary on both copies which acts by
\begin{equation}
\label{E:pointermeas1}
U \big(|\psi\rangle \otimes |0\rangle\big) = \sum_{i=0}^{d-1} \Pi_i |\psi\rangle \otimes |i\rangle
\end{equation}
for any $|\psi\rangle$.  Note that, as required of a unitary,
\begin{equation*}
\big( \langle \psi| \otimes \langle 0| \big) U^\dagger U \big(|\psi\rangle \otimes |0\rangle \big) = \sum_{i,j=0}^{d-1} \langle \psi| \Pi_i \Pi_j |\psi\rangle \langle i | j\rangle = \sum_{i=0}^{d-1} \langle \psi| \Pi_i^2 |\psi\rangle = \sum_{i=0}^{d-1} \langle \psi| \Pi_i |\psi\rangle = 1
\end{equation*}
on account of $\Pi_i^2 = \Pi_i$ and $\sum_{i=0}^{d-1} \Pi_i = I$.  Given the right-hand side of~\eqref{E:pointermeas1}, we can make a measurement on the appended Hilbert space in the $\{|i\rangle\}_{i=0}^{d-1}$ basis; we will then measure the appended register to be in the state $|i\rangle$ with probability
\begin{equation}
\text{Prob}(i) = \big( \langle \psi| \otimes \langle 0| \big) U^\dagger \big(I \otimes |i\rangle \langle i| \big) U \big(|\psi\rangle \otimes |0\rangle \big) = \text{tr}(\Pi_i  |\psi\rangle \langle \psi|) = |\langle i|\psi\rangle|^2\,.
\end{equation}
Similarly, if we consider $\rho \otimes |0\rangle \langle 0|$, conjugate by $U$, and then measure the state of the ancilla, the probability of measuring the ancilla to be $|i\rangle$ is $\text{Prob}(i) = \text{tr}(\Pi_i \rho) = \text{tr}(|i\rangle \langle i|\, \rho)$.

We can think about the above in terms of the following procedure.  First we prepare a state $|\psi\rangle$; then we bring in an ancilla $|0\rangle$ and cause the two states to interact such that the ancilla goes into a state $|i\rangle$ upon coupling with the $|i\rangle$-component of $|\psi\rangle$.  This leads to the right-hand side of~\eqref{E:pointermeas1}.  The ancilla can be thought of as a proxy for the readout of a measurement apparatus: upon reading off the value of $|i\rangle$, we are informed that the state $|\psi\rangle$ has been projected into its $|i\rangle$-component.

\begin{figure}[t]
    \centering
    \includegraphics[width=0.65\textwidth]{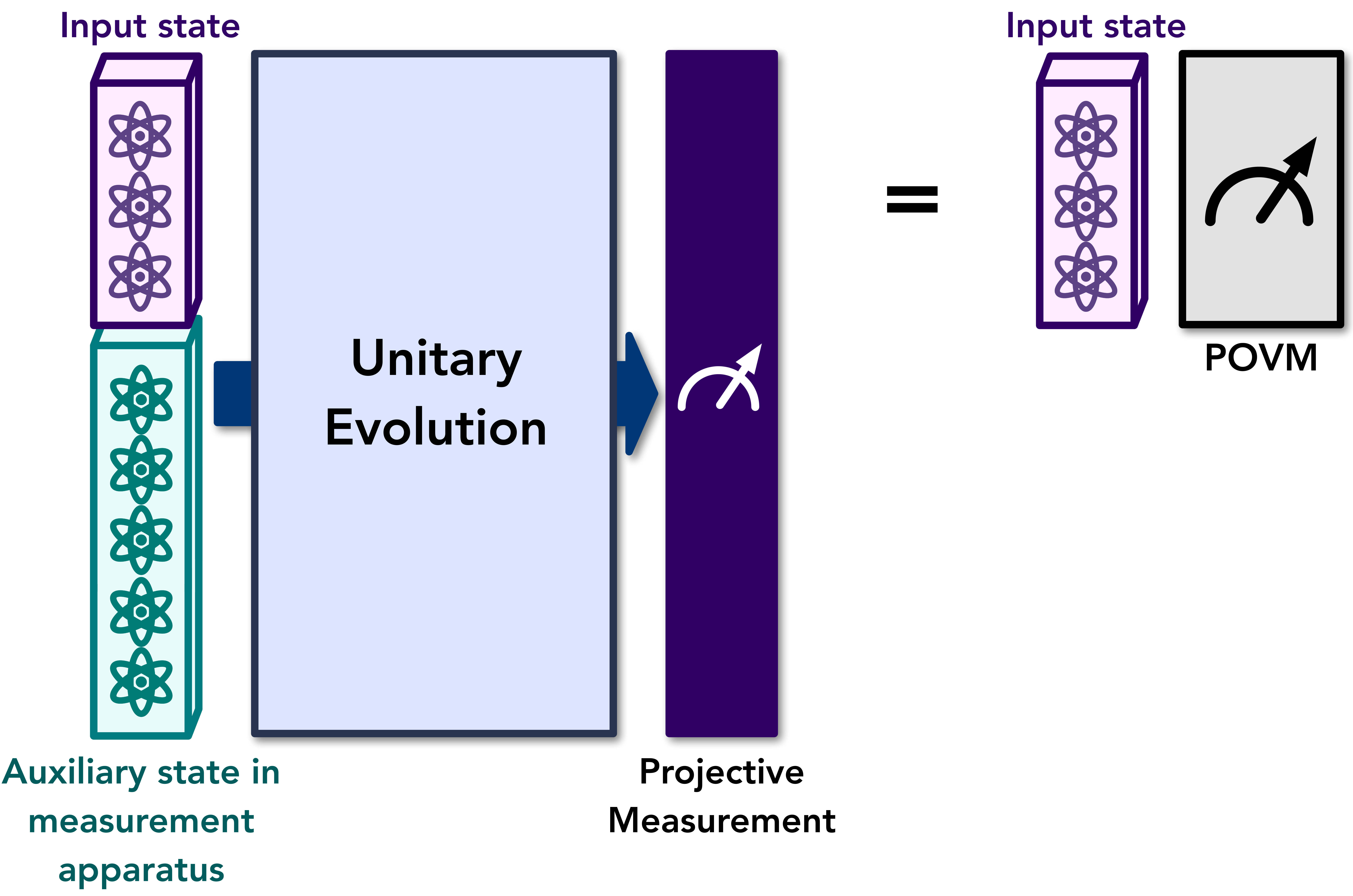}
    \caption{\emph{Illustration of POVM: a formalism encompassing all physical measurements.} POVM considers a composition of the input state with an auxiliary state in the measurement apparatus (which can be thought of as a set of ancilla qubits) that undergoes an unitary evolution, followed by a projective measurement. }
    \label{fig:povm}
\end{figure}

This type of procedure can be generalized as follows.  Suppose we have a set of $N$ $d \times d$ operators $\{M_i\}_{i=0}^{N-1}$ satisfying the completeness relation $\sum_{i=0}^{N-1} M_i^\dagger M_i = I$.  Let us append to our Hilbert space $\mathbb{C}^d$ and ancillary Hilbert space $\mathbb{C}^N$ with complete orthonormal basis $\{|i\rangle\}_{i=0}^{N-1}$.  Then we can consider a unitary map
\begin{equation}
\label{E:pointermeas2}
U \big(|\psi\rangle \otimes |0\rangle\big) = \sum_{i=0}^{N-1} M_i |\psi\rangle \otimes |i\rangle\,.
\end{equation}
The fact that $\big( \langle \psi| \otimes \langle 0| \big) U^\dagger U \big(|\psi\rangle \otimes |0\rangle \big) = 1$ can be checked using the completeness relation $\sum_{i=0}^{N-1} M_i^\dagger M_i = I$.  Now if we measure the ancilla with respect to the $\{|i\rangle\}_{i=0}^{N-1}$ basis, then we will measure the ancilla to be in the state $|i\rangle$ with probability $\text{Prob}(i) = | M_i |\psi\rangle|^2$.  If we performed an analogous procedure at the level of density matrices, namely starting with a state $\rho \otimes |0\rangle \langle 0|$, conjugating both sides by $U$, and then measuring the ancilla in the $\{|i\rangle\}_{i=0}^{N-1}$ basis, we would measure the ancilla to be $|i\rangle$ with probability $\text{Prob}(i) = \text{tr}(M_i^\dagger M_i \rho)$.
We visualize the above procedure in Supp.~Fig.~\ref{fig:povm}.

We can abstract this procedure into what is called a \textit{positive operator-valued measure} (POVM):
\begin{definition}[POVM]
A POVM is a set Hermitian, positive semi-definite operators $\{F_i\}_{i=0}^{N-1}$ on $\mathbb{C}^d$ satisfying the completeness relation $\sum_{i=0}^{N-1} F_i = I$.  A POVM measurement is a procedure in which, given a state $\rho$ on $\mathbb{C}^d$, an ancillary measurement apparatus registers the index $i$ with probability $\text{\rm tr}(F_i \rho)$.
\end{definition}
\noindent This relates to our previous procedure as follows.  We simply decompose $F_i = M_i^\dagger M_i$ (say, by a Cholesky decomposition) and perform the procedure previously stated with the $M_i$'s.

We remark that the term `measure' is used above in two distinct ways.  When we speak of a POVM, the M means measure in the sense of measure theory, since we can think of $\{F_i\}_{i=0}^{N-1}$ as comprising a type of discrete measure on the space of operator on $\mathbb{C}^d$.  Otherwise, we use `measure' in the sense of measurement.

A useful fact is that given a POVM $\{F_i\}_{i=0}^{N-1}$, we can refine it into another, larger POVM $\{F_{i,j}\}_{i=0, j=0}^{N-1, d-1}$ such that (1) each $F_{i,j}$ is rank-1, and (2) a POVM measurement of $\{F_{i,j}\}_{i=0, j=0}^{N-1, d-1}$ can simulate a POVM measurement of $\{F_i\}_{i=0}^{N-1}$.  Let us explain this construction.  Since each $F_i$ is a positive semi-definite Hermitian operator, we can diagonalize each operator as $F_i = \sum_{j=0}^{d-1} \lambda_{j}^{(i)} |v_{j}^{(i)}\rangle \langle v_{j}^{(i)}|$.  Then let $F_{i,j} := \lambda_j^{(i)} |v_{j}^{(i)}\rangle \langle v_{j}^{(i)}|$ which is manifestly positive semi-definite, Hermitian, and rank-1; it is also clear that $\sum_{i=0}^{N-1} \sum_{j=0}^{d-1} F_{i,j} = \sum_{i=0}^{N-1} F_i = I$. We can use a POVM measurement of $\{F_{i,j}\}_{i=0, j=0}^{N-1, d-1}$ to simulate a POVM measurement of $\{F_i\}_{i=0}^{N-1}$ by simply summing measurement results:
\begin{equation}
 \sum_{j=0}^{d-1} \text{tr}(F_{i,j} \rho) = \text{tr}(F_i \rho)\,.
\end{equation}
Accordingly, we can without loss of generality choose to work with rank-1 POVMs, since we can use these to simulate any other POVMs.

\section{Mathematical framework for proving exponential advantage}
\label{sec:separation}

One of the central ingredients for establishing exponential advantage is to prove an exponential lower bound for any learning algorithm with only external classical memory.
In this section, we present the basic framework for proving such lower bounds.
This framework enables us to establish a suite of exponential advantage in various physically-relevant tasks, which will be presented after this section.
The purpose of this section is to provide the readers with the essential tools to prove quantum advantage in the tasks they want to study.

The basic tools include the tree representation of learning algorithms (Appendix~\ref{sec:tree-def}), reduction to distinguishing tasks and  information-theoretic lower bounds (Appendix~\ref{sec:many-v-one-red}~and~\ref{sec:many-vs-many}).  Many of these techniques were introduced and leveraged in~\cite{chen2021exponential}.
We also present a novel \emph{partially-revealed many-versus-one distinguishing task} in Appendix~\ref{sec:part-many-v-one} that is crucial for realizing the advantage in practice.
Then, in Appendix~\ref{sec:noisedegradation}, we discuss how having noise in the unknown physical states and dynamics only makes the lower bounds for conventional experiments larger.
This result on the presence of noise is simple to establish but also crucial in practice because there is often noise in the unknown physical states and dynamics.
There are some other techniques presented in a theory paper written by some of the authors~\cite{chen2021exponential}, such as a multi-linear tensor analysis on the learning tree, which may be of interest to some readers.

\subsection{Tree representation}
\label{sec:tree-def}

We begin by presenting the tree representation for analyzing algorithms with only classical memory~\cite{chen2021exponential}.
The key idea is to track changes in the classical memory state in the algorithm using a graph, which we can take to be a rooted tree.
We consider each node $u$ of the graph to be a classical memory state.
Based on the memory state, the algorithm performs an experiment to obtain a measurement outcome $s$.

\subsubsection{Experiments for learning physical world}

To motivate the definitions in the sequel, we separately describe the two types of experimental setups that we focus on in this work: one on learning an unknown physical state and the other on learning an unknown process~\cite{huang2021information, aharonov2021quantum, chen2021exponential}.
\begin{itemize}
    \item \emph{Learning an unknown physical state}: A physical state is represented by a density matrix $\rho$. An algorithm leveraging the classical memory state $u$ measures the physical system $\rho$ using a rank-$1$ POVM $\{w^u_s \ketbra{\phi^u_s}{\phi^u_s} \}$ with $\sum_{s} w_s^u \ketbra{\phi^u_s}{\phi^u_s} = \Id$.
    Note that from the discussion in Appendix~\ref{sec:POVMdef}, we can always consider rank-$1$ POVMs only.
    The measurement outcome $s$ occurs with probability
    \begin{equation}
     w_s^u \bra{\phi^u_s} \rho \ket{\phi^u_s}.
    \end{equation}
    Here, the rank-$1$ POVM $\{w^u_s \ketbra{\phi^u_s}{\phi^u_s} \}$ depends on the classical memory state $u$.
    \item \emph{Learning an unknown physical process}: A physical process is represented by a quantum process $\cE$ (equivalently, a CPTP map). An algorithm leveraging the memory state $u$ prepares an initial state $\ket{\psi^u}$, feeds it into the physical evolution $\mathcal{E}$, and measures the output state $\mathcal{E}(\ketbra{\psi^u}{\psi^u})$ with a rank-$1$ POVM $\{w^u_s \ketbra{\phi^u_s}{\phi^u_s} \}$ with $\sum_{s} w_s^u \ketbra{\phi^u_s}{\phi^u_s} = \Id$.
    The outcome $s$ is obtained from the experiment with probability 
    \begin{equation}
        w_s^u \bra{\phi^u_s} \mathcal{E}(\ketbra{\psi^u}{\psi^u}) \ket{\phi^u_s}\,.
    \end{equation}
    In this case, both the initial state and the measurement depend on the classical memory state $u$.
    
    We consider the initial state $\ket{\psi^u}$ to be an $(n+n')$-qubit state, where $\mathcal{E}$ acts on the first $n$ qubits.
    The rank-$1$ POVM $\{w^u_s \ketbra{\phi^u_s}{\phi^u_s} \}$ is on an $(n+n')$-qubit state.
\end{itemize}

\subsubsection{Dynamics of the learning algorithm}

The classical memory state of the learning algorithm is initialized in a certain state, which we represent by the root node $r$.
The memory state of the algorithm begins at the root $r$. Each measurement outcome~$s$ resulting from a single experiment causes the algorithm to transition to a node neighbor of $r$.
Whenever the algorithm obtains a measurement outcome $s$, the memory state changes.
This is represented by a transition from a node $u$ to another node $v$,
\begin{equation}
    u \xrightarrow{s} v.
\end{equation}
The directed edge $e = (u, v, s)$ from $u$ to $v$ represents the transition of the memory in an algorithm when we receive the measurement outcome $s$.
We illustrate the transition under a single experiment in Supp.~Fig.~\ref{fig:Tree}(a).

If a different $s$ leads us to the same node, then the algorithm is not retaining full information of the measurement outcome. An example is given in Supp.~Fig.~\ref{fig:Tree}(b).
Since we do not limit the size of the classical memory, there is no need to lose (or forget) information.
Hence, all the outgoing edges of the root node $r$ indexed by the measurement outcome $s$ will point to distinct nodes.
The same argument holds for any node in the graph.
More precisely, every outgoing edge from a node $u$ will connect to a node $v$, such that $v$ has exactly one incoming edge (the edge is from $u$).
The only node in the graph without an incoming edge is the root $r$.
This is exactly the definition of a directed rooted tree $\mathcal{T}$.
We will focus on an algorithm that performs $T$ experiments. This means the depth of the tree, namely the number of edges in any root-to-leaf path, will be $T$.
The tree representation is shown in Supp.~Fig.~\ref{fig:Tree}(c).

\begin{figure}[t]
    \centering
    \includegraphics[width=0.9\textwidth]{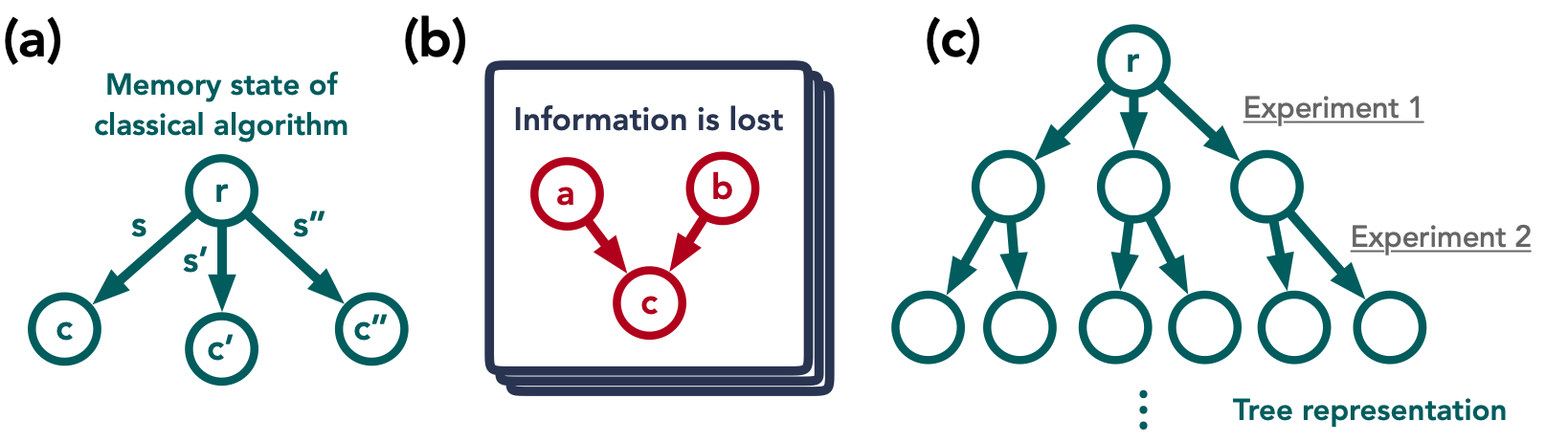}
    \caption{\emph{Illustration of the tree representation for a learning algorithm.} \emph{(a) Dynamics of memory.} The memory state changes based on the measurement outcome $s$. \emph{(b) No cycles.} If two memory states $a, b$ transition into the same memory state $c$, then some information is lost. \emph{(c) Tree representation of an algorithm.} When no information is lost, the transition graph of the memory states must be a directed tree. Each layer of the tree corresponds to one experiment. After $T$ experiments, the memory state is represented by a node in the $T$-th layer.  }
    \label{fig:Tree}
\end{figure}

When we execute the algorithm to achieve a certain task (such as to verify entanglement, or learn a model of the physical system), the entire dynamic process of how the memory state changes will be represented by a path from the root $r$ to a leaf $\ell$ in the tree $\mathcal{T}$,
\begin{equation}
    u_0 = r \xrightarrow{s_1} u_1 \xrightarrow{s_2} u_2 \xrightarrow{s_3} \ldots \xrightarrow{s_{T-1}} u_{T-1} \xrightarrow{s_T} u_T = \ell.
\end{equation}
To establish a lower bound against any learning algorithm for a particular task, we need to analyze each such path along with the probability that the path is taken.

\subsection{Many-versus-one distinguishing tasks}
\label{sec:many-v-one}

\subsubsection{Reduction}
\label{sec:many-v-one-red}

In a learning task, we often want the learning algorithm to be able to make accurate predictions about some properties of the unknown physical system or dynamics.
We will have a set of states (the mathematical representation of a physical system) or a set of channels (the mathematical representation of physical dynamics) to which we assume the unknown system or dynamics belong.
The basic technique we employ in all of our proofs is to pick out one of the states/channels as the null hypothesis, and consider all the rest as the alternative hypothesis~\cite{huang2021information, chen2021exponential}.
Let $\cX$ denote the set of possible states/channels.
\begin{itemize}
    \item \emph{Null hypothesis}: The unknown state/channel is an element $X_0 \in \cX$. To establish a tight lower bound, we should choose an $X_0$ that we think is \emph{close} to every other state/channel in $\cX$.
    \item \emph{Alternative hypothesis}: The unknown state/channel is a random element in $\cX \setminus \{X_0\}$.
\end{itemize}
Furthermore, we need to choose $X_0$ such that the desired property we would like to learn enables us to distinguish between $X_0$ and the entire set of $\cX \setminus \{X_0\}$.

To prove a lower bound against any classical algorithm, we try to answer the following question.
\begin{center}
    \emph{How hard is it to distinguish the alternative hypothesis from the null hypothesis?}
\end{center}
Because the alternative hypothesis consists of many elements and the null hypothesis consists of only one element,
we refer to this distinguishing task as the \emph{many-versus-one distinguishing task}.

\subsubsection{Information-theoretic lower bound}
\label{sec:many-v-one-info}

\begin{figure}[t]
    \centering
    \includegraphics[width=0.88\textwidth]{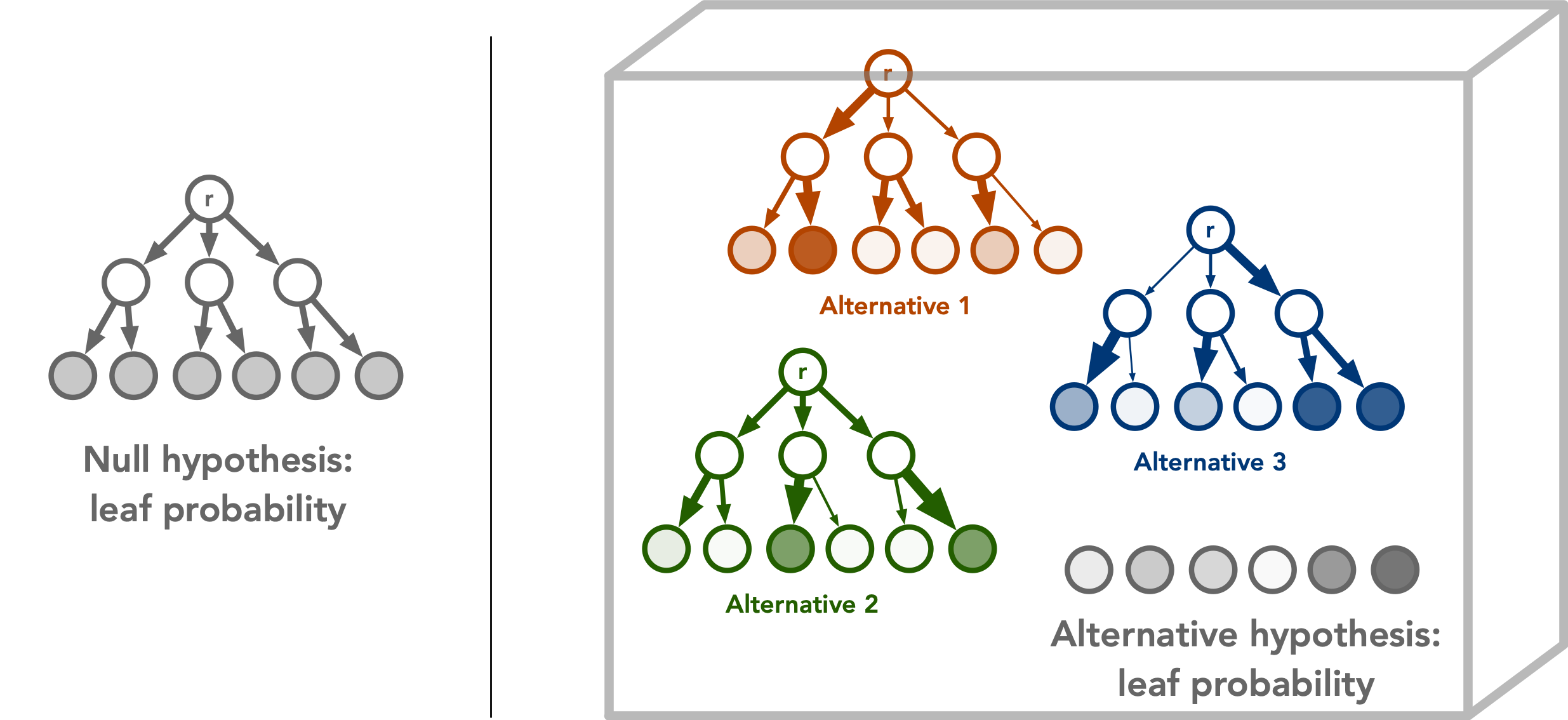}
    \caption{\emph{Illustration for the leaf probability distribution.} The leaf probability distribution depends on the unknown physical state/process and the learning algorithm. In the null hypothesis, we have a single state/process, which gives rise to a probability distribution over leaves. In the alternative hypothesis, there are multiple possible states/processes. Each state/process produces a different leaf probability distribution. The leaf probability distribution for the alternative hypothesis is the average of all the leaf probability distributions. }
    \label{fig:leaf-alt}
\end{figure}

In order to establish a lower bound for the many-versus-one distinguishing task, we need to first discuss the leaf probability distribution in the tree representation of the learning algorithm~\cite{huang2021information, chen2021exponential}.
Recall that depending on the unknown state/process, the transition probabilities among the memory states in the learning algorithm will be different.
This is because the outcome probability for each experiment differs when the unknown state/process differs.
Therefore, the probability to traverse a certain path in the tree representing an execution of the learning algorithm will change according to the unknown state/process.
Hence, the probability to arrive at a particular leaf node in the depth-$T$ tree will change.
An illustration is given in Supp.~Fig.~\ref{fig:leaf-alt}.

For each element $X$ in $\cX$, the set of all admissible states/channels, we write the probability distribution over leaves as
\begin{equation}
    p_X( \ell ), \quad \ell: \mbox{leaf node of the tree}.
\end{equation}
The probability distribution over the leaves $\ell$ for the null hypothesis and for the alternative hypothesis are respectively
\begin{equation}
    p_{X_0}(\ell) \,\,\,\, \mbox{and} \,\, \E_{X \in \cX \setminus \{X_0\}} p_{X}(\ell).
\end{equation}
The probability distribution over $X \in \cX \setminus \{X_0\}$ in the expectation $\E_{X \in \cX \setminus \{X_0\}}$ is arbitrary.
We should choose the probability distribution that yields the largest lower bound.

Suppose that the null hypothesis and the alternative hypothesis are true with probability $1/2$ each.
If we want to use the memory state of the learning algorithm to distinguish between the null hypothesis and the alternative hypothesis, then the success probability of any procedure is upper bounded by
\begin{equation}
    \frac{1}{2} + \frac{1}{2} \mathrm{TV}\left(p_{X_0}, \E_{X \in \cX \setminus X_0} p_{X} \right) = \frac{1}{2} + \frac{1}{4} \sum_{\ell} \left| p_{X_0}(\ell) - \E_{X \in \cX \setminus X_0} p_{X}(\ell) \right|,
\end{equation}
which is also known as LeCam's two-point method.
$\mathrm{TV}(p_0, p_1)$ is the total variation distance between the two probability distributions $p_0, p_1$.

Intuitively, as we perform more experiments, the depth of the tree increases and the total variation distance between the leaf probability distribution increases.
If we want to achieve a prediction accuracy of $p \geq \tfrac{1}{2}$, then we need the total variation distance to be lower bounded by,
\begin{equation} \label{eq:TVbound}
    \mathrm{TV}\left(p_{X_0}, \E_{X \in \cX \setminus X_0} p_{X} \right) \geq 2 p - 1.
\end{equation}
On the other hand, the total variation distance can be upper bounded by a monotonically increasing function of the number of experiments~$T$ equal to the depth of the tree.
Altogether, this allows us to lower bound the number of experiments $T$ by a function of the success probability $p$.

\subsection{Many-versus-many distinguishing task}
\label{sec:many-vs-many}

Sometimes, it is easier to first reduce the learning task to a many-versus-many distinguishing task before reducing to a many-versus-one task. This technique is used in Appendix~\ref{sec:qpca} to prove exponential advantage for quantum principal component analysis.
Consider $\cX$ to be the set of allowed states/channels.
We consider a subset $\cA \subseteq \cX$, and define $\cB = \cX \setminus \cA$.
Here, we consider the following two hypotheses.
\begin{itemize}
    \item Hypothesis A: The unknown state/channel is a random element in $\cA$.
    \item Hypothesis B: The unknown state/channel is a random element in $\cB$.
\end{itemize}
Assume each hypothesis happens with probability $1/2$.
The goal is to distinguish which hypothesis is true.
Because $\cA, \cB$ can contain many elements in $\cX$, we refer to this as the many-versus-many distinguishing task.
In Supp. Fig.~\ref{fig:many-vs-one}, we visualize the difference between the \emph{many-versus-many distinguishing task} and the other tasks.

Following a similar derivation as the many-versus-one distinguishing task, for any learning algorithm in the conventional setting, we represent the algorithm as a learning tree.
Given a tree representation $\cT$, the success probability for any procedure to distinguish hypothesis A and B using the final memory state of the learning algorithm is upper bounded by
\begin{equation}
    \frac{1}{2} + \frac{1}{2}\mathrm{TV}\left(\E_{X \in \cA} p_{X}, \E_{X \in \cB} p_{X} \right),
\end{equation}
where $p_X$ is a probability distribution over the leaf nodes of the tree $\cT$ when the unknown state/channel is $X$.
Hence, if we want to achieve a prediction accuracy of $p \geq \tfrac{1}{2}$, then we need the total variation distance to be lower bounded by
\begin{equation} \label{eq:TVbound-many-vs-many}
    \mathrm{TV}\left(\E_{X \in \cA} p_{X}, \E_{X \in \cB} p_{X} \right) \geq 2 p - 1\,.
\end{equation}
This last inequality will be important for establishing the lower bound on the number of experiments $T$.

\subsection{Partially-revealed many-versus-one distinguishing task}
\label{sec:part-many-v-one}

\begin{figure}[t]
    \centering
    \includegraphics[width=1.0\textwidth]{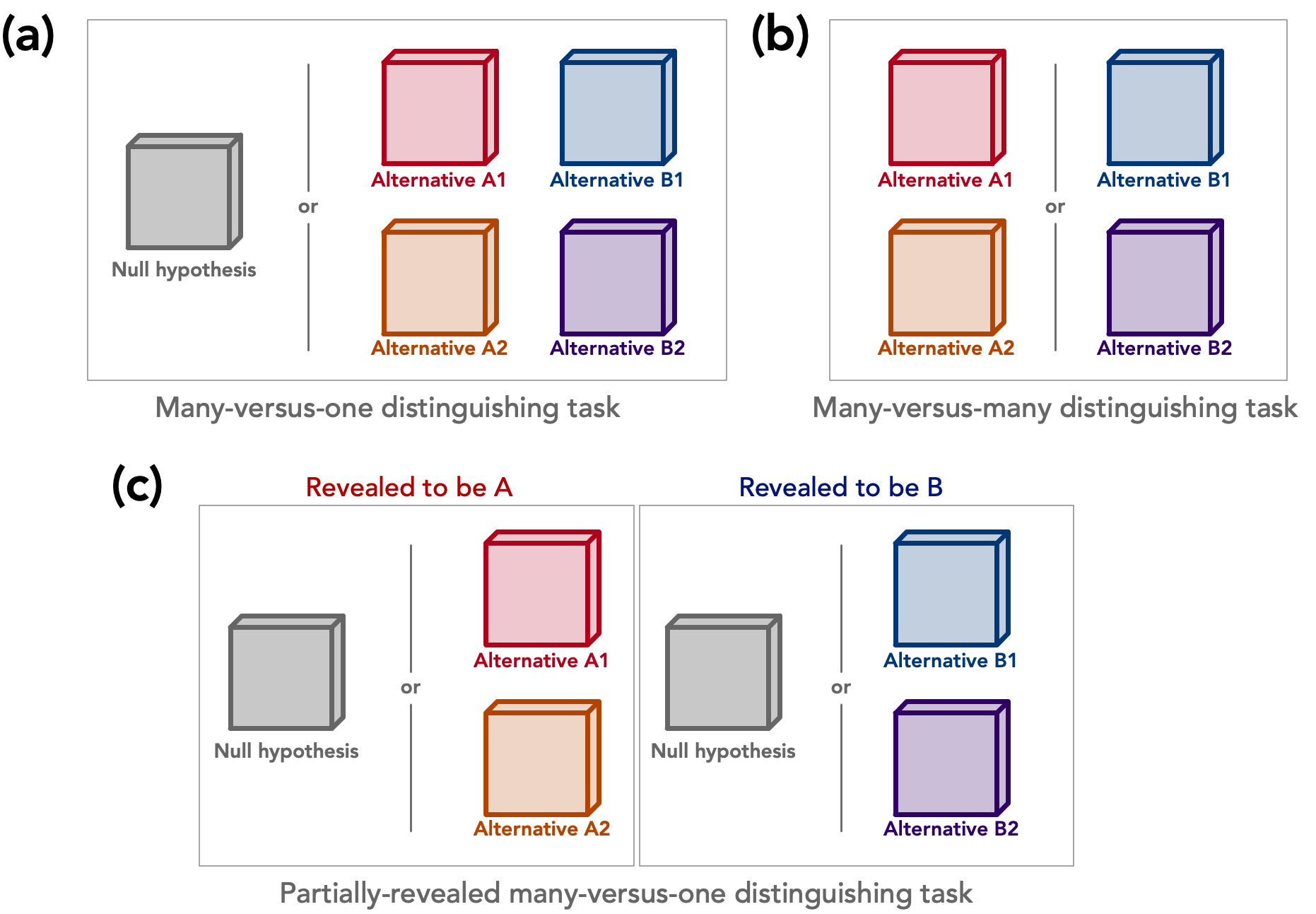}
    \caption{\emph{Visualization of the different distinguishing tasks.} \emph{(a)} In the many-versus-one distinguishing task, we are distinguishing between the null hypothesis and the alternative hypothesis (which can be one of many alternatives).
    \emph{(b)} In the many-versus-many distinguishing task, we want to distinguish between two hypotheses (each of which can be one of many alternatives).
    \emph{(c)} In the partially-revealed many-versus-one distinguishing task, some information about the alternatives is revealed, which makes the distinguishing task easier. }
    \label{fig:many-vs-one}
\end{figure}

In some tasks, it can be challenging to verify if an algorithm has learned accurately without revealing some information to the learning algorithm.
Here, we consider a setting where after learning, the algorithm is additionally given some partial information about the underlying state/process.

The set $\cX$ of all admissible states/processes can be represented as follows,
\begin{equation}
    X = (\xi, \chi) \in \cX,
\end{equation}
where $\xi$ is the information that will be revealed during prediction and $\chi$ remains hidden.
After performing all experiments, the algorithm can obtain $\xi^*$, such that the unknown state/process $X$ is guaranteed to be either
\begin{equation}
    X_0 \quad \mbox{or} \quad (\xi^*, \chi) \in \cX \setminus \{X_0\},
\end{equation}
i.e. the null hypothesis or an element of the alternative hypothesis.
In Supp. Fig.~\ref{fig:many-vs-one}, we visualize the difference between the \emph{partially-revealed many-versus-one distinguishing task} and other tasks.
Due to the additional information revealed to the learning algorithm, the distinguishing task becomes easier.
However, in many examples, we show that revealing a significant amount of information to a learning algorithm that only has external classical memory will not significantly help its distinguishing power.

Suppose that after revealing the information $\xi^*$ to the learning algorithm, the conditional probability for whether the unknown state/process $X$ is $X_0$ (null hypothesis) or one of $(\xi^*, \chi) \in \cX \setminus \{X_0\}$ (alternative hypothesis) is still uniform, i.e., $1/2$ and $1/2$.
Then similar to when the information is not revealed, the success probability of any procedure to distinguish between null and alternative hypothesis is upper bounded by
\begin{equation}
    \frac{1}{2} + \frac{1}{2} \mathrm{TV}\left(p_{X_0}, \E_{(\xi^*, \chi) \in \cX \setminus \{X_0\}} p_{(\xi^*, \chi)} \right) = \frac{1}{2} + \frac{1}{4} \sum_{\ell} \left| p_{X_0}(\ell) - \E_{(\xi^*, \chi) \in \cX \setminus \{X_0\}} p_{(\xi^*, \chi)}(\ell) \right|.
\end{equation}
When $\xi^*$ is chosen randomly, the average success probability is upper bounded by
\begin{equation}
    \frac{1}{2} + \frac{1}{2} \E_{\xi^*} \mathrm{TV}\left(p_{X_0}, \E_{(\xi^*, \chi) \in \cX \setminus \{X_0\}} p_{(\xi^*, \chi)} \right).
\end{equation}
Again, similar to the discussion before, in order to achieve a prediction accuracy of $p \geq 1/2$, we need to satisfy the inequality
\begin{equation} \label{eq:TVbound-rev}
    \E_{\xi^*} \mathrm{TV}\left(p_{X_0}, \E_{(\xi^*, \chi) \in \cX \setminus \{X_0\}} p_{(\xi^*, \chi)} \right) \geq 2p - 1.
\end{equation}
The left-hand side of the above inequality can be upper bounded by a monotonically increasing function of $T$, hence we can obtain a lower bound on $T$.
Note that by Jensen's inequality, the left hand side of the above inequality is larger than the left hand side in Eq.~\eqref{eq:TVbound}, so we will obtain a weaker lower bound on $T$.
This makes sense because making accurate predictions with partially-revealed information is easier.

\subsection{Presence of noise}
\label{sec:noisedegradation}

So far, we have considered protocols for learning quantum states or quantum processes in the absence of noise.  There are several forms of noise we can consider: (i) noise on input states, (ii) noise on the POVMs which are measured, and (iii) noise on the quantum process (if there is one).  Let us prove the following result:
\begin{theorem}[Noise cannot decrease the lower bound]
If the upper bound
\begin{equation}
\label{E:noisyupper1}
    \mathrm{TV}\left(\E_{X \in \cA} p_{X}, \E_{X \in \cB} p_{X} \right) \leq 2p -1
\end{equation}
holds for all learning protocols with a classical memory, then this same bound holds for all learning protocols with a classical memory in the presence of noise. Because the upper bound on total variation distance applies when noise is present, so does the lower bound on the number of experiments needed to achieve the distinguishing task. 
\label{thm:noisylowerbound}
\end{theorem}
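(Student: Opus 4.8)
The plan is to show that every learning protocol with a \emph{noisy} apparatus induces, for each admissible object $X \in \cX$, exactly the same leaf distribution $p_X$ as some \emph{noiseless} protocol that distinguishes the same family $\cX$. Once this is in hand, the hypothesized bound~\eqref{E:noisyupper1}, which by assumption holds for all noiseless protocols, applies verbatim to the noiseless surrogate; since the noisy protocol and its surrogate share all leaf distributions, they share the quantities $\E_{X\in\cA}p_X$ and $\E_{X\in\cB}p_X$ and hence the total variation distance $\mathrm{TV}(\E_{X\in\cA}p_X,\E_{X\in\cB}p_X)$, so the bound transfers. Thus the entire task reduces to constructing, for each of the three noise types, a clean protocol that reproduces the noisy leaf statistics.

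The key algebraic fact I would invoke is the Heisenberg-picture duality for a CPTP noise channel $\cN$: its adjoint $\cN^\dagger$, defined by $\tr(\cN^\dagger(A)\,\sigma)=\tr(A\,\cN(\sigma))$, is completely positive and unital, since $\cN^\dagger(\Id)=\Id$ follows from $\cN$ being trace preserving. Hence for any POVM $\{F_s\}$ the set $\{\cN^\dagger(F_s)\}$ is again a valid POVM (refined to rank-$1$ as in Appendix~\ref{sec:POVMdef} if needed), and $\tr(F_s\,\cN(\rho))=\tr(\cN^\dagger(F_s)\,\rho)$, so measuring a noise-corrupted state with a clean POVM equals measuring the clean state with a modified POVM. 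I would then treat the three forms of noise in turn. Noise on the measured POVM is immediate, as a noisy measurement is itself a POVM and is therefore already among the noiseless protocols. Noise on the input state in the state-learning setting is absorbed into the measurement by the duality above, replacing $\{F^u_s\}$ at each node $u$ by $\{\cN^\dagger(F^u_s)\}$. Noise on the process or on its input, effectively $\tilde\cE=\cN_2\circ\cE\circ\cN_1$, is handled by absorbing the output noise $\cN_2$ into the POVM via the adjoint and folding the input noise $\cN_1$ into the prepared state, so that the clean protocol prepares the (generally mixed) input $\cN_1(\ketbra{\psi^u}{\psi^u})$—realized by a purification on additional ancilla, which the framework permits—applies the clean $\cE$, and measures $\{\cN_2^\dagger(F^u_s)\}$.

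The step I expect to require the most care is the process setting, where the input noise $\cN_1$ sits \emph{between} the learner's preparation and the unknown channel $\cE$ and therefore cannot be commuted through $\cE$ into the measurement. The resolution is to fold $\cN_1$ into the input state rather than the POVM; because $\cN_1$ is fixed and independent of $X$, the same mixed (or purified) input feeds every $\cE\in\cX$ and reproduces its output simultaneously. This $X$-independence of all three noise channels is in fact the crux of the argument: since noise is a property of the apparatus and not of the hidden object, a \emph{single} clean surrogate matches $p_X$ for every $X$ at once, which is precisely what is needed to preserve the two averaged distributions entering~\eqref{E:noisyupper1}. The closing sentence of the theorem then follows because the noiseless analysis bounds $\mathrm{TV}$ by a monotone function of the number of experiments $T$; the identical bound now holds in the noisy case, so the requirement~\eqref{eq:TVbound-many-vs-many} that $\mathrm{TV}\geq 2p-1$ to succeed forces $T$ to be at least as large as in the noiseless lower bound.
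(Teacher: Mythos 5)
Your proposal is correct and follows essentially the same route as the paper's proof: both absorb the noise into the apparatus via the Heisenberg-picture adjoint, replacing each POVM element $F_s^u$ by $\cN^\dagger[F_s^u]$ (or $(\cD^\dagger\circ\cM^\dagger)[F_s^u]$ in the process setting) and folding any pre-process noise into the prepared input state, so that the noisy protocol's leaf distributions coincide with those of a noiseless surrogate protocol to which the hypothesized bound applies. Your additional remark that the resulting mixed input state can be realized by purification on ancilla qubits is a slightly more careful treatment of a point the paper passes over silently when it asserts that $(\cF\circ\cN)[\rho^u]$ is a valid input.
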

\begin{proof}
Consider first the setting of learning an unknown physical state $\rho$.  Suppose we have a learning protocol with a classical memory described by a learning tree $\mathcal{T}$.  At node $u$ in the protocol, we measure the state $\rho$ with the POVM $\{F_s^u\}_s$.  We will measure the $s$th outcome with probability
\begin{align}
\label{E:probus1}
\text{tr}(F_s^u \rho)\,.
\end{align}
If there is noise on $\rho$, we can use $\rho \mapsto \mathcal{N}[\rho]$ for some noise quantum process $\mathcal{N}$.  Likewise if there is noise on the POVM, we can use $F_s^u \mapsto \mathcal{M}^\dagger[F_s^u]$ for a noise quantum process $\mathcal{M}$.  Then the probability of the $s$th outcome is instead
\begin{equation}
\text{tr}(\mathcal{M}^\dagger[F_s^u] \,\mathcal{N}[\rho]) = \text{tr}((\mathcal{N}^\dagger\circ\mathcal{M}^\dagger)[F_s^u]\, \rho)\,. 
\end{equation}
But $\{(\mathcal{N}^\dagger\circ\mathcal{M}^\dagger)[F_s^u]\}_s$ also forms a POVM.  We can apply this same argument to each node in the tree; note that the noise channels can be node-dependent.  The result is that we simply get a new learning tree with classical memory, with POVM's augmented by the noise channels.  But since by hypothesis~\eqref{E:noisyupper1} holds for all learning protocols with a classical memory, the bound evidently still holds in the noisy setting.

In the setting where we are learning a physical process $\mathcal{E}$, the argument is similar.  Given a learning tree $\mathcal{T}$ for learning the physical process, at node $u$ we (i) prepare the state $\rho^u$, (ii) apply the physical process $\mathcal{E}$, and (iii) measure with the POVM $\{F_s^u\}_s$ and obtain outcome $s$ with probability
\begin{equation}
\text{tr}(F_s^u \, \mathcal{E}[\rho^u])\,.
\end{equation}
If the initial state is noisy, we can implement this by a channel mapping $\rho^u \mapsto \mathcal{N}[\rho^u]$.  If $\mathcal{E}$ is noisy, this can be implemented by $\mathcal{E} \mapsto \mathcal{D}\circ \mathcal{E} \circ \mathcal{F}$.  Finally, if the POVM is noisy, we can implement this by $F_s^u \mapsto \mathcal{M}^\dagger[F_s^u]$.  In these circumsntances, the probability of the $s$th outcome is instead
\begin{equation}
\text{tr}(\mathcal{M}^\dagger[F_s^u] \, (\mathcal{D}\circ\mathcal{E}\circ\mathcal{F})[\mathcal{N}[\rho^u]]) = \text{tr}((\mathcal{D}^\dagger \circ M^\dagger)[F_s^u] \, \mathcal{E}[\mathcal{F} \circ \mathcal{N}[\rho^u]])\,.
\end{equation}
But $\{(\mathcal{D}^\dagger \circ M^\dagger)[F_s^u]\}_s$ also forms a POVM, and $(\mathcal{F} \circ \mathcal{N})[\rho^u]$ is also a valid choice of input state.  The same argument can be used for noise channels applied at every node in the tree, and moreover the noise can be node-dependent.  The result is that we just get a modified learning tree with classical memory, which by assumption satisfies~\eqref{E:noisyupper1}, as desired.
\end{proof}
In summary, we have shown that if a task is hard for all learning protocols with classical memory, then the task is still just as hard (if not harder) in the presence of noise.

\subsection{Related works}
\label{sec:related-work}

We begin with existing works that study the separation between conventional and quantum-enhanced strategies for learning physical systems and dynamics.
In \cite{bubeck2020entanglement}, they establish a polynomial separation between conventional and quantum-enhanced strategies for testing if a state is maximally mixed or not.
In \cite{huang2021information, aharonov2021quantum, chen2021exponential}, exponential separations between conventional and quantum-enhanced strategies are established for tasks regarding the learning of physical systems and dynamics.

We also mention some relevant works that study other classes of strategies for learning or characterizing physical systems and dynamics. It was shown in Ref. \cite{mohseni_dcqd_2007} that the dynamics of open quantum systems with dimension $d^{n}$, where $d$ is a prime, can be fully reconstructed with a quadratically fewer experiments over conventional quantum process tomography, with a quantum-enhanced strategy consist of $n$ auxiliary systems of same dimensions $d$ and performing generalized Bell-sate preparations and generalized Bell-state measurements. The results in \cite{haah2017sample} give a polynomial separation between a restricted class of conventional strategies and quantum-enhanced strategies for learning the complete description of a quantum state.
The results in \cite{huang2020predicting} give an exponential separation between a restricted class of conventional strategies and quantum-enhanced strategies for learning to predict properties of a quantum state.
\cite{chen2021quantum} establish an exponential separation between ancilla-free strategies and ancilla-assisted strategies for learning the eigenvalues in Pauli channels.
\cite{chen2021hierarchy} give an exponential separation between restricted quantum-enhanced strategies and quantum-enhanced strategies for learning about a quantum state.
\cite{anshu2021distributed} consider a problem on learning two spatially separated quantum states using local quantum learning algorithms and give an exponential separation between having a quantum or a classical communication channel between the local quantum learning algorithms.
In \cite{coudron2020computations, chia2020need}, an exponential separation between two bounded-depth quantum learning algorithms are given for learning about an exponential-time quantum process.

\section{Predicting highly-incompatible observables}
\label{sec:shadow}

The first task we study using the framework of the previous section involves learning about a physical system represented by an $n$-qubit state $\rho$. We provide an illustration of the task in Supp.~Fig.~\ref{fig:learn-obs}.
\begin{itemize}
    \item In conventional experiments, we consider algorithms that can measure each copy of $\rho$ one at a time. The algorithm can choose to perform any POVM measurement on each copy, where the POVM measurement can be chosen adaptively based on the outcomes of previous experiments.
    \item In quantum-enhanced experiments, we consider algorithms that can use a quantum computer to act collectively on multiple copies of $\rho$ to obtain entangled measurement data.
\end{itemize}
In both scenarios, we consider all quantum data to be used during the learning phase, and we are left only with classical measurement data.
After this learning phase, the learner is then asked to provide accurate predictions for the expectation value of an observable $O$, using the classical data obtained from the experiments.
The observable $O$ is selected from an exponentially large set $\{O_1, O_2, \dots O_M\}$, where $O_1, \ldots, O_M$ may not be mutually commuting and $M$ is exponential in $n$.

Note that when the observables in the set are not mutually commuting, it is impossible to measure all of them simultaneously.
Hence, a na\"{i}ve algorithm in the conventional scenario would be to measure the exponential number of observables individually, which would result in exponential sample complexity.

\begin{figure}[t]
    \centering
    \includegraphics[width=0.95\textwidth]{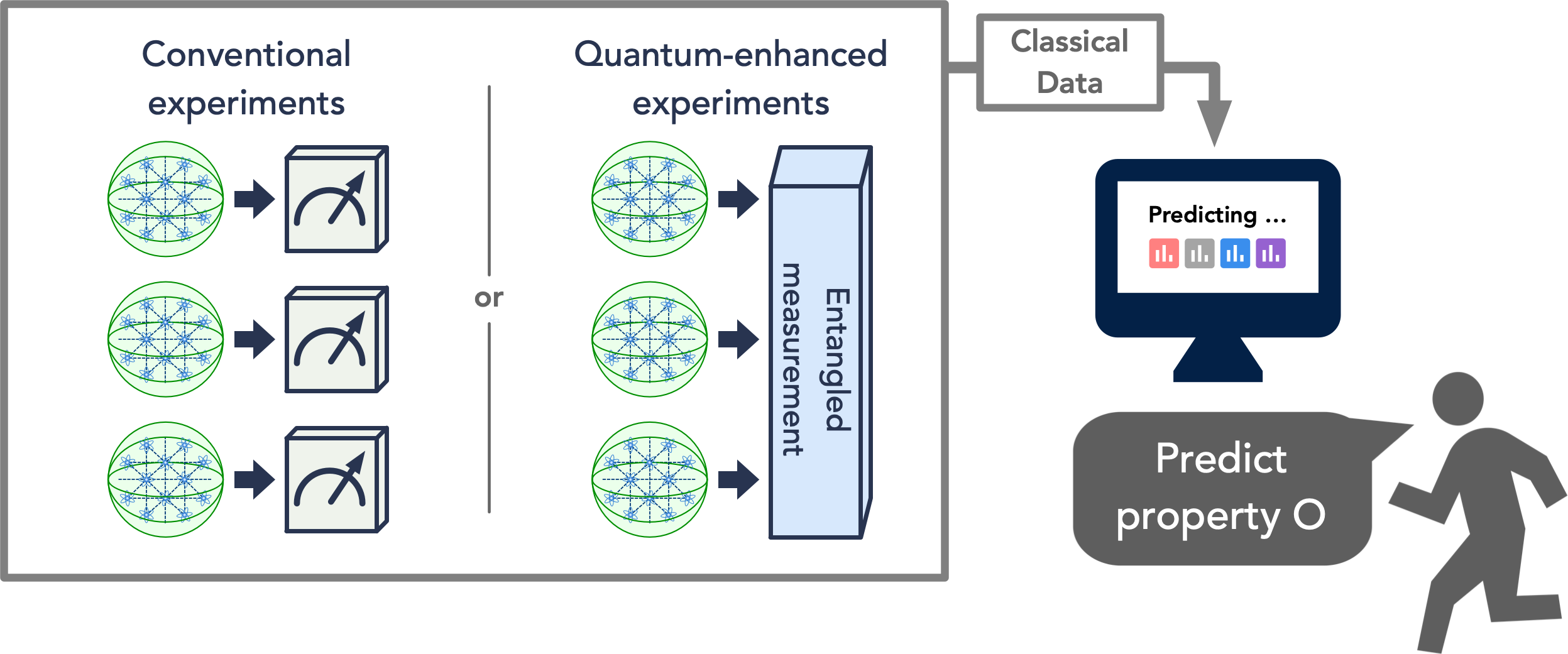}
    \caption{\emph{Illustration for the task of predicting highly-incompatible observables.}
    The unknown quantum state $\rho$ is represented by the green sphere. Conventional experiments measure each copy of state $\rho$ individually, and the measurements can depend adaptively on previous measurements. Quantum-enhanced experiments store many copies of $\rho$ in a quantum memory, process the copies with a quantum computer, and produce an entangled measurement outcome. The classical data obtained from the experiments are used to predict a property of $\rho$. }
    \label{fig:learn-obs}
\end{figure}

\subsection{Exponential advantage in predicting absolute value of a single observable}
\label{sec:expo-adv-abs-one}

We will prove that even predicting the absolute value of just a single observable requires exponentially many copies in the conventional scenario.
In contrast, an algorithm with quantum memory can predict the expectation values for $M$ arbitrary observables from only $\mathcal{O}(n \log(M) / \epsilon^4)$ copies of $\rho$ through the procedure known as shadow tomography \cite{aaronson2019shadow, aaronson2019gentle, buadescu2020improved}.
Hence, even if we would like to predict an exponential number of observables, an algorithm with quantum memory only needs a polynomial number of copies.

In fact, for certain natural instances, we can show an even more dramatic separation. Specifically, for the following states and observables, we will show how to achieve an exponential versus \emph{constant} separation.
\begin{definition}[Separation instance] \label{def:hardrhoO}
Consider a distribution $\cD$ over $n$-qubit state $\rho$ and observable $O$.
\begin{enumerate}
    \item With probability $1/2$, the state is $\rho = I / 2^n$ and $O \in \{I, X, Y, Z\}^{\otimes n} \setminus \{I^{\otimes n}\}$ is chosen uniformly at random.
    \item With probability $1/2$, the state is $\rho = (I+ 0.9 sP) / 2^n$ and $O = P$, where $s = \{\pm 1\}$ with equal probability and $P \in \{I, X, Y, Z\}^{\otimes n} \setminus \{I^{\otimes n}\}$ is chosen uniformly at random.\footnote{While $0.9$ is used in the definition $\rho = (I+ 0.9 sP) / 2^n$, any constant value smaller than $1$ is sufficient to obtain the exponential separation.
    A technical difficulty arises when we consider $(I + sP) / 2^n$, and it is unclear whether this difficulty is fundamental.}
\end{enumerate}
\end{definition}
The $n$-qubit state $\rho$ considered in the above definition does not contain any quantum entanglement.
The state $\rho$ can be written as a classical probability distribution over tensor products of single-qubit states.
Despite the lack of quantum entanglement, we can still achieve an exponential versus constant separation.
This result is a substantial improvement over the result established in \cite{huang2021information}.
In \cite{huang2021information}, some of the authors showed an $\Omega(2^{n/3})$ versus $\mathcal{O}(n)$ separation between conventional and quantum-enhanced strategies, but the task was to predict an exponential number of observables, which can only be verified using an exponential amount of time.

\begin{theorem}[Exponential advantage in predicting highly-incompatible observables] \label{thm:obs-adv}
We sample an $n$-qubit state $\rho$ and an observable $O$ according to $\cD$ given in Definition~\ref{def:hardrhoO}; both of these are unknown to the algorithm.
The algorithm then learns about $\rho$ through conventional or quantum-enhanced experiments.
After the learning phase, we ask the learning algorithm to predict $\left|\Tr(O \rho)\right|$.
\begin{itemize}
    \item \emph{Upper bound:} There is an algorithm in the \textbf{quantum-enhanced} scenario using only $\mathcal{O}(1)$ copies of $\rho$ to predict up to $0.25$ additive error with probability at least $0.8$.
    \item \emph{Lower bound:} For any algorithm in the \textbf{conventional} scenario, it needs at least $\Omega(2^n)$ copies of $\rho$ to predict up to $0.25$ additive error with probability at least $0.8$.
\end{itemize}
\end{theorem}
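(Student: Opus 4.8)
The first thing I would note is that the two branches of Definition~\ref{def:hardrhoO} are separated in the target quantity: in branch~1 the state is maximally mixed, so $\Tr(O\rho)=\Tr(O)/2^n = 0$, whereas in branch~2 one has $\Tr(O\rho) = \Tr(P(I+0.9sP))/2^n = 0.9\,s$, hence $\left|\Tr(O\rho)\right| = 0.9$. Since $0$ and $0.9$ are separated by more than $2\times 0.25$, any predictor achieving additive error $0.25$ with probability $0.8$ can be thresholded (say at $0.45$) to decide which branch occurred, correctly whenever the prediction is accurate, hence with probability at least $0.8$. Both bounds therefore reduce to the associated distinguishing task, and the prior over the two branches stays $1/2$--$1/2$ even after $O$ is revealed, because the marginal law of $O$ is uniform over $\{I,X,Y,Z\}^{\otimes n}\setminus\{I^{\otimes n}\}$ in \emph{both} branches. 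This is exactly the partially-revealed many-versus-one setup of Appendix~\ref{sec:part-many-v-one}, with null $X_0 = I/2^n$, revealed information $\xi^* = O$, and hidden sign $\chi = s$.

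\textbf{Upper bound.} For the quantum-enhanced side I would store two copies of $\rho$ and measure every pair (qubit $i$ of the two copies) in the Bell basis. The per-pair Bell basis simultaneously diagonalizes $X\otimes X$, $Y\otimes Y$, $Z\otimes Z$, so once $O=\bigotimes_i O_i$ is announced, each outcome yields the $\pm1$ eigenvalue $\lambda_i$ of $O_i\otimes O_i$, and $\prod_i \lambda_i$ is an eigenvalue of $O\otimes O$ with $\E[\prod_i \lambda_i] = \Tr[(O\otimes O)(\rho\otimes\rho)] = \Tr(O\rho)^2$. This is a $[-1,1]$-bounded unbiased estimator of $\Tr(O\rho)^2\in\{0,0.81\}$, so by Hoeffding a constant number of Bell pairs suffices to decide which value occurs with probability $0.8$; outputting $0$ or $0.9$ accordingly gives the claimed $\mathcal{O}(1)$-copy predictor. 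The outcomes are recorded during the learning phase and post-processed only after $O$ is revealed, so the late announcement of $O$ costs nothing.

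\textbf{Lower bound.} Writing $\beta_t = \bra{\phi^{u_{t-1}}_{s_t}} O \ket{\phi^{u_{t-1}}_{s_t}}$ along a root-to-leaf path, the null and sign-averaged alternative leaf likelihoods obey
\begin{equation*}
\frac{\tfrac12\sum_{s=\pm1}p_{(O,s)}(\ell)}{p_{I/2^n}(\ell)} = \frac12\Big[\textstyle\prod_t(1+0.9\beta_t) + \prod_t(1-0.9\beta_t)\Big] = \sum_{k\ge0}0.9^{2k}\!\!\sum_{t_1<\dots<t_{2k}}\!\!\beta_{t_1}\cdots\beta_{t_{2k}},
\end{equation*}
so the deviation of the ratio from $1$ contains only \emph{even} products of the $\beta_t$. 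I would use two facts per node $u$, both following from POVM completeness $\sum_s w^u_s\ket{\phi^u_s}\!\bra{\phi^u_s}=\Id$ and Pauli orthogonality: the outcome-averaged linear term $\sum_s w^u_s \bra{\phi^u_s}O\ket{\phi^u_s} = \Tr(O) = 0$ vanishes, and $\E_{O}\!\big[\bra{\phi}O\ket{\phi}\bra{\phi'}O\ket{\phi'}\big] = (2^n|\langle\phi|\phi'\rangle|^2-1)/(4^n-1) = \Theta(2^{-n})$. The sign-averaging is essential precisely because it removes the order-$\beta_t$ term: the surviving leading contribution is the pairwise sum $0.81\sum_{t<t'}\beta_t\beta_{t'}$, each term of which is suppressed by the $2^{-n}$ factor above after averaging over $O$.

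\textbf{Main obstacle.} The crux is to bound $\E_O\,\mathrm{TV}(p_{I/2^n},\tfrac12\sum_s p_{(O,s)})$ and insert it into Eq.~\eqref{eq:TVbound-rev}. I would \emph{not} route this through the $\chi^2$ divergence: for the measure-$2^{-n}$ set of Paulis $O$ aligned with the algorithm's measurement bases (e.g.\ diagonal $O$ against computational-basis measurements) the likelihood ratio is exponentially large and the task is genuinely easy, so $\E_O\,\chi^2$ is dominated by these rare events and badly overestimates the distance. Instead I would bound the total variation directly, keeping $\E_O$ inside the absolute value and applying Cauchy--Schwarz to the even expansion so that the per-pair $2^{-n}$ suppression controls the sum; propagating the recursion over the tree, with the vanishing linear term ensuring that only second-order contributions accumulate, should yield $\E_O\,\mathrm{TV} = \mathcal{O}(T/2^n)$. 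Combined with the requirement $\E_O\,\mathrm{TV}\ge 2(0.8)-1 = 0.6$, this forces $T = \Omega(2^n)$. Carrying out the tree recursion carefully---in particular showing that off-path overlaps $|\langle\phi_t|\phi_{t'}\rangle|^2$ and the higher even orders do not erode the $2^{-n}$ gain---is the main technical work.
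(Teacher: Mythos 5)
Your reduction to the partially-revealed many-versus-one task and your quantum-enhanced upper bound coincide with the paper's (Appendices~\ref{sec:part-many-v-one}, \ref{sec:uppbd-obs}, \ref{sec:lowbd-obs}): Bell measurements on pairs of copies, the unbiased estimator of $\Tr(O\rho)^2$, and Hoeffding. The gap is in the conventional lower bound, and it is a genuine obstruction rather than deferred bookkeeping. Writing $\beta_t = \bra{\phi^{u_{t-1}}_{s_t}} O \ket{\phi^{u_{t-1}}_{s_t}}$ and $\mathrm{dev}_\ell = 1 - \E_s\prod_t(1+0.9s\beta_t)$, the quantity you must control is $\E_O \mathrm{TV} = \sum_\ell p_{I/2^n}(\ell)\,\E_O\!\left[\max(0,\mathrm{dev}_\ell)\right]$, an $O$-average of a one-sided absolute deviation. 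Your plan to ``keep $\E_O$ inside the absolute value'' so as to exploit the signed pairwise moment $\E_O[\beta_t\beta_{t'}] = (2^n|\langle\phi_t|\phi_{t'}\rangle|^2-1)/(4^n-1) = \Theta(2^{-n})$ runs into Jensen's inequality in the wrong direction: $\max\left(0,\E_O[\mathrm{dev}_\ell]\right) \le \E_O\!\left[\max(0,\mathrm{dev}_\ell)\right]$, so bounding the signed average does not bound the total variation. The two natural repairs both fall short of $\Omega(2^n)$. Cauchy--Schwarz, $\E_O|X|\le\sqrt{\E_O X^2}$, converts the per-pair $2^{-n}$ into $2^{-n/2}$, because fourth moments such as $\E_O[\beta^4]$ are themselves $\Theta(2^{-n})$ rather than $\Theta(4^{-n})$ (take $\ket{\phi}$ a stabilizer state, so $\beta\in\{0,\pm1\}$ and $\E_O[\beta^{2k}]\approx 2^{-n}$ for every $k$); this yields at best $T=\Omega(2^{n/2})$. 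Alternatively, controlling your even-order expansion term by term with absolute values gives $\sum_{k\ge1}0.9^{2k}\binom{T}{2k}\,\Theta(2^{-n}) \approx \tfrac12\,(1.9)^{T}\,2^{-n}$, which is vacuous already for $T = O(n)$: the likelihood ratios $\prod_t(1\pm0.9\beta_t)$ are unbounded, so the expansion has no absolute convergence to exploit, and the cancellations you need cannot be recovered after moving $\E_O$ through an absolute value.

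The paper's proof sidesteps the expansion entirely by exploiting that the total variation in Eq.~\eqref{eq:tv_fdiv} needs only a \emph{pointwise lower bound} on the likelihood ratio. Jensen's inequality over the sign $s$ gives, at every leaf,
\begin{equation*}
\E_{s\in\{\pm1\}}\prod_{t=1}^{T}\left(1+0.9s\beta_t\right) \;\ge\; \prod_{t=1}^{T}\sqrt{1-0.81\beta_t^2}\;\in\;[0,1],
\end{equation*}
which simultaneously removes the $\max(0,\cdot)$ and preserves a purely multiplicative structure (Eq.~\eqref{eq:TVUB}); a second application of Jensen, now over $P$, combined with $\log(1-x)\ge-2.1x$ on $[0,0.82]$ and $\E_P\beta_t^2 = 1/(2^n+1)$, gives $\E_P\,\mathrm{TV}\le 1-\exp\!\left(-0.8505\,T/(2^n+1)\right)$ and hence $T\ge\frac{2^n+1}{0.8505}\log\frac{1}{2\delta}$. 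Your observation that $\chi^2$-type arguments are doomed by the rare Paulis aligned with the measurement bases is correct, and it is precisely why a one-sided, product-preserving bound is required; but without this double-Jensen device (or an equivalent that never expands the product), your sketch does not deliver $\E_O\,\mathrm{TV}=O(T/2^n)$, and therefore not the stated $\Omega(2^n)$.
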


Here, we are using the standard Big-$\mathcal{O}$ and Big-$\Omega$ notations: $f = \Omega(g)$ if there is an $n_0, C > 0$ such that $\forall n > n_0$, $f(n) \geq C g(n)$; and $f = \mathcal{O}(g)$ if there is an $n_0, M > 0$ such that $\forall n > n_0$, $|f(n)| \leq M g(n)$.
We separate the proof of Theorem~\ref{thm:obs-adv} into the following two subsections.
In Appendix~\ref{sec:uppbd-obs}, we prove a constant upper bound for quantum-enhanced experiments for this task.
In Appendix~\ref{sec:lowbd-obs}, we prove an exponential lower bound for conventional experiments for the same task.

\subsection{A constant upper bound for quantum-enhanced experiments}
\label{sec:uppbd-obs}

The learning algorithm in the quantum-enhanced scenario builds on results presented in \cite{huang2021information}.
We separate the protocol into the learning phase, where entangled measurements are performed, and the prediction phase, where we predict the desired properties.

\subsubsection{Learning phase}

Consider $N_{\mathrm{Q}}$ rounds of two-copy entangled measurements.
In round $t \in \{1, \ldots, N_{\mathrm{Q}}\}$, for every $k\in\{1, \ldots, n\}$ we measure the $k$-th qubit from the first and second copies of $\rho$ in the Bell basis to obtain
\begin{equation}\label{eq:Bell-basis-meas}
    S^{(t)}_k \in \Big\{ \ketbra{\Psi^+}{\Psi^+}, \ketbra{\Psi^-}{\Psi^-}, \ketbra{\Phi^+}{\Phi^+}, \ketbra{\Phi^-}{\Phi^-} \Big\},
\end{equation}
where the Bell basis encompasses four maximally entangled two-qubit states. Here, $\ket{\Omega} = \tfrac{1}{\sqrt{2}} \left( \ket{00}+\ket{11} \right)$ is the Bell state, and we additionally have
\begin{align*}
    \ket{\Psi^+} &= I \otimes I \ket{\Omega}
    =\frac{1}{\sqrt{2}} \left( \ket{00} + \ket{11} \right), &\quad \quad \ket{\Psi^-} &=   I \otimes Z \ket{\Omega}
    = \frac{1}{\sqrt{2}} \left( \ket{00} - \ket{11} \right),\\
   \ket{\Phi^+} &= I \otimes X \ket{\Omega} 
    = \frac{1}{\sqrt{2}} \left( \ket{01} + \ket{10} \right), & \quad \quad \ket{\Phi^-} &= \mathrm{i} I \otimes Y \ket{\Omega}
    =\frac{1}{\sqrt{2}} \left( \ket{01} - \ket{10} \right).
\end{align*}
Then, we efficiently store the measurement data $S^{(t)}_k, \forall k = 1, \ldots, n, \forall t = 1, \ldots, N_{\mathrm{Q}}$ in a classical memory with $2 n N_{\mathrm{Q}}$ classical bits.

\subsubsection{Prediction phase}

Given an observable $O$ drawn from $\{I, X, Y, Z\}^{\otimes n} \setminus \{I^{\otimes n}\}$, we can use the block of classical memory obtained in the learning phase to estimate $|\Tr(O \rho)|$.
First let us consider the case where $\rho$ is a single-qubit state.
When we measure $\rho\otimes \rho$ in the Bell basis, the measurement outcome $S$ is a projector onto one of the four Bell states given in Eq.~\eqref{eq:Bell-basis-meas}.
Let $\sigma\in \{I,X,Y,Z\}$ be any Pauli matrix.
Each Bell state is an eigenstate of $\sigma\otimes\sigma$ with an eigenvalue $\pm 1$.
The probability that the Bell measurement outcome $S$ is an eigenstate of $\sigma\otimes\sigma$ with eigenvalue $+1$ is
\begin{equation}
    \mathrm{Prob}(+) = \frac{1}{2}\Tr\left((I\otimes I+\sigma\otimes\sigma)(\rho\otimes\rho)\right),
\end{equation}
while the $-1$ eigenvalue  occurs with probability
\begin{equation}
 \mathrm{Prob}(-) = \frac{1}{2}\Tr\left((I\otimes I-\sigma\otimes\sigma)(\rho\otimes\rho)\right).   
\end{equation}
Therefore, we have
\begin{equation}
    \E\left[\Tr\left((\sigma \otimes \sigma) S\right)\right] 
    = \mathrm{Prob}(+) - \mathrm{Prob}(-) = \Tr \left((\sigma\otimes\sigma)(\rho\otimes\rho)\right) = |\Tr (\sigma\rho)|^2,
\end{equation}
where $\E$ denotes the expectation with respect to the probability distribution over Bell measurement outcomes.
We see that the entangling Bell measurement enables us to estimate the absolute value $|\Tr (\sigma\rho)|$ for any Pauli matrix $\sigma \in \left\{I,X,Y,Z\right\}$.

We can generalize this observation to the case where $\rho$ is an $n$-qubit state, and each pair of qubits in $\rho\otimes \rho$ is measured in the Bell basis to yield the outcomes $\{S_k, k = 1,2, \dots, n\}$.
If $O = \sigma_{1} \otimes \cdots \otimes \sigma_{n}$ is a Pauli observable, then as in the $n=1$ case the Bell state $S_k$ is an eigenstate of $\sigma_k\otimes \sigma_k$ with eigenvalue $\pm 1$ for each $k$.
This implies that $\bigotimes_{k=1}^n S_k$ is an eigenstate of $O\otimes O$ with an eigenvalue $\pm 1$.
In particular, let us consider the product
\begin{equation}
     \prod_{k=1}^n \Tr\left((\sigma_{k} \otimes \sigma_{k}) S_k\right)
     =\pm 1.
\end{equation}
This product is equal to $+1$ when the tensor product of the Bell measurement outcomes $\bigotimes_{k=1}^n S_k$ is an eigenstate of $O\otimes O$ with eigenvalue $+1$, and it is $-1$ when $\otimes_{k=1}^n S_k$ is an eigenstate of $O\otimes O$ with eigenvalue $-1$.
We conclude that
\begin{align}\label{eq:bell-expected}
     \E\left[\prod_{k=1}^n \Tr\left((\sigma_{k} \otimes \sigma_{k}) S_k\right)\right] 
     &=\E\left[ \Tr\left( (O \otimes O) \bigotimes_{k=1}^n S_k \right) \right]\nonumber\\
    & = \mathrm{Prob}(O\otimes O =+1) - \mathrm{Prob}(O\otimes O=-1)
      \nonumber \\
      &=\Tr\left( (O \otimes O) (\rho \otimes \rho) \right) \nonumber \\
     &= |\Tr(O \rho)|^2,
\end{align}
where $\E$ denotes the expectation with respect to the probability distribution of Bell measurement outcomes.
The above derivation shows that the $n$-qubit entangling Bell measurement enables us to estimate the absolute value $|\Tr (O \rho)|$ for any $O$ considered in Definition~\ref{def:hardrhoO}.

Because Equation~\eqref{eq:bell-expected} relates the probability distribution of Bell measurement outcomes to the absolute value $|\Tr(O \rho)|$, we can estimate $|\Tr(O \rho)|$ accurately by repeatedly making entangling Bell measurements on successive pairs of copies of $\rho$ sufficiently many times.
Specifically, in the learning phase, we perform the entangling Bell measurement on $N_{\mathrm{Q}}$ pairs of copies of $\rho$, and collect the measurement data $\{S_k^{(t)}\}$ in the classical memory, where $k=1,2,\dots, n$ labels the qubit pairs, and $t=1,2, \dots, N_{\mathrm{Q}}$ labels the different rounds of measurements. 
For any given $n$-qubit Pauli observable $O = \sigma_{1} \otimes \cdots \otimes \sigma_{n}$, we consider the following estimator
\begin{equation}
    \hat{a}(O) = \frac{1}{N_Q} \sum_{t = 1}^{N_Q} \prod_{k=1}^n \Tr\left((\sigma_{k} \otimes \sigma_{k}) S^{(t)}_k\right),
\end{equation}
which can be computed efficiently in time $\mathcal{O}(nN_{\mathrm{Q}})$.

Using the expectation evaluated in Equation~(\ref{eq:bell-expected}), we can apply Hoeffding's inequality to
show that the estimate $\hat{a}(O)$ is equal to the expectation value $\Tr((O \otimes O)(\rho \otimes \rho)) = |\Tr(O \rho)|^2$ up to a small error with high probability.
The formal statement is given below.
\begin{lemma}\label{lem:hoeffPauli}
Given $N_{\mathrm{Q}} = \Theta(\log(1 / \delta) / \epsilon^2)$. For any observable $O$ considered in Definition~\ref{def:hardrhoO}, we have
\begin{equation}
    \left|\hat{a}(O) - |\Tr(O \rho)|^2\right| \leq \epsilon,
\end{equation}
with probability at least $1 - \delta$.
\end{lemma}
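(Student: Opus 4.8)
The plan is to recognize $\hat{a}(O)$ as an empirical mean of independent, bounded random variables and then invoke Hoeffding's inequality. For each round $t \in \{1, \ldots, N_{\mathrm{Q}}\}$, set
\[
X_t := \prod_{k=1}^n \Tr\left((\sigma_k \otimes \sigma_k) S^{(t)}_k\right),
\]
so that $\hat{a}(O) = \frac{1}{N_{\mathrm{Q}}} \sum_{t=1}^{N_{\mathrm{Q}}} X_t$. First I would verify that each $X_t$ is confined to $\{-1, +1\}$: for the observable $O = \sigma_1 \otimes \cdots \otimes \sigma_n$ considered in Definition~\ref{def:hardrhoO}, each Bell-measurement outcome $S^{(t)}_k$ is a projector onto a Bell state, which is an eigenstate of $\sigma_k \otimes \sigma_k$ with eigenvalue $\pm 1$; hence $\Tr((\sigma_k \otimes \sigma_k) S^{(t)}_k) = \pm 1$, and the product over $k$ is again $\pm 1$, exactly as noted in the derivation preceding Eq.~\eqref{eq:bell-expected}. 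In particular, each $X_t \in [-1, 1]$.

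Next I would record the two properties that feed Hoeffding's inequality. The rounds are mutually independent because round $t$ acts on a fresh pair of copies of $\rho$, so the $X_t$ are i.i.d.\ across $t$. Moreover, by the expectation already computed in Eq.~\eqref{eq:bell-expected}, each summand has mean $\E[X_t] = |\Tr(O\rho)|^2$, whence $\E[\hat{a}(O)] = |\Tr(O\rho)|^2$ as well.

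Finally I would apply Hoeffding's inequality to the average of $N_{\mathrm{Q}}$ i.i.d.\ variables each supported in an interval of width $2$:
\[
\Pr\left[\left|\hat{a}(O) - |\Tr(O\rho)|^2\right| \geq \epsilon\right] \leq 2 \exp\left(-\frac{N_{\mathrm{Q}}\, \epsilon^2}{2}\right).
\]
Demanding that the right-hand side be at most $\delta$ yields $N_{\mathrm{Q}} \geq \frac{2}{\epsilon^2}\log(2/\delta) = \Theta(\log(1/\delta)/\epsilon^2)$, matching the claimed sample complexity and completing the argument.

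There is essentially no serious obstacle here: the entire content is the boundedness check and the identification of the mean, both of which are in place from the preceding computation, after which the conclusion is a one-line concentration bound. The one subtlety worth stating explicitly is the independence across rounds, which relies on consuming a fresh pair of copies of $\rho$ in each round; if copies were reused or otherwise correlated, the i.i.d.\ hypothesis --- and hence the clean Hoeffding bound --- would no longer apply.
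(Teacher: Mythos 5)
Your proposal is correct and follows essentially the same route as the paper: the paper proves this lemma simply by invoking Hoeffding's inequality on the per-round products, using the mean $|\Tr(O\rho)|^2$ computed in Eq.~\eqref{eq:bell-expected} and the fact that each summand is $\pm 1$. Your write-up just makes explicit the boundedness, independence, and constant-tracking steps that the paper leaves implicit.
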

\noindent To obtain an estimate for the absolute value $|\Tr(O \rho)|$, we consider the estimate
\begin{equation}
    \hat{b} = \sqrt{\max(0, \hat{a})}.
\end{equation}
We can show the inequalities
\begin{align}
    |\Tr(O \rho)|^2 - \epsilon \leq \hat{a} \leq |\Tr(O \rho)|^2 + \epsilon
    \implies \max(0, \sqrt{|\Tr(O \rho)|^2} - \sqrt{\epsilon}) \leq \hat{b} \leq \sqrt{|\Tr(O \rho)|^2} + \sqrt{\epsilon},
\end{align}
using the fact that $\sqrt{x + y} \leq \sqrt{x} + \sqrt{y}$.

In the final step of the upper bound proof, we use Lemma~\ref{lem:hoeffPauli} to obtain the following result.
As long as $N_{\mathrm{Q}} = \mathcal{O}(1)$, we can estimate the absolute value of $\Tr(O \rho)$ for any observable $O$ given in Definition~\ref{def:hardrhoO} to an error $0.25$ with probability at least $0.8$.
\begin{corollary} \label{cor:absPauli}
Let $N_{\mathrm{Q}} = \Theta(1)$. For any observable $O$ considered in Definition~\ref{def:hardrhoO}, we have
\begin{equation}
    \left|\hat{b} - |\Tr(O \rho)|\right| \leq 0.25,
\end{equation}
with probability at least $0.8$.
\end{corollary}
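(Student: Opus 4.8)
The plan is to obtain this directly from Lemma~\ref{lem:hoeffPauli} by instantiating its error parameters at suitable constants. I would set $\epsilon = 1/16$ and $\delta = 0.2$. With these choices, Lemma~\ref{lem:hoeffPauli} guarantees that $N_{\mathrm{Q}} = \Theta(\log(1/\delta)/\epsilon^2) = \Theta(1)$ copies of $\rho$ suffice to ensure the good event $\left|\hat{a}(O) - |\Tr(O\rho)|^2\right| \leq \epsilon$ holds with probability at least $1 - \delta = 0.8$, for every observable $O$ arising in Definition~\ref{def:hardrhoO}. The whole corollary then amounts to transferring this bound on $\hat a$, an estimate of the squared quantity, into a bound on $\hat b = \sqrt{\max(0,\hat a)}$, an estimate of the absolute value itself.

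To carry out the transfer I would condition on the good event and invoke the two-sided inequality already established in the display preceding the corollary, namely that $\left|\hat{a} - |\Tr(O\rho)|^2\right| \leq \epsilon$ implies $\max(0,\,|\Tr(O\rho)| - \sqrt{\epsilon}) \leq \hat{b} \leq |\Tr(O\rho)| + \sqrt{\epsilon}$, which itself follows from the subadditivity bound $\sqrt{x+y}\leq\sqrt{x}+\sqrt{y}$. From this sandwich I would conclude that $\left|\hat{b} - |\Tr(O\rho)|\right| \leq \sqrt{\epsilon}$; with $\epsilon = 1/16$ this is exactly $\sqrt{\epsilon} = 0.25$, the desired accuracy, so the claim holds on the good event and hence with probability at least $0.8$.

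The only point deserving a moment's care is the lower endpoint $\max(0,\,|\Tr(O\rho)| - \sqrt{\epsilon})$, which I would handle by a short case split: when $|\Tr(O\rho)| \geq \sqrt{\epsilon}$ the bound reads $|\Tr(O\rho)| - \sqrt{\epsilon} \leq \hat{b} \leq |\Tr(O\rho)| + \sqrt{\epsilon}$, giving deviation at most $\sqrt{\epsilon}$ directly; when $|\Tr(O\rho)| < \sqrt{\epsilon}$ the lower bound merely degrades to $0 \leq \hat{b}$, but since $|\Tr(O\rho)|$ is itself smaller than $\sqrt{\epsilon}$, the estimate $\hat{b}$ still lies within $\sqrt{\epsilon}$ of $|\Tr(O\rho)|$. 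Either way the deviation is at most $\sqrt{\epsilon} = 0.25$. I do not expect any substantive obstacle here: all the genuine content—relating the Bell-measurement statistics to $|\Tr(O\rho)|^2$ via Equation~\eqref{eq:bell-expected}, and the Hoeffding concentration in Lemma~\ref{lem:hoeffPauli}—has already been carried out, so this corollary is simply the specialization of those results to fixed constants and hence requires only routine bookkeeping.
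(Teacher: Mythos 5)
Your proposal is correct and follows exactly the paper's own route: instantiate Lemma~\ref{lem:hoeffPauli} with constant parameters (the paper implicitly takes $\epsilon = 1/16$, $\delta = 0.2$, giving $N_{\mathrm{Q}} = \Theta(1)$) and then convert the bound on $\hat{a}$ into a bound on $\hat{b}$ via the sandwich $\max\left(0, |\Tr(O\rho)| - \sqrt{\epsilon}\right) \leq \hat{b} \leq |\Tr(O\rho)| + \sqrt{\epsilon}$ established just before the corollary. Your explicit case split on whether $|\Tr(O\rho)| \geq \sqrt{\epsilon}$ is a correct and slightly more careful write-up of the same bookkeeping the paper leaves implicit.
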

\noindent This concludes the constant upper bound for quantum-enhanced experiments in Theorem~\ref{thm:obs-adv}.

\subsection{An exponential lower bound for conventional experiments}
\label{sec:lowbd-obs}

The proof begins with a reduction to the partially-revealed many-versus-one distinguishing task followed by bounding the total variation distance.

\subsubsection{Reduction to partially-revealed many-versus-one distinguishing task}

We consider the following partially-revealed many-versus-one distinguishing task discussed in Appendix~\ref{sec:part-many-v-one}, namely where:
\begin{itemize}
    \item The null hypothesis is $I / 2^n$.
    \item The alternative hypothesis is $(I + 0.9 s P) / 2^n$.
\end{itemize}
The partially revealed information is the Pauli operator $P$.
Recall the following from Definition~\ref{def:hardrhoO},
\begin{enumerate}
    \item With probability $1/2$, the state is $\rho = I / 2^n$ and $O \in \{I, X, Y, Z\}^{\otimes n} \setminus \{I^{\otimes n}\}$ is sampled uniformly at random. (Corresponds to the null hypothesis)
    \item With probability $1/2$, the state is $\rho = (I+ 0.9 sP) / 2^n$ and $O = P$, where $s = \{\pm 1\}$ with equal probability and $P \in \{I, X, Y, Z\}^{\otimes n} \setminus \{I^{\otimes n}\}$ uniformly. (Corresponds to the alternative hypothesis)
\end{enumerate}
For $\rho = I / 2^n$, we have $|\Tr(O \rho)| = 0$.
For $\rho = (I+ 0.9 sP) / 2^n$, we have $|\Tr(O \rho)| = 0.9$.
Therefore, if an algorithm could predict $|\Tr(O \rho)|$ to $0.25$ error with probability at least $1 - \delta$, it could be used to distinguish between the null and alternative hypotheses with success probability at least $1 - \delta$.

\subsubsection{Total variation distance}

From the information-theoretic lower bound for partially-revealed many-versus-one distinguishing task given in Appendix~\ref{sec:part-many-v-one}, if we let $p_{\rho}(\ell)$ be the leaf probability distribution under $\rho$, then
\begin{equation} \label{eq:TVLB}
    \E_{P \in \{I, X, Y, Z\}^{\otimes n } \setminus \{I^{\otimes n}\}} \mathrm{TV}\left(p_{I / 2^n}, \E_{s \in \{\pm 1\}} p_{(I+ 0.9 sP) / 2^n} \right) \geq 1 - 2 \delta.
\end{equation}
For each leaf node $\ell$, we consider the path from the root to $\ell$,
\begin{equation}
    u_0 = r \xrightarrow{s_1} u_1 \xrightarrow{s_2} u_2 \xrightarrow{s_3} \ldots \xrightarrow{s_{T-1}} u_{T-1} \xrightarrow{s_T} u_T = \ell.
\end{equation}
At each node $u$, we perform a POVM measurement $\{w^u_s \ketbra{\phi^u_s}{\phi^u_s }\}_s$ on $\rho$  to obtain an outcome $s$ with probability
\begin{equation}
     w^u_s \bra{\phi^u_s} \rho \ket{\phi^u_s}.
\end{equation}
Hence, we can write down the probability to arrive at the leaf $\ell$ as
\begin{equation}
    p_{\rho}(\ell) = \prod_{t=1}^{T} w^{u_{t-1}}_{s_t} \bra{\phi^{u_{t-1}}_{s_t}} \rho \ket{\phi^{u_{t-1}}_{s_t}}.
\end{equation}
Recalling the definition of total variation distance, note that for any probability distributions $p_A,p_B$ for which $p_A(\ell) > 0$ whenever $p_B(\ell) > 0$,
\begin{equation}
    \mathrm{TV}(p_A, p_B) = \frac{1}{2}\sum_{\ell} |p_A(\ell) - p_B(\ell)| = \sum_{\ell}  \max(0, p_A(\ell) - p_B(\ell)) = \sum_{\ell} p_A(\ell)\cdot \max\left(0,1 - \frac{p_B(\ell)}{p_A(\ell)}\right), \label{eq:tv_fdiv}
\end{equation}
where the last equality follows from $\max(ax, ay) = a \max(x, y)$ for all $x, y \in \mathbb{R}$ and all $a \geq 0$.

Observe that for leaf $\ell$,
\begin{align}
    \frac{\E_{s\in\{\pm 1\}} p_{(I+0.9sP)/2^n}(\ell)}{p_{I/2^n}(\ell)} &= \frac{\E_{s \in \{\pm 1\}} \prod_{t=1}^{T} w^{u_{t-1}}_{s_t} \bra{\phi^{u_{t-1}}_{s_t}} \frac{I+ 0.9 sP}{2^n} \ket{\phi^{u_{t-1}}_{s_t}}}{\prod_{t=1}^{T} w^{u_{t-1}}_{s_t} \bra{\phi^{u_{t-1}}_{s_t}} \frac{I}{2^n} \ket{\phi^{u_{t-1}}_{s_t}}} \\
    &=  \E_{s\in\{\pm 1\}} \prod_{t=1}^{T} \left(1 + 0.9 s \bra{\phi^{u_{t-1}}_{s_t}} P \ket{\phi^{u_{t-1}}_{s_t}} \right). \label{eq:lr}
\end{align}
Combining \eqref{eq:tv_fdiv} and \eqref{eq:lr}, we can express the total variation distance inside the expectation in \eqref{eq:TVLB} as
\begin{equation}
    \mathrm{TV}\left(p_{I / 2^n}, \E_{s \in \{\pm 1\}} p_{(I+ 0.9 sP) / 2^n} \right)
    = \sum_{\ell} p_{I / 2^n}(\ell) \, \max\left( 0, \,\, 1 - \E_{s \in \{\pm 1\}} \prod_{t=1}^{T} \left(1 + 0.9 s \bra{\phi^{u_{t-1}}_{s_t}} P \ket{\phi^{u_{t-1}}_{s_t}} \right) \right)
\end{equation}

\subsubsection{Upper bound for total variation distance}

We analyze one of the terms in the total variation distance using Jensen's inequality (note that $\exp(x)$ is a convex function in $x$).
\begin{align}
    &\E_{s \in \{\pm 1\}} \prod_{t=1}^{T} \left(1 + 0.9 s \bra{\phi^{u_{t-1}}_{s_t}} P \ket{\phi^{u_{t-1}}_{s_t}} \right)\\
    &= \E_{s \in \{\pm 1\}}  \exp\left[ \sum_{t=1}^T \log\left(1 + 0.9 s \bra{\phi^{u_{t-1}}_{s_t}} P \ket{\phi^{u_{t-1}}_{s_t}} \right) \right]\\
    &\geq  \exp\left[ \E_{s \in \{\pm 1\}} \sum_{t=1}^T \log\left(1 + 0.9 s \bra{\phi^{u_{t-1}}_{s_t}} P \ket{\phi^{u_{t-1}}_{s_t}} \right) \right]\\
    &=  \exp\left[ \sum_{t=1}^T \frac{1}{2} \log\left(1 - 0.81 \bra{\phi^{u_{t-1}}_{s_t}} P \ket{\phi^{u_{t-1}}_{s_t}}^2 \right) \right]\\
    &=  \prod_{t=1}^{T} \sqrt{1 - 0.81 \bra{\phi^{u_{t-1}}_{s_t}} P \ket{\phi^{u_{t-1}}_{s_t}}^2}\,.
\end{align}
We can then upper bound the total variation distance as
\begin{align}
    &\mathrm{TV}\left(p_{I / 2^n}, \E_{s \in \{\pm 1\}} p_{(I+ 0.9 sP) / 2^n} \right)\\
    &\leq \sum_{\ell} p_{I / 2^n}(\ell) \, \max\Bigg( 0, \,\, 1 - \prod_{t=1}^{T} \sqrt{1 - 0.81 \bra{\phi^{u_{t-1}}_{s_t}} P \ket{\phi^{u_{t-1}}_{s_t}}^2} \Bigg)\\
    &= \sum_{\ell} p_{I / 2^n}(\ell) \, \Bigg( 1 - \prod_{t=1}^{T} \sqrt{1 - 0.81 \bra{\phi^{u_{t-1}}_{s_t}} P \ket{\phi^{u_{t-1}}_{s_t}}^2} \Bigg)\,. \label{eq:TVUB}
\end{align}
The last equality follows from the fact that all eigenvalues of $P$ are $\pm 1$, hence $1 \geq \prod_{t=1}^{T} \sqrt{1 - 0.81 \bra{\phi^{u_{t-1}}_{s_t}} P \ket{\phi^{u_{t-1}}_{s_t}}^2}$.

\subsubsection{Lower bound for the number of measurements}

We can combine Eq.~\eqref{eq:TVUB} and Eq.~\eqref{eq:TVLB} to find
\begin{align}
\E_{P \in \{I, X, Y, Z\}^{\otimes n } \setminus \{I^{\otimes n}\}} \sum_{\ell} p_{I / 2^n}(\ell) \, \Bigg( 1 - \prod_{t=1}^{T} \sqrt{1 - 0.81 \bra{\phi^{u_{t-1}}_{s_t}} P \ket{\phi^{u_{t-1}}_{s_t}}^2} \Bigg) \geq 1 - 2 \delta.
\end{align}
By linearity of expectation, we have
\begin{align} \label{eq:TV-UBLB}
\sum_{\ell} p_{I / 2^n}(\ell) \, \Bigg( 1 - \E_{P \in \{I, X, Y, Z\}^{\otimes n } \setminus \{I^{\otimes n}\}} \prod_{t=1}^{T} \sqrt{1 - 0.81 \bra{\phi^{u_{t-1}}_{s_t}} P \ket{\phi^{u_{t-1}}_{s_t}}^2} \Bigg) \geq 1 - 2 \delta.
\end{align}
We analyze the expectation value term in the summand as follows:
\begin{align}
    &\E_{P \in \{I, X, Y, Z\}^{\otimes n } \setminus \{I^{\otimes n}\}} \prod_{t=1}^{T} \sqrt{1 - 0.81 \bra{\phi^{u_{t-1}}_{s_t}} P \ket{\phi^{u_{t-1}}_{s_t}}^2} \label{eq:sqrt0.81}\\
    &= \E_{P \in \{I, X, Y, Z\}^{\otimes n } \setminus \{I^{\otimes n}\}}
    \exp\left[ \frac{1}{2} \sum_{t=1}^T \log\left(1 - 0.81 \bra{\phi^{u_{t-1}}_{s_t}} P \ket{\phi^{u_{t-1}}_{s_t}}^2\right) \right] \\
    &\geq
    \exp\left[ \frac{1}{2} \sum_{t=1}^T \E_{P \in \{I, X, Y, Z\}^{\otimes n } \setminus \{I^{\otimes n}\}} \log\left(1 - 0.81 \bra{\phi^{u_{t-1}}_{s_t}} P \ket{\phi^{u_{t-1}}_{s_t}}^2\right) \right] \\
    &\geq \exp\left[ \frac{1}{2} \sum_{t=1}^T \E_{P \in \{I, X, Y, Z\}^{\otimes n } \setminus \{I^{\otimes n}\}} -1.701 \bra{\phi^{u_{t-1}}_{s_t}} P \ket{\phi^{u_{t-1}}_{s_t}}^2 \right] \\
    &= \exp\left[ -0.8505 \sum_{t=1}^T \frac{1}{2^n+1} \right] = \exp\left( - \frac{0.8505 T}{2^n+1} \right). \label{eq:0.8505}
\end{align}
The second line follows from Jensen's inequality because $\exp(x)$ is convex in $x$.
The third line uses $\log(1-x) \geq -2.1x, \forall x \in [0, 0.82]$.
The fourth line uses the fact that
\begin{equation}
    \E_{\substack{P \in \{I, X, Y, Z\}^{\otimes n} \setminus \{I^{\otimes n}\}}} P \otimes P = \frac{2^n \mathrm{SWAP} - I \otimes I}{4^n - 1},
\end{equation}
hence $\E_{\substack{P \in \{I, X, Y, Z\}^{\otimes n} \setminus \{I^{\otimes n}\}}} \bra{\phi^{u_{t-1}}_{s_t}} P \ket{\phi^{u_{t-1}}_{s_t}}^2 = \frac{2^n - 1}{4^n - 1} = \frac{1}{2^n + 1}$.

Combining the analysis with Eq.~\eqref{eq:TV-UBLB}, we find that
\begin{equation}
    \sum_{\ell} p_{I / 2^n}(\ell) \, \Bigg( 1 - \exp\left( - \frac{0.8505 T}{2^n+1} \right) \Bigg) \geq 1 - 2 \delta.
\end{equation}
Because $\sum_{\ell} p_{I / 2^n}(\ell) = 1$, we have
\begin{equation}
    1 - \exp\left( - \frac{0.8505 T}{2^n+1} \right) \geq 1 - 2 \delta.
\end{equation}
Together, the following lower bound on the number of experiments can be obtained:
\begin{equation}
    T \geq \frac{2^n + 1}{0.8505} \log\left(\frac{1}{2 \delta}\right).
\end{equation}
After setting $\delta = 0.2$ (corresponding to a success probability of at least $0.8$), we conclude the exponential lower bound for conventional experiments in Theorem~\ref{thm:obs-adv}.

\subsection{An exponential lower bound for comparing absolute values}
\label{sec:compare-abs-value}

In the physical experiment presented in the main text, we considered a slightly different task that also yields an exponential lower bound for conventional experiments.
This slightly different task has two main differences when compared with the task described in Appendix~\ref{sec:lowbd-obs}.
First, we do not consider the maximally mixed state $I/2^n$.
Second, we ask the learner to predict which of two observables $O_1, O_2$ has a larger absolute value.
The task description is given below.

\begin{task}[Comparing absolute values] \label{task:comp-abs}
There is an unknown state $\rho = (I+ 0.9 sP) / 2^n$ where $s = \{\pm 1\}$ and $P \in \{I, X, Y, Z\}^{\otimes n} \setminus \{I^{\otimes n}\}$ are both sampled uniformly at random.
The algorithm learns about $\rho$ through conventional or quantum-enhanced strategies.
The algorithm transforms all quantum data to classical data.
After learning, we present the learning algorithm with
\begin{equation}
O_1 = P, \,\, O_2 = Q  \,\,\quad \mathrm{or} \,\,\quad O_1 = Q, \,\, O_2 = P,
\end{equation}
with equal probability, where $Q \neq P \in \{I, X, Y, Z\}^{\otimes n} \setminus \{I^{\otimes n}\}$ is sampled uniformly.
The learning algorithm succeeds if it correctly classifies whether $|\Tr(O_1 \rho)| > |\Tr(O_2 \rho)|$ or $|\Tr(O_1 \rho)| < |\Tr(O_2 \rho)|$.
\end{task}

Using the procedure presented in Appendix~\ref{sec:uppbd-obs}, it is not hard to show that quantum-enhanced strategies could accomplish the above task with classification accuracy (i.e., the probability that the classification is correct) at least $1 - \delta$ from only $\mathcal{O}(\log(1 / \delta))$ experiments.
In contrast, we have the following theorem for conventional experiments.

\begin{theorem}[Exponential lower bound for Task~\ref{task:comp-abs}] \label{thm:comp-abs}
A learning algorithm in the conventional setting (without quantum memory) requires at least
\begin{equation}
    \frac{(2^n + 1)}{0.8505} \log\left(\frac{2}{1+2\delta}\right)
\end{equation}
experiments to accomplish Task~\ref{task:comp-abs} with an accuracy of $1 - \delta$, for a given $\delta > 0$.
\end{theorem}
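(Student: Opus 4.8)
The plan is to reduce Task~\ref{task:comp-abs} to a balanced two-point hypothesis test and then recycle the total-variation bound already established in Appendix~\ref{sec:lowbd-obs}, rather than re-running a tree analysis from scratch. The starting observation is that for $\rho = (I + 0.9 s P)/2^n$ the non-identity Paulis are trace-orthogonal, so $|\Tr(P\rho)| = 0.9$ while $|\Tr(Q\rho)| = 0$ for every $Q \neq P$ in $\{I,X,Y,Z\}^{\otimes n}\setminus\{I^{\otimes n}\}$. Hence succeeding at Task~\ref{task:comp-abs} is exactly the same as correctly deciding which of the two announced operators $O_1, O_2$ equals $P$. I would condition on the unordered pair $\{A, B\}$ of Paulis that is announced. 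Given this pair, the true $P$ is equally likely to be $A$ or $B$ and $s = \pm 1$ is uniform, so the classical memory at the end of learning---the leaf $\ell$ of the learning tree $\cT$---is distributed as $\bar p_A := \E_{s}\, p_{(I+0.9sA)/2^n}$ under $P=A$ and as $\bar p_B$ under $P=B$. Because the announced ordering is an independent coin flip, the only state-informative quantity available to the learner is $\ell$; this is therefore a balanced two-point test, and the success probability conditioned on the pair is at most $\tfrac12 + \tfrac12\,\mathrm{TV}(\bar p_A, \bar p_B)$, so averaging over the announced pair bounds the overall accuracy by $\tfrac12 + \tfrac12\,\E_{\{A,B\}}\mathrm{TV}(\bar p_A, \bar p_B)$.

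The key step is then to bound $\E_{\{A,B\}}\mathrm{TV}(\bar p_A, \bar p_B)$ by reintroducing the maximally mixed state $I/2^n$ as a \emph{virtual} reference point, even though it never occurs in Task~\ref{task:comp-abs}. By the triangle inequality, $\mathrm{TV}(\bar p_A, \bar p_B) \leq \mathrm{TV}(\bar p_A, p_{I/2^n}) + \mathrm{TV}(p_{I/2^n}, \bar p_B)$, and since the marginals of both $A$ and $B$ are uniform over the non-identity Paulis this averages to
\begin{equation}
    \E_{\{A,B\}}\mathrm{TV}(\bar p_A, \bar p_B) \leq 2\,\E_{P}\,\mathrm{TV}\!\left(p_{I/2^n}, \E_{s}\, p_{(I+0.9sP)/2^n}\right).
\end{equation}
The right-hand side is precisely the quantity analyzed in Appendix~\ref{sec:lowbd-obs}: combining the upper bound~\eqref{eq:TVUB} with the chain of estimates~\eqref{eq:sqrt0.81}--\eqref{eq:0.8505} gives $\E_{P}\,\mathrm{TV}(p_{I/2^n}, \bar p_P) \leq 1 - \exp\!\left(-0.8505\,T/(2^n+1)\right)$, so no new tree computation is needed.

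Putting the pieces together, the overall success probability is at most $\tfrac12 + \left(1 - \exp(-0.8505\,T/(2^n+1))\right) = \tfrac32 - \exp(-0.8505\,T/(2^n+1))$. Demanding that this be at least $1 - \delta$ gives $\exp(-0.8505\,T/(2^n+1)) \leq \tfrac{1+2\delta}{2}$, and solving for $T$ yields the claimed bound $T \geq \frac{2^n+1}{0.8505}\log\!\left(\frac{2}{1+2\delta}\right)$.

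I expect the main obstacle to be justifying the virtual-null step cleanly: the maximally mixed state plays no role in Task~\ref{task:comp-abs}, so one must argue carefully that inserting it into the triangle inequality is legitimate and that the resulting factor of $2$---together with the decision to average over $s$ inside $\bar p_P$ rather than treating $s$ as revealed---is exactly what produces the $\tfrac{1+2\delta}{2}$ appearing inside the logarithm. The alternative, analyzing $\mathrm{TV}(\bar p_A, \bar p_B)$ directly via a likelihood-ratio computation over pairs of Paulis, would duplicate the work of Appendix~\ref{sec:lowbd-obs}; the payoff of the reduction is avoiding precisely that recomputation.
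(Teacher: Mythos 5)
Your proposal is correct and follows essentially the same route as the paper's proof: condition on the announced pair and apply LeCam's two-point method to bound accuracy by $\tfrac12 + \tfrac12\,\mathrm{TV}(\bar p_A,\bar p_B)$, insert the maximally mixed state as a reference via the triangle inequality, and reuse the bounds of Appendix~\ref{sec:lowbd-obs} (Eq.~\eqref{eq:TVUB} and Eqs.~\eqref{eq:sqrt0.81}--\eqref{eq:0.8505}) to get $2 - 2\exp(-0.8505\,T/(2^n+1))$ and hence the stated bound. The ``virtual-null'' step you flag as a potential obstacle needs no special justification---$p_{I/2^n}$ is a genuine leaf distribution of the same learning tree (the tree is state-independent; only the transition probabilities depend on $\rho$), so the triangle inequality for total variation applies directly, exactly as in the paper.
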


\subsubsection{Lower bound for total variation distance}

Here we begin the proof of Theorem~\ref{thm:comp-abs}.
Task~\ref{task:comp-abs} is closely related to the partially-revealed many-versus-one distinguishing task, but is not exactly the same.
We will utilize a slightly different information-theoretic bound for this task.
Let us define the following notation
\begin{equation}
    \rho_{sP} \equiv \frac{I+ 0.9 sP}{2^n}.
\end{equation}
We consider a learning algorithm in the conventional setting. We consider the probability distribution $p_{\rho}(\ell)$ over the leaf node $\ell$ when the underlying state is $\rho$.
Recall that the leaf node $\ell$ is the final memory state of the learning algorithm.
Any procedure that makes the prediction based on the final memory state of the learning algorithm must have a classification accuracy upper bounded by
\begin{equation} \label{eq:PQ-TV}
    \frac{1}{(4^n - 1)(4^n - 2)} \sum_{P \neq Q \in \{I, X, Y, Z\}^{\otimes n} \setminus \{I^{\otimes n}\}} \left[ \frac{1}{2} + \frac{1}{2} \mathrm{TV}\left( \E_{s \in \{\pm 1\}} p_{\rho_{sP}}(\ell), \E_{s \in \{\pm 1\}} p_{\rho_{sQ}}(\ell) \right) \right]\,.
\end{equation}

To understand why the above inequality holds, consider a fixed $P \neq Q$.
There is an equal probability that the underlying state is $\rho_{+P}, \rho_{-P}, \rho_{+Q},$ or $\rho_{-Q}$ because $s \in \{\pm 1\}$ with equal probability and $O_1 = P, O_2 = Q$ or $O_1 = Q, O_2 = P$ with probability $1/2$.
In order to distinguish the event of $\rho_{+P}, \rho_{-P}$ from the event  $\rho_{+Q}, \rho_{-Q}$ based on the leaf node $\ell$, we need the two distributions $\E_{s \in \{\pm 1\}} p_{\rho_{sP}}(\ell)$ and $\E_{s \in \{\pm 1\}} p_\ell(\rho_{sQ})$ to be sufficiently distinct.
Formally, one can show that the success probability is upper bounded by
\begin{equation}
    \frac{1}{2} + \frac{1}{2} \mathrm{TV}\left( \E_{s \in \{\pm 1\}} p_{\rho_{sP}}(\ell), \E_{s \in \{\pm 1\}} p_{\rho_{sQ}}(\ell) \right)
\end{equation}
using LeCam's two-point method, see e.g.~Lemma 1 in~\cite{yu1997assouad}.
To achieve the above success probability, one can use the maximum likelihood protocol that outputs $P$ if $\E_{s \in \{\pm 1\}} p_{\rho_{sP}}(\ell) > \E_{s \in \{\pm 1\}} p_{\rho_{sQ}}(\ell)$ and outputs $Q$ otherwise.
Because $P, Q$ are both chosen uniformly at random (but distinct), the average classification accuracy is given in Eq.~\eqref{eq:PQ-TV}.
If the learning algorithm could achieve an accuracy of $1 - \delta$, we would have
\begin{equation}
    (1 - \delta) \leq \frac{1}{2} + \frac{1}{2(4^n - 1)(4^n - 2)} \sum_{P \neq Q \in \{I, X, Y, Z\}^{\otimes n} \setminus \{I^{\otimes n}\}} \mathrm{TV}\left( \E_{s \in \{\pm 1\}} p_{\rho_{sP}}(\ell), \E_{s \in \{\pm 1\}} p_{\rho_{sQ}}(\ell) \right).
\end{equation}
This implies that
\begin{equation} \label{eq:1-2delta-lowerbound}
    1 - 2\delta \leq \frac{1}{(4^n - 1)(4^n - 2)} \sum_{P \neq Q \in \{I, X, Y, Z\}^{\otimes n} \setminus \{I^{\otimes n}\}} \mathrm{TV}\left( \E_{s \in \{\pm 1\}} p_{\rho_{sP}}(\ell), \E_{s \in \{\pm 1\}} p_{\rho_{sQ}}(\ell) \right).
\end{equation}

\subsubsection{Upper bound for total variation distance}

We now perform triangle inequalities and reuse inequalities in Appendix~\ref{sec:lowbd-obs} to upper bound the total variation distance:
\begin{align}
    \mathrm{TV}\left( \E_{s \in \{\pm 1\}} p_{\rho_{sP}}(\ell), \E_{s \in \{\pm 1\}} p_{\rho_{sQ}}(\ell) \right) &\leq \mathrm{TV}\left(p_{I/2^n}(\ell), \E_{s \in \{\pm 1\}} p_{\rho_{sP}}(\ell) \right) + \mathrm{TV}\left(p_{I/2^n}(\ell), \E_{s \in \{\pm 1\}} p_{\rho_{sQ}}(\ell) \right)\\
    &\leq \sum_{\ell} p_{I / 2^n}(\ell) \, \Bigg( 1 - \prod_{t=1}^{T} \sqrt{1 - 0.81 \bra{\phi^{u_{t-1}}_{s_t}} P \ket{\phi^{u_{t-1}}_{s_t}}^2} \Bigg)\nonumber \\
    &\qquad+ \sum_{\ell} p_{I / 2^n}(\ell) \, \Bigg( 1 - \prod_{t=1}^{T} \sqrt{1 - 0.81 \bra{\phi^{u_{t-1}}_{s_t}} Q \ket{\phi^{u_{t-1}}_{s_t}}^2} \Bigg)\\
    &= 1 - \sum_{\ell} p_{I / 2^n}(\ell) \prod_{t=1}^{T} \sqrt{1 - 0.81 \bra{\phi^{u_{t-1}}_{s_t}} P \ket{\phi^{u_{t-1}}_{s_t}}^2} \nonumber \\
    &\qquad+ 1 - \sum_{\ell} p_{I / 2^n}(\ell) \prod_{t=1}^{T} \sqrt{1 - 0.81 \bra{\phi^{u_{t-1}}_{s_t}} Q \ket{\phi^{u_{t-1}}_{s_t}}^2}. \label{eq:sqrt0.81<Q>}
\end{align}
The first line is triangle inequality.
The second inequality uses Eq.~\eqref{eq:TVUB} for both $P$ and $Q$.
The equality step uses $\sum_{\ell} p_{I / 2^n}(\ell) = 1$.
Therefore, we have
\begin{align}
    &\frac{1}{(4^n - 1)(4^n - 2)} \sum_{P \neq Q \in \{I, X, Y, Z\}^{\otimes n} \setminus \{I^{\otimes n}\}} \mathrm{TV}\left( \E_{s \in \{\pm 1\}} p_{\rho_{sP}}(\ell), \E_{s \in \{\pm 1\}} p_{\rho_{sQ}}(\ell) \right)\\
    &\leq 1 - \frac{1}{(4^n - 1)(4^n - 2)} \sum_{P \neq Q \in \{I, X, Y, Z\}^{\otimes n} \setminus \{I^{\otimes n}\}} \left[ \sum_{\ell} p_{I / 2^n}(\ell) \prod_{t=1}^{T} \sqrt{1 - 0.81 \bra{\phi^{u_{t-1}}_{s_t}} P \ket{\phi^{u_{t-1}}_{s_t}}^2} \right] \nonumber \\
    &\qquad+ 1 - \frac{1}{(4^n - 1)(4^n - 2)} \sum_{P \neq Q \in \{I, X, Y, Z\}^{\otimes n} \setminus \{I^{\otimes n}\}} \left[ \sum_{\ell} p_{I / 2^n}(\ell) \prod_{t=1}^{T} \sqrt{1 - 0.81 \bra{\phi^{u_{t-1}}_{s_t}} Q \ket{\phi^{u_{t-1}}_{s_t}}^2} \right] \\
    &= 1 -  \sum_{\ell} p_{I / 2^n}(\ell) \E_{P \in \{I, X, Y, Z\}^{\otimes n} \setminus \{I^{\otimes n}\}} \prod_{t=1}^{T} \sqrt{1 - 0.81 \bra{\phi^{u_{t-1}}_{s_t}} P \ket{\phi^{u_{t-1}}_{s_t}}^2} \nonumber \\
    &\qquad+ 1 - \sum_{\ell} p_{I / 2^n}(\ell) \E_{Q \in \{I, X, Y, Z\}^{\otimes n} \setminus \{I^{\otimes n}\}} \prod_{t=1}^{T} \sqrt{1 - 0.81 \bra{\phi^{u_{t-1}}_{s_t}} Q \ket{\phi^{u_{t-1}}_{s_t}}^2} \\
    \label{E:third1}
    &= 1 -  \sum_{\ell} p_{I / 2^n}(\ell) \exp\left( - \frac{0.8505 T}{2^n+1} \right) + 1 - \sum_{\ell} p_{I / 2^n}(\ell) \exp\left( - \frac{0.8505 T}{2^n+1} \right)\\
    &= 2 -  2 \exp\left( - \frac{0.8505 T}{2^n+1} \right). \label{eq:2x0.8505}
\end{align}
In the above inequalities, the first inequality uses Eq.~\eqref{eq:sqrt0.81<Q>}.
The equality thereafter uses the following analysis,
\begin{align}
    &\frac{1}{(4^n - 1)(4^n - 2)} \sum_{P \neq Q \in \{I, X, Y, Z\}^{\otimes n} \setminus \{I^{\otimes n}\}} f(P) \\
    &= \frac{1}{4^n - 1} \sum_{P \in \{I, X, Y, Z\}^{\otimes n} \setminus \{I^{\otimes n}\}} f(P) \left( \frac{1}{4^n - 2} \sum_{\substack{Q \in \{I, X, Y, Z\}^{\otimes n} \setminus \{I^{\otimes n}\} \\ \mathrm{s.t.,} \, Q\neq P}} 1  \right) \\
    &= \frac{1}{4^n - 1} \sum_{P \in \{I, X, Y, Z\}^{\otimes n} \setminus \{I^{\otimes n}\}} f(P) \\
    &= \E_{P \in \{I, X, Y, Z\}^{\otimes n} \setminus \{I^{\otimes n}\}} f(P)\,,
\end{align}
where $f(P) = \sum_{\ell} p_{I / 2^n}(\ell) \prod_{t=1}^{T} \sqrt{1 - 0.81 \bra{\phi^{u_{t-1}}_{s_t}} Q \ket{\phi^{u_{t-1}}_{s_t}}^2}$, as well as linearity of expectation, i.e.
\begin{equation}
    \E_{P \in \{I, X, Y, Z\}^{\otimes n} \setminus \{I^{\otimes n}\}} \sum_{\ell} p_{I / 2^n}(\ell) = \sum_{\ell} p_{I / 2^n}(\ell) \E_{P \in \{I, X, Y, Z\}^{\otimes n} \setminus \{I^{\otimes n}\}}.
\end{equation}
The third step in Eq.~\eqref{E:third1} uses Eq.~\eqref{eq:sqrt0.81}~to~\eqref{eq:0.8505}, and the final step uses $\sum_{\ell} p_{I / 2^n}(\ell) = 1$.

\subsubsection{Combining upper and lower bounds}

We can combine the lower bound obtained in Eq.~\eqref{eq:1-2delta-lowerbound} and the upper bound in Eq.~\eqref{eq:2x0.8505} to find
\begin{equation}
    1 - 2\delta \leq 2 -  2 \exp\left( - \frac{0.8505 T}{2^n+1} \right).
\end{equation}
Basic algebraic manipulations give
\begin{equation}
    \frac{0.8505 T}{2^n+1} \geq - \log\left(\frac{1}{2} + \delta\right) = \log\left(\frac{2}{1+2\delta}\right).
\end{equation}
We have thus concluded the desired lower bound
\begin{equation}
    T \geq \frac{(2^n + 1)}{0.8505} \log\left(\frac{2}{1+2\delta}\right)
\end{equation}
stated in Theorem~\ref{thm:comp-abs}.

\section{Performing quantum principal component analysis}
\label{sec:qpca}

The (first) principal component of a nonnegative Hermitian matrix $A$ is the eigenvector of $A$ with the largest eigenvalue.
Here, we consider a well-known task called quantum principal component analysis (PCA), which can be achieved efficiently using the quantum algorithm given in \cite{lloyd2014quantum}. The formal definition of the quantum PCA task is given in the following definition.

\begin{task}[Quantum principal component analysis task]
Let $\rho$ be an unknown $n$-qubit mixed state whose top eigenvector $\ket{\phi}$ has eigenvalue larger than all other eigenvalues by a constant factor independent of $n$.
Given a fixed observable $O$, we would like to predict $\bra{\phi}O\ket{\phi}$ up to a small additive error.
\end{task}
We can accomplish this task using the quantum PCA algorithm in \cite{lloyd2014quantum}. In this algorithm, multiple copies of $\rho$ are used in a protocol that approximates the unitary operator $\sum_t |t\rangle\langle t| \otimes \exp(-i \rho t)$, where $t$ is the ``time'' stored in an auxiliary register. 
By applying this conditional $\exp(-i \rho t)$ operation to the initial state $\rho= \sum_i \lambda_i |\phi_i\rangle\langle \phi_i|$, performing the quantum Fourier transform and measuring the auxiliary register, we read out an eigenvalue $\lambda_i$ and prepare the corresponding eigenstate $|\phi_i\rangle$ with probability $\lambda_i$. The eigenvalue can be measured with constant accuracy, and the eigenstate prepared with constant fidelity, using a constant number of copies of $\rho$. Once $|\phi_i\rangle$ has been prepared, we can measure the observable $O$ in this state.

By assumption, the largest eigenvalue of $\rho$ is a constant independent of $n$, and furthermore is greater than all other eigenvalues by a constant. Hence by repeating the above procedure a constant number of times, we can estimate $\langle \phi|O|\phi\rangle$ to constant accuracy, where $|\phi\rangle$ is the eigenstate of $\rho$ with the largest eigenvalue. In contrast, in Appendix~\ref{sec:conv-exp-LB-pca} we show that algorithms that can only learn about $\rho$ through conventional experiments require an exponential number of copies of $\rho$. Bringing these arguments all together, we establish the following theorem.

\begin{theorem}[Exponential advantage for quantum principal component analysis] \label{thm:qPCA}
Let $\rho$ be an unknown $n$-qubit mixed state (for $n > 1$) whose top eigenvector $\ket{\phi}$ has eigenvalue larger than all the other eigenvalues by a constant, and let $Z_1$ be an observable which is equal to the Pauli-$Z$ operator on the first qubit.
Algorithms learn about $\rho$ through conventional or quantum-enhanced experiments.
\begin{itemize}
    \item \emph{Upper bound:} There is an algorithm in the \textbf{quantum-enhanced} scenario using only $\mathcal{O}(1)$ copies of $\rho$ to predict $\bra{\phi} Z_1 \ket{\phi}$ up to $0.25$ error with probability at least $0.8$.
    \item \emph{Lower bound:} Any algorithm in the \textbf{conventional} scenario needs at least $\Omega(2^{n/2})$ copies of $\rho$ to predict $\bra{\phi} Z_1 \ket{\phi}$ up to $0.25$ error with probability at least $0.8$.
\end{itemize}
\end{theorem}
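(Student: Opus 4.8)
The plan is to dispatch the upper bound by directly invoking the quantum PCA primitive described above. Given a constant spectral gap, density-matrix exponentiation followed by phase estimation~\cite{lloyd2014quantum} prepares the top eigenvector $\ket\phi$ with constant fidelity from $\mathcal{O}(1)$ copies of $\rho$; measuring $Z_1$ in this state and repeating a constant number of times (averaging the outcomes) estimates $\bra\phi Z_1\ket\phi$ to error $0.25$ with probability $\geq 0.8$. All the real work is in the lower bound, which I would route through the many-versus-many framework of Appendix~\ref{sec:many-vs-many}.

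For the lower bound I would first reduce prediction to a distinguishing task by building two ensembles of $n$-qubit states whose top eigenvectors carry opposite values of $\bra\phi Z_1\ket\phi$, so that any predictor accurate to $0.25$ distinguishes the ensembles. Concretely, let $\ket\chi$ be a Haar-random state on the last $n-1$ qubits and set
\begin{align}
\rho_A(\chi) &= \tfrac12\,\ketbra{0}{0}\otimes\ketbra{\chi}{\chi} + \tfrac12\,\ketbra{1}{1}\otimes\tfrac{I}{2^{n-1}},\\
\rho_B(\chi) &= \tfrac12\,\ketbra{1}{1}\otimes\ketbra{\chi}{\chi} + \tfrac12\,\ketbra{0}{0}\otimes\tfrac{I}{2^{n-1}}.
\end{align}
Each has a unique top eigenvalue $1/2$, separated from every other eigenvalue (which are $\leq 2^{-n}$) by a constant, with top eigenvector $\ket{0}\ket\chi$ (resp.\ $\ket{1}\ket\chi$), so $\bra\phi Z_1\ket\phi = +1$ under hypothesis $\cA$ and $-1$ under $\cB$. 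Crucially, since $\E_\chi\ketbra{\chi}{\chi} = I/2^{n-1}$, both ensembles share the same first moment $\E_\cA\rho = \E_\cB\rho = I/2^n$; in particular $\Tr(Z_1\rho)=0$ identically, so no single-copy statistic distinguishes the hypotheses on average, and the entire difficulty is forced into correlations across the reused copies.

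With this reduction, I would bound the distinguishing advantage using the leaf-distribution machinery exactly as in Appendix~\ref{sec:lowbd-obs}. Writing $p_\rho(\ell) = \prod_{t} w^{u_{t-1}}_{s_t}\bra{\phi^{u_{t-1}}_{s_t}}\rho\ket{\phi^{u_{t-1}}_{s_t}}$, I would route the target through the common first moment,
\begin{equation}
\mathrm{TV}\!\left(\E_\cA p_\rho,\,\E_\cB p_\rho\right) \leq \mathrm{TV}\!\left(\E_\cA p_\rho,\,p_{I/2^n}\right) + \mathrm{TV}\!\left(p_{I/2^n},\,\E_\cB p_\rho\right),
\end{equation}
and control each term by a second-moment (chi-squared) estimate of the likelihood ratio $\E_\chi\prod_t 2^n\bra{\phi_t}\rho_A(\chi)\ket{\phi_t}$. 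Expanding $\ket{\phi_t}=\ket0\ket{a_t}+\ket1\ket{b_t}$ turns each factor into $2^{n-1}|\langle a_t|\chi\rangle|^2 + \|b_t\|^2$, and the Haar (2-design) moments $\E_\chi|\langle a_t|\chi\rangle|^2 = \|a_t\|^2/2^{n-1}$ together with the pairwise overlaps $|\langle a_t|a_{t'}\rangle|^2$ make the cross terms accumulate like $T^2/2^{n-1}$. This yields $\mathrm{TV}(\E_\cA p_\rho,\E_\cB p_\rho) = \mathcal{O}(T/2^{n/2})$, so an advantage $2p-1$ bounded below by a constant forces $T = \Omega(2^{n/2})$, the claimed bound. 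Appendix~\ref{sec:pseudo} would then give an alternative route replacing the Haar-random $\ket\chi$ by pseudorandom states so that the hard instances become efficiently preparable.

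The main obstacle is the construction rather than the estimate. The prediction target $\bra\phi Z_1\ket\phi$ is tied to $\rho$ through $\Tr(Z_1\rho)=\sum_i\lambda_i\bra{\phi_i}Z_1\ket{\phi_i}$, so any naive attempt to plant a constant-magnitude signal in the \emph{top} eigenvector while keeping a constant spectral gap leaks that signal into the easily-measured expectation $\Tr(Z_1\rho)$, trivializing the task. The construction above resolves this tension by using an exponentially large, maximally-mixed ``bulk'' in the complementary first-qubit sector to cancel $\Tr(Z_1\rho)$ exactly while preserving the gap; checking that this cancellation extends to \emph{all} single-copy accessible statistics (i.e.\ that hardness genuinely reduces to resolving an unknown Haar-random top eigenvector, which is the $\Theta(\sqrt{d})$ pure-versus-mixed testing problem) is the delicate step, and is precisely what the second-moment computation certifies.
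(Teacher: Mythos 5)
Your upper bound and your hard instances coincide exactly with the paper's: the same Lloyd-et-al.\ quantum PCA argument for the quantum-enhanced side, the same two ensembles $\rho_A(\ket{\psi})$, $\rho_B(\ket{\psi})$ built from a Haar-random $(n-1)$-qubit state, the same reduction from $0.25$-accurate prediction of $\bra{\phi}Z_1\ket{\phi}$ to the many-versus-many distinguishing task, and the same triangle inequality routing both hypotheses through the maximally-mixed leaf distribution $p_{I/2^n}$. The divergence --- and the gap --- is in how you bound $\mathrm{TV}\bigl(\E_{\ket{\psi}}p_{\rho(\ket{\psi})},\, p_{I/2^n}\bigr)$.

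The paper does not use a second-moment (chi-squared) estimate. It proves a \emph{pathwise lower bound} on the mixture likelihood ratio: expanding each factor $2^n\bra{\phi_t}\rho_A(\ket{\psi})\ket{\phi_t} = 2^{n-1}|\alpha_t|^2|\braket{\psi|\phi_t^0}|^2 + |\beta_t|^2$ over subsets and applying a high-moment Haar lemma, $\E_{\ket{\psi}}\prod_{k=1}^K|\braket{\psi|\phi_k}|^2 \geq 1/\bigl[(2^m+K-1)\cdots(2^m+1)(2^m)\bigr]$, it shows $\E_{\ket{\psi}}\prod_t 2^n\bra{\phi_t}\rho_A(\ket{\psi})\ket{\phi_t} \geq \bigl(1+\tfrac{T-1}{2^{n-1}}\bigr)^{-(T-1)}$ on \emph{every} root-to-leaf path; since this holds uniformly over paths it is automatically valid for adaptive strategies, and the TV bound follows in one line. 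Your chi-squared sketch has two concrete problems. First, the second moment of the likelihood ratio expands into Haar moments of \emph{all} orders up to $T$, not just pairwise overlaps: e.g.\ if the algorithm repeatedly measures a fixed direction, $\E_{\ket{\psi}}\prod_{t\in S}2^{n-1}|\braket{\psi|\phi_t^0}|^2 \approx |S|!$ (up to an $e^{-|S|^2/2^n}$ correction), so showing that the full expansion resums to $1+\mathcal{O}(T^2/2^{n-1})$ is the actual work, not a corollary of $2$-design moments. Second, and more fundamentally, the claim that "cross terms accumulate like $T^2/2^{n-1}$" implicitly assumes the sum over leaves factorizes into per-step averages, which is only true for non-adaptive strategies; in an adaptive learning tree the POVM at step $t$ depends on the transcript, the second moment becomes a sum over adaptively chosen paths of per-node quantities that are not individually close to $1$, and naive chi-squared can be dominated by rare outcome sequences with enormous likelihood ratio. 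This is exactly the known difficulty in adaptive lower bounds for mixedness-type testing, and it is precisely what the paper's per-leaf bound is engineered to bypass. To make your route rigorous you would need a conditional/martingale version of the second-moment argument over the tree; as written, the step from the $2$-design computation to $\mathrm{TV} = \mathcal{O}(T/2^{n/2})$ does not go through.
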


Instead of estimating $\bra{\phi} Z_1 \ket{\phi}$, we can also consider the near-term proposals \cite{cotler2019quantum, huggins2020virtual} to obtain some information about the principal component $\ket{\psi}$ of an unknown state $\rho$ as follows.
The proposals consider $\rho^M / \Tr(\rho^M)$, which approaches $\ketbra{\psi}{\psi}$ when $M$ is large.
In particular, \cite{huggins2020virtual} shows that one can efficiently estimate $\Tr(Z_1 \rho^2) / \Tr(\rho^2)$ by performing entangling Bell measurements over at most two copies of $\rho$ at a time.
By the analysis in \cite{huggins2020virtual}, if the eigenvalue associated to the principal component of $\rho$ is a constant, then $\Tr(Z_i \rho^2) / \Tr(\rho^2)$ can be estimated to any constant error by performing quantum-enhanced experiments over a constant number of copies of $\rho$. In contrast, we show that if one can only measure a single copy of $\rho$ at a time, exponentially many copies are necessary to estimate $\Tr(Z_i \rho^2)/\Tr(\rho^2)$.

\begin{theorem}[Exponential advantage for near-term quantum principal component analysis] \label{thm:qPCA-near}
Suppose we are given an observable $Z_1$ which is equal to the Pauli-$Z$ operator on the first qubit, as well as an $n$-qubit state $\rho$ (for $n > 1$) where an eigenvector $\ket{\phi}$ of $\rho$ has an eigenvalue that is larger than all the other eigenvalues by a constant.
We consider algorithms which learn about $\rho$ through conventional or quantum-enhanced experiments.  Then we have the following bounds:
\begin{itemize}
    \item \emph{Upper bound:} There is an algorithm in the \textbf{quantum-enhanced} scenario using only $\mathcal{O}(1)$ copies of $\rho$ to predict $\Tr(Z_1 \rho^2) / \Tr(\rho^2)$ up to $0.25$ error with probability at least $0.8$.
    \item \emph{Lower bound:} Any algorithm in the \textbf{conventional} scenario needs at least $\Omega(2^{n/2})$ copies of $\rho$ to predict $\Tr(Z_1 \rho^2) / \Tr(\rho^2)$ up to $0.25$ error with probability at least $0.8$.
\end{itemize}
\end{theorem}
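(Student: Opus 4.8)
The plan is to establish both bounds with the machinery already developed for the exact quantum PCA result, Theorem~\ref{thm:qPCA}, exploiting the fact that the near-term functional $\Tr(Z_1\rho^2)/\Tr(\rho^2)$ is controlled by the principal component of $\rho$. For the \emph{upper bound} I would use the two-copy Bell-measurement estimator of Appendix~\ref{sec:uppbd-obs} together with the virtual-distillation analysis of~\cite{huggins2020virtual}. Using the SWAP identities $\Tr(\rho^2)=\Tr(\mathrm{SWAP}\,(\rho\otimes\rho))$ and $\Tr(Z_1\rho^2)=\Tr((Z_1\otimes I)\,\mathrm{SWAP}\,(\rho\otimes\rho))$, both quantities are expectation values of bounded observables on $\rho\otimes\rho$ that can be read off from qubit-wise Bell measurements on a pair of copies. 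I would estimate the numerator and denominator separately by empirical averages, control each to additive error $\epsilon$ from $\mathcal{O}(\epsilon^{-2}\log(1/\delta))$ pairs via Hoeffding's inequality, and then propagate. Because the top eigenvalue is a constant, $\Tr(\rho^2)\ge\lambda_1^2=\Omega(1)$, so forming the ratio of two $\Theta(1)$ quantities known to small additive error yields $\Tr(Z_1\rho^2)/\Tr(\rho^2)$ to error $0.25$ with probability $0.8$ from only $\mathcal{O}(1)$ copies.

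For the \emph{lower bound} I would reuse, without change, the hard ensemble of states and the learning-tree analysis constructed for Theorem~\ref{thm:qPCA} in Appendix~\ref{sec:conv-exp-LB-pca}. The conventional hardness established there is a statement purely about distinguishing two sub-ensembles of states by means of the leaf distribution of a classical-memory learning tree: a total-variation bound gives that $\Omega(2^{n/2})$ single-copy experiments are required to tell the two hypotheses apart. This distinguishing lower bound is independent of which functional we later use to certify the difference between the hypotheses. Hence it suffices to show that the near-term functional itself separates the two hypotheses by more than twice the target accuracy, i.e.\ by more than $0.5$.

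The new ingredient is therefore the following computation. Writing $\rho=\sum_i\lambda_i\ketbra{\phi_i}{\phi_i}$ with a constant spectral gap $\lambda_1\ge\lambda_2+\gamma$, $\gamma=\Omega(1)$, and $\lambda_1=\Omega(1)$, we have $\Tr(\rho^2)=\sum_i\lambda_i^2\ge\lambda_1^2=\Omega(1)$ and $\Tr(Z_1\rho^2)=\sum_i\lambda_i^2\bra{\phi_i}Z_1\ket{\phi_i}$, so squaring amplifies the principal component and
\[
\left|\frac{\Tr(Z_1\rho^2)}{\Tr(\rho^2)}-\bra{\phi_1}Z_1\ket{\phi_1}\right|\le\frac{2\sum_{i\ge2}\lambda_i^2}{\lambda_1^2}.
\]
I would bound the right-hand numerator using $\lambda_i\le\lambda_2\le\lambda_1-\gamma$ and $\sum_i\lambda_i=1$, and then check on the explicit hard instances that this deviation is small enough that the near-term ratio inherits the separation of $\bra{\phi_1}Z_1\ket{\phi_1}$ across the two hypotheses, remaining above $0.5$. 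With that in hand, any conventional algorithm estimating $\Tr(Z_1\rho^2)/\Tr(\rho^2)$ to error $0.25$ would determine the hypothesis and hence distinguish the two sub-ensembles, which is impossible with $o(2^{n/2})$ copies, so the $\Omega(2^{n/2})$ bound transfers. The main obstacle is exactly this contamination estimate: a constant gap alone only controls the deviation by a constant, so I must use the specific structure of the hard ensemble---in particular, that its non-principal spectrum is arranged so that its $Z_1$-weighted second moment does not cancel the principal signal---to push the separation past the required threshold.
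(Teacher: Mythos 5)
Your proposal is correct and follows essentially the same route as the paper: the upper bound via two-copy SWAP/Bell-measurement estimation of $\Tr(Z_1\rho^2)$ and $\Tr(\rho^2)$ combined with $\Tr(\rho^2)\geq\lambda_1^2=\Omega(1)$, and the lower bound by reusing the hypothesis-A-versus-B Haar-random ensemble and learning-tree total-variation bound from Theorem~\ref{thm:qPCA}, then checking that the functional separates the two hypotheses. Your ``contamination'' worry resolves exactly as you anticipate: on the hard instances the non-principal spectrum is $\tfrac{1}{2}\cdot\tfrac{I}{2^{n-1}}$, so $\sum_{i\geq 2}\lambda_i^2=2^{-n-1}$ is exponentially small and the paper simply computes $\Tr(Z_1\rho^2)/\Tr(\rho^2)=\pm(2^{n-1}-1)/(2^{n-1}+1)$ directly, giving a separation far exceeding $0.5$.
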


\subsection{An exponential lower bound for conventional experiments}
\label{sec:conv-exp-LB-pca}

The lower bound proofs for both Theorem~\ref{thm:qPCA} and Theorem~\ref{thm:qPCA-near} are essentially the same.
We first reduce quantum PCA (or near-term analogs thereof) to a many-versus-many distinguishing task. Then we bound the total variation distance to arrive at the exponential lower bound.

\subsubsection{Reduction to many-versus-many distinguishing task}

We begin by considering a many-versus-many distinguishing task, as discussed in Appendix~\ref{sec:many-vs-many}.
The two hypotheses are given below.

\begin{itemize}
    \item Hypothesis A: The unknown $n$-qubit state $\rho$ is given by
    \begin{equation}
        \rho_A(\ket{\psi}) = \frac{1}{2} \ketbra{0}{0} \otimes \ketbra{\psi}{\psi} + \frac{1}{2} \ketbra{1}{1} \otimes \frac{I}{2^{n-1}},
    \end{equation}
    where $\ket{\psi}$ is an fixed $(n-1)$-qubit pure state, sampled at the outset from the Haar measure.
    \item Hypothesis B: The unknown $n$-qubit state $\rho$ is given by
    \begin{equation}
        \rho_B(\ket{\psi}) = \frac{1}{2} \ketbra{1}{1} \otimes \ketbra{\psi}{\psi} + \frac{1}{2} \ketbra{0}{0} \otimes \frac{I}{2^{n-1}},
    \end{equation}
    where $\ket{\psi}$ is again a fixed $(n-1)$-qubit pure state, sampled at the outset from the Haar measure.
\end{itemize}

It is not hard to see that in hypothesis A, the principal component (largest eigenvector) is $\ket{\phi} = \ket{0} \otimes \ket{\psi}$.
On the other hand, in hypothesis B, the principal component (largest eigenvector) is $\ket{\phi} = \ket{1} \otimes \ket{\psi}$.
Hence, $\bra{\phi} Z_1 \ket{\phi} = 1$ in hypothesis A, but $\bra{\phi} Z_1 \ket{\phi} = -1$ in hypothesis B.
If an algorithm in the conventional scenario can predict $\bra{\psi} Z_1 \ket{\psi}$ up to $0.25$ error with probability at least $0.8$, then we can use the output from the algorithm to distinguish between hypotheses A and B with a success probability of at least $0.8$.

Similarly, we have $\Tr(Z_1 \rho^2) / \Tr(\rho^2) = (2^{n-1} - 1) / (2^{n-1} + 1)$ in hypothesis A and $\Tr(Z_1 \rho^2) / \Tr(\rho^2) = -(2^{n-1} - 1) / (2^{n-1} + 1)$ in hypothesis B.
If an algorithm in the conventional scenario can predict $\Tr(Z_1 \rho^2) / \Tr(\rho^2)$ up to $0.25$ error with probability at least $0.8$, then we can use the output from the algorithm to distinguish between hypothesis A and B with a success probability of at least $0.8$.

Together, a lower bound for distinguishing hypotheses A and B using conventional experiments immediately implies a lower bound for both Theorem~\ref{thm:qPCA} and Theorem~\ref{thm:qPCA-near}.

\subsubsection{Total variation distance}

As in previous sections, let $p_{\rho}(\ell)$ denote the probability to arrive at the leaf node $\ell$ using the learning algorithm in the conventional setting when the unknown state is $\rho$.
If the algorithm can distinguish between hypotheses A and B with success probability $0.8$, then using Eq.~\eqref{eq:TVbound-many-vs-many} we have
\begin{equation}
    \mathrm{TV}\left(\E_{\ket{\psi}} p_{\rho_A(\ket{\psi})}, \E_{\ket{\psi}} p_{\rho_B(\ket{\psi})} \right) \geq 0.6.
\end{equation}
From the triangle inequality, we have
\begin{equation} \label{eq:TVTV0.6}
    \mathrm{TV}\left(\E_{\ket{\psi}} p_{\rho_A(\ket{\psi})}, p_{I / 2^n} \right) + \mathrm{TV}\left(\E_{\ket{\psi}} p_{\rho_B(\ket{\psi})}, p_{I / 2^n} \right) \geq 0.6.
\end{equation}
For each leaf node $\ell$, we consider the path from the root to $\ell$,
\begin{equation}
    u_0 = r \xrightarrow{s_1} u_1 \xrightarrow{s_2} u_2 \xrightarrow{s_3} \ldots \xrightarrow{s_{T-1}} u_{T-1} \xrightarrow{s_T} u_T = \ell.
\end{equation}
At each node $u$, we perform a POVM measurement $\{w^u_s \ketbra{\phi^u_s}{\phi^u_s }\}_s$ on $\rho$ to obtain an outcome $s$ with probability
\begin{equation}
     w^u_s \bra{\phi^u_s} \rho \ket{\phi^u_s}.
\end{equation}
Hence, we can write down the probability to arrive at the leaf $\ell$ as
\begin{equation}
    p_{\rho}(\ell) = \prod_{t=1}^{T} w^{u_{t-1}}_{s_t} \bra{\phi^{u_{t-1}}_{s_t}} \rho \ket{\phi^{u_{t-1}}_{s_t}}.
\end{equation}
We will use $\rho(\ket{\psi})$ to denote either $\rho_A(\ket{\psi})$ or $\rho_B(\ket{\psi})$. Then recalling \eqref{eq:tv_fdiv}, we have
\begin{equation}
    \mathrm{TV}\left(\E_{\ket{\psi}} p_{\rho(\ket{\psi})}, p_{I / 2^n} \right) = \sum_{\ell} p_{I / 2^n}(\ell) \max\left(0, \,\, 1 - \E_{\ket{\psi}} \prod_{t=1}^{T} 2^n \bra{\phi^{u_{t-1}}_{s_t}} \rho(\ket{\psi}) \ket{\phi^{u_{t-1}}_{s_t}} \right). \label{eq:TV-PCA}
\end{equation}

\subsubsection{Upper bound for total variation distance}

The central quantity to control in our bound on the total variation distance is
\begin{equation}
    \E_{\ket{\psi}} \prod_{t=1}^{T} 2^n \bra{\phi^{u_{t-1}}_{s_t}} \rho(\ket{\psi}) \ket{\phi^{u_{t-1}}_{s_t}}.
\end{equation}
Without loss of generality, let us consider $\rho(\ket{\psi}) = \rho_A(\ket{\psi}) = \frac{1}{2} \ketbra{0}{0} \otimes \ketbra{\psi}{\psi} + \frac{1}{2} \ketbra{1}{1} \otimes \frac{I}{2^{n-1}}$.
Suppose each $\ket{\phi^{u_{t-1}}_{s_t}}$ takes the form
\begin{equation}
    \ket{\phi^{u_{t-1}}_{s_t}} = \alpha^{u_{t-1}}_{s_t} \ket{0} \otimes \ket{\phi^{u_{t-1}, 0}_{s_t}} + \beta^{u_{t-1}}_{s_t} \ket{1} \otimes \ket{\phi^{u_{t-1}, 1}_{s_t}},
\end{equation}
where $\alpha^{u_{t-1}}_{s_t}, \beta^{u_{t-1}}_{s_t} \in \mathbb{C}$ and $\left|\alpha^{u_{t-1}}_{s_t}\right|^2 + \left|\beta^{u_{t-1}}_{s_t}\right|^2 = 1$.
Then we have
\begin{align}
    \E_{\ket{\psi}} \prod_{t=1}^{T} 2^n \bra{\phi^{u_{t-1}}_{s_t}} \rho(\ket{\psi}) \ket{\phi^{u_{t-1}}_{s_t}} &= \E_{\ket{\psi}} \prod_{t=1}^{T} \left( 2^{n-1} \left|\alpha^{u_{t-1}}_{s_t}\right|^2 \left| \braket{\psi| \phi^{u_{t-1}, 0}_{s_t}} \right|^2  + \left|\beta^{u_{t-1}}_{s_t}\right|^2 \right)\\
    &= \sum_{S \subseteq \{1, \ldots, T\}} \prod_{t \not\in S} \left|\beta^{u_{t-1}}_{s_t}\right|^2  \left[\E_{\ket{\psi}} \prod_{t \in S} 2^{n-1} \left|\alpha^{u_{t-1}}_{s_t}\right|^2 \left| \braket{\psi| \phi^{u_{t-1}, 0}_{s_t}} \right|^2 \right]. \label{eq:purity-testing}
\end{align}
We need to lower bound the above quantity in order to upper bound the total variation distance.
In order to do so, we utilize the following lemma. The proof of the lemma is based on Haar integration.
For readers unfamiliar with Haar integration, we would suggest skipping the proof of this lemma.

\begin{lemma}[High moment bound for Haar-random state]
Consider any $m$-qubit pure states $\ket{\phi_1}, \ldots, \ket{\phi_K}$ and an $m$-qubit pure state $\ket{\psi}$ sampled from the Haar measure, we have
\begin{equation}
    \E_{\ket{\psi}} \prod_{k=1}^K \left| \braket{\psi| \phi_k} \right|^2 \geq \frac{1}{(2^m + K - 1) \ldots (2^m+1) (2^m)}.
\end{equation}
\end{lemma}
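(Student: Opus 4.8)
The plan is to evaluate the Haar average exactly and then reduce the claimed inequality to a statement about permanents of Gram matrices. First I would rewrite each overlap as $\left|\braket{\psi|\phi_k}\right|^2 = \bra{\phi_k}\ketbra{\psi}{\psi}\ket{\phi_k}$, so that setting $\ket{\Phi} := \bigotimes_{k=1}^K \ket{\phi_k}$ the integrand factors as $\prod_{k=1}^K \left|\braket{\psi|\phi_k}\right|^2 = \bra{\Phi}\left(\ketbra{\psi}{\psi}\right)^{\otimes K}\ket{\Phi}$. Pulling the expectation inside the fixed bra and ket, the only nontrivial object is the $K$-th Haar moment $\E_{\ket{\psi}}\left(\ketbra{\psi}{\psi}\right)^{\otimes K}$, which by the standard symmetric-subspace identity equals $\binom{2^m+K-1}{K}^{-1}\Pi_{\mathrm{sym}}$, where $\Pi_{\mathrm{sym}}$ is the orthogonal projector onto the symmetric subspace of $(\mathbb{C}^{2^m})^{\otimes K}$ and $\binom{2^m+K-1}{K}$ is its dimension. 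This yields the exact formula
\begin{equation*}
\E_{\ket{\psi}}\prod_{k=1}^K\left|\braket{\psi|\phi_k}\right|^2 = \binom{2^m+K-1}{K}^{-1}\bra{\Phi}\Pi_{\mathrm{sym}}\ket{\Phi}.
\end{equation*}

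Next I would evaluate $\bra{\Phi}\Pi_{\mathrm{sym}}\ket{\Phi}$ using $\Pi_{\mathrm{sym}} = \tfrac{1}{K!}\sum_{\pi\in S_K} P_\pi$, where $P_\pi$ permutes the $K$ tensor factors. On a product state one has $\bra{\Phi}P_\pi\ket{\Phi} = \prod_{k=1}^K \braket{\phi_k|\phi_{\pi(k)}}$, so summing over $\pi$ produces exactly the permanent of the Gram matrix $G$ with entries $G_{ij} = \braket{\phi_i|\phi_j}$:
\begin{equation*}
\bra{\Phi}\Pi_{\mathrm{sym}}\ket{\Phi} = \frac{1}{K!}\,\mathrm{per}(G).
\end{equation*}
Since $\binom{2^m+K-1}{K}^{-1}\big/K! = (2^m-1)!/(2^m+K-1)! = 1/\big((2^m+K-1)\cdots(2^m+1)(2^m)\big)$, the statement of the lemma is exactly equivalent to the inequality $\mathrm{per}(G)\ge 1$, with equality whenever the $\ket{\phi_k}$ are orthonormal.

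Finally I would establish $\mathrm{per}(G)\ge 1$. The matrix $G$ is a Gram matrix of unit vectors, hence Hermitian positive semidefinite with diagonal entries $G_{ii} = \braket{\phi_i|\phi_i} = 1$. By the permanent analogue of the Hadamard determinant theorem (Marcus's inequality), every positive semidefinite Hermitian matrix satisfies $\mathrm{per}(G)\ge\prod_{i=1}^K G_{ii}$, which here gives $\mathrm{per}(G)\ge 1$ and closes the argument.

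I expect this last step to be the main obstacle. The reduction to the permanent is mechanical, but the bound $\mathrm{per}(G)\ge\prod_i G_{ii}$ is not valid term by term: the off-diagonal contributions $\bra{\Phi}P_\pi\ket{\Phi}$ for $\pi\neq e$ come from products of inner products around the cycles of $\pi$ and are generally complex, so positivity of the whole sum genuinely requires using the positive-semidefiniteness of $G$ rather than naive estimates on individual terms. If one prefers a self-contained derivation over citing Marcus, I would instead prove $\mathrm{per}(G)\ge 1$ directly, e.g.\ by induction on $K$ using that every principal submatrix of $G$ is again positive semidefinite with unit diagonal; either route supplies the crucial input $\mathrm{per}(G)\ge 1$ needed to match the stated bound.
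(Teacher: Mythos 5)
Your proposal is correct and follows essentially the same route as the paper: both evaluate the Haar moment via the permutation-operator (symmetric-subspace) formula and reduce the claim to showing that the permutation sum evaluated on the product state --- equivalently, the permanent of the Gram matrix $G_{ij}=\braket{\phi_i|\phi_j}$ --- is at least $1$. The only difference is how that key inequality is sourced: the paper imports it as Lemma~5.12 of~\cite{chen2021exponential}, while you invoke Marcus's permanent analogue of the Hadamard determinant theorem for positive semidefinite matrices, which is a valid classical alternative justification of the same fact.
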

\begin{proof}
The Haar integration over states shows that
\begin{equation}
    \E_{\ket{\psi}} \ketbra{\psi}{\psi}^{\otimes K} = \frac{1}{(2^m + K - 1) \ldots (2^m+1) (2^m)} \sum_{\pi \in \mathcal{S}_K} \pi,
\end{equation}
where $\mathcal{S}_K$ is the permutation group of $K$ items, and $\pi$ is the permutation operator over the $K$ tensor-product space.
From Lemma~5.12 in \cite{chen2021exponential}, we have
\begin{equation}
    \sum_{\pi \in \mathcal{S}_K} \Tr\left( \pi \bigotimes_{k=1}^K \ketbra{\phi_k}{\phi_k} \right) \geq 1.
\end{equation}
Therefore, we find
\begin{equation}
    \E_{\ket{\psi}} \prod_{k=1}^K \left| \braket{\psi| \phi_k} \right|^2 \geq  \frac{1}{(2^m + K - 1) \ldots (2^m+1) (2^m)}.
\end{equation}
This concludes the proof.
\end{proof}

We apply this lemma with
\begin{equation}
    m \equiv n-1, \quad K \equiv |S|, \quad \ket{\phi_k} \equiv \ket{\phi^{u_{t-1}, 0}_{s_t}}
\end{equation}
to obtain the following lower bound,
\begin{align}
    \E_{\ket{\psi}} \prod_{t \in S} 2^{n-1} \left|\alpha^{u_{t-1}}_{s_t}\right|^2 \left| \braket{\psi| \phi^{u_{t-1}, 0}_{s_t}} \right|^2 &\geq \prod_{t \in S} \frac{\left|\alpha^{u_{t-1}}_{s_t}\right|^2}{(1 + \frac{|S|-1}{2^{n-1}}) \ldots (1 + \frac{1}{2^{n-1}})(1)}\\
    &\geq \prod_{t \in S} \frac{\left|\alpha^{u_{t-1}}_{s_t}\right|^2}{\left(1 + \frac{|S|-1}{2^{n-1}}\right)^{|S|-1}}\\
    &\geq \left(1 + \frac{T-1}{2^{n-1}}\right)^{-(T-1)} \prod_{t \in S} \left|\alpha^{u_{t-1}}_{s_t}\right|^2.
\end{align}
Combining with Eq.~\eqref{eq:purity-testing}, we have
\begin{align}
     \E_{\ket{\psi}} \prod_{t=1}^{T} 2^n \bra{\phi^{u_{t-1}}_{s_t}} \rho(\ket{\psi}) \ket{\phi^{u_{t-1}}_{s_t}}  &\geq \left(1 + \frac{T-1}{2^{n-1}}\right)^{-(T-1)} \sum_{S \subseteq \{1, \ldots, T\}} \prod_{t \not\in S} \left|\beta^{u_{t-1}}_{s_t}\right|^2 \prod_{t \in S} \left|\alpha^{u_{t-1}}_{s_t}\right|^2 \\
     &= \left(1 + \frac{T-1}{2^{n-1}}\right)^{-(T-1)} \prod_{t = 1}^T\left( \left|\beta^{u_{t-1}}_{s_t}\right|^2 + \left|\alpha^{u_{t-1}}_{s_t}\right|^2\right) \\
     &= \left(1 + \frac{T-1}{2^{n-1}}\right)^{-(T-1)}.
\end{align}
Next we leverage Eq.~\eqref{eq:TV-PCA} to obtain
\begin{align}
\mathrm{TV}\left(\E_{\ket{\psi}} p_{\rho(\ket{\psi})}, p_{I / 2^n} \right) &\leq \sum_{\ell} p_{I / 2^n}(\ell) \max\left(0, \,\, 1 - \left(1 + \frac{T-1}{2^{n-1}}\right)^{-(T-1)} \right)\\
&= 1 - \left(1 + \frac{T-1}{2^{n-1}}\right)^{-(T-1)}.
\end{align}
The second line follows from $\sum_{\ell} p_{I / 2^n}(\ell) = 1$ because $p_{I / 2^n}(\ell)$ is a probability distribution.

\subsubsection{Lower bound for the number of measurements}

We can now utilize the lower bound on the total variation distance given in Eq.~\eqref{eq:TVTV0.6} and the upper bound obtained above to find
\begin{equation}
\label{E:usefulTVboundQPCA1}
    2\left(1 - \left(1 + \frac{T-1}{2^{n-1}}\right)^{-(T-1)}\right) \geq     \mathrm{TV}\left(\E_{\ket{\psi}} p_{\rho_A(\ket{\psi})}, p_{I / 2^n} \right) + \mathrm{TV}\left(\E_{\ket{\psi}} p_{\rho_B(\ket{\psi})}, p_{I / 2^n} \right) \geq 0.6.
\end{equation}
Hence, we have the inequality
\begin{align}
    &0.7 \geq \left(1 + \frac{T-1}{2^{n-1}}\right)^{-(T-1)},\\
    &\implies (T-1) \log\left(1 + \frac{T-1}{2^{n-1}}\right) \geq \log(10 / 7).
\end{align}
Because $\log(1 + x) \leq x$ for all  $x > -1$, we have
\begin{equation}
(T-1)^2 \geq 2^{n-1} \log(10 / 7) \implies T \geq 1 + \sqrt{\frac{\log\left(\frac{10}{7}\right)}{2}}\, 2^{n/2}.
\end{equation}
Finally, we have established the lower bound $T = \Omega(2^{n/2})$ stated in Theorem~\ref{thm:qPCA}.

\subsection{An exponential lower bound using pseudorandomness}
\label{sec:pseudo}

In our exponential lower bound in the previous subsection, we relied on a state that was in part constructed using a Haar-random $(n-1)$-qubit state $|\psi\rangle$.  However, preparing a Haar-random state from a simple initial state (say, a product state) requires  circuit depth exponential in $n$.  As such, we can not prepare Haar-random states in practice.  Accordingly we cannot prepare either $\rho_A(|\psi\rangle)$ or $\rho_B(|\psi\rangle)$ in realistic circumstances.

However, we could instead efficiently construct pseudorandom states $|\psi\rangle$ which are (very plausibly) indistinguishable from Haar-random states if we probe with any POVM instantiated by a $\text{poly}(n)$-time quantum algorithm~\cite{ji2018pseudorandom}.  We will elaborate on this shortly.  The parenthetical caveat `very plausibly' is due to the fact that the construction we use relies on cryptographic assumptions which are not proven, but are widely believed.  In particular, we need to suppose the existence of quantum-secure one-way functions~\cite{ji2018pseudorandom}; these are functions which are efficient to evaluate but are hard to invert even with a quantum computer.  Making such an assumption is standard practice in computational complexity theory, and so we proceed apace.

Let us recall the definition of a pseudorandom quantum state:
\begin{definition}[Pseudorandom quantum states; paraphrased from Definition 3 of~\cite{ji2018pseudorandom}]
Given a set $\mathcal{K}$, a family of pseudorandom quantum states on $n$ qubits is a family of states $\{|\phi_k\rangle\}_{k \in \mathcal{K}}$ and a probability distribution $\cD$ over $\mathcal{K}$ such that:
\begin{itemize}
    \item There is a $\text{\rm poly}(n)$-time quantum algorithm that samples a single element $k$ from $\mathcal{K}$ according to $\cD$ and generates the corresponding state $|\phi_k\rangle$;
    \item For any polynomial $t(n)$ and any $\text{\rm poly}(n)$-time quantum algorithm $\mathcal{A}$ with outputs in $\{0,1\}$, we have
    \begin{equation}
        \label{E:compindist1}
        \left|\text{\rm Pr}_{k \leftarrow \mathcal{K}}\big[\mathcal{A}(|\phi_k\rangle^{\otimes t(n)}) = 1\big] - \text{\rm Pr}_{|\psi\rangle \leftarrow \text{\rm Haar}}\big[\mathcal{A}(|\psi\rangle^{\otimes t(n)}) = 1\big] \right| \leq \text{\rm negl}(n).
    \end{equation}
    Here $\text{\rm negl}(n)$ is a function such that for all constants $c > 0$ we have $\text{\rm negl}(n) < n^{-c}$ for $n$ sufficiently large.
\end{itemize}
\label{def:pseudostates}
\end{definition}
\noindent We can interpret the above definition as follows.  Eq.~\eqref{E:compindist1} says that if we are given a polynomial $t(n)$ number of copies of either (i) a fixed pseudorandom state, or (ii) a fixed Haar random state, any polynomial time quantum algorithm with binary outputs cannot distinguish between the two cases.  Since an exponential depth quantum algorithm can distinguish between the two cases,~\eqref{E:compindist1} describes a notion of \textit{computational indistinguishability}, i.e.~we cannot distinguish with polynomial time computational resources. 

A key question is: do pseudorandom quantum states exist?  We recall the following, contingent result.
\begin{lemma}[Existence of pseudorandom quantum states~\cite{ji2018pseudorandom}] If there exist quantum-secure one-way functions, then they can be used to construct pseudorandom quantum states.
\end{lemma}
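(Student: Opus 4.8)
The plan is to follow the construction of Ji, Liu, and Song~\cite{ji2018pseudorandom}, which produces pseudorandom states from any quantum-secure one-way function in two stages: first bootstrapping the one-way function into a quantum-secure pseudorandom function (PRF), and then using the PRF to imprint pseudorandom phases onto a uniform superposition. Concretely, writing $N = 2^n$ and $\omega_N = \euler^{2\pi \rmi / N}$, for a function $f_k : \{0,1\}^n \to \mathbb{Z}_N$ drawn from a keyed family indexed by $k \in \mathcal{K}$, I would take the candidate state to be
\[
    \ket{\phi_k} = \frac{1}{\sqrt{N}} \sum_{x \in \{0,1\}^n} \omega_N^{f_k(x)} \ket{x}.
\]
Each copy can be prepared in $\text{poly}(n)$ time by building the uniform superposition and applying the phase oracle $\ket{x} \mapsto \omega_N^{f_k(x)} \ket{x}$ once, so the efficient-generation requirement of Definition~\ref{def:pseudostates} is immediate once the PRF is in hand.

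First I would secure the PRF. It is known that a quantum-secure one-way function yields a quantum-secure pseudorandom generator, and hence, via the Goldreich--Goldwasser--Micali tree construction (whose security against adversaries making \emph{superposition} queries was established by Zhandry), a quantum-secure PRF. The relevance of quantum-query security is that preparing $\ket{\phi_k}$ inherently queries $f_k$ in superposition, so only a PRF robust to such queries suffices for the reduction below.

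The core of the argument is the computational indistinguishability claim~\eqref{E:compindist1}, which I would prove by a hybrid argument with three distributions: hybrid $0$ is $t(n)$ copies of the pseudorandom state $\ket{\phi_k}$; hybrid $1$ replaces $f_k$ by a truly random function $f$, giving $t(n)$ copies of the random-phase state $\ket{\phi_f}$; and hybrid $2$ is $t(n)$ copies of a Haar-random state. The gap between hybrids $0$ and $1$ is negligible by PRF security: any efficient algorithm distinguishing them can be converted into an efficient oracle algorithm that, given access to either $f_k$ or a random $f$, prepares the $t(n)$ copies using $t(n)$ phase queries and runs the purported state-distinguisher, contradicting the quantum-query security of the PRF.

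The remaining --- and hardest --- step is bounding the gap between hybrids $1$ and $2$, i.e.\ showing
\[
    \norm{ \E_{f} \ketbra{\phi_f}{\phi_f}^{\otimes t} - \E_{\ket{\psi} \leftarrow \mathrm{Haar}} \ketbra{\psi}{\psi}^{\otimes t} }_1 = \text{negl}(n).
\]
The Haar average equals the normalized projector onto the symmetric subspace, while the random-phase average can be evaluated directly by carrying out the expectation over the independent uniform phases $\{f(x)\}$: the surviving terms are those in which the $t$ phase indices on each side match up under a permutation, which reproduces the symmetric-subspace structure up to the collision terms where two of the $t$ sampled indices coincide. Counting these collisions bounds the trace distance by $O(t^2 / N) = O(\text{poly}(n) / 2^n)$, which is negligible. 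This moment estimate is the main obstacle, since one must carefully separate the sum over all functions (which permits collisions) from the sum over injective assignments (which gives exactly the permutation sum appearing in the Haar integral). Combining the two hybrid bounds by the triangle inequality yields the negligible distinguishing advantage required by Definition~\ref{def:pseudostates}, completing the construction.
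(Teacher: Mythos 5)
The paper does not actually prove this lemma: it is stated as an imported result, with the text deferring entirely to the cited reference for the construction (``Explicit details of the construction can be found in~\cite{ji2018pseudorandom}''). Your proposal therefore cannot be compared against an in-paper argument, but it is a correct and faithful reconstruction of the proof in that reference. You have the two stages right: (i) bootstrapping the quantum-secure one-way function to a quantum-query-secure PRF (HILL-style OWF~$\to$~PRG, then GGM, with Zhandry's result supplying security against superposition queries --- and you correctly identify \emph{why} superposition-query security is indispensable, namely that preparing the phase state $\tfrac{1}{\sqrt{N}}\sum_x \omega_N^{f_k(x)}\ket{x}$ queries $f_k$ coherently); and (ii) the hybrid argument in which PRF security collapses hybrids $0$ and $1$, while the statistical step compares the random-phase ensemble to the Haar ensemble. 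Your treatment of the statistical step is also correct in its essentials: averaging over a truly random $f$ kills all terms except those where the ket and bra index tuples agree as multisets, the collision-free terms reassemble the permutation sum defining the symmetric-subspace projector, and the colliding terms contribute $O(t^2/N) = O(\mathrm{poly}(n)/2^n)$ in trace distance, which is negligible. The only caveat worth recording is that your argument is a sketch --- the multiset-versus-injective bookkeeping and the precise trace-norm bound are exactly where the cited paper expends its technical effort --- but the skeleton you give is the right one and each step you assert is true.
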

\noindent Explicit details of the construction can be found in~\cite{ji2018pseudorandom}; there have also been refinements and generalizations in follow-up work (see e.g.~\cite{brakerski2019pseudo, brakerski2020scalable}).

Before stating the main result of this section, we require the following definition:
\begin{definition}[Polynomial-time algorithms in the conventional scenario]
A polynomial-time algorithm $\mathcal{A}$ in the conventional scenario is constructed as follows.  We consider a learning tree $\mathcal{T}$ in the \textbf{conventional scenario} with leaves $\ell$, and require that the protocol described by the learning tree can be implemented by an at most $\text{\rm poly}(n)$-time quantum algorithm.  (As such, the depth of $\mathcal{T}$ is at most polynomial in $n$.) Then let $\mathcal{D}$ be a $\text{\rm poly}(n)$-time classical algorithm which, given the transcript of measurements encoded into the leaves $\ell$ of $\mathcal{T}$, provides a binary output $0$ or $1$.  We let $\mathcal{A}$ be a map from $n$-qubit density matrices $\rho$ to $\{0,1\}$, corresponding to instantiating the learning tree $\mathcal{T}$ on copies of $\rho$, followed by using $\mathcal{D}$ on the measurement transcript to determine a binary outcome.
\end{definition}
\noindent We can now leverage the putative pseudorandom states and the above definition to establish the following result:
\begin{theorem}[Lower bound many-versus-many distinguishing task using pseudorandom states]
Let $\{|\phi_k\rangle\}_{k \in \mathcal{K}}$ be a family of pseudorandom states on $n-1$ qubits.  Then any polynomial-time algorithm $\mathcal{A}$ in the \textbf{conventional} scenario with binary output cannot distinguish $\rho_A(|\phi_k\rangle)$ from $\rho_B(|\phi_k\rangle)$ for $k$ sampled from the probability distribution over $\mathcal{K}$.  That is:
    \begin{equation}
    \label{E:compindist2}
    \left|\text{\rm Pr}_{k \leftarrow \mathcal{K}}\big[\mathcal{A}(\rho_A(|\phi_k\rangle)) = 1\big] - \text{\rm Pr}_{k \leftarrow \mathcal{K}}\big[\mathcal{A}(\rho_B(|\phi_k\rangle)) = 1\big] \right| \leq \text{\rm negl}(n)\,.
    \end{equation}
\end{theorem}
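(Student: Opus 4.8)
The plan is to run a hybrid argument that interpolates between the pseudorandom instance we care about and the Haar-random instance already analyzed in Appendix~\ref{sec:conv-exp-LB-pca}. Abbreviate the four relevant acceptance probabilities by $P_A^{\mathrm{ps}} = \Pr_{k}[\mathcal{A}(\rho_A(\ket{\phi_k}))=1]$, $P_B^{\mathrm{ps}} = \Pr_{k}[\mathcal{A}(\rho_B(\ket{\phi_k}))=1]$, and their Haar analogues $P_A^{\mathrm{H}} = \Pr_{\ket{\psi}}[\mathcal{A}(\rho_A(\ket{\psi}))=1]$, $P_B^{\mathrm{H}} = \Pr_{\ket{\psi}}[\mathcal{A}(\rho_B(\ket{\psi}))=1]$. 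By the triangle inequality, $|P_A^{\mathrm{ps}} - P_B^{\mathrm{ps}}| \le |P_A^{\mathrm{ps}} - P_A^{\mathrm{H}}| + |P_A^{\mathrm{H}} - P_B^{\mathrm{H}}| + |P_B^{\mathrm{H}} - P_B^{\mathrm{ps}}|$, so it suffices to show each of the three terms on the right is negligible.

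The key structural observation I would establish first is that a single copy of $\rho_A(\ket{\chi})$ can be prepared from a single copy of $\ket{\chi}$ by a $\mathrm{poly}(n)$-time quantum algorithm, and likewise for $\rho_B$. Indeed, to produce $\rho_A(\ket{\chi}) = \tfrac12 \ketbra{0}{0}\otimes\ketbra{\chi}{\chi} + \tfrac12\ketbra{1}{1}\otimes \tfrac{I}{2^{n-1}}$ one flips a fair coin: on heads, output $\ket{0}\otimes\ket{\chi}$ (consuming the copy of $\ket{\chi}$); on tails, output $\ket{1}$ tensored with the maximally mixed state on $n-1$ qubits, which is trivially preparable (for instance by discarding half of a fresh maximally entangled pair). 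This lets me define, for each $\mathrm{X}\in\{A,B\}$, a $\mathrm{poly}(n)$-time distinguisher $\mathcal{A}_{\mathrm{X}}$ that takes $t(n)=T$ copies of an input state, turns them into $T$ independent copies of $\rho_{\mathrm{X}}(\cdot)$, and then runs $\mathcal{A}$; this is efficient because both the preparation step and $\mathcal{A}$ are efficient by hypothesis, and $T = \mathrm{poly}(n)$ since $\mathcal{A}$ is a polynomial-time conventional algorithm. Applying the pseudorandomness guarantee~\eqref{E:compindist1} to $\mathcal{A}_A$ and to $\mathcal{A}_B$ immediately bounds the first and third terms: $|P_A^{\mathrm{ps}} - P_A^{\mathrm{H}}| \le \mathrm{negl}(n)$ and $|P_B^{\mathrm{ps}} - P_B^{\mathrm{H}}| \le \mathrm{negl}(n)$.

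For the middle term I would invoke the information-theoretic bound already proved in Appendix~\ref{sec:conv-exp-LB-pca}. Since $\mathcal{A}$ bases its output only on the leaf of its learning tree, its distinguishing advantage between the two Haar ensembles is at most $\mathrm{TV}(\E_{\ket{\psi}} p_{\rho_A(\ket{\psi})}, \E_{\ket{\psi}} p_{\rho_B(\ket{\psi})})$, and that subsection shows this is bounded by $2\big(1 - (1 + (T-1)/2^{n-1})^{-(T-1)}\big)$. Using $(1+x)^{-m}\ge 1-mx$ for $x\ge 0$ gives the quantitative bound $2(T-1)^2/2^{n-1}$, which is negligible whenever $T=\mathrm{poly}(n)$; hence $|P_A^{\mathrm{H}} - P_B^{\mathrm{H}}| \le \mathrm{negl}(n)$. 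Summing the three negligible contributions yields~\eqref{E:compindist2}.

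I expect the main obstacle to be the efficient-preparability step rather than the hybrid bookkeeping: the whole argument hinges on $\rho_A$ and $\rho_B$ being exactly reconstructible from single copies of $\ket{\chi}$ by an efficient circuit, which is what lets one fold the learning protocol $\mathcal{A}$ into a legitimate polynomial-time distinguisher of pseudorandom versus Haar states. The favorable structure here --- a fifty-fifty classical mixture in which the only $\ket{\chi}$-dependent branch uses exactly one copy while the other branch is the freely preparable maximally mixed state --- is precisely what makes the reduction go through with only $T$ input copies and without amplifying $\mathcal{A}$'s error.
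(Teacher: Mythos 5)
Your proposal is correct and follows essentially the same route as the paper's proof: the same triangle-inequality hybrid through the Haar-random acceptance probabilities, the same reduction to the pseudorandomness definition by folding the efficient single-copy preparation of $\rho_A(\cdot)$ and $\rho_B(\cdot)$ into the learning protocol (the paper's $\widetilde{\mathcal{A}}\circ\mathcal{B}$ construction), and the same invocation of the total-variation bound from Appendix~\ref{sec:conv-exp-LB-pca} for the Haar-versus-Haar term. The only cosmetic difference is that you spell out the negligibility of the middle term via $(1+x)^{-m}\ge 1-mx$, a step the paper leaves implicit when it observes that $T(n)$ is polynomial.
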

\begin{proof}
Using the triangle inequality several times we have
\begin{align}
\label{E:pseudotriangle1}
 &\left|\text{\rm Pr}_{k \leftarrow \mathcal{K}}\big[\mathcal{A}(\rho_A(|\phi_k\rangle)) = 1\big] - \text{\rm Pr}_{k \leftarrow \mathcal{K}}\big[\mathcal{A}(\rho_B(|\phi_k\rangle)) = 1\big] \right|  \\
 & \qquad \qquad \qquad \qquad \leq \left|\text{\rm Pr}_{|\psi\rangle \leftarrow \text{Haar}}\big[\mathcal{A}(\rho_A(|\psi\rangle)) = 1\big] - \text{\rm Pr}_{|\psi\rangle \leftarrow \text{Haar}}\big[\mathcal{A}(\rho_B(|\psi\rangle)) = 1\big]\right| \nonumber \\
 & \qquad \qquad \qquad \qquad \qquad + \left|\text{\rm Pr}_{k \leftarrow \mathcal{K}}\big[\mathcal{A}(\rho_A(|\phi_k\rangle)) = 1\big] - \text{\rm Pr}_{|\psi\rangle \leftarrow \text{Haar}}\big[\mathcal{A}(\rho_A(|\psi\rangle)) = 1\big]\right| \nonumber \\
 & \qquad \qquad \qquad \qquad \qquad + \left|\text{\rm Pr}_{k \leftarrow \mathcal{K}}\big[\mathcal{A}(\rho_B(|\phi_k\rangle)) = 1\big] - \text{\rm Pr}_{|\psi\rangle \leftarrow \text{Haar}}\big[\mathcal{A}(\rho_B(|\psi\rangle)) = 1\big]\right|\,. \nonumber
\end{align}
Let $\mathcal{T}$ be the learning tree corresponding to $\mathcal{A}$, and let $\mathcal{D}$ be the binary decision function mapping leaves of $\mathcal{T}$ to $\{0,1\}$.  The depth $T$ of $\mathcal{T}$ is necessarily at most polynomial in $n$; let us denote the depth by $T(n)$.  Then the first term on the right-hand side of~\eqref{E:pseudotriangle1} is upper bounded by
\begin{align}
\label{E:firstpseudoterm}
&\left|\text{\rm Pr}_{|\psi\rangle \leftarrow \text{Haar}}\big[\mathcal{A}(\rho_A(|\psi\rangle)) = 1\big] - \text{\rm Pr}_{|\psi\rangle \leftarrow \text{Haar}}\big[\mathcal{A}(\rho_B(|\psi\rangle)) = 1\big]\right|\\
&= \left|\sum_{\ell \,\in\, \text{leaf}(\mathcal{T}) \, : \, \mathcal{D}(\ell) = 1}\left( \E_{\ket{\psi}} p_{\rho_A(|\psi\rangle)}(\ell) - \E_{\ket{\psi}} p_{\rho_B(|\psi\rangle)}(\ell)\right) \right| \nonumber \\
&\leq \sum_{\ell \,\in\,\text{leaf}(\mathcal{T})} \left|\E_{\ket{\psi}} p_{\rho_A(|\psi\rangle)}(\ell) - \E_{\ket{\psi}} p_{\rho_B(|\psi\rangle)}(\ell)\right| \nonumber \\
&\leq 2\,\mathrm{TV}\left(\E_{\ket{\psi}} p_{\rho_A(\ket{\psi})}, p_{I / 2^n} \right) + 2\,\mathrm{TV}\left(\E_{\ket{\psi}} p_{\rho_B(\ket{\psi})}, p_{I / 2^n} \right) \nonumber \\
&\leq 4\left(1 - \left(1 + \frac{T(n)-1}{2^{n-1}}\right)^{-(T(n)-1)}\right)
\end{align}
where the last inequality comes from~\eqref{E:usefulTVboundQPCA1}. 

Next we turn to bounding the second term on the right-hand side of the inequality in~\eqref{E:pseudotriangle1}, namely
\begin{equation*}
\left|\text{\rm Pr}_{k \leftarrow \mathcal{K}}\big[\mathcal{A}(\rho_A(|\phi_k\rangle)) = 1\big] - \text{\rm Pr}_{|\psi\rangle \leftarrow \text{Haar}}\big[\mathcal{A}(\rho_A(|\psi\rangle)) = 1\big]\right| \,.
\end{equation*}
First, we observe that every $\mathcal{A}$ takes in $T(n)$ copies of $\rho_A$.  Accordingly, there is a polynomial-time algorithm $\widetilde{\mathcal{A}}$ such that $\widetilde{\mathcal{A}}(\rho_A^{\otimes T(n)}) = \mathcal{A}(\rho_A)$ for all inputs $\rho_A$; this just amounts to a slightly different way of notating the domain of the algorithm $\mathcal{A}$.  Then we can rewrite our term of interest as
\begin{equation}
\label{E:pseudointermediate1}
\left|\text{\rm Pr}_{k \leftarrow \mathcal{K}}\big[\widetilde{\mathcal{A}}(\rho_A(|\phi_k\rangle)^{\otimes T(n)}) = 1\big] - \text{\rm Pr}_{|\psi\rangle \leftarrow \text{Haar}}\big[\widetilde{\mathcal{A}}(\rho_A(|\psi\rangle)^{\otimes T(n)}) = 1\big]\right| \,.
\end{equation}
But now observe that for any state $|\omega\rangle$, there is a polynomial-time quantum algorithm which takes $|\omega\rangle$ as input and produces $\rho_A(|\omega\rangle)$ as output.  Accordingly, repeating this algorithm on $T(n)$ copies of $|\omega\rangle$, we produce $T(n)$ copies of $\rho_A(|\omega\rangle)$; let us denote this $T(n)$-copy algorithm by $\mathcal{B}$.  We have $\mathcal{B}(|\omega\rangle^{\otimes T(n)}) = \rho_A(|\omega\rangle)^{\otimes T(n)}$, where $\mathcal{B}$ runs in polynomial time (recalling that $T(n)$ is polynomial in $n$).  Then we can write~\eqref{E:pseudointermediate1} as
\begin{equation}
\label{E:pseudointermediate2}
\left|\text{\rm Pr}_{k \leftarrow \mathcal{K}}\big[(\widetilde{\mathcal{A}} \circ \mathcal{B})(|\phi_k\rangle^{\otimes T(n)}) = 1\big] - \text{\rm Pr}_{|\psi\rangle \leftarrow \text{Haar}}\big[(\widetilde{\mathcal{A}}\circ \mathcal{B})(|\psi\rangle^{\otimes T(n)}) = 1\big]\right| \,.
\end{equation}
Since $\widetilde{\mathcal{A}} \circ \mathcal{B}$ is itself a polynomial-time quantum algorithm, Definition~\ref{def:pseudostates} tells us that~\eqref{E:pseudointermediate2} above is upper bounded by $\text{negl}(n)$.  So in summary, we find
\begin{equation}
\label{E:secondpseudoterm}
\left|\text{\rm Pr}_{k \leftarrow \mathcal{K}}\big[\mathcal{A}(\rho_A(|\phi_k\rangle)) = 1\big] - \text{\rm Pr}_{|\psi\rangle \leftarrow \text{Haar}}\big[\mathcal{A}(\rho_A(|\psi\rangle)) = 1\big]\right| \leq \text{negl}(n)\,.
\end{equation}
In an identical manner we can show that the third term on the right-hand side of~\eqref{E:pseudotriangle1} satisfies the bound
\begin{equation}
\label{E:thirdpseudoterm}
\left|\text{\rm Pr}_{k \leftarrow \mathcal{K}}\big[\mathcal{A}(\rho_B(|\phi_k\rangle)) = 1\big] - \text{\rm Pr}_{|\psi\rangle \leftarrow \text{Haar}}\big[\mathcal{A}(\rho_B(|\psi\rangle)) = 1\big]\right| \leq \text{negl}(n)\,.
\end{equation}

Putting together the inequalities in~\eqref{E:pseudotriangle1},~\eqref{E:firstpseudoterm},~\eqref{E:secondpseudoterm} and~\eqref{E:thirdpseudoterm}, as well as observing that $T(n)$ is at most polynomially large in $n$, we achieve the desired bound.
\end{proof}

The above Theorem immediately implies that the exponential advantage for quantum principal component analysis in Theorem~\ref{thm:qPCA} has a counterpart for pseudorandom states.  We note that in the pseudorandom context the advantage is not strictly exponential; rather, the advantage holds for arbitrary \textit{polynomial}-time quantum learning algorithms in the quantum-enhanced scenario versus in the conventional scenario.

\section{Learning a polynomial-time quantum process}
\label{sec:evolution}

Here we consider the problem of learning a polynomial-time quantum process, and provide a rigorous exponential separation between the conventional and quantum-enhanced learning settings.

\subsection{Problem setting}
\label{sec:learn-process-setting}

We consider an unknown quantum process $\cE$ on $n$ qubits generated as follows.
\begin{itemize}
    \item An $n$-qubit state $\sigma$ is accompanied by an $m$ ancillary qubits initialized at $\ketbra{0^n}{0^n}$.
    \item $p$ unknown two-qubit unitary gates are applied on the $(n+m)$-qubit system $\rho \otimes \ketbra{0^n}{0^n}$.
    \item The ancillary qubits are hidden, resulting in an $n$-qubit mixed state.
\end{itemize}
When $m, p = \mathrm{poly}(n)$, we refer to $\cE$ as a polynomial-time quantum process.
Next, we consider an input probability distribution $\cD$ over $n$-qubit mixed states $\sigma$.
The goal is to learn an approximate model $\tilde{\cE}$ of $\cE$, such that we can accurately predict the output state on average: 
\begin{equation}
    \E_{\sigma \sim \cD} \norm{\tilde{\cE}(\sigma) - \cE(\sigma)}_1 \leq \epsilon.
\end{equation}
In the above, $\norm{X}_1 = \max_{O: \norm{O}_\infty \leq 1} |\Tr(O X)|$ is the trace norm.

\subsection{Rigorous statements}
\label{sec:rigor-learn-process}

We have the following theorem for quantum-enhanced experiments showing that they can efficiently learn a polynomial-time quantum process.
We will prove the theorem later in Appendix~\ref{sec:poly-proc}.

\begin{theorem}[Approximate learning of quantum processes -- polynomial upper bound] \label{thm:poly-upp-learn-proc}
For any distribution $\cD$ and any $\epsilon, \delta > 0$, there exists a learning algorithm in the quantum-enhanced setting that can learn an approximate model $\tilde{\cE}$ such that with probability at least $1 - \delta$,
\begin{equation}\label{eq:thm8}
    \E_{\sigma \sim \cD} \norm{\tilde{\cE}(\sigma) - \cE(\sigma)}_1 \leq \epsilon
\end{equation}
from at most $\wt{\mathcal{O}}(\mathrm{poly}(n)\log(1 / \delta)/\epsilon^4)$ accesses to $\cE$, where $\wt{\mathcal{O}}(\cdot)$ hides factors of $\log(1/\epsilon)$.
\end{theorem}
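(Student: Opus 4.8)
The plan is to prove the polynomial upper bound by reducing the process-learning task to a state-learning problem that can be solved by shadow tomography, exploiting the fact that the quantum-enhanced learner may entangle inputs with a quantum memory and process outputs coherently. First I would invoke the Choi--Jamio\l{}kowski isomorphism: in the quantum-enhanced scenario a single access to $\cE$ prepares one copy of the Choi state $J(\cE) = (\cE \otimes \cI)(\ketbra{\Omega}{\Omega})$, obtained by sending half of the maximally entangled state $\ket{\Omega} = 2^{-n/2}\sum_i \ket{i}\ket{i}$ through $\cE$. A direct computation gives $\cE(\sigma) = 2^n\,\Tr_{\mathrm{in}}[(I \otimes \sigma^{T}) J(\cE)]$, hence $\Tr(O\,\cE(\sigma)) = 2^n\,\Tr[(O \otimes \sigma^{T}) J(\cE)]$ for every observable $O$ and input $\sigma$. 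The model returned by the algorithm is then simply the fixed linear rule $\tilde\cE : \sigma \mapsto 2^n\,\Tr_{\mathrm{in}}[(I \otimes \sigma^{T})\,\hat J]$, where $\hat J$ is a learned classical estimate of $J(\cE)$. Note that this $\tilde\cE$ is defined on \emph{all} inputs once $\hat J$ is stored, so there is no generalization gap to fresh inputs: the entire problem is to learn $\hat J$ accurately enough in the \emph{average} sense dictated by the trace-norm criterion $\E_{\sigma \sim \cD}\norm{\tilde\cE(\sigma) - \cE(\sigma)}_1 \leq \epsilon$.

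Second, I would estimate the functionals of $J(\cE)$ that control this criterion by running shadow tomography (in the sample-efficient threshold-search form of Aaronson and B\u{a}descu--O'Donnell) on many coherently prepared copies of an appropriate Choi-type encoding. The relevant Hilbert space has dimension $D = 4^n$, so the factor $\log D = \mathcal{O}(n)$ enters only polynomially; the $1/\epsilon^4$ dependence in the theorem is exactly the cost of the threshold-search primitive, while $\log(1/\delta)$ and $\log^2 M$ (with $M$ the number of target functionals) account for the remaining dependence. To keep $M$ controlled I would draw a polynomial number of inputs $\sigma \sim \cD$ (so that their empirical average concentrates the true average) and discretize the dual witnesses $O$ on an $\epsilon$-net, so that $\log M = \mathrm{poly}(n)$ and hence $\log^2 M = \mathrm{poly}(n)$, yielding total copy complexity $\wt{\mathcal{O}}(\mathrm{poly}(n)\log(1/\delta)/\epsilon^4)$. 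Mixed inputs reduce to pure inputs by convexity of the trace norm together with purification, so I would carry out the argument for pure $\ket{\psi}\sim\cD$ and lift by linearity.

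The hard part — and the step I expect to be the main obstacle — is the exponential factor $2^n$ in the Choi-to-channel conversion. A naive reduction would demand $\Tr(O\,\cE(\sigma))$ to additive error $\epsilon$ for worst-case $O$ and for each $\sigma$ individually, which forces precision $\epsilon/2^n$ on $J(\cE)$ and thus an exponential number of copies; this is unavoidable for the per-input worst case (indeed it underlies the matching exponential lower bound). The resolution I would pursue is to never isolate the value at a single input: one prepares, with $\mathcal{O}(1)$ accesses to $\cE$ per copy, an \emph{input-averaged} Choi encoding whose input register already carries the $\cD$-ensemble, so that the functionals one must estimate are $\mathcal{O}(1)$-bounded and the $2^n$ is absorbed into the $\cD$-average (equivalently, into the sampling probability of a given input) rather than into the required precision. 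Arranging this encoding so that the average trace-norm error is exactly the single global quantity estimated by shadow tomography, while verifying that $\mathcal{O}(1)$-bounded witnesses suffice after netting, is the technical heart of the proof; once it is in place the stated copy complexity follows from the shadow-tomography guarantee and a union bound over the net.
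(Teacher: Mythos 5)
Your proposal takes a genuinely different route from the paper, and it contains a gap that I do not believe can be repaired within your framework. The paper never attempts to learn the Choi state. Instead, it exploits the standing assumption (from the problem setting) that $\cE$ is a \emph{polynomial-time} process, built from at most $p = \mathrm{poly}(n)$ two-qubit gates and $m = \mathrm{poly}(n)$ ancillas. This assumption yields an $\epsilon'$-covering net $\cN_{\epsilon',n,m,p}$ of the hypothesis class of such processes with $\log \left|\cN_{\epsilon',n,m,p}\right| = \mathcal{O}\left(p \log\left((n+m) p/\epsilon'\right)\right) = \mathrm{poly}(n)$. The algorithm then samples inputs $\rho_1,\ldots,\rho_{N_{\mathrm{in}}} \sim \cD$, stores $N_{\mathrm{out}}$ copies of each output $\cE(\rho_i)$ in quantum memory, and runs quantum hypothesis selection (Theorem 1.5 of~\cite{buadescu2020improved}) to find a net element $\wt{\cE}_k$ whose product state $\bigotimes_i \wt{\cE}_k(\rho_i)$ is close in trace norm to $\bigotimes_i \cE(\rho_i)$; a Hoeffding bound over the sampled inputs converts this into the $\cD$-averaged guarantee, and a clustering argument boosts the confidence to $1-\delta$. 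The quantum memory is used precisely for the collective measurements of hypothesis selection across the $N_{\mathrm{in}} N_{\mathrm{out}}$ stored states, and the continuum of trace-norm witnesses is replaced by the pairwise distinguishing measurements among only $\exp(\mathrm{poly}(n))$ candidate processes.

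Your proof, by contrast, never invokes the polynomial-time structure of $\cE$, and this is not a removable omission: the theorem is false without it. Take $\cD$ to be the point mass on $\ket{0^n}$ and $\cE$ an arbitrary channel; then $\cE(\ketbra{0^n}{0^n})$ is an arbitrary $n$-qubit mixed state, and the guarantee $\E_{\sigma \sim \cD} \norm{\tilde{\cE}(\sigma) - \cE(\sigma)}_1 \leq \epsilon$ amounts to full state tomography, which requires $\Omega(4^n/\epsilon^2)$ copies even with arbitrary collective (quantum-memory) measurements. Concretely, the step in your argument that breaks is the netting claim: the dual witnesses for the trace norm range over all Hermitian $O$ with $\norm{O}_\infty \leq 1$ on a $2^n$-dimensional space, and any $\epsilon$-net of that set has $\log M = \Theta\!\left(4^n \log(1/\epsilon)\right)$, not $\mathrm{poly}(n)$, so shadow tomography over the net costs $\mathrm{poly}(\log M) = \mathrm{poly}(4^n)$ copies. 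Your proposed "input-averaged Choi encoding" does not rescue this: for a fixed model $\tilde{\cE}$, the quantity $\E_{\sigma}\norm{\tilde{\cE}(\sigma)-\cE(\sigma)}_1$ is a supremum over $\sigma$-dependent witness assignments $\sigma \mapsto O(\sigma)$, and because the inputs in the support of $\cD$ are generally non-orthogonal, such an assignment cannot be encoded as a single bounded observable (nor polynomially many) on the averaged state $\E_{\sigma}\left[\cE(\sigma)\otimes\ketbra{\sigma}{\sigma}\right]$. What the polynomial-time assumption buys --- and what any correct proof must import --- is that the \emph{hypothesis class itself} admits a covering net of log-cardinality $\mathrm{poly}(n)$; once you have that, your shadow-tomography machinery could in principle be aimed at the pairwise Helstrom observables of the net elements, which is essentially what the paper's hypothesis-selection step does.
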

\noindent In contrast, our hardness results for predicting properties of physical states in the conventional setting (see Theorem~\ref{thm:obs-adv} in Appendix~\ref{sec:expo-adv-abs-one}) immediately implies the following exponential lower bound.

\begin{corollary}[Approximate learning of quantum processes -- exponential lower bound] \label{cor:learn-proc-hard}
Let $\cD$ be any distribution and $\cE$ be the quantum process that always generates a state $\rho$ considered in Def.~\ref{def:hardrhoO}. 
Any algorithm in the conventional setting that learns an approximate model $\tilde{\cE}$ such that
\begin{equation}
    \E_{\sigma \sim \cD} \norm{\tilde{\cE}(\sigma) - \cE(\sigma)}_1 \leq 0.25,
\end{equation}
must use at least $\Omega(2^n)$ accesses to $\cE$.
\end{corollary}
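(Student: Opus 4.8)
The plan is to reduce the process-learning lower bound directly to the state-learning lower bound of Theorem~\ref{thm:obs-adv}. The key observation is that the process $\cE$ in the corollary is \emph{constant}: since it ignores its input and always outputs the fixed state $\rho$ drawn according to Definition~\ref{def:hardrhoO}, we have $\cE(\sigma) = \rho$ for every $n$-qubit input $\sigma$. Consequently each conventional access to $\cE$---prepare some input, run the process, and measure the output with an (adaptively chosen) POVM---is operationally identical to measuring a single fresh copy of $\rho$; the choice of input carries no information because it never affects the output distribution. Thus $T$ conventional accesses to $\cE$ correspond exactly to $T$ copies of $\rho$ in the conventional state-learning model, and the learning-tree framework of Appendix~\ref{sec:separation} applies verbatim.

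First I would turn a good model into a good state estimate. If the learned model satisfies $\E_{\sigma\sim\cD}\norm{\tilde\cE(\sigma)-\cE(\sigma)}_1 \leq 0.25$, then since $\cE(\sigma)=\rho$ the average of $\norm{\tilde\cE(\sigma)-\rho}_1$ over the input distribution is at most $0.25$; in particular there is an input $\sigma^\ast$ in the support of $\cD$ with $\norm{\tilde\cE(\sigma^\ast)-\rho}_1 \leq 0.25$. Setting $\hat\rho := \tilde\cE(\sigma^\ast)$ gives a classical description of a density matrix that is within trace distance $0.25$ of the unknown $\rho$. Because $\cE$ ignores its input, the same reduction works for any input distribution $\cD$; the cleanest instantiation simply takes $\cD$ to be a point mass, so that $\hat\rho=\tilde\cE(\sigma_0)$ directly.

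Next I would use $\hat\rho$ to predict the announced observable. For any Pauli $O$ we have $\norm{O}_\infty = 1$, so by the duality of the trace and operator norms,
\begin{equation}
\bigl|\Tr(O\hat\rho)-\Tr(O\rho)\bigr| \leq \norm{O}_\infty\,\norm{\hat\rho-\rho}_1 \leq 0.25,
\end{equation}
and hence by the reverse triangle inequality $\bigl|\,|\Tr(O\hat\rho)| - |\Tr(O\rho)|\,\bigr| \leq 0.25$. Thus outputting $|\Tr(O\hat\rho)|$ estimates $|\Tr(O\rho)|$ to additive error $0.25$---which is precisely the prediction task of Theorem~\ref{thm:obs-adv}. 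A conventional process-learner using $T$ accesses therefore yields a conventional state-learner using $T$ copies that solves this task with the same success probability, so Theorem~\ref{thm:obs-adv} forces $T = \Omega(2^n)$.

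The main thing to get right is the bookkeeping of success probabilities and the role of the input distribution. Theorem~\ref{thm:obs-adv} is stated for success probability $0.8$ over the random draw of $(\rho,O)$, so I would read the corollary's accuracy guarantee as holding with (at least) that probability over the process instance, and carry this probability through the reduction unchanged. The only conceptual point requiring care is confirming that a constant process genuinely collapses the process-learning model onto the state-learning model---that is, that adaptivity in the choice of inputs buys nothing---which follows immediately from $\cE(\sigma)=\rho$; everything else is the elementary norm inequality above.
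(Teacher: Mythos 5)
Your overall strategy---observing that $\cE$ is constant so that accesses to $\cE$ are operationally just copies of $\rho$, and then reducing to Theorem~\ref{thm:obs-adv} via the duality $\norm{X}_1 = \max_{O:\norm{O}_\infty\le 1}|\Tr(OX)|$---is exactly the paper's route. However, the step where you extract a state estimate from $\tilde\cE$ has a genuine gap. You argue that since $\E_{\sigma\sim\cD}\norm{\tilde\cE(\sigma)-\rho}_1\le 0.25$, \emph{there exists} $\sigma^\ast$ in the support of $\cD$ with $\norm{\tilde\cE(\sigma^\ast)-\rho}_1\le 0.25$, and you set $\hat\rho=\tilde\cE(\sigma^\ast)$. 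This is non-constructive in a way that breaks the reduction: the learner does not know $\rho$, so it has no way to identify which $\sigma^\ast$ is good. A model $\tilde\cE$ can satisfy the average-error guarantee while $\tilde\cE(\sigma)$ varies wildly over the support of $\cD$ (e.g.\ outputting $\rho$ on $90\%$ of the inputs and garbage on the rest, so that $|\Tr(O\,\tilde\cE(\sigma))|$ spans essentially the whole range $[0,0.9]$); then no rule based on $\tilde\cE$ alone---taking a minimum, a maximum, or an arbitrary or random $\sigma$---is guaranteed to produce an estimate accurate to $0.25$ with the required probability. Your fallback of taking $\cD$ to be a point mass does make the extraction constructive, but it proves the corollary only for that single distribution, whereas the statement quantifies over \emph{any} $\cD$.

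The fix is the paper's one-line argument: by Jensen's inequality (convexity of the trace norm),
\begin{equation}
\norm{\E_{\sigma\sim\cD}\tilde\cE(\sigma) - \rho}_1 \;\le\; \E_{\sigma\sim\cD}\norm{\tilde\cE(\sigma)-\cE(\sigma)}_1 \;\le\; 0.25,
\end{equation}
so the \emph{averaged} output $\hat\rho := \E_{\sigma\sim\cD}\tilde\cE(\sigma)$---a density matrix the learner can actually compute, since both $\tilde\cE$ and $\cD$ are known to it---is within trace distance $0.25$ of $\rho$ for every $\cD$. From there your norm-duality step and the appeal to Theorem~\ref{thm:obs-adv} go through verbatim. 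A secondary point you take for granted but the paper proves: for the corollary to be a meaningful counterpart to Theorem~\ref{thm:poly-upp-learn-proc}, the constant process must lie in the class of \emph{polynomial-time} processes of Appendix~\ref{sec:learn-process-setting}, and most of the paper's proof is an explicit construction (swap the input into $m=2n$ ancillary qubits, then prepare the state $\rho$ of Definition~\ref{def:hardrhoO} using $\mathcal{O}(n)$ two-qubit gates); without such a construction the hardness claim would not apply to the process class under discussion.
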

\begin{proof}
We consider $m = 2n$.
The two-qubit gates swap the input state $\sigma$ to the first $n$ ancillary qubits.
Then we use the rest of the $n$ ancillary qubits and the $n$ system qubits (i.e.~the qubits in the input state $\rho$) to prepare a state $\rho$ considered in Def.~\ref{def:hardrhoO}.
To prepare the maximally mixed state $I / 2^n$, we entangle each of the system qubits with the corresponding ancillary qubit to prepare a Bell state $\frac{1}{\sqrt{2}}(\ket{00} + \ket{11})$.
To prepare the alternative state $(I + 0.9 s P) / 2^n$, we perform the following procedure.
\begin{enumerate}
    \item For qubit $i$, where $P_i$ is \emph{not} the last non-identity Pauli operator (i.e.~last in terms of having the largest index $i$), we entangle the $i$-th system qubit with the corresponding ancillary qubit to prepare a Bell state $\frac{1}{\sqrt{2}}(\ket{00} + \ket{11})$.
    \item For the last remaining qubit $i$, which corresponds to the index $i$ such that $P_i$ is the last non-identity Pauli operator, we apply a sequence of two-qubit gates entangling qubit $i$ to qubit $j$ with $P_j \neq I$. The sequence of two-qubit gates stores the parity (or $1-\mathrm{parity}$) of all qubits $j$ with $P_j \neq I$.
    After this step, when we trace over the ancillary qubits, we have generated $(I+s P^{(Z)})/2^n$, where $P^{(Z)} = \bigotimes_{i=1}^n F(P_i)$ and $F(I) = I, F(X) = Z, F(Y) = I, F(Z) = I$.
    Then, we rotate the corresponding ancillary qubit for qubit $i$ from $\ket{0}$ to $\sqrt{0.95} \ket{0} + \sqrt{0.05} \ket{1}$ and apply a controlled-not gate from the ancillary qubit (control) to qubit $i$.
    The system qubits are now in the state $(I+ 0.9 s P^{(Z)})/2^n$.
    \item Finally, for each qubit $i$ with $P_i = X$, we rotate the system qubit from $|0\rangle$ to $|+\rangle = \frac{1}{\sqrt{2}}(|0\rangle + |1\rangle)$ and from $|1\rangle$ to $|-\rangle = \frac{1}{\sqrt{2}}(|0\rangle - |1\rangle)$. For each qubit $i$ with $P_i = Y$, we rotate from $0$ to $y+$ and from $1$ to $y-$.
    Tracing over the ancillary qubits, the system qubits are now in the state $(I+ 0.9 s P)/2^n$.
\end{enumerate}
We can see that the number of gates $p$ is $\mathcal{O}(n)$.
Furthermore, no matter what the input state $\sigma$ is, the above quantum process always produces a state $\rho$ considered in Def.~\ref{def:hardrhoO}.

When an algorithm in the conventional setting has learned an approximate model $\tilde{\cE}$ with
\begin{equation}
    \E_{\sigma \sim \cD} \norm{\tilde{\cE}(\sigma) - \cE(\sigma)}_1 \leq 0.25,
\end{equation}
we can use Jensen's inequality to conclude
\begin{equation}
    \norm{\E_{\sigma \sim \cD} \tilde{\cE}(\sigma) - \rho }_1 \leq \E_{\sigma \sim \cD} \norm{\tilde{\cE}(\sigma) - \cE(\sigma)}_1 \leq 0.25.
\end{equation}
Because $\norm{X}_1 = \max_{O: \norm{O}_\infty \leq 1} |\Tr(O X)|$,
the above implies that the algorithm can predict $\Tr(O \rho)$ up to an error of $0.25$.
From Theorem~\ref{thm:obs-adv}, we conclude that the learning algorithm must use at least $\Omega(2^n)$ copies of $\rho$, which corresponds to $\Omega(2^n)$ accesses to $\cE$.
\end{proof}

\subsection{Proof of polynomial upper bound in Theorem~\ref{thm:poly-upp-learn-proc}}
\label{sec:poly-proc}

\begin{figure}[t]
    \centering
    \includegraphics[width=0.75\textwidth]{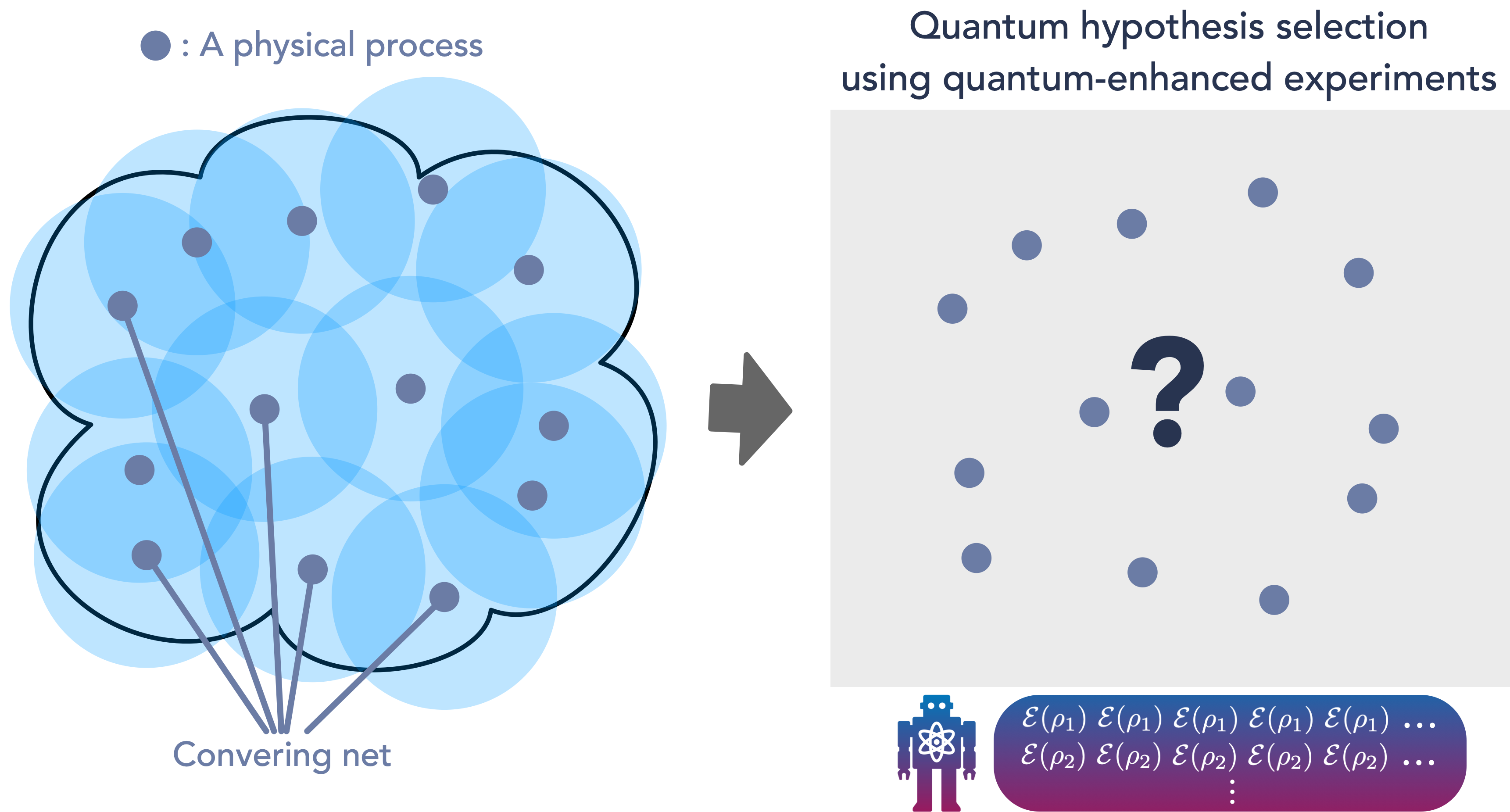}
    \caption{\emph{Illustration for the proof of Theorem~\ref{thm:poly-upp-learn-proc} on learning polynomial-time quantum processes.} We first form a covering net (all dark blue dots) for the space of all polynomial-time quantum processes (the cloud shape). Any polynomial-time quantum process is close to an element in the covering net. Then we perform quantum hypothesis selection \cite{buadescu2020improved} using a quantum dataset stored in the quantum memory to find the approximate  physical process. }
    \label{fig:learn-process}
\end{figure}

Theorem~\ref{thm:poly-upp-learn-proc} establishes an upper bound on the number of times we must access the unknown process $\mathcal{E}$ in the quantum-enhanced setting to construct an approximate model of $\mathcal{E}$. Note that the theorem only concerns the number of times we run the process $\mathcal{E}$; it does not address the computational complexity of the learning procedure.
Our strategy for proving the theorem is as follows. First we find an upper bound on the number of elements of an $\epsilon'$-covering net for the set of all quantum processes that can be constructed using up to $p$ two-qubit quantum gates, with distance defined by the diamond norm.
Next we explain how to use a quantum hypothesis testing algorithm to find a process $\mathcal{\tilde E}$ in the covering net that approximates $\mathcal{E}$ as specified in (\ref{eq:thm8}), if $\epsilon'$ is appropriately chosen.
This quantum hypothesis testing method can be carried out in the quantum-enhanced setting, but not in the conventional setting.
The number of times we must access $\mathcal{E}$ depends on the size of the covering net, and can be shown to scale polynomially with the number of gates $p$, proving the theorem.
An illustration is given in Supp.~Fig.~\ref{fig:learn-process}.

\subsubsection{Covering net}

First, we construct the covering net for the set $\cS$ of quantum processes with a fixed $n, m, p$.
An $\epsilon$-covering net of a set $\cS$ is a subset $\cN_\epsilon \subseteq \cS$ such that for every point $x \in \cS$, there exists a point $y \in \cN$ with $\norm{x - y} \leq \epsilon$ in an appropriate norm.

Recall that a unitary $U$ corresponds to a unitary channel $\cU$ defined as
\begin{equation}
    \cU(\rho) = U \rho U^\dagger.
\end{equation}
Because two-qubit unitary channels form a bounded set in a finite-dimensional space, the $\tilde{\epsilon}$-covering net for two-qubit unitary channels has a size of at most
\begin{equation}
    \left( \frac{c_1}{\tilde{\epsilon}} \right)^{c_2}, \label{eq:twoqubit}
\end{equation}
where $c_1, c_2$ are two constants (see, e.g., Section $4.2$ in \cite{vershynin2018highdimensional}).
Here, we consider the norm to be the diamond norm $\norm{\cdot}_\diamond$ (see Section~3.3 in  \cite{watrous2018book}).
The bound in \eqref{eq:twoqubit} only pertains to the covering net size when the unitary acts on a \emph{fixed} set of two qubits.
Let us now consider two-qubit unitary channels that can act on any two of the $n+m$ qubits.
Because there are $\binom{n+m}{2}$ pairs of qubits that the unitary could act on, the size of the $\tilde{\epsilon}$-covering net $\cN_{\tilde{\epsilon}, n+m}$ of all two-qubit gates on an $(n+m)$-qubit system is upper bounded as follows,
\begin{equation}
\left| \cN_{\tilde{\epsilon}, n+m} \right| \leq \binom{n+m}{2}  \left( \frac{c_1}{\tilde{\epsilon}} \right)^{c_2}.
\end{equation}
To construct an $\epsilon$-covering net for the composed quantum process $\cE$, we need to consider $\tilde{\epsilon}= \epsilon' / p$ in $\cN_{\tilde{\epsilon}, n+m}$.
Consider any sequence of two-qubit unitary channels $\cU_1, \ldots, \cU_p$ on an $(n+m)$-qubit system.
For each $U_i$ in the sequence, we find the closest unitary channel $\tilde{\cU}_i$ in $\cN_{\tilde{\epsilon}, n+m}$, hence $\norm{\cU_i - \tilde{\cU}_i}_{\diamond} \leq \tilde{\epsilon}$.
Then we can use a telescoping sum and the triangle inequality to see that
\begin{align}
    &\norm{\Tr_{n+1, \ldots, n+m}\left( \cU_p \ldots \cU_1 (\rho \otimes \ketbra{0^m}{0^m}) \right) - \Tr_{n+1, \ldots, n+m}\left( \tilde{\cU}_p \ldots \tilde{\cU}_1 (\rho \otimes \ketbra{0^m}{0^m}) \right)}_1 \\
    &\leq \norm{ \cU_p \ldots \cU_1 (\rho \otimes \ketbra{0^m}{0^m}) - \tilde{\cU}_p \ldots \tilde{\cU}_1 (\rho \otimes \ketbra{0^m}{0^m}) }_1\\
    &\leq \norm{ \cU_p \ldots \cU_1 (\rho \otimes \ketbra{0^m}{0^m}) - \cU_p \tilde{\cU}_{p-1} \ldots \tilde{\cU}_1 (\rho \otimes \ketbra{0^m}{0^m}) }_1 \nonumber \\
    &\qquad + \norm{ \cU_p \tilde{\cU}_{p-1} \ldots \cU_1 (\rho \otimes \ketbra{0^m}{0^m}) - \tilde{\cU}_{p} \tilde{\cU}_{p-1} \ldots \tilde{\cU}_1 (\rho \otimes \ketbra{0^m}{0^m}) }_1 \\
    &\leq \norm{ \cU_{p-1} \ldots \cU_1 (\rho \otimes \ketbra{0^m}{0^m}) - \tilde{\cU}_{p-1} \ldots \tilde{\cU}_1 (\rho \otimes \ketbra{0^m}{0^m}) }_1 + \norm{ \cU_p  - \tilde{\cU}_{p} }_\diamond \\
    &\leq \norm{ \cU_{p-2} \ldots \cU_1 (\rho \otimes \ketbra{0^m}{0^m}) - \tilde{\cU}_{p-2} \ldots \tilde{\cU}_1 (\rho \otimes \ketbra{0^m}{0^m}) }_1 + \norm{ \cU_{p-1}  - \tilde{\cU}_{p-1} }_\diamond + \norm{ \cU_p  - \tilde{\cU}_{p} }_\diamond \\
    &\leq \ldots\\
    &\leq \sum_{i=1}^p \norm{ \cU_i  - \tilde{\cU}_{i} }_\diamond \\
    &\leq p \tilde{\epsilon} = \epsilon'.
\end{align}
The first inequality uses the fact that taking partial trace does not increase the trace norm.
The second inequality uses $\norm{A - B} \leq \norm{A - C} + \norm{C - B}$.
The third inequality uses $\norm{\cE(X)}_1 \leq \norm{X}_1$ for any CPTP map $\cE$, and $\norm{\cE(\rho)}_1 \leq \norm{\cE}_\diamond \norm{\rho}_1 = \norm{\cE}_\diamond$.
The fourth inequality considers the same steps taken in the second and third inequality.
Then, using induction, we obtain the formula given in the second-to-last line.
The last line uses the fact that $\norm{\cU_i - \tilde{\cU}_i}_{\diamond} \leq \tilde{\epsilon}$ for all $i$ and $\tilde{\epsilon} = \epsilon' / p$.

From the above analysis, we can see that we can find an $\epsilon'$-covering net $\cN_{\epsilon', n, m, p}$ for the space of $\cE$ with an $n$-qubit input state, $m$ ancillary qubits, and $p$ two-qubit gates that satisfies
\begin{equation} \label{eq:Neps-sizeupp}
    \left| \cN_{\epsilon', n, m, p} \right| \leq \left[ \binom{n+m}{2} \left( \frac{p c_1 }{\epsilon'} \right)^{c_2} \right]^p.
\end{equation}
For any $\cE$ in the space, we can find an $\tilde{\cE} \in \cN_{\epsilon', n, m, p}$ such that for all $n$-qubit input states $\rho$ we have
\begin{equation}
    \norm{\cE(\rho) - \tilde{\cE}(\rho) }_1 \leq \epsilon'.
\end{equation}
We will then utilize the $\epsilon$-covering net $\cN_{\epsilon', n, m, p}$ in the subsequent proof.
An $\epsilon$-covering net of quantum processes have also been used in \cite{huang2021information} to establish an information-theoretic bound on quantum advantage in \cite{caro2021generalization} to analyze generalization performance of quantum neural networks.

\newcommand{\Nin}{N_{\mathrm{in}}}
\newcommand{\Nout}{N_{\mathrm{out}}}

\subsubsection{Learning via Hypothesis Selection: Protocol and Analysis}

We will sample $\Nin$ input states $\rho_1,\ldots,\rho_{\Nin}$ from the distribution $\cD$.  For each $i\in[\Nin]$ and every $\wt{\cE}_k\in\cN_{\epsilon',n,m,p}$ (for $\epsilon'$ to be tuned later), we will access the true process $\cE$ a number of times $\Nout$ using $\rho_i$ as the input state, obtaining $\Nout$ copies of $\cE(\rho_i)$. We will store these $\Nin\cdot\Nout$ states in the quantum memory and run a known algorithm for \emph{quantum hypothesis selection} \cite{buadescu2020improved} to determine for which $k$ the product state $\bigotimes^{\Nin}_{i = 1}\wt{\cE}_k(\rho_i)$ is approximately closest to $\bigotimes^{\Nin}_{i=1}\cE(\rho_i)$. We will argue that if $\epsilon'$ is sufficiently small and $\Nin$ sufficiently large, then the index $k$ that we find will satisfy \begin{equation}
    \E_{\rho\sim\cD}\left[\norm{\cE(\rho) - \wt{\cE}_k(\rho)}_1\right] \le \epsilon\,, \label{eq:desired}
\end{equation} as desired.

We now proceed to the analysis of this protocol. We begin with an estimate for the distance between product states whose components are pairwise far from each other.

\begin{lemma}\label{lem:tensordiff}
    If $\rho_1,\ldots,\rho_N$ and $\rho'_1,\ldots,\rho'_N$ satisfy $\frac{1}{N}\sum^N_{i = 1} \norm{\rho_i - \rho'_i}_1 \ge \epsilon$, then
    \begin{equation}
        \norm{\bigotimes_i \rho_i - \bigotimes_i \rho'_i}_1 \ge 2\left( 1 - (1 - \epsilon^2/4)^{N/2} \right).
    \end{equation}
\end{lemma}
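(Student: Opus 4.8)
The plan is to reduce the statement to the standard relationship between trace distance and fidelity, exploiting the fact that fidelity is \emph{multiplicative} across tensor factors whereas trace distance is not. Write $t_i \equiv \tfrac{1}{2}\norm{\rho_i - \rho'_i}_1$ for the trace distance of the $i$-th pair, so that the hypothesis becomes $\tfrac{1}{N}\sum_{i=1}^N t_i \ge \epsilon/2$, and let $F_i \equiv F(\rho_i,\rho'_i)$ denote the (root) fidelity $F(\sigma,\sigma') = \norm{\sqrt{\sigma}\sqrt{\sigma'}}_1$. After dividing the desired bound by $2$, the claim is equivalent to the trace-distance lower bound
\[
\tfrac{1}{2}\norm{\textstyle\bigotimes_i \rho_i - \bigotimes_i \rho'_i}_1 \ge 1 - (1 - \epsilon^2/4)^{N/2}.
\]

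First I would invoke the lower Fuchs--van de Graaf inequality $\tfrac{1}{2}\norm{\sigma-\sigma'}_1 \ge 1 - F(\sigma,\sigma')$, applied to $\sigma = \bigotimes_i \rho_i$ and $\sigma' = \bigotimes_i \rho'_i$, together with multiplicativity of fidelity over tensor products, to get
\[
\tfrac{1}{2}\norm{\textstyle\bigotimes_i \rho_i - \bigotimes_i \rho'_i}_1 \ge 1 - \prod_{i=1}^N F_i.
\]
Next I would apply the \emph{upper} Fuchs--van de Graaf inequality $t_i \le \sqrt{1 - F_i^2}$ to each factor, which rearranges to $F_i \le \sqrt{1 - t_i^2}$ and hence $\prod_i F_i \le \bigl(\prod_i (1 - t_i^2)\bigr)^{1/2}$.

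The final step converts the averaged per-pair constraint into a bound on this product. Since $g(t) = \log(1 - t^2)$ is concave on $[0,1)$ (its second derivative equals $-(2 + 2t^2)/(1-t^2)^2 < 0$), Jensen's inequality gives $\tfrac{1}{N}\sum_i \log(1 - t_i^2) \le \log(1 - \bar t^2)$ with $\bar t = \tfrac{1}{N}\sum_i t_i$, i.e. $\prod_i (1 - t_i^2) \le (1 - \bar t^2)^N$. Each $t_i \le 1$ forces $\bar t \le 1$, and since $x \mapsto 1 - x^2$ is decreasing on $[0,1]$ while $\bar t \ge \epsilon/2$, I obtain $\prod_i (1 - t_i^2) \le (1 - \epsilon^2/4)^N$. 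Chaining the three displays yields the claim.

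I expect no serious obstacle; the only points requiring care are bookkeeping the two \emph{opposite} directions of Fuchs--van de Graaf (the lower bound on the product's trace distance via fidelity, versus the per-factor upper bound on fidelity via trace distance) and checking the concavity that licenses Jensen. An alternative that sidesteps quoting Fuchs--van de Graaf would be to bound each factor's contribution to $\norm{\cdot}_1$ directly, but the fidelity route is cleanest precisely because it linearizes the tensor product into the scalar product $\prod_i F_i$.
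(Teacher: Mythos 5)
Your proof is correct and follows essentially the same route as the paper's: lower-bound the tensor-product trace distance via the Fuchs--van de Graaf inequality, exploit multiplicativity of fidelity across tensor factors, upper-bound each factor's fidelity by its trace distance, and aggregate the per-factor bounds using convexity together with the averaged hypothesis. The only cosmetic differences are your use of the root-fidelity convention and of Jensen's inequality applied to $\log(1-t^2)$, where the paper uses AM--GM followed (implicitly) by the quadratic-mean--arithmetic-mean inequality; both yield the identical final bound.
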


\begin{proof}
    For convenience, denote $\epsilon_i := \norm{\rho_i - \rho'_i}_1$. We have that $F(\rho_i, \rho'_i) \le 1 - \epsilon^2_i/4$, so 
    \begin{align}
        \norm{\bigotimes_i \rho_i - \bigotimes_i \rho'_i}_1 &\ge 2\left(1 - \sqrt{F\left(\bigotimes_i \rho_i, \bigotimes_i \rho'_i\right)}\right) \\
        &\ge 2\left(1 - \sqrt{\prod^N_{i=1}(1 - \epsilon^2_i / 4)} \right) \\
        &\ge 2\left( 1 - \left(\frac{1}{N}\sum_i (1 - \epsilon^2_i/4)\right)^{N/2} \right) \\
        &= 2\left( 1 - (1 - \epsilon^2/4)^{N/2} \right)
    \end{align}
    where the first step follows by the standard inequality $\norm{\rho - \rho'}_1 \ge 2\sqrt{1 - F(\rho,\rho')}$\,, the second step follows by tensorization of fidelity, the third step follows by AM-GM and the fact that $\epsilon_i = \norm{\rho_i - \rho'_i}_1 \le 2$, and the last step follows from the assumption.
\end{proof}

Next, we elaborate on how to select $\Nin$. Consider any $\wt{\cE}\in\cN_{\epsilon',n,m,p}$. By Hoeffding's inequality, because $\rho_1,\ldots,\rho_{\Nin}$ are sampled independently and identically distribued from the distribution $\cD$, and $\norm{\rho - \rho'} \le 2$ for all density matrices $\rho,\rho'$ we have
\begin{equation}
    \left|\frac{1}{N_{\mathrm{in}}} \sum_{i=1}^{N_{\mathrm{in}}} \norm{ \wt{\cE}(\rho_i) - \cE(\rho_i) }_1 - \E_{\rho\sim\cD}\norm{\wt{\cE}(\rho) - \cE(\rho)}_1\right| \le \epsilon/2 \label{eq:conc}
\end{equation}
with probability at least $1 - \delta'$ provided $\Nin = \Omega(\log(1 / \delta') / \epsilon^2 )$. By a union bound over $\cN_{\epsilon',n,m,p}$, the above bound holds simultaneously for all $\wt{\cE}\in\cN_{\epsilon',n,m,p}$ with probability at least $1 - |\cN_{\epsilon',n,m,p}|\delta'$. Henceforth, we condition on this event holding.

The following shows that it suffices to find $\wt{\cE}\in\cN_{\epsilon',n,m,p}$ for which the product state $\bigotimes_{i=1}^{\Nin} \wt{\cE}(\rho_i)$ is sufficiently close to $\bigotimes_{i=1}^{N_{\mathrm{in}}} \cE(\rho_i)$.

\begin{lemma}\label{lem:suffice}
    If \eqref{eq:conc} holds for all $\wt{\cE}\in\cN_{\epsilon',n,m,p}$, then if 
    \begin{equation}
        \norm{ \bigotimes_{i=1}^{\Nin} \wt{\cE}(\rho_i) - \bigotimes_{i=1}^{N_{\mathrm{in}}} \cE(\rho_i) }_1 \le 2\left( 1 - (1 - \epsilon^2/16)^{\Nin / 2} \right) \label{eq:suffice}
    \end{equation}
    for some $\wt{\cE}\in\cN_{\epsilon',n,m,p}$, we have that $\E_{\rho\sim\cD}\norm{\wt{\cE}(\rho) - \cE(\rho)}_1 \le \epsilon$.
\end{lemma}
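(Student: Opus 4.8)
The plan is to prove Lemma~\ref{lem:suffice} by contraposition, reducing it directly to the tensorization estimate of Lemma~\ref{lem:tensordiff} combined with the uniform concentration bound~\eqref{eq:conc}. Concretely, I would fix any $\wt{\cE}\in\cN_{\epsilon',n,m,p}$ and assume, toward the contrapositive, that $\E_{\rho\sim\cD}\norm{\wt{\cE}(\rho)-\cE(\rho)}_1 > \epsilon$. The goal is then to show that the product-state distance appearing on the left-hand side of~\eqref{eq:suffice} must strictly exceed the claimed bound, so that any $\wt{\cE}$ actually satisfying~\eqref{eq:suffice} necessarily has population error at most $\epsilon$.

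First I would pass from the population error to the empirical error using~\eqref{eq:conc}, which by hypothesis holds for every element of the net (this is precisely the event we conditioned on). Rearranging the absolute-value bound gives
\begin{equation}
\frac{1}{\Nin}\sum_{i=1}^{\Nin}\norm{\wt{\cE}(\rho_i)-\cE(\rho_i)}_1 \;\ge\; \E_{\rho\sim\cD}\norm{\wt{\cE}(\rho)-\cE(\rho)}_1 - \frac{\epsilon}{2} \;>\; \frac{\epsilon}{2}\,,
\end{equation}
so the average pairwise trace distance between the components $\wt{\cE}(\rho_i)$ and $\cE(\rho_i)$ is strictly larger than $\epsilon/2$.

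Next I would feed this into Lemma~\ref{lem:tensordiff}, applied with $N=\Nin$ to the pair of product states $\bigotimes_i \wt{\cE}(\rho_i)$ and $\bigotimes_i \cE(\rho_i)$, taking the threshold to be the actual average value $a := \frac{1}{\Nin}\sum_i\norm{\wt{\cE}(\rho_i)-\cE(\rho_i)}_1 > \epsilon/2$. Since $x\mapsto 2\bigl(1-(1-x^2/4)^{\Nin/2}\bigr)$ is strictly increasing on $[0,2]$, the lemma together with $a>\epsilon/2$ yields
\begin{equation}
\norm{\bigotimes_{i=1}^{\Nin}\wt{\cE}(\rho_i)-\bigotimes_{i=1}^{\Nin}\cE(\rho_i)}_1 \;\ge\; 2\Bigl(1-(1-a^2/4)^{\Nin/2}\Bigr) \;>\; 2\Bigl(1-(1-(\epsilon/2)^2/4)^{\Nin/2}\Bigr) \;=\; 2\Bigl(1-(1-\epsilon^2/16)^{\Nin/2}\Bigr)\,,
\end{equation}
where the crucial bookkeeping is the substitution $(\epsilon/2)^2/4=\epsilon^2/16$. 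This contradicts~\eqref{eq:suffice}, establishing the contrapositive and hence the lemma.

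I do not anticipate a genuine obstacle: the statement is essentially a repackaging of Lemma~\ref{lem:tensordiff} once the sampling slack is absorbed. The only points requiring care are the consistent tracking of the two halvings of the accuracy parameter---the $\epsilon\to\epsilon/2$ loss from~\eqref{eq:conc} and the induced $\epsilon^2/16$ in the exponent---and the monotonicity argument used to upgrade the non-strict conclusion of Lemma~\ref{lem:tensordiff} into the strict inequality needed to contradict~\eqref{eq:suffice}. Both are routine.
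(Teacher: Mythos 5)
Your proof is correct and follows essentially the same route as the paper's: the paper also argues by contraposition, uses \eqref{eq:conc} to pass from population error $>\epsilon$ to empirical average error $\ge \epsilon/2$, and then invokes Lemma~\ref{lem:tensordiff} to conclude. Your extra care with strict inequalities and the monotonicity of $x \mapsto 2\bigl(1-(1-x^2/4)^{\Nin/2}\bigr)$ is a slight refinement of a point the paper glosses over, but the argument is the same.
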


\begin{proof}
    We prove the contrapositive. Suppose $\E_{\rho\sim\cD}\norm{\wt{\cE}(\rho) - \cE(\rho)}_1 > \epsilon$. Then by \eqref{eq:conc}, we have \begin{equation}
        \frac{1}{N_{\mathrm{in}}} \sum_{i=1}^{N_{\mathrm{in}}} \norm{ \wt{\cE}(\rho_i) - \cE(\rho_i) }_1 \ge \epsilon/2.
    \end{equation} The lemma follows from Lemma~\ref{lem:tensordiff}.
\end{proof}

As we show in the next lemma, there exists an $\wt{\cE}\in\cN_{\epsilon',n,m,p}$, namely the process in the covering net which is closest to $\cE$, for which \eqref{eq:suffice} holds, provided $\epsilon'$ is sufficiently small.

\begin{lemma}\label{lem:existsgood}
    For any $\epsilon' > 0$, there exists an $\wt{\cE}\in\cN_{\epsilon',n,m,p}$ for which \begin{equation}
        \norm{ \bigotimes_{i=1}^{\Nin} \wt{\cE}(\rho_i) - \bigotimes_{i=1}^{N_{\mathrm{in}}} \cE(\rho_i) }_1 \le \Nin\epsilon'.
    \end{equation}
\end{lemma}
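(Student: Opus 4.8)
The plan is to take $\wt{\cE}$ to be the element of the covering net $\cN_{\epsilon',n,m,p}$ that is closest to the true process $\cE$, and then reduce the claim about the $\Nin$-fold tensor products to the per-copy guarantee already supplied by the net. Recall that the construction of $\cN_{\epsilon',n,m,p}$ earlier in this section establishes the stronger \emph{uniform} bound: for the closest net element $\wt{\cE}$ one has $\norm{\cE(\rho) - \wt{\cE}(\rho)}_1 \le \epsilon'$ for \emph{every} $n$-qubit input state $\rho$ (this is exactly the content of the telescoping diamond-norm estimate leading to~\eqref{eq:Neps-sizeupp}, since the covering is built at resolution $\tilde\epsilon = \epsilon'/p$ on each of the $p$ constituent two-qubit channels). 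In particular, instantiating this with $\rho = \rho_i$ gives $\norm{\cE(\rho_i) - \wt{\cE}(\rho_i)}_1 \le \epsilon'$ for each $i \in [\Nin]$.

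It then remains to lift these per-copy bounds to a bound on the tensor products, which I would do with a hybrid (telescoping) argument identical in spirit to the one already used in the covering-net construction. Introduce the $\Nin+1$ intermediate product states obtained by replacing the factors $\cE(\rho_i)$ by $\wt{\cE}(\rho_i)$ one index at a time, and apply the triangle inequality across consecutive hybrids. The $j$-th difference has all factors equal except the $j$-th, so by multiplicativity of the trace norm under tensor products, $\norm{X \otimes Y}_1 = \norm{X}_1\,\norm{Y}_1$, together with the fact that each $\cE(\rho_i)$ and $\wt{\cE}(\rho_i)$ is a density matrix of unit trace norm, the $j$-th term collapses to exactly $\norm{\cE(\rho_j) - \wt{\cE}(\rho_j)}_1$. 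Summing over $j$ then yields
\begin{equation}
    \norm{\bigotimes_{i=1}^{\Nin}\wt{\cE}(\rho_i) - \bigotimes_{i=1}^{\Nin}\cE(\rho_i)}_1 \le \sum_{j=1}^{\Nin}\norm{\wt{\cE}(\rho_j) - \cE(\rho_j)}_1 \le \Nin\,\epsilon',
\end{equation}
which is the desired inequality.

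There is essentially no hard step here: the lemma is a bookkeeping consequence of two facts already in hand --- the uniform per-input approximation guarantee of the covering net, and the subadditivity of trace-norm error under tensor products. The only point requiring a moment of care is that this subadditivity relies on the surviving factors being genuine density matrices (unit trace norm), which is automatic because both $\cE$ and each $\wt{\cE}\in\cN_{\epsilon',n,m,p}$ are CPTP maps, so all factors $\cE(\rho_i),\wt{\cE}(\rho_i)$ are states. I would emphasize that this is precisely why the net is built from CPTP approximants (via a diamond-norm covering of the constituent two-qubit channels) rather than from arbitrary linear maps: it guarantees both the uniform-over-inputs closeness and that the tensor factors remain valid quantum states.
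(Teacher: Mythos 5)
Your proposal is correct and follows essentially the same route as the paper: select the net element $\wt{\cE}$ with the uniform per-input guarantee $\norm{\wt{\cE}(\rho)-\cE(\rho)}_1\le\epsilon'$, telescope the difference of tensor products into $\Nin$ single-factor replacements, and bound each term by $\epsilon'$ via the triangle inequality and multiplicativity of the trace norm (with the surviving factors being unit-trace-norm states). The paper's proof is exactly this hybrid argument, written as an explicit telescoping identity.
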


\begin{proof}
    Take $\wt{\cE}\in\cN_{\epsilon',n,m,p}$ satisfying $\norm{\wt{\cE}(\rho) - \cE(\rho)}_1\le \epsilon'$ for all $\rho$. For convenience, let $\sigma_i := \cE(\rho_i)$, $\sigma'_i := \wt{\cE}(\rho_i)$, and $\delta_i := \sigma'_i - \sigma_i$ for all $i\in \{1, \ldots, \Nin\}$. Then
    \begin{equation}
        \bigotimes^{\Nin}_{i=1} \sigma'_i - \bigotimes^{\Nin}_{i=1} \sigma_i = \sum^{\Nin}_{i = 1} \left[\left(\bigotimes^{i-1}_{j = 1}\sigma'_j\right)\otimes (\sigma'_i - \sigma_i) \otimes \left(\bigotimes^{\Nin}_{j = {i+1}} \sigma_j\right)\right],
    \end{equation}
    so by the triangle inequality we conclude that $\norm{\bigotimes_i \sigma'_i - \bigotimes_i \sigma_i}_1 \le \Nin \epsilon'$ as claimed.
\end{proof}

Lemma~\ref{lem:suffice} and Lemma~\ref{lem:existsgood} guarantee the existence of a process in $\cN_{\epsilon',n,m,p}$ satisfying the desired bound of \eqref{eq:desired}. To complete the proof of Theorem~\ref{thm:poly-upp-learn-proc}, we will use the following special case of a result from \cite{buadescu2020improved} to find a process in the covering net which performs comparably.

\begin{theorem}[\cite{buadescu2020improved}, Theorem 1.5]\label{thm:data_analysis}
    Suppose we are given $m$ fixed hypothesis states $\sigma_1,\ldots,\sigma_M\in\mathbb{C}^{d\times d}$, parameters $0 < \epsilon,\delta < 1/2$, and access to copies of a state $\rho\in\mathbb{C}^{d\times d}$.  Then there is an algorithm that uses
    \begin{equation}
        N = \mathcal{O}\left(\frac{1}{\epsilon^2}\left(\log^3 M + \alpha\log M\right)\cdot \alpha\right)
    \end{equation} copies of $\rho$ for $\alpha:= \log(\log(1/\eta)/\delta)$ and $\eta:= \min_i \norm{\rho - \sigma_i}_1$ such that with probability at least $1 - \delta$ the algorithm outputs a $k\in \{1, \ldots, M\}$ for which $\norm{\rho - \sigma_k}_1 \le 4\eta$.
\end{theorem}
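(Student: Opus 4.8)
The plan is to prove the theorem by quantizing the classical minimum-distance (Scheff\'e) estimator and then confronting the uniquely quantum difficulty that measurements are destructive. Classically one would estimate, for every pair $(i,j)$, the mass that the unknown source places on the Scheff\'e set where $\sigma_i$ dominates $\sigma_j$, and run a tournament whose output is a hypothesis that is never badly beaten; uniform convergence over the $\binom{M}{2}$ sets then forces the output to be within a constant factor of $\eta$. The quantum analogue replaces the Scheff\'e set by the Holevo--Helstrom projector $P_{ij}$ onto the positive eigenspace of $\sigma_i - \sigma_j$, for which $\Tr(P_{ij}\sigma_i) - \Tr(P_{ij}\sigma_j) = \tfrac{1}{2}\norm{\sigma_i - \sigma_j}_1$, so that $P_{ij}$ is the optimal discriminating observable. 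Declaring that $\sigma_i$ beats $\sigma_j$ when the empirical value of $\Tr(P_{ij}\rho)$ lies closer to $\Tr(P_{ij}\sigma_i)$ than to $\Tr(P_{ij}\sigma_j)$, one shows that a tournament winner is within trace distance $4\eta$ of $\rho$; this is the origin of the constant $4$ in the statement.

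First I would reduce the tournament to a primitive I will call quantum threshold search: given two-outcome measurements $\{A_\ell, I - A_\ell\}_{\ell=1}^{L}$ (here the $P_{ij}$, so $L = \binom{M}{2}$) together with thresholds $\theta_\ell$, and copies of $\rho$, either output an $\ell$ with $\Tr(A_\ell\rho) \ge \theta_\ell - \epsilon$, or certify that $\Tr(A_\ell\rho) < \theta_\ell$ for every $\ell$. A single invocation of threshold search detects whether any tournament upset currently exists, and wrapping it in the tournament logic locates the winner. The naive implementation, measuring each $P_{ij}$ on its own fresh batch of copies, costs a factor $L = \mathrm{poly}(M)$ in copies, which is far too expensive and motivates the real work.

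The crux, and the step I expect to be the main obstacle, is solving threshold search with only polylogarithmically many copies of $\rho$ even though the $P_{ij}$ do not commute and so cannot be measured jointly on a single copy. The key quantum tool is the gentle measurement lemma: if $\Tr(A_\ell\rho)$ is small, then the measurement $\{A_\ell, I - A_\ell\}$, conditioned on the reject outcome $I - A_\ell$, leaves a post-measurement state that is $\mathcal{O}(\sqrt{\Tr(A_\ell\rho)})$-close to $\rho$ in trace distance. I would process the stored register $\rho^{\otimes N}$ sequentially: at each step measure an amplitude-estimation-boosted version of one observable, where a reject barely disturbs the register while an accept flags a candidate index. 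To convert "barely disturbs" into a guarantee valid across a long adaptive sequence of measurements, I would track a potential equal to the accumulated fidelity loss and control it with a Freedman-type martingale concentration inequality, proving that as long as no threshold is genuinely exceeded the total disturbance stays below a constant with high probability, so the statistics gathered along the way remain faithful to the true $\rho$. It is precisely this disturbance-control argument, which has no classical counterpart, that buys the $\log^2 M$-type dependence in place of a factor of $M$.

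Finally I would assemble the copy budget. Each pairwise estimate must be accurate to within the working scale, and a union bound over the $M$ hypotheses and over the $\binom{M}{2}$ projectors inside threshold search contributes the $\log M$ and $\log^3 M$ factors, while the $1/\epsilon^2$ is the usual statistical rate for estimating a bounded observable to additive accuracy $\epsilon$. Since $\eta = \min_i \norm{\rho - \sigma_i}_1$ is not known a priori, I would run the tournament at a geometric sequence of guessed scales $\epsilon, \epsilon/2, \epsilon/4, \ldots$ descending toward $\eta$; there are $\mathcal{O}(\log(1/\eta))$ such scales, and a union bound over them together with the confidence parameter $\delta$ produces exactly the factor $\alpha = \log(\log(1/\eta)/\delta)$. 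Combining the tournament correctness, which yields the factor $4$ in trace distance, with these accounting steps gives the claimed copy complexity $N = \mathcal{O}\!\left(\tfrac{1}{\epsilon^2}(\log^3 M + \alpha\log M)\,\alpha\right)$.
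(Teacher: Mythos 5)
You should first note that the paper does not prove this statement at all: it is quoted verbatim, with citation, from \cite{buadescu2020improved} (Theorem 1.5 there) and used purely as a black box inside the proof of Theorem~\ref{thm:poly-upp-learn-proc}. So the only meaningful comparison is with the proof in the cited source. Measured against that, your outline reconstructs the correct architecture: a minimum-distance (Scheff\'e-type) tournament over the Holevo--Helstrom projectors $P_{ij}$, a reduction to a quantum threshold-search primitive, disturbance control via gentle measurement plus a martingale argument, and a geometric grid of scales descending toward the unknown $\eta$, whose $\mathcal{O}(\log(1/\eta))$ levels combined with the confidence parameter account for the factor $\alpha = \log(\log(1/\eta)/\delta)$. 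This is in fact how the cited work proceeds.

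As a proof, however, your proposal has a genuine gap: the entire quantum content of the theorem --- that threshold search over $L = \binom{M}{2}$ noncommuting observables can be done with $\mathrm{polylog}(L)/\epsilon^2$ copies --- is asserted rather than established, and the mechanism you sketch is not sound as stated. The gentle measurement lemma gives $\mathcal{O}\bigl(\sqrt{\Tr(A_\ell\rho)}\bigr)$ disturbance only when the accept probability is small, but the thresholds $\theta_\ell$ in the tournament are midpoint values that can sit well away from $0$ and $1$, so a raw measurement of $\{A_\ell, I - A_\ell\}$ can collapse the register badly even on a reject. One must first amplify each test on a block of copies so that the accept probability below threshold becomes tiny (you gesture at this with ``amplitude-estimation-boosted''), and the crux is then to show that the accumulated damage over a long \emph{adaptive} sequence of such boosted tests stays $\mathcal{O}(1)$ without paying a per-test union-bound cost; in \cite{buadescu2020improved} this is the blended-measurement construction and the damage lemma, whose proof occupies the bulk of that paper. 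Saying you would ``track a potential \dots{} and control it with a Freedman-type martingale concentration inequality'' names the right genre of tool but does not execute the step on which the $\log^2 M$-per-search (hence the stated $\log^3 M$) scaling depends; absent it, the copy accounting in your final paragraph cannot be certified.
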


We can now put together the ingredients assembled in this section to complete the proof of Theorem~\ref{thm:poly-upp-learn-proc}.

\begin{proof}[Proof of Theorem~\ref{thm:poly-upp-learn-proc}]
    We first prove the theorem for constant $\delta$. Take $\epsilon' = c/\Nin$ for a sufficiently small constant $c > 0$, $\delta' = 1/(10|\cN_{\epsilon',n,m,p}|)$, and $\Nin = \wt{\mathcal{O}}(p/\epsilon^2)$, where $\wt{\mathcal{O}}(\cdot)$ hides factors of $\log n, \log m, \log p, \log 1/\epsilon$, so that $\Nin \ge \Omega(\log(1/\delta')/\epsilon^2)$ and \eqref{eq:conc} holds for all $\wt{\cE}\in\cN_{\epsilon',n,m,p}$ with probability at least $4/5$. Note that for some absolute constant $C > 0$,
    \begin{equation}
        1 - (1 - \epsilon^2/16)^{\Nin / 2} \ge 1 - e^{-\Nin\epsilon^2/32} = 1 - {\delta'}^C \ge \Omega(1)\,. \label{eq:big}
    \end{equation}
    In contrast, by Lemma~\ref{lem:existsgood} and our choice of $\epsilon'$, there exists some $\wt{\cE}\in\cN_{\epsilon',n,m,p}$ for which 
    \begin{equation}
        \norm{ \bigotimes_{i=1}^{\Nin} \wt{\cE}(\rho_i) - \bigotimes_{i=1}^{N_{\mathrm{in}}} \cE(\rho_i) }_1 \le c\,. \label{eq:small}
    \end{equation} Applying Theorem~\ref{thm:data_analysis} to $\sigma_k = \bigotimes_i \wt{\cE}_k(\rho_i)$ where $\wt{\cE}_k$ is the $k$-th element of $\cN_{\epsilon',n,m,p}$, using $\Nout$ copies of $\rho = \bigotimes_i \cE(\rho_i)$ where \begin{equation}
        \Nout = \mathcal{O}\left(\frac{1}{\epsilon^2}\left(\log^3|\cN_{\epsilon',n,m,p}| + \alpha\log|\cN_{\epsilon',n,m,p}|\right)\cdot \alpha\right), \qquad \alpha := \mathcal{O}(\log\log(1/c)), \label{eq:Ndef}
    \end{equation} we can output a $k$ for which $\norm{\sigma_k - \rho}_1 \le 4c$ with probability $4/5$. By taking the constant $c$ sufficiently small, we can leverage \eqref{eq:big} and Lemma~\ref{lem:suffice} to conclude that \begin{equation}
        \E_{\rho\sim\cD}\left[\norm{\cE(\rho) - \wt{\cE}_k(\rho)}_1\right] \le \epsilon\,.
    \end{equation}
    Note that $\Nout$ in \eqref{eq:Ndef} is dominated by the $\epsilon^{-2}\log|\cN_{\epsilon',n,m,p}|$ term since $\alpha = \mathcal{O}(1)$, and so recalling \eqref{eq:Neps-sizeupp} and our choice of $\epsilon' = c/\Nin$ we obtain
    \begin{equation}
        \Nout = \mathcal{O}\left(\frac{p^3}{\epsilon^2}\log^3\left((n + m)p\Nin\right)\right) = \mathcal{O}\left(\frac{p^3}{\epsilon^2}\log^3\left((n + m)p\Nin\right)\right) = \wt{\mathcal{O}}\left(\frac{p^3}{\epsilon^2}\right),
    \end{equation} where again $\wt{\mathcal{O}}(\cdot)$ hides logarithmic factors in $n, m, p, 1/\epsilon$.

    As we require $\Nout$ copies of $\bigotimes_{i=1}^{N_{\mathrm{in}}} \cE(\rho_i)$, we must make $\Nout\cdot \Nin = \wt{\mathcal{O}}(p^4/\epsilon^4)$ accesses to $\cE$. By union bounding over \eqref{eq:conc} holding for all $\wt{\cE}\in\cN_{\epsilon',n,m,p}$ and over the success of the algorithm in Theorem~\ref{thm:data_analysis}, we obtain Theorem~\ref{thm:poly-upp-learn-proc} for $\delta = 2/5$ from the assumption that $p = \mathrm{poly}(n)$.
    
    We now describe how to extend this result to general $\delta$ by a standard clustering argument. We can run $r := \Theta(\log(1/\delta))$ independent copies of the above protocol, resulting in indices $k_1,\ldots,k_r$ into $\cN_{\epsilon',n,m,p}$ such that for any fixed $i\in[r]$, $\norm{\sigma_{k_i} - \rho}_1 \le 4c$ with probability at least $3/5$. If $S\subset[r]$ denotes the set of $i\in[r]$ for which $\norm{\sigma_{k_i} - \rho}_1 \le 4c$, then by a Chernoff bound, $|S| \ge r/2$ with probability at least $1 - \delta$ provided the constant factor in the definition of $r$ is sufficiently large. Condition on the event that $|S| \ge r/2$.
    
    Let $k$ be an index into $\cN_{\epsilon',n,m,p}$ for which there are at least $r/2$ indices $i\in[r]$ for which $\norm{\sigma_k - \sigma_{k_i}}_1 \le 8c$, and output the channel $\wt{\cE}_k$. Such a $k$ certainly exists: take any $i\in S$ and note that by triangle inequality, for any other $j\in S$ we have \begin{equation}
        \norm{\sigma_{k_i} - \sigma_{k_j}}_1 \le \norm{\sigma_{k_i} - \rho}_1 + \norm{\sigma_{k_j} - \rho}_1 \le 8c.
    \end{equation}
    Now observe that regardless of which $k$ we choose that meets the criterion that at least $r/2$ indices $i\in[r]$ satisfy $\norm{\sigma_k - \sigma_{k_i}}_1 \le 8c$, we must have \begin{equation}
        \norm{\sigma_k - \rho}_1 \le 12c. \label{eq:cluster_close}
    \end{equation} Indeed, suppose to the contrary. Then for any $i\in S$, \begin{equation}
        \norm{\sigma_k - \sigma_{k_i}}_1 \ge \norm{\sigma_k - \rho}_1 - \norm{\sigma_{k_i} - \rho}_1 > 12c - 4c = 8c,
    \end{equation}
    where the second step is by the definition of $S$ and the assumption that \eqref{eq:cluster_close} does not hold. As $|S| \ge r/2$, this yields a contradiction of the fact that there are at least $r/2$ indices $i\in[r]$ for which $\norm{\sigma_k - \sigma_{k_i}}_1 \le 8c$.
    
    If we take the constant $c$ sufficiently small, then \eqref{eq:cluster_close} together with \eqref{eq:big} and Lemma~\ref{lem:suffice} allow us to conclude that $\E_{\rho\sim\cD}\left[\norm{\cE(\rho) - \wt{\cE}_k(\rho)}_1\right] \le \epsilon$. As we ran $r := \Theta(\log(1/\delta))$ independent copies of the protocol that we used for constant failure probability $\delta$, we merely incur an additional $\Theta(\log(1/\delta))$ multiplicative overhead in the number of access to $\cE$ we must make for general failure probability $\delta$.
\end{proof}

\section{Predicting highly-incompatible observables with bounded quantum memory}
\label{sec:hard-bdd-mem}

Here we will substantively generalize our results for predicting highly-incompatible observables, given in Appendix~\ref{sec:shadow}.
We show an exponential lower bound on the number of experiments when the size of the additional quantum memory is not large enough.

\subsection{Background and statement of results}
\label{sec:background1}

Let us first recapitulate the setting of our previous results, so as to draw a contrast with our generalization.  We have so far considered an experimentalist who is given sequential access to copies of an unknown state $\rho$.  In each measurement round, the experimentalist receives a copy of $\rho$, and can measure with a POVM.  The residual post-measurement state is then discarded, and only the classical data of the POVM outcome is kept.  This classical data can be used to inform the choice of POVM measurement employed in subsequent rounds, i.e.~the protocol can be adaptive.  This kind of protocol is emblematic of most contemporary and historical experiments in physics.

Note that the information maintained and processed from round to round in the protocols described above is solely classical.  With the advent of quantum computers and more flexible quantum memory architectures, a new possibility emerges.  Suppose that the unknown state $\rho$ in question is an $n$-qubit state.  Moreover, suppose we have $n+k$ qubit registers under our control.  Then we can use those registers however we please, including performing arbitrary quantum information processing.  Our only constraint is that each time we receive a new copy of $\rho$, we must necessarily use $n$ qubits of our registers to hold it.  It is thus appropriate to say that we have $k$ qubits of quantum memory, since even when we receive a new state $\rho$ we can still maintain $k$ qubits worth of quantum data.  We further allow ourselves to maintain and process an arbitrary amount of classical data, thought of as being stored in a classical device external to our quantum system.

A new question immediately presents itself: are there experimental tasks which are exponentially hard with only $k$ qubits of quantum memory, but easy with $k' > k$ qubits of quantum memory?  In~\cite{chen2021exponential} this question was answered in the affirmative, but only in the sense of query complexity.  That is, it was shown that there is an experimental task which requires $\Omega(2^{(n-k)/3})$ copies of $\rho$ if there are only $k$ qubits of quantum memory; however, the gate complexity of achieving the task is \textit{always} exponential regardless of the size of $k$.  Let us unpack this result.  Note that if $k = n$, the bound $\Omega(2^{(n-k)/3})$ becomes trivial; indeed, it can be shown that one only requires a modest number of copies of $\rho$ to achieve the specified task if $n = k$.  But even in this case, the total number of quantum operations required is exponentially large in $n$.  Nonetheless, the result is an interesting one: it means that unless $k$ goes as $n - c\, \log(n)$ (i.e.~unless $k$ is logarithmically close to $n$), the task is exponentially hard.  When $k \sim n - c \,\log(n)$, the task is only polynomially hard in the sense of query complexity.

While the aforementioned result is theoretically interesting, it does not correspond to a quantum memory advantage that could be realized by a quantum device on account of the exponential gate complexity required to achieve the task for a quantum memory of \textit{any} size.  Here we ameliorate this issue, and present the first example of a quantum memory advantage in the sense of both query \textit{and} gate complexity, and as such it can be realized on a quantum device.  Moreover, we have realized this quantum advantage in our experiments on the Sycamore quantum computer.

Our experimental task has the form of a partially-revealed many-versus-one distinguishing task, closely related to the one in Appendix~\ref{sec:shadow}.  A statement of the new task is as follows:
\begin{task}[Expectation value with bounded quantum memory] \label{task:bounded}
There is an unknown state $\rho$ which is either
\begin{enumerate}
    \item A maximally mixed state $I/2^n$ on $n$ qubits, or
    \item The state $(I + P)/2^n$ where $P \in \{I, X, Y, Z\}^{\otimes n} \setminus \{I^{\otimes n}\}$ is a random but fixed Pauli string.
\end{enumerate}
The choice of whether case 1 or case 2 is instantiated is made with equal probability at the outset, and is not revealed.  The experimentalist is given access to $T$ copies of the unknown state $\rho$ for a $T$ decided by the experimentalist, and after this an observable $O$ is revealed.  The task of the experimentalist is to determine the value of $|\text{\rm tr}(O \rho)|$ using the final state of the $n+k$ qubit registers, along with any classical information that has been stored or processed along the way.  In case 1 the operator $O$ is chosen uniformly at random from the non-identity Pauli strings; in case 2 the $O$ is chosen to be the Pauli operator $P$ if the state $\rho$ is $(I + P)/2^n$. 
\end{task}
\noindent Note that if $k = n$ so that the total number of registers is $2n$, then the task can be readily solved using the algorithm given in Appendix~\ref{sec:uppbd-obs}.  This algorithm is both query and gate efficient: we only require a constant number of copies of $\rho$ (i.e.~the number of copies is independent of $n$) and $O(n)$ gate complexity.

What is difficult is to show that if $k < n$, then the number of copies of $\rho$ we require to determine $|\text{tr}(O \rho)|$ as per the task above is $\Theta(2^{(n-k)/3})$.  We will establish this in the subsections which follow below.

\subsection{Review of learning tree framework for bounded quantum memories}
\label{sec:revew-bdd-qmem}

\begin{figure}[t]
    \centering
    \includegraphics[width=0.7\textwidth]{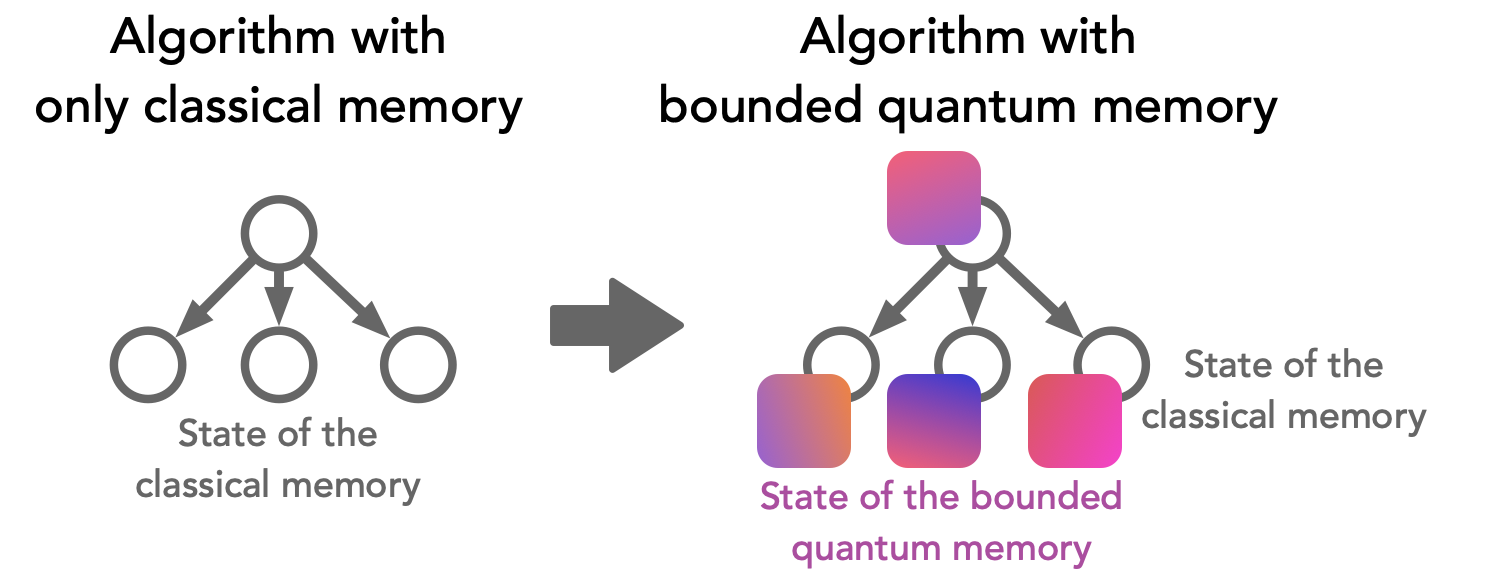}
    \caption{\emph{Illustration of learning tree representation for algorithms with bounded quantum memory.} We consider algorithms with unlimited classical memory and a bounded quantum memory consisting of $k$ qubits. To each node in the tree (corresponding to the state of the classical memory), we associate the $k$-qubit state of the bounded quantum memory. }
    \label{fig:bdd-qmem}
\end{figure}

Here we provide an exposition of the learning tree framework for bounded quantum memories in~\cite{chen2021exponential}.  As explained above, suppose we have $n+k$ qubit registers, where we designate $k$ as the quantum memory.  Suppose for the moment that $k < n$.  At each round in the protocol, we receive an $n$-qubit state $\rho$, which we must hold on the $n$ non-memory registers.  (Note that we cannot receive more than one copy of $\rho$, since we do not have enough registers to hold additional copies on account of $k < n$.)  Then upon receiving and holding $\rho$, the state of all of our registers can be written as $\rho \otimes \boldsymbol{\Sigma}$, where $\boldsymbol{\Sigma}$ is the density matrix of the $k$-qubit quantum memory.  The most general operation we can perform on the joint system is a quantum process, followed by a POVM measurement; we can then apply another quantum process followed by another POVM measurement, and so on.  However, we can rewrite an alternating sequence of quantum processes and POVM measurements as a single POVM measurement, which we denote by $\{F_i\}_{i=1}^{N-1}$.  Writing $F_i = M_i^\dagger M_i$, suppose our measurement outputs the $i$th POVM element.  Defining
\begin{equation}
A_{M_i}^\rho(\boldsymbol{\Sigma}) := \text{tr}_{1,...,n}\!\left(M_i (\rho \otimes \boldsymbol{\Sigma}) M_i^\dagger\right)\,,
\end{equation}
the reduced density matrix of our quantum memory is
\begin{equation}
\label{E:newmemory1}
\frac{A_{M_i}^\rho(\boldsymbol{\Sigma})}{ \text{tr}(A_{M_i}^\rho(\boldsymbol{\Sigma}))}
\end{equation}
with probability $\text{tr}(A_{M_i}^\rho(\boldsymbol{\Sigma}))$.  In the next round, we can leverage our measurement output $i$ to adaptively inform our choice of POVM on
\begin{equation}
\rho \otimes \frac{A_{M_i}^\rho(\boldsymbol{\Sigma})}{ \text{tr}(A_{M_i}^\rho(\boldsymbol{\Sigma}))}\,.
\end{equation}
Indeed, we can use the information of \textit{all} of our previous POVM outcomes to inform the choice of our next POVM.
An illustration is given in Supp.~Fig.~\ref{fig:bdd-qmem}.

The above description is ripe for being cast in the learning tree framework, which we presently articulate.  The definition below is the same as Definition 6.1 of~\cite{chen2021exponential}, albeit with slightly different notation.
\begin{definition21}[Tree representation of learning states with bounded quantum memory]
Let $\rho$ be a fixed, unknown quantum density matrix on $n$ qubits.  Suppose we have access to $n+k$ qubit registers.  A learning algorithm with a quantum memory of size $k$ can be expressed as a rooted tree $\mathcal{T}$ of depth $T$, where each node encodes the current state of the quantum memory in addition to the transcript of measurement outcomes the algorithm has seen so far.  In particular, the tree satisfies the following properties:
\begin{enumerate}
    \item Each note $u$ is associated with a $k$-qubit unnormalized density matrix $\boldsymbol{\Sigma}^\rho(u)$ corresponding to the current state of the quantum memory.
    \item For the root $r$ of the tree, $\boldsymbol{\Sigma}^\rho(r)$ is an initial state denoted by $\boldsymbol{\Sigma}_0$.
    \item At each note $u$, we apply a POVM measurement $\{F_s^u\}_s$ on $\rho \otimes \boldsymbol{\Sigma}^\rho(u)$ to obtain a classical outcome $s$.  Each child node $v$ of $u$ is connected through the edge $e_{u,s}$.
    \item If $v$ is the child note of $u$ connected through the edge $e_{u,s}$, then letting $F_s^u = M_s^{u\,\dagger} M_s^u$ we have
    \begin{equation}
    \boldsymbol{\Sigma}^\rho(v) := A_{M_s^u}(\boldsymbol{\Sigma}^\rho(u))\,.
    \end{equation}
    \item Note that for any node $u$ we have that $p^\rho(u) := \text{\rm tr}(\boldsymbol{\Sigma}^\rho(u))$ is the probability that the transcript of measurement outcomes observed by the learning algorithm is described by $u$.  Moreover, $\boldsymbol{\Sigma}^\rho(u)/p^\rho(u)$ is the (normalized) state of the $k$-qubit memory at the node $u$.
\end{enumerate}
\end{definition21}

Let us unpack the ingredients of this definition.  The initial state of our quantum memory is $\boldsymbol{\Sigma}_0$, and we apply some initial choice of POVM $\{F_s^r\}_s$ (where $r$ denotes the `root' of the tree).  If we measure the $s$th POVM outcome, then the quantum memory is in the unnormalized state $A_{M_s^r}(\boldsymbol{\Sigma}_0)$ with probability $\text{tr}(A_{M_s^r}(\boldsymbol{\Sigma}_0))$.  Each outcome $s$ of the POVM corresponds to a child note of the root; thus at the next level of the tree, each node is labelled by the POVM outcome $s$ and the corresponding state of the quantum memory $A_{M_s^r}(\boldsymbol{\Sigma}_0) := \boldsymbol{\Sigma}^\rho(s)$.  For the next measurement, we can leverage our knowledge of the previous POVM to craft a new POVM to be applied to the present state of the quantum memory.  This type of procedure is repeated for many rounds.

To be explicit, suppose that the present state of the quantum memory is $\Sigma^\rho(u)$, where the node $\rho$ reflects a sequence or \textit{transcript} of POVM outcomes which brought us to the present state by an adaptive protocol.  We can use this transcript of previous outcomes to choose a new POVM $\{F_s^u\}_s$ that we use to measure $\rho \otimes \boldsymbol{\Sigma}^u(\rho)$, which will result in the output $A_{M_s^u}(\boldsymbol{\Sigma}^\rho(u))$ with probability $\text{tr}(A_{M_s^u}(\boldsymbol{\Sigma}^\rho(u)))$.  The nodes $v$ in the next layer encode the data of the previous measurement outcomes and the latest outcome (i.e., determined by the location of $v$ in the tree), as well as the new (conveniently unnormalized) state of the quantum memory $A_{M_s^u}(\boldsymbol{\Sigma}^\rho(u)) := \boldsymbol{\Sigma}^\rho(v)$, where here $v$ is connected to $u$ by an edge $e_{u,s}$ (designating that $v$ is the consequence of the $s$th measurement outcome starting from the configuration in $u$).

\subsection{Hardness result for small quantum memories}
\label{sec:hardness-smallqmem}

We will prove the following result using the learning tree framework:
\begin{theorem}[Shadow tomography with partial reveal using a bounded quantum memory] Consider Task~\ref{task:bounded} for learning an expectation value with a bounded quantum memory.  Any learning algorithm with $n + k$ qubit registers needs $T \geq \Omega(2^{(n-k)/3})$ copies of $\rho$ to determine $|\text{\rm tr}(O \rho)|$ with probability at least $2/3$.\label{thm:memory1}
\end{theorem}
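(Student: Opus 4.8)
The plan is to recast Task~\ref{task:bounded} as a partially-revealed many-versus-one distinguishing task (Appendix~\ref{sec:part-many-v-one}) and to run the learning-tree machinery for bounded quantum memory from Definition 6.1 of~\cite{chen2021exponential}. The null hypothesis is $\rho = I/2^n$ and the alternative is $\rho = (I+P)/2^n$, with the revealed Pauli observable $O$ serving as the revealed side-information $\xi^* = P$. Since $|\mathrm{tr}(O\rho)| = 0$ under the null and $|\mathrm{tr}(O\rho)| = 1$ under the alternative, any learner that reports $|\mathrm{tr}(O\rho)|$ to constant additive accuracy distinguishes the two hypotheses; hence success probability at least $2/3$ forces
\begin{equation}
\E_{P}\,\mathrm{TV}\!\left(p_{I/2^n},\, p_{(I+P)/2^n}\right) \ge \tfrac{1}{3}\,,
\end{equation}
where $p_{\rho}(\ell) = \mathrm{tr}(\boldsymbol{\Sigma}^{\rho}(\ell))$ is the leaf distribution induced on the depth-$T$ tree whose nodes carry the $k$-qubit memory states $\boldsymbol{\Sigma}^{\rho}(u)$. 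The whole task then reduces to proving an \emph{upper} bound on this averaged total variation distance that is small unless $T$ is exponentially large.

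First I would linearize the total variation distance into a $\chi^2$-type functional. Writing $\delta^P(\ell) = p_{(I+P)/2^n}(\ell) - p_{I/2^n}(\ell)$ and applying Cauchy--Schwarz against the weights $p_{I/2^n}(\ell)$ (which sum to $1$) gives
\begin{equation}
\E_{P}\,\mathrm{TV}\!\left(p_{I/2^n}, p_{(I+P)/2^n}\right) \le \frac{1}{2}\sqrt{\,\E_{P}\sum_{\ell} \frac{p_{(I+P)/2^n}(\ell)^2}{p_{I/2^n}(\ell)} - 1\,}\,,
\end{equation}
so it suffices to bound the averaged second moment on the right. I would control it through the matrix potential
\begin{equation}
\Phi_t = \E_{P} \sum_{u:\,\mathrm{depth}(u) = t} \frac{\boldsymbol{\Sigma}^{(I+P)/2^n}(u) \otimes \boldsymbol{\Sigma}^{(I+P)/2^n}(u)}{p_{I/2^n}(u)}\,,
\end{equation}
an operator on two copies of the $k$-qubit memory whose trace at $t = T$ is exactly $\E_P \sum_\ell p_{(I+P)/2^n}(\ell)^2 / p_{I/2^n}(\ell)$. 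The engine for the one-step recursion $\Phi_t \mapsto \Phi_{t+1}$ is the Pauli second-moment identity $\E_{P}\, P\otimes P = (2^n\,\mathrm{SWAP} - I\otimes I)/(4^n - 1)$ established in Appendix~\ref{sec:lowbd-obs}: inserting $\rho = (I+P)/2^n$ into each memory-update map $A^{\rho}_{M^u_s}$, taking the tensor square, and averaging over $P$ splits the increment into a null-propagation part plus a signal part. The cross terms $I\otimes P$ and $P\otimes I$ average to operators of exponentially small norm and are negligible, while the dominant signal term is proportional to $\mathrm{SWAP}/2^n$ acting on the two memory copies.

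The crux is to show that the signal accumulated over the $T$ rounds remains small. Using $\sum_s F^u_s = \Id$ together with the CPTP structure of the updates, the null-propagation part is trace-non-increasing, so all growth of $\mathrm{tr}(\Phi_t)$ comes from the signal term, whose size is governed by $\mathrm{tr}(\mathrm{SWAP}\cdot \sigma\otimes\sigma) = \mathrm{tr}(\sigma^2)$ --- the purity of the memory, where the memory dimension $2^k$ enters. The subtlety, and the main obstacle, is that this signal \emph{feeds back}: once the memory has begun to correlate with $P$, later rounds extract signal more efficiently, so a naive per-step bound that replaces $\boldsymbol{\Sigma}^{(I+P)/2^n}$ by the null state $\boldsymbol{\Sigma}^{I/2^n}$ overcounts. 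Taming this feedback by a careful martingale/operator-inequality argument, following~\cite{chen2021exponential}, is what converts the optimistic exponent-one estimate into the balanced bound $\mathrm{tr}(\Phi_T) - 1 \lesssim T^3\,2^k / 2^n$. Combining this with the requirement $\E_P\,\mathrm{TV} \ge 1/3$ then yields $T^3 \gtrsim 2^{\,n-k}$, i.e.\ $T = \Omega(2^{(n-k)/3})$, as claimed.
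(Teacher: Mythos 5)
Your reduction to the partially-revealed distinguishing task and the requirement $\E_P \mathrm{TV}\left(p_{I/2^n}, p_{(I+P)/2^n}\right) \geq 1/3$ match the paper's setup, but the core of your argument fails: the bound $\mathrm{tr}(\Phi_T) - 1 \lesssim T^3 2^k/2^n$ that you hope to extract from a recursion-plus-martingale analysis is not merely hard to establish --- it is false. The averaged $\chi^2$ quantity you are bounding is genuinely exponentially large in $T$. Take $k=0$ (trivial quantum memory, a legal instance of the bounded-memory framework) and the non-adaptive algorithm that measures every copy in the computational basis. For any $Z$-type Pauli $P$ (a tensor product of $I$'s and $Z$'s, not all $I$), every outcome $z$ satisfies $\bra{z} P \ket{z} = \pm 1$, so the single-round likelihood ratio lies in $\{0,2\}$ and the single-round $\chi^2$ equals $1$; by tensorization over the $T$ rounds,
\begin{equation}
\sum_{\ell} \frac{p_{(I+P)/2^n}(\ell)^2}{p_{I/2^n}(\ell)} - 1 = 2^T - 1 .
\end{equation}
Since a $\frac{2^n - 1}{4^n - 1} \approx 2^{-n}$ fraction of all non-identity Paulis are $Z$-type, this forces $\E_P \left[\chi^2\right] \geq \Omega(2^{T-n})$, which dwarfs $T^3 2^k / 2^n$ as soon as $T \gtrsim n$ --- and the regime you need to reach is $T \sim 2^{(n-k)/3}$. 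Consequently your Cauchy--Schwarz step $\E_P \mathrm{TV} \leq \frac{1}{2}\sqrt{\E_P \chi^2}$ becomes vacuous (its right-hand side exceeds $1$) long before the target threshold. This is the classic failure of second-moment methods under mixtures: rare alternatives (here the $Z$-type Paulis, which this particular algorithm distinguishes extremely well) carry enormous likelihood ratios and dominate $\E_P \chi^2$, even though they contribute only $O(2^{-n})$ to $\E_P \mathrm{TV}$. No refinement of the recursion can rescue the claim, because the quantity $\mathrm{tr}(\Phi_T)$ itself is exponentially large for this algorithm.

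The paper's proof avoids exactly this trap by truncating the likelihood ratio instead of squaring it. Using $\max(0, a - b) \leq \min\left(a, |a-b|\right)$ and $|p_{I/2^n}(\ell) - p_{(I+P)/2^n}(\ell)| \leq \left\|\boldsymbol{\Sigma}_{I/2^n}(\ell) - \boldsymbol{\Sigma}_{(I+P)/2^n}(\ell)\right\|_1$, it bounds $\E_P \mathrm{TV}$ by $\E_P \sum_\ell \min\left(p_{I/2^n}(\ell), \left\|\boldsymbol{\Sigma}_{I/2^n}(\ell) - \boldsymbol{\Sigma}_{(I+P)/2^n}(\ell)\right\|_1\right)$, and then splits the Paulis, per root-to-leaf path, into \emph{bad} ones (those revealing too much information along some edge) and \emph{good} ones. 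The bad Paulis are controlled by counting --- Fact 6.5 of~\cite{chen2021exponential} says at most $T \cdot 2^{-(n-k)/3}(4^n-1)$ Paulis are bad for some edge along a path, and each such Pauli contributes at most $p_{I/2^n}(\ell)$ thanks to the truncation --- while the good Paulis satisfy the trace-norm bound inherited from the proof of Theorem 1.4 of~\cite{chen2021exponential}. Both contributions come out to $O\left(T \cdot 2^{-(n-k)/3}\right)$, and comparing with the $\Omega(1)$ lower bound on the total variation distance gives the theorem. If you want a self-contained argument rather than citing those two lemmas, you must reproduce something like this good/bad decomposition; any approach that passes through the unconditional second moment $\E_P \chi^2$ is doomed from the start.
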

\noindent On account of~\eqref{eq:TVbound-rev}, it suffices to upper bound $\mathbb{E}_P\!\left[\text{TV}(p_{I/2^n}, p_{(I + P)/2^n})\right]$.  To do so, we will leverage a key technical result coming from Theorem 1.4 of~\cite{chen2021exponential}.  But in order to state this technical result, we first need to introduce the notion of \textit{good Paulis} and \textit{bad Paulis}.  While details are provided in Definition 6.4 of~\cite{chen2021exponential}, here we explain the essential intuition and key properties.

In the learning protocol, we are trying to distinguish between the maximally mixed state $I/2^n$ and states of the form $(I + P)/2^n$.  The intuition is that if the size of our quantum memory $k$ is small relative to $n$, then it is hard to tell the two kinds of states apart.  If this was the case, then any round of the protocol should only reveal a very small amount of distinguishing information.  In particular, suppose we are at node $u$ in the learning tree, and so the state of the quantum memory at that node is either $\boldsymbol{\Sigma}_{I/2^n}(u)$ or $\boldsymbol{\Sigma}_{(I+P)/2^n}(u)$ for some Pauli string $P$.  If we consider a POVM $\{F_s^u\}_s$ where $F_{s}^u = M_s^{u\,\dagger} M_s^u$, then if we measure some fixed outcome $s$ the new state of the quantum memory will be either $A_{M_s^u}^{I/2^n}\!(\boldsymbol{\Sigma}_{I/2^n}(u))$ or $A_{M_s^u}^{(I+P)/2^n}\!(\boldsymbol{\Sigma}_{(I+P)/2^n}(u))$.  We would like for $\left\|A_{M_s^u}^{I/2^n}\!(\boldsymbol{\Sigma}_{I/2^n}(u)) - A_{M_s^u}^{(I+P)/2^n}\!(\boldsymbol{\Sigma}_{(I+P)/2^n}(u))\right\|_1$ to be exponentially small in $n-k$, in particular relative to some distinguishing measure between $\boldsymbol{\Sigma}_{I/2^n}(u)$ and $\boldsymbol{\Sigma}_{(I+P)/2^n}(u)$.  This would mean that starting from node $u$, the next POVM measurement will not significantly change our ability to distinguish the two possibilities for the resulting memory registers.  While we cannot guarantee that such a property holds for all states $(I+P)/2^n$, such a property will hold for some $P$'s.  Given a node $u$, the set of good Paulis $P[u]$ is the set of all Pauli operators satisfying a particular version of the above property for \textit{all} edges from the root of the tree to $u$ (see Definition 6.4 of~\cite{chen2021exponential} for details).  The residual Paulis are called the set of bad Paulis.  In other words, the good Paulis $P[u]$ designate the states $(I+P)/2^n$ which are hard to distinguish from $I/2^n$ for a particular instantiation of the learning tree, specifically for the sequence of POVMs that get us from the root of said learning tree to the node $u$.  By contrast, the bad Paulis reveal too much information.

We have the following useful Lemma about bad Paulis, which we will soon leverage in the proof of Theorem~\ref{thm:memory1}:
\begin{lemma}[Fact 6.5 of~\cite{chen2021exponential}]
For any edge $e_{u,s}$, there are at most $2^{-(n-k)/3}\cdot(4^n - 1)$ bad Paulis $P$.  In particular, along any root-to-leaf path of the learning tree, there are at most $T \cdot 2^{-(n-k)/3}\cdot (4^n-1)$ Paulis which are bad for some edge along the path.
\label{lemm:fact6p5}
\end{lemma}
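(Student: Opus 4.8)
The plan is to prove the per-edge bound first and then obtain the root-to-leaf statement by a trivial union bound. Fix an edge $e_{u,s}$ of the learning tree, with measurement operator $M := M_s^u$ (where $F_s^u = M_s^{u\,\dagger}M_s^u$) and null-hypothesis memory state $\boldsymbol{\Sigma} := \boldsymbol{\Sigma}_{I/2^n}(u)$. Per Definition~6.4 of~\cite{chen2021exponential}, whether a nonidentity Pauli $P$ is \emph{bad} for this edge is governed by the size of the distinguishing signal the measurement extracts about $P$, encoded in the $k$-qubit operator
\[
X_P := \Tr_{1,\ldots,n}\!\big(M\,(P\otimes\boldsymbol{\Sigma})\,M^\dagger\big),
\]
measured relative to the null transition weight $N := \Tr\!\big(M(I\otimes\boldsymbol{\Sigma})M^\dagger\big)$; concretely $P$ is declared bad when the appropriately normalized magnitude of $X_P$ exceeds the threshold fixed in that definition. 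Since $X_P = A^{(I+P)/2^n}_{M}(\boldsymbol\Sigma) - A^{I/2^n}_M(\boldsymbol\Sigma)$ up to the $1/2^n$ scaling, this is exactly the per-edge incarnation of the intuition spelled out above: a bad $P$ is one whose presence the next POVM would reveal too sharply. The entire content of the lemma is thus to show that at most a $2^{-(n-k)/3}$ fraction of the $4^n-1$ nonidentity Paulis can produce a signal this large.

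The first move is to replace the (intractable) trace-norm badness condition by a second moment, which is exactly computable. Viewing $\mathcal{M}(\cdot) := \Tr_{1,\ldots,n}(M(\cdot)M^\dagger)$ as a completely positive map, $X_P = \mathcal{M}(P\otimes\boldsymbol{\Sigma})$ is \emph{linear} in $P$, so $\E_P\,\Tr(X_P^\dagger X_P)$ is a quadratic form in $P$ and can be evaluated in closed form using the averaging identity already recorded in Appendix~\ref{sec:lowbd-obs},
\[
\E_{P\in\{I,X,Y,Z\}^{\otimes n}\setminus\{I^{\otimes n}\}} P\otimes P = \frac{2^n\,\SWAP - I\otimes I}{4^n-1}.
\]
Substituting this into $\E_P\,\Tr(X_P^\dagger X_P)$ collapses the average onto the $\SWAP$ operator acting on the two copies of the $n$-qubit input slot. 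The $-I\otimes I$ piece cancels against the identity contribution, and tracing $\SWAP$ against the $k$-qubit memory factor $\boldsymbol{\Sigma}$ — which lives on only $k$ of the $n+k$ registers — produces the dimensional enhancement $2^{k}$, while the prefactor $2^n/(4^n-1)\approx 2^{-n}$ supplies the suppression. Carrying the bookkeeping through yields a bound of the schematic form $\E_P\,\Tr(X_P^\dagger X_P) \lesssim 2^{-(n-k)}\,N^2$, i.e.\ the averaged normalized squared signal decays like $2^{-(n-k)}$.

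With the averaged signal in hand, Markov's inequality finishes the per-edge count: the number of $P$ whose normalized signal exceeds the Definition~6.4 threshold is at most the averaged squared signal divided by the squared threshold, and the threshold is calibrated precisely so that this ratio is $2^{-(n-k)/3}(4^n-1)$. The path statement then follows immediately, since any root-to-leaf path has $T$ edges and a Pauli bad \emph{somewhere} along the path is bad for at least one of those $T$ edges, giving the union bound $T\cdot 2^{-(n-k)/3}(4^n-1)$. The main obstacle — and the reason the exponent is $(n-k)/3$ rather than the naive $(n-k)$ — is the exponent balancing in the middle step: converting the trace-norm condition defining ``bad'' into the $\ell_2$ second moment costs a factor of $2^{k/2}$ per copy (via $\norm{X}_1\le 2^{k/2}\norm{X}_2$ for a $k$-qubit operator), and this memory-dimension penalty must be played off against the $2^{-(n-k)}$ decay and the $4^n$ Pauli count so that the Markov bound lands at exactly a $2^{-(n-k)/3}$ fraction. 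Verifying that the threshold in Definition~6.4 makes this cancellation work out cleanly, and that the $\boldsymbol\Sigma$-dependence is uniformly controlled by the normalization $N$ (so the bound is genuinely edge-independent), is the one calculation that must be done with care rather than invoked.
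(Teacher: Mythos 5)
Note first that the paper itself contains no proof of this lemma: it is imported verbatim as Fact~6.5 of~\cite{chen2021exponential}, and the badness threshold of Definition~6.4 is only described qualitatively here, so the relevant comparison is with the proof in that reference. Your reconstruction is, in outline, that proof: define the edge-local signal $X_P=\Tr_{1,\ldots,n}\bigl(M(P\otimes\boldsymbol{\Sigma})M^\dagger\bigr)$ using the \emph{null-hypothesis} memory state (correctly so --- the accumulated divergence between $\boldsymbol{\Sigma}_{I/2^n}(u)$ and $\boldsymbol{\Sigma}_{(I+P)/2^n}(u)$ is the business of the other lemma, not this one), bound the second moment of $P\mapsto X_P$ via $\sum_P P\otimes P=2^n\,\SWAP$, convert the trace-norm badness criterion to the Frobenius norm at a cost of $2^{k}$, and count by Markov; the path statement is indeed a trivial union bound over the $T$ edges of a root-to-leaf path. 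The exponent arithmetic closes as you indicate: with normalized threshold $\norm{X_P}_1\geq \epsilon N$ and $\epsilon \asymp 2^{-(n-k)/3}$, the bad-Pauli count is at most $2^{n+k}/\epsilon^{2}=2^{n+k+2(n-k)/3}=2^{-(n-k)/3}\cdot 4^{n}$, matching the stated bound (the exact constant $(4^n-1)$ is absorbed into the Definition~6.4 calibration you flag).

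One local error to fix in your middle paragraph: you attribute the factor $2^{k}$ to ``tracing $\SWAP$ against the $k$-qubit memory factor $\boldsymbol{\Sigma}$,'' and you display the schematic bound $\E_P\,\Tr(X_P^\dagger X_P)\lesssim 2^{-(n-k)}N^2$. Both are off. Up to reordering registers, the swap computation gives
\begin{equation*}
\sum_{P}\Tr\bigl(X_P^\dagger X_P\bigr)=2^{n}\,\Tr\bigl[(M\otimes M)\,\tau_n(\boldsymbol{\Sigma}\otimes\boldsymbol{\Sigma})\,(M\otimes M)^\dagger\,\tau_k\bigr]\leq 2^{n}N^{2},
\end{equation*}
where $\tau_n,\tau_k$ swap the two $n$-qubit (resp.\ $k$-qubit) registers: one bounds $|\Tr(A\tau_k)|\leq\norm{A}_1$ and then, by polar decomposition and Cauchy--Schwarz, $\norm{A}_1\leq\Tr\bigl((M\otimes M)(\boldsymbol{\Sigma}\otimes\boldsymbol{\Sigma})(M\otimes M)^\dagger\bigr)=N^2$. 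The trace against the memory swap therefore contributes \emph{no} dimensional factor, and the Frobenius second moment is only $\E_P\,\Tr(X_P^\dagger X_P)\lesssim 2^{-n}N^{2}$. The $2^{k}$ enters solely through $\norm{X}_1^2\leq 2^{k}\norm{X}_2^2$ --- exactly as your own closing paragraph states --- so the correct schematic is $\E_P\,\norm{X_P}_1^{2}\lesssim 2^{-(n-k)}N^{2}$. With that repair, and granting the threshold calibration that cannot be checked from this paper alone since Definition~6.4 is never reproduced, your sketch is sound and coincides with the argument in the cited reference.
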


Equipped with our discussion of good and bad Paulis, we can now state the following technical result from~\cite{chen2021exponential}:
\begin{lemma}[Following from the proof of Theorem 1.4 of~\cite{chen2021exponential}] We have the inequality
\begin{equation}
\frac{1}{4^n - 1} \sum_{\ell \in \text{\rm leaf}(\mathcal{T})} \sum_{P \in P[\ell]} \left\| \boldsymbol{\Sigma}_{I/2^n}(\ell) - \boldsymbol{\Sigma}_{(I+P)/2^n}(\ell)\right\|_1 \leq T \cdot 2^{-(n-k)/3} \cdot \sqrt{\frac{2^{2n}}{2^{2n}-1}}\,.
\end{equation}
\label{lemm:following}
\end{lemma}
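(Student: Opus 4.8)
The plan is to bound the left-hand side by telescoping the signed memory difference down the learning tree layer by layer, thereby reducing the claim to a single per-layer perturbation estimate that is exactly the technical heart of the proof of Theorem~1.4 in~\cite{chen2021exponential}. For each nonidentity Pauli $P$ and node $u$ write $\Delta^P(u) := \boldsymbol{\Sigma}_{I/2^n}(u) - \boldsymbol{\Sigma}_{(I+P)/2^n}(u)$, so that the quantity to be controlled is $\tfrac{1}{4^n-1}\sum_P \sum_{\ell\,:\,P \in P[\ell]} \norm{\Delta^P(\ell)}_1$. The crucial boundary condition is that both hypotheses share the same initial memory $\boldsymbol{\Sigma}_0$, so $\Delta^P(r)=0$ at the root $r$.

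First I would establish a one-step decomposition. Using $\boldsymbol{\Sigma}_\rho(v) = A^{\rho}_{M^u_s}(\boldsymbol{\Sigma}_\rho(u)) = \mathrm{tr}_{1,\dots,n}\!\big(M^u_s(\rho\otimes\boldsymbol{\Sigma}_\rho(u))M^{u\,\dagger}_s\big)$ from the bounded-memory tree of Definition~6.1 of~\cite{chen2021exponential}, and writing $(I+P)/2^n = I/2^n + P/2^n$, the difference at a child $v=v(u,s)$ splits as
\begin{equation}
\Delta^P(v) = A^{I/2^n}_{M^u_s}\!\big(\Delta^P(u)\big) - R^P(u,s), \qquad R^P(u,s) := \mathrm{tr}_{1,\dots,n}\!\Big(M^u_s\big(\tfrac{P}{2^n}\otimes\boldsymbol{\Sigma}_{(I+P)/2^n}(u)\big)M^{u\,\dagger}_s\Big).
\end{equation}
Summing over the children $s$ of $u$ and applying the triangle inequality, the key observation is that each $A^{I/2^n}_{M^u_s}$ is completely positive while $\sum_s A^{I/2^n}_{M^u_s}$ is trace preserving (since $\sum_s (M^u_s)^\dagger M^u_s = I$ and $\mathrm{tr}(I/2^n)=1$); decomposing $\Delta^P(u)$ into its positive and negative parts then gives the contraction $\sum_s \norm{A^{I/2^n}_{M^u_s}(\Delta^P(u))}_1 \le \norm{\Delta^P(u)}_1$. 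Hence $\sum_s \norm{\Delta^P(v(u,s))}_1 \le \norm{\Delta^P(u)}_1 + \sum_s \norm{R^P(u,s)}_1$.

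Next I would telescope. Because good-ness is monotone along root-to-leaf paths ($P\in P[v]$ forces $P\in P[u]$ for the parent $u$), summing the one-step bound over each successive layer and using $\Delta^P(r)=0$ yields $\sum_{\ell\,:\,P\in P[\ell]} \norm{\Delta^P(\ell)}_1 \le \sum_{(u,s)\,:\,P\in P[u]} \norm{R^P(u,s)}_1$, the sum being over all edges whose tail keeps $P$ good. Averaging over the $4^n-1$ nonidentity Paulis and grouping edges into the $T$ depth layers reduces the lemma to the per-layer estimate $\tfrac{1}{4^n-1}\sum_{u\,:\,\mathrm{depth}(u)=t}\sum_{P\in P[u]}\sum_s \norm{R^P(u,s)}_1 \le 2^{-(n-k)/3}\sqrt{2^{2n}/(2^{2n}-1)}$, valid for each $t$; summing the $T$ layers produces the factor $T$ in the stated bound.

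The main obstacle is precisely this per-layer estimate, which is the content I would extract from the proof of Theorem~1.4 of~\cite{chen2021exponential} rather than reprove. The difficulty is twofold: the perturbation $R^P(u,s)$ depends on $P$ not only through the explicit factor $P/2^n$ but also through the memory state $\boldsymbol{\Sigma}_{(I+P)/2^n}(u)$ itself, so the average over $P$ does not factor over a fixed memory state; and bounding $\sum_s\norm{R^P(u,s)}_1$ requires a second-moment (Cauchy--Schwarz) argument in which one averages over $P$ using $\E_P[P\otimes P]=(2^n\,\mathrm{SWAP}-I\otimes I)/(4^n-1)$ — the same identity used in Appendix~\ref{sec:lowbd-obs} — which is exactly what generates both the $2^{-(n-k)/3}$ suppression (the mismatch between the $n$ system and $k$ memory qubits) and the normalization $\sqrt{2^{2n}/(2^{2n}-1)}$. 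The definition of good Paulis (Definition~6.4 of~\cite{chen2021exponential}) is engineered so that precisely those $P$ for which this suppression holds survive in the restricted sums, and the bad Paulis that would violate it are discarded along each edge by Fact~6.5 (Lemma~\ref{lemm:fact6p5}). Thus the genuinely new content of my argument is the telescoping reduction above, which packages the inherited multilinear tensor estimate of~\cite{chen2021exponential} into the stated trace-norm inequality.
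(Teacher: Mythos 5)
Your proposal is correct and follows essentially the same route as the paper, which offers no independent proof of this lemma but imports it wholesale from the proof of Theorem 1.4 of~\cite{chen2021exponential}: your one-step decomposition $\Delta^P(v) = A^{I/2^n}_{M^u_s}(\Delta^P(u)) - R^P(u,s)$, the contraction $\sum_s \| A^{I/2^n}_{M^u_s}(\Delta)\|_1 \le \|\Delta\|_1$ via complete positivity and trace preservation, and the telescoping over good Paulis from $\Delta^P(r)=0$ are precisely the hybrid-argument skeleton of that cited proof, and each of these elementary steps checks out. Deferring the per-layer second-moment estimate (via $\E_P[P\otimes P]$ and the good-Pauli condition) to~\cite{chen2021exponential} is consistent with how the lemma is stated and used here.
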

\noindent We are finally ready to prove Theorem~\ref{thm:memory1}.
\begin{proof}[Proof of Theorem~\ref{thm:memory1}]
Let us upper bound $\mathbb{E}_P\!\left[\text{TV}(p_{I/2^n}, p_{(I + P)/2^n})\right]$.  We have the inequalities
\begin{align}
\label{E:Ineq1}\mathbb{E}_P\!\left[\text{TV}(p_{I/2^n}, p_{(I + P)/2^n})\right] &\leq \mathbb{E}_{P}\!\left[\sum_{\ell}\max\!\left(0, p_{I/2^n}(\ell) - p_{(I + P)/2^n}(\ell)\right)\right] \\
\label{E:Ineq2}
&\leq \mathbb{E}_{P}\!\left[\sum_{\ell}\min\!\left(p_{I/2^n}(\ell),|p_{I/2^n}(\ell) - p_{(I + P)/2^n}(\ell)|\right)\right] \\
\label{E:Ineq3}
&\leq \mathbb{E}_{P}\!\left[\sum_{\ell}\min\!\left(p_{I/2^n}(\ell),\left\|\boldsymbol{\Sigma}_{I/2^n}(\ell) - \boldsymbol{\Sigma}_{(I+P)/2^n}(\ell)  \right\|_1\right)\right] \\
\label{E:Ineq4}
&\leq \sum_\ell \text{Pr}[P \not\in P[\ell]] \, p_{I/2^n}(\ell) + \frac{1}{4^n - 1}\sum_{P \in P[\ell]} \left\|\boldsymbol{\Sigma}_{I/2^n}(\ell) - \boldsymbol{\Sigma}_{(I+P)/2^n}(\ell)  \right\|_1\,.
\end{align}
In the first line, we have used that $\text{TV}(p,q) = \frac{1}{2}\sum_i |p_i - q_i| = \sum_{i \, : \, p_i \geq q_i}(p_i - q_i) = \sum_i \max(0, p_i - q_i)$.  To go from~\eqref{E:Ineq1} to~\eqref{E:Ineq2} we used $\max(0,a-b) \leq \min(a, |a-b|)$.  In going from~\eqref{E:Ineq2} to~\eqref{E:Ineq3} we leveraged that $|p_{I/2^n}(\ell) - p_{(I+P)/2^n}(\ell)| = |\text{tr}(\boldsymbol{\Sigma}_{I/2^n}(\ell) - \boldsymbol{\Sigma}_{(I+P)/2^n}(\ell))| \leq \| \boldsymbol{\Sigma}_{I/2^n}(\ell) - \boldsymbol{\Sigma}_{(I+P)/2^n}(\ell) \|_1$.  Finally, to go from~\eqref{E:Ineq3} to~\eqref{E:Ineq4} we used $\sum_{i \in S} \min(a_i, b_i) \leq \sum_{i \in S\setminus R} a_i + \sum_{i \in R} b_i$ for any $R \subseteq S$.

By Lemma~\ref{lemm:fact6p5} and the fact that $\sum_\ell p_{I/2^n}(\ell) = 1$, we have the simple bound
\begin{equation}
\sum_\ell \text{Pr}[P \not\in P[\ell]]\,p_{I/2^n}(\ell) \leq T \cdot 2^{-(n-k)/3}
\end{equation}
and Lemma~\ref{lemm:following} gives us
\begin{equation}
\frac{1}{4^n - 1}\sum_{P \in P[\ell]} \left\|\boldsymbol{\Sigma}_{I/2^n}(\ell) - \boldsymbol{\Sigma}_{(I+P)/2^n}(\ell)  \right\|_1  \leq T\cdot 2^{-(n-k)/3}\cdot \sqrt{\frac{2^{2n}}{2^{2n}-1}}\,.
\end{equation}
Then in total, we have
\begin{equation}
    \mathbb{E}_P\!\left[\text{TV}(p_{I/2^n}, p_{(I + P)/2^n})\right] \leq T\cdot 2^{-(n-k)/3}\left(1 + \sqrt{\frac{2^{2n}}{2^{2n}-1}} \right)\,.
\end{equation}
If the left-hand side is $\Omega(1)$, then we must thus have $T \geq \Omega(2^{(n-k)/3})$, as claimed.
\end{proof}

\end{document}